\renewcommand{\cref}{\Cref}
\theoremstyle{definition}
\newtheorem{definition}{Definition}[section]
\theoremstyle{plain}
\newtheorem{theorem}[definition]{Theorem}
\newtheorem{lemma}[definition]{Lemma}
\newtheorem{claim}[definition]{Claim}
\newtheorem{observation}[definition]{Observation}
\newtheorem{conjecture}[definition]{Conjecture}
\newcommand{\Pp}{\mathcal{P}}
\newcommand{\w}{\mathfrak{w}}
\newcommand{\elems}
{\ensuremath{\bigcup}}
\DeclareMathOperator{\Oh}{\mathcal{O}}
\newcommand{\Sttt}{\ensuremath{S_{t,t,t}}}
\newcommand{\Pttt}{\ensuremath{\{t,t,t\}}\text{-pyramid}}
\newcommand{\rel}{{\textsf{relevant}}}
\newcommand{\matching}{{\textsf{matching}}}
\newcommand{\esdthm}{{\textsf{esd}}}
\def\epsilon{\varepsilon}
\newcommand{\preimage}[2]{{\overleftarrow{\eta}_{#1}(#2)}}
\renewcommand{\leq}{\leqslant}
\renewcommand{\geq}{\geqslant}
\newcommand{\blind}[1]{#1}
\begin{document}

\title{Maximum Weight Independent Set in Graphs with no Long Claws in Quasi-Polynomial Time%
  \blind{\thanks{%
Pe.~G. and Da.~L. were supported by NSF award CCF-2008838.
This work is a part of the project BOBR (Mi.~P. and P.~Rz.) that has received funding from the European Research Council (ERC) 
under the European Union's Horizon 2020 research and innovation programme (grant agreement No.~948057).
T.~M.~was supported by Polish National Science Centre SONATA-17 grant number 2021/43/D/ST6/03312.
Ma.~P.~was supported by Polish National Science Centre SONATA BIS-12 grant number 2022/46/E/ST6/00143.
Ma.~P.~is also part of BARC, supported by the VILLUM Foundation grant 16582.}
}
}

\date{}

\blind{
\author{
Peter Gartland\thanks{University of California, Santa Barbara, USA, \texttt{petergartland@ucsb.edu}.}\and
Daniel Lokshtanov\thanks{University of California, Santa Barbara, USA, \texttt{daniello@ucsb.edu}.} \and
Tom\'{a}\v{s} Masa\v{r}\'{i}k\thanks{Institute of Informatics, University of Warsaw, Poland, \texttt{masarik@mimuw.edu.pl}.} \and
Marcin Pilipczuk\thanks{Institute of Informatics, University of Warsaw, Poland and IT University of Copenhagen, Denmark, \texttt{marcin.pilipczuk@mimuw.edu.pl}.} \and
Micha\l{} Pilipczuk\thanks{Institute of Informatics, University of Warsaw, Poland, \texttt{michal.pilipczuk@mimuw.edu.pl}.} \and
Pawe\l{} Rz\k{a}\.{z}ewski\thanks{Faculty of Mathematics and Information Science, Warsaw University of Technology, Poland and Institute of Informatics, University of Warsaw, Poland, \texttt{pawel.rzazewski@pw.edu.pl}.}
}
}

\begin{titlepage}
\def\thepage{}
\thispagestyle{empty}
\maketitle

\begin{abstract}
We show that the \textsc{Maximum Weight Independent Set} problem (\textsc{MWIS}) can be solved in quasi-polynomial time on $H$-free graphs (graphs excluding a fixed graph $H$ as an induced subgraph) for every $H$ whose every connected component is a path or a subdivided claw (i.e., a tree with at most three leaves). This completes the dichotomy of the complexity of \textsc{MWIS} in $\mathcal{F}$-free graphs for any finite set $\mathcal{F}$ of graphs into NP-hard cases and cases solvable in quasi-polynomial time, and corroborates the conjecture that the cases not known to be NP-hard are actually polynomial-time solvable.

The key graph-theoretic ingredient in our result is as follows. Fix an integer $t \geq 1$. Let $S_{t,t,t}$ be the graph created from three paths on $t$ edges by identifying one endpoint of each path into a single vertex. We show that, given a graph $G$, one can in polynomial time find either an induced $S_{t,t,t}$ in $G$, or a balanced separator consisting of $\Oh(\log |V(G)|)$ vertex neighborhoods in $G$, or an extended strip decomposition of $G$ (a decomposition almost as useful for recursion for \textsc{MWIS} as a partition into connected components) with each particle of weight multiplicatively smaller than the weight of $G$. This is a strengthening of a result of Majewski, Masařík, Novotná, Okrasa, Pilipczuk, Rzążewski, and Sokołowski [ICALP~2022] which provided such an extended strip decomposition only after the deletion of $\Oh(\log |V(G)|)$ vertex neighborhoods. To reach the final result, we employ an involved branching strategy that relies on the structural lemma presented above.
\end{abstract}
\end{titlepage}

\thispagestyle{empty}
\tableofcontents \newpage

\pagenumbering{arabic}

\section{Introduction}

The \textsc{Maximum Weight Independent Set} (MWIS) problem takes as input a graph $G$ with vertex weights
$\w : V(G) \to \mathbb{Z}_{\geq 0}$ and asks for a set $X \subseteq V(G)$ of maximum possible weight
that is \emph{independent} (sometimes also called \emph{stable}): no two vertices of $X$ are adjacent.
This classic combinatorial problem plays an important role as a central hard problem 
in several areas of computational complexity: it appears as one of the NP-hard problems
on the celebrated list of Karp~\cite{Karp72},
it is the archetypical W[1]-hard problem in parameterized complexity~\cite{Downey1995},
and is one of the classic problems difficult to approximate~\cite{Hastad99}.

In the light of the hardness of \textsc{MWIS} within multiple paradigms, 
one may ask what assumptions on the input make the problem easier. 
More formally, we can ask for which graph classes $\mathcal{G}$, the assumption that the input graph 
comes from $\mathcal{G}$ allows for faster algorithms for \textsc{MWIS}.
For example, if $\mathcal{G}$ is the class of planar graphs, \textsc{MWIS} remains NP-hard,
but the classic layering approach of Baker~\cite{Baker94} yields a polynomial-time approximation scheme
and simple kernelization arguments give a parameterized algorithm~\cite{Cyganetal}.

This motivates a more methodological study of the complexity of \textsc{MWIS} depending
on the graph class $\mathcal{G}$ the input comes from.
As the space of all graph classes is too wide and admits strange artificial examples, the arguably simplest regularization
assumption is to restrict the attention to hereditary graph classes, i.e., graph classes closed under vertex deletion.
Every hereditary graph class $\mathcal{G}$ can be characterized by \emph{minimal forbidden induced subgraphs}: 
the (possibly infinite) set $\mathcal{F}$ of minimal (under vertex deletion) graphs that are not members of $\mathcal{G}$.
Then, we have $G \in \mathcal{G}$ if and only if no member of $\mathcal{F}$ is an induced subgraph of $G$;
when we want to emphasize the set $\mathcal{F}$, we refer
to the graph class $\mathcal{G}$ as the class of \emph{$\mathcal{F}$-free graphs}
and shorten it to $H$-free graphs if $\mathcal{F} = \{H\}$. 

If a problem turns out to be easier in a class of $\mathcal{F}$-free graphs, in many cases it is a single
forbidden induced subgraph $H \in \mathcal{F}$ that is responsible for tractability, and the problem
at hand is already easier in $H$-free graphs. 
A prime example of this phenomenon are the classes of line graphs and claw-free graphs. 
Recall that a \emph{line graph} of a graph $H$ is a graph $G$ with $V(G) = E(H)$ where two vertices of $G$ are adjacent
if their corresponding edges in $H$ are incident to the same vertex.
Observe that \textsc{MWIS} in a line graph $G$ of a graph $H$ becomes the \textsc{Maximum Weight Matching} problem
in the pre-image graph $H$; a problem solvable in polynomial time by deep combinatorial techniques~\cite{Edmonds}.
It turns out that the tractability of \textsc{MWIS} in line graphs can be explained solely by one of the 
minimal forbidden induced subgraphs for the class of line graphs, namely the \emph{claw} $S_{1,1,1}$.
(For integers $a,b,c \geq 1$, by $S_{a,b,c}$ we denote the tree with exactly three leaves, within distance $a$, $b$, and $c$
from the unique vertex of degree $3$.) 
As proven in 1980, \textsc{MWIS} is polynomial-time solvable already in the class of $S_{1,1,1}$-free graphs~\cite{SBIHI198053,MINTY1980284}, called
also the class of \emph{claw-free graphs} (for recent fast algorithms, see~\cite{clawCubic,clawQuadratic}).


Together with the vastness of the space of all hereditary graph classes,
this motivates us to focus on $\mathcal{F}$-free graphs for finite sets $\mathcal{F}$, in particular
on the case $|\mathcal{F}|=1$. 
This turned out to be particularly interesting for \textsc{MWIS}.
As observed by Alekseev~\cite{alekseev1982effect}, for the ``overwhelming majority'' of finite sets ${\cal F}$, \textsc{MWIS} remains NP-hard on $\mathcal{F}$-free graphs.
More precisely Alekseev observed that \textsc{MWIS} remains NP-hard on $\mathcal{F}$-free graphs unless,
for at least one graph in $\mathcal{F}$,
every connected component is a path or an $S_{a,b,c}$ for some integers $a$,$b$,$c$.
Since the original NP-hardness proof of Alekseev~\cite{alekseev1982effect} in 1982, no new finite sets $\mathcal{F}$ have been discovered such that \textsc{MWIS} remains NP-hard on $\mathcal{F}$-free graphs.
We conjecture that this is because all of the remaining cases are actually solvable in polynomial time.
%
%
\begin{conjecture}\label{conj:main}
For every $H$ that is a forest whose every component has at most three leaves, 
\textsc{Maximum Weight Independent Set} is polynomial-time solvable when restricted to $H$-free graphs.
\end{conjecture}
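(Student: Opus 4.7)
The plan is to reduce the conjecture to the special case $H = S_{t,t,t}$ (i.e.\ a single subdivided claw) and then attack that case by strengthening the structural lemma announced in the abstract so that a polynomial-time divide-and-conquer becomes feasible. For the first reduction, note that if $H$ is a forest whose every component is a path or a subdivided claw, each component is itself an induced subgraph of some $S_{t,t,t}$; standard modular-decomposition/anticomplete-partition tricks (used already for cographs and for disjoint unions of forbidden graphs) reduce \textsc{MWIS} on $H$-free graphs to \textsc{MWIS} on $S_{t,t,t}$-free graphs for a suitable $t$. So the core task is the $S_{t,t,t}$-free case.

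For that case, the plan is to upgrade the structural lemma from the abstract in two ways. The lemma currently produces, in polynomial time, either an induced $S_{t,t,t}$, or a balanced separator that is a union of $\mathcal{O}(\log n)$ vertex neighborhoods, or a balanced \esd. The $\log n$ factor in the separator outcome is precisely what forces the final algorithm into the quasi-polynomial regime, because the branching ``guess the independent set inside the separator'' over $\mathcal{O}(\log n)$ neighborhoods costs $n^{\mathcal{O}(\log n)}$. The goal is to improve the separator outcome to a \emph{constant} number of vertex neighborhoods. Given such an improvement, a polynomial-time algorithm would follow by routine divide-and-conquer: enumerate the $n^{\mathcal{O}(1)}$ intersections of the optimum with the constant-size separator, recurse on each resulting subgraph of constant-fraction weight, and in the \esd case recurse on each particle using the known polynomial-time machinery for \textsc{MWIS} on \esd's of bounded particle weight.

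The key technical step, and the main obstacle, is eliminating the logarithmic blow-up in the separator outcome. The existing proof of Majewski, Masa\v{r}\'{i}k, Novotn\'{a}, Okrasa, Pilipczuk, Rz\k{a}\.{z}ewski, and Soko\l{}owski is iterative: each round peels off one vertex neighborhood that halves a potential of range $\mathrm{poly}(n)$, whence $\mathcal{O}(\log n)$ rounds. To collapse this, I would attempt to prove a \emph{dichotomy lemma}: in any $S_{t,t,t}$-free graph, either a single vertex neighborhood already halves the total weight on both sides, or the graph admits a balanced \esd outright (with no peeling at all). The motivation is that the obstructions to a balanced \esd are concentrated structures (long induced paths emanating from a common vertex), and such structures should be ``witnessed'' by a single neighborhood that already separates the weight well. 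Making this precise likely requires a fresh look at the interplay between the potential argument and the \esd extraction: one would need a progress measure whose range is bounded by a function of $t$ only, independent of $n$.

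Once the strengthened structural lemma is in place, the recursion is almost immediate: the depth is $\mathcal{O}(\log n)$ and the branching factor at each node is $n^{f(t)}$, giving total running time $n^{f(t) \cdot \mathcal{O}(\log n)}$ \emph{only} if the separator is constant-size --- in which case a direct accounting yields $n^{\mathcal{O}(f(t))}$, polynomial for fixed $H$. Handling the \esd recursion cleanly will require reusing the reduction from \textsc{MWIS} on an \esd to weighted matching-like problems already exploited in the prior work; I expect this part to be technical but not the conceptual bottleneck. The entire proposal therefore stands or falls on whether the logarithmic separator in the structural lemma can be driven down to a constant, which I view as the single main obstacle to resolving the conjecture.
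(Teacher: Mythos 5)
The statement you are addressing is a conjecture, and the paper does not prove it: the paper establishes only a \emph{quasi-polynomial}-time algorithm (Theorem~1.3), explicitly leaving polynomial-time solvability open. Your proposal does not close this gap either, for two reasons. First, your entire plan rests on strengthening the structural lemma so that the separator outcome uses a \emph{constant} number of vertex neighborhoods instead of $\mathcal{O}(\log n)$. That strengthening is not something you prove; it is itself an open conjecture (raised already in the QPTAS paper of Majewski et al.), and your ``dichotomy lemma'' is offered only as a hoped-for statement with a heuristic motivation, not an argument. Second, and more importantly, even if that strengthening were granted, your running-time accounting is wrong. Branching over all intersections of the optimum with a constant-size dominating set's neighborhood gives polynomially many branches per level, the recursion (whether through balanced separators or through the particles of a balanced extended strip decomposition) has depth $\Theta(\log n)$ because the weight only drops by a constant factor per level, and the particles/components at each level are polynomially many. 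Multiplying these out gives $n^{\Theta(\log n)}$, i.e., quasi-polynomial; the claim that ``a direct accounting yields $n^{\mathcal{O}(f(t))}$'' does not follow and is not substantiated. The paper itself makes exactly this point in its Conclusion: even with the improved (constant-neighborhood) structural lemma, recursing into polynomially many multiplicatively smaller particles inherently yields quasi-polynomial time, which is why the authors view the route to polynomial time as requiring extended strip decompositions whose particles lie in a class solvable \emph{without recursion} (in the spirit of the Chudnovsky--Seymour structure theorem for claw-free graphs), not merely particles of smaller weight.

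A smaller remark: your reduction of the general forest case to the $S_{t,t,t}$-free case is essentially the known one (the paper invokes Theorem~2 of Gartland--Lokshtanov for this), so that part is fine in spirit, but it is the only part of the proposal that is on solid ground; the core case remains unproven, and the proposed recursion scheme cannot, as described, beat quasi-polynomial time.
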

To the best of our knowledge, the first place \Cref{conj:main} appeared explicitly is \cite{Lozin17}.
Let us remark that Conjecture~\ref{conj:main}, if true, would yield a dichotomy for the computational complexity of \textsc{MWIS} on $\mathcal{F}$-free graphs for all finite sets $\mathcal{F}$.
Consider any $\mathcal{F}$ such that NP-hardness of \textsc{MWIS} on $\mathcal{F}$-free graphs does not follow from Alekseev's proof.
It follows that the class of $\mathcal{F}$-free graphs is contained in the class of $H$-free graphs for some graph $H$ for which polynomial time solvability of \textsc{MWIS} is conjectured in \cref{conj:main}.

From the positive side, as already mentioned, we know that \textsc{MWIS} is polynomial-time solvable in $S_{1,1,1}$-free graphs
since 1980. Around the same time, it was shown that the class of $P_4$-free graphs (by $P_t$ we denote the path on $t$ vertices)
coincides with the class of \emph{cographs} and has very strong structural properties (in modern terms, has bounded cliquewidth)
thus allowing efficient algorithms for \textsc{MWIS} and many other combinatorial problems. 
Over the years, we have witnessed a few scattered results for some special cases of $H$-free graphs, such as $S_{1,1,2}$-free graphs~\cite{ALEKSEEV20043,DBLP:journals/jda/LozinM08},  $2K_2$-free graphs~\cite{Farber89}, $tK_2$-free graphs~\cite{Farber93}, $\ell{}P_3$-free graphs~\cite{Lozin17}, $\ell{}S_{1,1,1}$-free graphs~\cite{BM18-lclaw}, $tK_2+P5$-free or $tK_2+S_{1,1,2}$-free graphs~\cite{Mosca22}, as well as progress limited to various subclasses (see~\cite{DBLP:journals/endm/LeBS15,BM18,p67,p65,p73,p72,DBLP:journals/gc/LozinMP14,DBLP:journals/jda/LozinMR15,DBLP:journals/tcs/HarutyunyanLLM20,BM18,LozinM05,Mosca99,Mosca09,Mosca13,Mosca21} for older and newer results of this kind).

The research in the area got significant momentum in the last decade.  The progress can be partitioned into two 
main threads. The first one focuses on the framework of \emph{potential maximal cliques}, introduced by Bouchitt\'{e} and Todinca~\cite{BouchitteT01}, and focuses on providing polynomial-time algorithms for $P_t$-free graphs for small values of $t$. 
A landmark result here is due to Lokshtanov, Vatshelle, and Villanger~\cite{LokshtanovVV14} who were the first to show the usability
of the framework in the context of $P_t$-free graphs by providing a polynomial-time algorithm for \textsc{MWIS} 
in $P_5$-free graphs. 
This has been later extended to $P_6$-free graphs~\cite{GrzesikKPP22} and related graph classes~\cite{AbrishamiCPRS21}. 
A notable property of this framework is that in most cases it not only provides algorithms for \textsc{MWIS},
but for a wide range of problems asking for large induced subgraph of small treewidth, for example \textsc{Feedback Vertex~Set}. 

The second thread attempts at treating $P_t$-free or $S_{t,t,t}$-free graphs in full generality, but 
relaxing the requirements on either the running time (by providing subexponential or quasi-polynomial-time algorithms) 
or the accuracy (by providing approximation algorithms, such as approximation schemes). 
Here, the starting point is the theorem of Gy\'{a}rf\'{a}s~\cite{gyarfas1,gyarfas2} (see also~\cite{BacsoLMPTL19}).
\begin{theorem}\label{thm:gyarfas}
Every vertex-weighted graph $G$ contains an induced path $Q$ such that every connected component of $G-N[V(Q)]$
has weight at most half of the weight of $G$.
\end{theorem}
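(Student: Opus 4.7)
The plan is to prove by induction on $|V(G)|$ the following stronger statement: for every connected vertex-weighted graph $G$ and every $v \in V(G)$, there is an induced path $P$ with $v$ as an endpoint such that every connected component of $G - N[V(P)]$ has weight at most $\w(G)/2$. The theorem then follows by applying this to the heaviest connected component $K$ of $G$: if $\w(K) > \w(G)/2$, the path produced inside $K$ still satisfies the requirement in $G$, because the remaining components of $G$ together weigh less than $\w(G)/2$; otherwise every component is already light and any single-vertex path works.

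For the inductive step, pick $v$ and examine the components of $G - N[v]$. If none weighs more than $\w(G)/2$, the trivial path $P = (v)$ suffices. Otherwise, by a weight counting argument there is a unique heavy component $C$ with $\w(C) > \w(G)/2$. Since $G$ is connected, there exists a vertex $y \in N(v)$ having a neighbor in $C$. Let $H \df G[C \cup \{y\}]$; this graph is connected (because $G[C]$ is connected and $y$ has a neighbor in $C$) and strictly smaller than $G$ (because $v \notin V(H)$). Apply the induction hypothesis to $H$ starting at $y$, obtaining an induced path $R = (y, r_2, \ldots, r_k)$ in $H$ such that every component of $H - N_H[V(R)]$ has weight at most $\w(H)/2 \leq \w(G)/2$.

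Set $P \df (v, y, r_2, \ldots, r_k)$. To check that $P$ is induced in $G$: the edges $vy$ and those along $R$ are present by construction; the non-edges $v r_i$ for $i \geq 2$ hold because $r_i \in C$ and $C \cap N(v) = \emptyset$; the non-edges $y r_i$ for $i \geq 3$ and $r_i r_j$ for $|i-j| \geq 2$ hold because $R$ is induced in $H$. For the component bound, any component of $G - N_G[V(P)]$ meeting $C$ coincides, via the direct identity $N_G[V(P)] \cap C = N_H[V(R)] \cap C$, with a component of $H - N_H[V(R)]$ and is therefore light by induction. Any component meeting $V(G) \setminus C$ is contained in a single other component of $G - N[v]$, since every edge of $G$ between $C$ and $V(G) \setminus C$, as well as every edge between two distinct components of $G - N[v]$, must have an endpoint in $N[v] \subseteq N_G[V(P)]$ and hence is deleted; each such component is light because $C$ was the unique heavy component of $G - N[v]$.

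The main conceptual obstacle, I expect, will be identifying the right auxiliary graph on which to recurse. A tempting shortcut, namely recursing directly on $G[C]$ and returning $R$ itself as $P$, fails: pieces of $C \setminus N_G[V(R)]$ can be stitched together through $v$ and vertices of $N(v) \setminus N_G[V(R)]$, producing a component of $G - N_G[V(R)]$ that is heavier than any individual piece. Enlarging the recursion to $H = G[C \cup \{y\}]$ and prepending $v$ to the resulting path ensures that $v$ and all of $N(v)$ are absorbed into $N_G[V(P)]$, blocking any such stitching. Once this choice is in place, the remaining work is the routine bookkeeping above.
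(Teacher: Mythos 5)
The paper does not prove Theorem~\ref{thm:gyarfas} itself; it cites it as a known result of Gy\'arf\'as. Your proof is correct and is essentially the standard argument from the literature: greedily extend an induced path one step into the (necessarily unique) heavy component of $G-N[v]$, here packaged cleanly as an induction on the auxiliary graph $G[C\cup\{y\}]$ with a fixed starting endpoint. The key invariants you isolate --- that $N_G[V(P)]\cap C=N_H[V(R)]\cap C$, that deleting $N[v]$ severs $C$ from the rest, and that the unique heavy component is the only place the weight bound could fail --- are exactly the right bookkeeping, and your closing remark about why one must recurse on $G[C\cup\{y\}]$ rather than $G[C]$ correctly identifies the only subtlety in the argument.
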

As an induced path in a $P_t$-free graph has less than $t$ vertices, a $P_t$-free graph admits a balanced separator
(in the sense of Theorem~\ref{thm:gyarfas}) consisting of neighborhood of at most $t-1$ vertices.
In other words, $P_t$-free graphs admit a balanced separator dominated by $t-1$ vertices.
Chudnovsky, Pilipczuk, Pilipczuk, and Thomassé \cite{DBLP:conf/soda/ChudnovskyPPT20} observed that this easily gives a quasi-polynomial-time approximation scheme (QPTAS) for
\textsc{MWIS} in $P_t$-free graphs, and they designed an elaborate argument involving the celebrated three-in-a-tree
theorem of Chudnovsky and Seymour~\cite{DBLP:journals/combinatorica/ChudnovskyS10} to extend the result to the $S_{t,t,t}$-free case
and $H$-free case where $H$ is a forest of trees with at most three leaves each.
Abrishami, Chudnovsky, Dibek, and Rzążewski\cite{ACDR21} used also the three-in-a-tree theorem to obtain a polynomial-time algorithm for \textsc{MWIS}
for $S_{t,t,t}$-free graphs of bounded degree.
Gartland and Lokshtanov showed how to use the theorem of Gy\'{a}rf\'{a}s to design exact
quasi-polynomial-time algorithm for \textsc{MWIS} in $P_t$-free graphs~\cite{GartlandL20}, for every fixed $t$.
This algorithm was later simplified by Pilipczuk, Pilipczuk, and Rz\k{a}\.{z}ewski~\cite{PilipczukPR21}
and the union of the authors of these two papers showed that the approach works for a much wider class of problems
and a slightly wider graph class~\cite{GartlandLPPR21}.
Last year, Majewski, Masařík, Novotná, Okrasa, Pilipczuk, Rzążewski, and Sokołowski \cite{ICALP-qptas} gave a cleaner argument for an existence of a QPTAS for \textsc{MWIS}
in $S_{t,t,t}$-free~graphs.

This work provides the pinnacle of the second thread by showing that \textsc{MWIS}
is quasi-polynomial-time solvable in all cases treated by Conjecture~\ref{conj:main}.
\begin{theorem}\label{thm:main}
For every $H$ that is a forest whose every component has at most three leaves, 
there is an algorithm for \textsc{Maximum Weight Independent Set} in $H$-free graphs
running in time $n^{\Oh_H(\log^{19} n)}$.
\end{theorem}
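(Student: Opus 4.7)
The plan is to design a recursive algorithm driven by the key structural lemma promised in the abstract. A preliminary step reduces the multi-component case of $H$ to handling $\Sttt$-free graphs for a suitably large $t$: when $H$ is connected (a single path or subdivided claw), $H$ embeds as an induced subgraph into $\Sttt$ for large enough $t$, so $H$-free graphs are automatically $\Sttt$-free; when $H$ has several components, a Ramsey-type argument shows that if $G$ contains sufficiently many ``well-separated'' induced copies of $\Sttt$, one can assemble from them an induced copy of $H$, contradicting $H$-freeness. Hence, up to a preliminary branching that locates a bounded-size portion of $G$ accounting for all but one component of $H$, we may reduce to solving \textsc{MWIS} on $\Sttt$-free graphs.

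With this reduction in hand, the core of the algorithm is a recursive procedure on $\Sttt$-free instances $(G, \w)$. We invoke the structural lemma, which returns one of three outputs. Case (i) produces an induced $\Sttt$ in $G$ and is ruled out by the reduction. Case (ii) yields a balanced separator $S = \bigcup_{i=1}^{k} N[v_i]$ with $k = \Oh(\log |V(G)|)$; here we branch by guessing, for each $i$, whether the optimal independent set meets $N[v_i]$ and, if so, which vertex of $N[v_i]$ it contains, yielding at most $(|V(G)|+1)^{k} = n^{\Oh(\log n)}$ branches. In each branch the separator is fully resolved, so we recurse on each connected component of the residual graph, whose weight is at most $\w(G)/2$ by \Cref{thm:gyarfas}. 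Case (iii) provides an \esd of $G$ whose particles each carry at most $\alpha \cdot \w(G)$ weight for some constant $\alpha < 1$; we then invoke the known reduction from \textsc{MWIS} on $G$ to \textsc{MWIS} on each particle, combined in polynomial time through the auxiliary bipartite matching and degree-two elimination structure associated with the ESD, and recurse on the particles.

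The running-time analysis would bound the depth and the per-level branching factor of the recursion tree. Along any root-to-leaf path, case (ii) can apply at most $\Oh(\log W)$ times (each halving the weight), and case (iii) at most $\Oh(\log_{1/\alpha} W)$ times, so the depth is $\Oh(\log n)$ when the weights are polynomial in $n$; every level contributes a factor of $n^{\Oh(\log n)}$ through case (ii), yielding a quasi-polynomial bound of the claimed form. The exact power of $\log n$ in the exponent is inflated by the bookkeeping needed to juggle weight thresholds across cases, to account for polynomial overhead inside the ESD processing, and to carry the preliminary Ramsey-style branching through the recursion, which together plausibly produce a constant as large as $19$. The main obstacle, and the principal innovation of the paper, is the structural lemma itself: strengthening the ICALP~2022 result so that an ESD with multiplicatively shrinking particles is obtained \emph{without} first removing $\Oh(\log n)$ vertex neighborhoods. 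Carrying this out requires a refined interplay of $\Sttt$-detection, three-in-a-tree style arguments, and a joint potential combining structural and weight-based progress, all fused into a single polynomial-time procedure whose output slots cleanly into the branching scheme above.
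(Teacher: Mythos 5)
Your reduction to the $\Sttt$-free case and your identification of the structural lemma as the main new tool are both on target (the paper cites~\cite{GartlandL20} for the reduction rather than re-deriving it, but the spirit is the same). The fatal gap is in the branching you propose for outcome~(ii). You guess, for each $v_i$ in the dominating set $X$, ``whether OPT meets $N[v_i]$ and, if so, \emph{which vertex} of $N[v_i]$ it contains,'' and claim the separator is then fully resolved. This implicitly assumes OPT intersects $N[v_i]$ in at most one vertex. But $N(v_i)$ is not a clique: it can contain arbitrarily many pairwise non-adjacent vertices, so the optimum can pick up many of them. None of your $|N[v_i]|+1$ branches covers that possibility, so the recursion is not exhaustive and can lose the optimum. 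Even the corrected version of this idea—enumerate \emph{all} independent subsets of $N[X]$—blows up to $2^{|N[X]|}$, which can be $2^{\Theta(n)}$ even when $|X| = O(\log n)$. This is precisely the obstruction that forces the paper (following the $P_t$-free algorithm of \cite{GartlandL20}) to use a much more elaborate scheme: iteratively collect (boosted) balanced separators into lists $\mathcal{F}_1,\mathcal{F}_2$, branch only on vertices whose neighborhoods are dense in high-multiplicity ``level sets'' of those lists, and invoke packing lemmas (Lemma~\ref{lem:cant pack bbs:intro}, Lemma~\ref{lem:cant pack strong bs}, Lemma~\ref{lem:cant pack bbs}) to cap how many separators can accumulate before the graph disconnects. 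That machinery, not bookkeeping, is what produces the $\log^{19} n$ exponent; your $n^{\Oh(\log^{2} n)}$ estimate is a symptom that the argument is too simple to be correct.

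A secondary slip: the components of $G - N[X]$ in outcome~(ii) have weight at most $0.99\,\w(G)$, not $\w(G)/2$; Gy\'arf\'as (\Cref{thm:gyarfas}) is a statement about induced paths in $P_t$-free graphs and is not what controls the separator returned by \Cref{lem:esd}. This doesn't change the asymptotics of the depth, but the cited justification is wrong.
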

Here $\Oh_H$ denotes constants depending on $|H|$ being repressed.
Theorem~\ref{thm:main} provides strong evidence in favor of Conjecture~\ref{conj:main},
as it refutes the existence of an NP-hardness proof for \textsc{MWIS} for $H$-free graphs as in Conjecture~\ref{conj:main}, unless all problems in NP can be solved in quasi-polynomial time.

\subsection{Our techniques}

As discussed in~\cite{GartlandL20} (in particular Theorem 2), to show Theorem~\ref{thm:main} it suffices to focus on the case $H=S_{t,t,t}$
for a fixed integer $t \geq 1$. Together with a simple self-reducibility argument, it is enough to prove the following.
\begin{theorem}\label{thm:long claw free qp time}
For every integer $t \geq 1$, 
the maximum possible weight of an independent set in a given $n$-vertex $\Sttt$-free graph can be found in $n^{\Oh_t(\log^{16}(n))}$ time.
\end{theorem}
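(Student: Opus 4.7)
The algorithm is recursive and is driven entirely by the new structural lemma announced in the excerpt. On a call with graph $G$ and weights $\w$, the first step is to invoke the lemma. If it produces an induced $\Sttt$, we abort, since by hypothesis this outcome cannot occur. If it returns an extended strip decomposition in which every particle has weight at most $\alpha\cdot\w(V(G))$ for some constant $\alpha<1$, I plug in the standard reduction from \textsc{MWIS} to \textsc{MWIS} on extended strip decomposition particles: it turns the problem into polynomially many subproblems, each supported on a single particle, and I recurse on each particle. If it returns a set $X$ with $|X|=\Oh(\log n)$ such that $N[X]$ is a balanced weight-separator, I branch on $X$ as follows: for every $v\in X$, I either guess one vertex $u\in N[v]$ to lie in the optimum solution (an $|V(G)|$-way branch; after the guess, add $u$ to the output and delete $N[u]$) or commit that no vertex of $N[v]$ lies in the solution and delete $N[v]$. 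After this has been done for every $v\in X$, the separator is gone and we recurse independently on each connected component of what remains, each of which has weight at most $\tfrac12\w(V(G))$.

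Correctness of each branch is immediate: the $\Sttt$-free hypothesis rules out the first case, the extended strip decomposition reduction is exact provided the particle solver is, and the separator branching exhausts every possible intersection of the optimum with $N[X]$. The main challenge, and the source of the $\log^{16} n$ in the exponent, is the complexity analysis. A single separator step costs $|V(G)|^{\Oh(\log n)}=n^{\Oh(\log n)}$ branches and halves the weight, so separator-only recursion would already give $n^{\Oh(\log^2 n)}$. The subtlety is that the extended strip decomposition case does not reduce $|V(G)|$, only the weight of each particle relative to $G$; a particle can still be dense and force several further separator branchings before its weight is halved. The two recursion patterns therefore interleave, and a naive depth argument overcounts.

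My plan for the analysis is to design a progress measure $\Phi$ combining $\log\w(V(G))$, $\log|V(G)|$, and a counter tracking consecutive extended-strip-decomposition steps, chosen so that $\Phi$ drops by a fixed additive amount on every branch of the algorithm --- whether that branch is a separator guess, a separator deletion, or an extended strip decomposition recursion into one particle. Once the constants of the structural lemma (the shrinkage factor $\alpha$ and the constant hidden in $\Oh(\log n)$) are fixed, bounding the number of leaves of the branching tree reduces to counting paths in a tree whose key parameters shrink geometrically, and each geometrically shrinking parameter contributes one power of $\log n$ to the final exponent. Reconciling the parameters of the structural lemma with this bookkeeping is the step I expect to be technically heaviest: choosing $\Phi$ so that even a ``bad'' mixture of ESD steps and separator steps along a root-to-leaf path can be amortized. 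Once this is in place, the promised bound $n^{\Oh_t(\log^{16} n)}$ follows by multiplying the per-level branching costs along any root-to-leaf path, together with the polynomial overhead of invoking the structural lemma and the ESD-based \textsc{MWIS} reduction at each node.
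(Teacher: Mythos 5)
Your separator step does not do what you claim, and the failure is fatal. When you guess a single $u\in N[v]$ to lie in the optimum and delete $N[u]$, the vertices of $N[v]\setminus N[u]$ are untouched; after processing all of $X$ you have removed $X$ itself and assorted fragments of $N[X]$, but not the set $N[X]$ as a whole. (Concretely: if $N(v)$ is a large independent set whose members all have neighbours in two large components of $G-N[X]$, deleting $N[u]$ for one $u\in N(v)$ still leaves those components joined through the rest of $N(v)$.) So the graph remaining after the branching need not split into pieces of weight at most $0.99\,\w(V(G))$, and ``recurse on each connected component of what remains'' makes no progress. Branching instead on every vertex of $N[X]$ would be correct, but $|N[X]|$ is only controlled through $|X|\cdot\Delta$, which brings back the degree-reduction of the $n^{\Oh(k\sqrt n\log n)}$ algorithm and loses quasi-polynomiality. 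No potential $\Phi$ of the kind you describe can save this, because the scheme fails to make progress on some branches, not merely to account for it; the paper stresses that the $P_k$-free packing lemma underpinning the analogous potential argument is \emph{false} for $S_{t,t,t}$-free graphs, a long path being a counterexample.

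The actual algorithm (Section~4) is built around exactly this obstruction. It never branches on a whole separator in one shot; it carries two lists $\mathcal{F}_1,\mathcal{F}_2$ of separator cores discovered so far, branches on a single vertex $v$ only when $N[v]$ meets $\geq N/2^j$ vertices lying in $\geq j$ of the collected separators, and relies on two new packing statements to bound how long the lists can grow before the graph disconnects: $S_{t,t,t}$-free graphs cannot pack many \emph{boosted} balanced separators (Lemma~\ref{lem:cant pack bbs}), nor many balanced separators for the relevant set $\rel_G(G,X,N)$ (Lemma~\ref{lem:cant pack strong bs}). Since Lemma~\ref{lem:esd} yields only ordinary balanced separators, a further quasi-polynomial ``boosting'' subrecursion is needed before Lemma~\ref{lem:cant pack bbs} even applies. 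The exponent $\log^{16}n$ comes from bounding the number of each of six edge types along any root-to-leaf path of the recursion tree (Lemmas~\ref{lem:few t1 esd edges} through~\ref{lem:few failure edges}) and then counting labelled paths, not from a single amortization. None of this machinery is present in your sketch, and without it the running time cannot be controlled.
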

Here $\Oh_t$ denotes constants depending on $t$ being repressed.

\subsubsection{The key structural result}
While Theorem~\ref{thm:gyarfas} provides a balanced separator consisting of a few neighborhoods in
a $P_t$-free graph, it does not seem to be directly usable for $S_{t,t,t}$-free graphs. 
The example of $G$ being a line graph of a clique (which is $S_{1,1,1}$-free) shows
that we cannot hope for merely a balanced separator consisting of a few neighborhoods in $S_{1,1,1}$-free graphs.

However, if $G$ is a line graph, \textsc{MWIS} is solvable in polynomial-time by a very different reason than Theorem~\ref{thm:gyarfas}:
because it corresponds to a matching problem in the preimage graph. 
Luckily, there is a known formalism capturing decompositions of a graph that are ``like a line graph'': extended strip decompositions.

For a graph $G$, a \emph{strip decomposition} consists of a graph $H$ (called the \emph{host}) and a function $\eta$ that assigns to every 
edge $e \in E(H)$ a subset $\eta(e) \subseteq V(G)$ such that $\{\eta(e)~|~e \in E(H)\}$ is a partition of $V(G)$
and a subset $\eta(e,x) \subseteq \eta(e)$ for every endpoint $x \in e$ such that the following holds:
for every $v_1,v_2 \in V(G)$ with $v_1 \in \eta(e_1)$, $v_2 \in \eta(e_2)$ and $e_1 \neq e_2$ we have
$v_1v_2 \in E(G)$ if and only if there is a common endpoint $x \in e_1 \cap e_2$ with $v_1 \in \eta(e_1,x)$ and $v_2 \in \eta(e_2,x)$.
Note that if $G$ is the line graph of $H$, then $G$ has a strip decomposition with host $H$
and $\eta(e,x) = \eta(e,y) = \{e\}$ for every $xy=e \in E(H) = V(G)$.
The crucial observation is that if one provides a strip decomposition $(H,\eta)$ of a graph $G$
together with, for every $xy \in E(H)$, the maximum possible weight of an independent set
in $G[\eta(xy)]$, $G[\eta(xy) \setminus \eta(xy,x)]$, $G[\eta(xy)\setminus \eta(xy,y)]$, and $G[\eta(xy) \setminus (\eta(xy,x) \cup \eta(xy,y)]$
(these graphs are henceforth called  \emph{particles}), then we can reduce computing the maximum weight
of an independent set in $G$ to the maximum weight matching problem in the graph $H$ with some gadgets~attached~\cite{DBLP:conf/soda/ChudnovskyPPT20}.

An \emph{extended strip decomposition} also allows vertex sets $\eta(x)$ for $x \in V(H)$
and triangle sets $\eta(xyz)$ for triangles $xyz$ in $H$; a precise definition can be found in preliminaries, but is
irrelevant for this overview. We refer to Figure~\ref{fig:esd} for an example. Importantly, the notion of a particle generalizes
and the property that one can solve \textsc{MWIS} in $G$ knowing the answers to \textsc{MWIS} in the particles 
is still true. 
Extended strip decompositions come from the celebrated solution to the \textit{three-in-a-tree} problem by Chudnovsky and Seymour.
The task is to determine if a graph contains an induced subgraph which is a tree connecting three given vertices.
The following theorem says that The three-in-a-tree problem can be solved in polynomial time:
\begin{theorem}[\hskip -0.01em{\cite[Section 6]{DBLP:journals/combinatorica/ChudnovskyS10}}, simplified version]\label{thm:three-in-a-tree:intro}
Let $G$ be an $n$-vertex  graph and $Z$ be a subset of vertices with $|Z| \geq 2$.
There is an algorithm that runs in time $\Oh(n^5)$ and returns one of the~following:
\begin{itemize}
\item an induced subtree of $G$ containing at least three elements of $Z$,
\item an extended strip decomposition $(H,\eta)$ of $G$ where for every $z \in Z$ there exists
a distinct degree-1 vertex $x_z \in V(H)$ with the unique incident edge $e_z \in E(H)$ and $\eta(e_z,x_z) = \{z\}$.
\end{itemize}
\end{theorem}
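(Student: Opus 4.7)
The plan is to design a recursive algorithm that maintains a pair $(G,Z)$ and at each step either exhibits the induced subtree containing three elements of $Z$, or reduces the instance to a strictly smaller one in some measure, eventually producing the \esd{}. Since the required second output demands that each $z \in Z$ sit at a distinct degree-$1$ vertex of the host $H$ with its incident edge hosting exactly $\{z\}$, it is natural to view the decomposition as being built by ``peeling off'' a local region around each element of $Z$ and gluing the remaining core together via strips. So I would work vertex-by-vertex, carving off one strip per element of $Z$.

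First, I would pick some $z_0 \in Z$ and grow a BFS-like exploration rooted at $z_0$, tracking a candidate strip region $R$ together with its boundary into $V(G)\setminus R$. The guiding invariant is that every induced path from $z_0$ into $R$ interacts with the rest of $G$ only through a small, structured boundary, which will eventually play the role of the attachment set $\eta(e_{z_0}, y)$ for the edge $e_{z_0} = x_{z_0}y$ of $H$. If during this growth process we uncover three vertices of $Z$ that are pairwise joined by internally disjoint induced paths meeting at a single branching vertex, with no chords across, we immediately output the induced subtree and stop. Otherwise the absence of such a tree should force the boundary to have a restricted form---intuitively, dominated by a single module or at most two attachment points---which defines the strip at $z_0$; we then remove the interior of $R$, re-attach the boundary as a new host vertex, and recurse on the smaller instance with $Z \setminus \{z_0\}$, inductively constructing the remaining host $H'$ and gluing along the identified attachments.

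The main obstacle I anticipate is verifying the full adjacency axiom of extended strip decompositions and in particular handling triangles of the host $H$. Unlike ordinary strip decompositions, an \esd{} allows vertex sets $\eta(x)$ and triangle sets $\eta(xyz)$, so the case analysis must cover configurations where three strips meet at a common structured region and all cross-adjacencies must be explained by shared endpoints in $H$. Proving that the algorithm always succeeds at every such junction---either by pulling out a three-vertex induced subtree witness or by producing a consistent local structure---is, I expect, the heart of the Chudnovsky--Seymour argument. It likely requires a careful induction on a combined measure such as $|V(G)|+|Z|$, together with safe reductions (contracting twin modules, deleting dominated vertices, absorbing pendant structures) applied before any strip is identified, so that when an irreducible instance is reached, the strip can be read off essentially uniquely. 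The $\Oh(n^5)$ bound would then follow from bounding the number of recursive layers by $\Oh(n)$ and the cost of each reduction and adjacency certification by $\Oh(n^4)$.
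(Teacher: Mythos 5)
This theorem is not proved in the paper you were given: it is imported wholesale as a black box from Chudnovsky and Seymour (cited as Section~6 of their three-in-a-tree paper), and the paper never reprises the argument. So there is no internal proof to compare against; the only fair comparison is between your sketch and the actual Chudnovsky--Seymour proof.

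Measured against that, your proposal is not yet a proof but an outline that defers the load-bearing steps. The central claim --- that if the BFS-style exploration from $z_0$ never uncovers a three-in-a-tree witness, then ``the boundary should have a restricted form, intuitively dominated by a single module or at most two attachment points'' --- is precisely the content of the theorem and is asserted without argument. You in fact acknowledge this yourself (``Proving that the algorithm always succeeds at every such junction \ldots is, I expect, the heart of the Chudnovsky--Seymour argument''), which amounts to conceding that the proposal stops short of the proof. Without a precise statement of what structure the boundary must have, and a demonstration that a failure of that structure yields an induced tree through three elements of $Z$, the rest of the construction has nothing to stand on.

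There is also a more concrete soundness concern with the ``peel off one strip per $z \in Z$ and recurse on $Z \setminus \{z_0\}$'' scheme. After you remove the interior of $R$ and replace it by a boundary gadget, an induced subtree through three vertices of $Z \setminus \{z_0\}$ in the reduced instance need not correspond to one in $G$ (a path through the boundary gadget may not lift to an induced path in $G$), and conversely a three-in-a-tree in $G$ that uses vertices of $R$ can disappear after the reduction. The Chudnovsky--Seymour argument does not proceed strip-by-strip over $Z$ in this manner; it is a global induction on a more delicate measure, with a lengthy case analysis around how the candidate decomposition interacts with the whole terminal set $Z$ simultaneously, and a large part of the difficulty is exactly verifying the extended strip decomposition adjacency axioms (including the triangle sets) at the end. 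Your sketch identifies the right difficulty but does not resolve it, and as written the recursive reduction is not justified to preserve equivalence of the two outcomes.
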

In a sense, an extended strip decomposition as in Theorem~\ref{thm:three-in-a-tree:intro} is a certificate
that no three vertices of $Z$ can be connected by an induced tree in $G$.

\cite{DBLP:conf/soda/ChudnovskyPPT20} combined Theorem~\ref{thm:gyarfas} 
with Theorem~\ref{thm:three-in-a-tree:intro} in a convoluted way to show a QPTAS for \textsc{MWIS} in $S_{t,t,t}$-free graphs;
Thereom~\ref{thm:three-in-a-tree:intro} is used here to construct an induced $S_{t,t,t}$ in the argumentation.
\cite{ICALP-qptas} provided a simpler argument for the existence of a QPTAS:
they derived from Theorem~\ref{thm:three-in-a-tree:intro} the following structural result.

\begin{theorem}[{\cite[Theorem 2]{ICALP-qptas}} in a weighted setting]\label{thm:ICALP:weight}
For every fixed integer $t$, there exists a polynomial-time algorithm that,
given an $n$-vertex graph $G$ with nonnegative vertex weights, 
either:
\begin{itemize}
\item outputs an induced copy of $S_{t,t,t}$ in $G$, or
\item outputs  a set $\mathcal{P}$ consisting of at most $11 \log n + 6$ induced paths in $G$,
  each of length at most $t+1$, and a rigid extended strip decomposition of
  $G - N[\elems{\Pp}]$ with every particle of weight at most half of the total weight of $V(G)$.
\end{itemize}
\end{theorem}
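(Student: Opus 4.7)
The plan is to build $\mathcal{P}$ incrementally via repeated applications of Theorem~\ref{thm:gyarfas} (Gy\'arf\'as), and then invoke Theorem~\ref{thm:three-in-a-tree:intro} (three-in-a-tree) to produce a rigid extended strip decomposition on the resulting subgraph $G - N[\elems{\mathcal{P}}]$. Throughout, the $\Sttt$-freeness of $G$ will be leveraged either to shorten the (possibly long) induced paths returned by Gy\'arf\'as or to extract an explicit induced $\Sttt$ and abort.

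I would initialize $\mathcal{P} := \emptyset$ and at each round examine the connected components of $G - N[\elems{\mathcal{P}}]$. If every component has weight at most $\weight(V(G))/2$, I switch to the ESD construction; otherwise there is a unique heavy component $C$ with $\weight(C) > \weight(V(G))/2$. Apply Theorem~\ref{thm:gyarfas} to $C$, obtaining an induced path $Q$ in $C$ whose closed neighborhood is a balanced separator of $C$. If $|V(Q)| \le t+2$, add $Q$ to $\mathcal{P}$ and iterate: a standard potential argument then bounds the number of rounds by $\Oh(\log n)$, since each round shrinks the weight of the heaviest remaining component by a constant multiplicative factor.

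The main obstacle is the case when the Gy\'arf\'as path $Q$ is \emph{longer} than $t+1$. The plan here is to extract from $Q$ a constant number of short induced subpaths, each of length at most $t+1$, whose combined closed neighborhoods still balanced-separate $C$, or else to exhibit an induced $\Sttt$. Informally, writing $Q = v_0 v_1 \ldots v_k$ and letting $C'$ be a heavy leftover component after removing $N[V(Q')]$ for a short sub-path $Q'$ of $Q$, one analyses where $C'$ attaches to $Q$. If $C'$ attaches only near one endpoint of $Q$, a short prefix or suffix of $Q$ already suffices; if $C'$ attaches near the interior, then combining a long left stretch of $Q$, a long right stretch of $Q$, and a long induced path into $C'$ at the attachment vertex yields an explicit induced $\Sttt$, contradicting $\Sttt$-freeness. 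Pushing this analysis strip-by-strip drives the leftover towards an endpoint until a short cover is produced; the constant $11$ in $11\log n + 6$ arises from bookkeeping the number of short subpaths harvested per Gy\'arf\'as call together with the rate of weight shrinkage.

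Once Phase~1 terminates, every component of $G - N[\elems{\mathcal{P}}]$ has weight at most $\weight(V(G))/2$. For each such component $C$, I would apply Theorem~\ref{thm:three-in-a-tree:intro} with a carefully chosen terminal set $Z \subseteq V(C)$ of size at least two. Because $G$ is $\Sttt$-free, the ``induced tree'' branch of Theorem~\ref{thm:three-in-a-tree:intro}, possibly combined with short induced extensions inside $C$, can be converted into an induced $\Sttt$ certificate; hence we may assume the ESD branch fires, giving a rigid ESD of $C$. Since particles of this decomposition are contained in $V(C)$, each weighs at most $\weight(V(G))/2$, and unioning the per-component decompositions gives the desired rigid extended strip decomposition of $G - N[\elems{\mathcal{P}}]$. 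The dominant technical hurdle is Phase~1's short-path extraction via the $\Sttt$-avoidance argument; the rest is bookkeeping and direct invocations of Theorems~\ref{thm:gyarfas} and~\ref{thm:three-in-a-tree:intro}.
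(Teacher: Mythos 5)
This theorem is imported from \cite[Theorem 2]{ICALP-qptas} (restated in a weighted setting); the present paper does not reprove it, so there is no internal proof to compare your proposal against. On its own terms, your proposal has a genuine gap.

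Your Phase~1 is designed to terminate only when every component of $G-N[\bigcup\mathcal{P}]$ has weight at most $\w(V(G))/2$, and Phase~2 then builds an extended strip decomposition per (already light) component. But for many $\Sttt$-free graphs this target is unreachable with $O(\log n)$ short paths, and there is also no $\Sttt$ to abort to. Take $G=L(K_n)$, the line graph of a clique, which is $S_{1,1,1}$-free and hence $\Sttt$-free for every $t$. Any $\le 11\log n+6$ induced paths of length $\le t+1$ in $L(K_n)$ touch only $O(t\log n)$ vertices of $K_n$; removing their closed neighborhoods leaves exactly the line graph of the clique on the remaining $n-O(t\log n)$ vertices, a single connected component of weight $(1-o(1))\w(V(G))$. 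Moreover, your ``$C'$ attaches at the interior of $Q\Rightarrow\Sttt$'' step fails there: if $v_i=\{a,b\}$, every neighbor of $v_i$ shares $a$ or $b$ with $v_i$ and is therefore adjacent to $v_{i-1}$ or $v_{i+1}$, so the three proposed arms are never pairwise anti-adjacent and no claw appears. That informal case analysis silently assumes an anti-adjacency you cannot enforce.

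The underlying conceptual error is treating the extended strip decomposition as a finishing step applied once every component is light. If all components of $G-N[\bigcup\mathcal{P}]$ were light, the trivial ESD (one isolated host vertex per component) would already meet the particle-weight bound and your Phase~2, including its call to three-in-a-tree, would be vacuous. The statement constrains the weights of the ESD's particles rather than of the components exactly so that a heavy component like $L(K_{n-O(\log n)})$ can be carved into light particles by a nontrivial host (for $L(K_n)$ one takes host $K_n$ and $\mathcal{P}=\emptyset$). The cited proof is organized around three-in-a-tree from the start for this reason: as the introduction of the present paper recounts, one picks three anti-adjacent induced paths of length $t$, removes their interior neighborhoods, calls three-in-a-tree on the free endpoints, and iterates $O(\log n)$ times; an induced tree through the terminals glues onto the three paths to give an $\Sttt$, while the ESD branch shrinks the heavy particle multiplicatively. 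Your Gy\'arf\'as-plus-subpath-extraction Phase~1 cannot be patched into this shape without a fundamental rewrite.
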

(Here, rigid means that the extended strip decomposition does not have some unnecessary empty sets;
in a rigid decomposition the size of $H$ is bounded linearly in the size of $G$.
 The formal statement of Theorem~\ref{thm:ICALP:weight} in~\cite{ICALP-qptas} is only for uniform weights in $G$, but as observed
 in the conclusions of~\cite{ICALP-qptas}, the proof works for arbitrary vertex weights.)

\cite{ICALP-qptas} showed that Theorem~\ref{thm:ICALP:weight} easily gives a QPTAS for \textsc{MWIS}
in $S_{t,t,t}$-free graphs, along the same lines as how \cite{DBLP:conf/soda/ChudnovskyPPT20} showed that
Theorem~\ref{thm:gyarfas} easily gives a QPTAS for \textsc{MWIS} in $P_t$-free graphs.

However, it seems that the outcome of Theorem~\ref{thm:ICALP:weight} is not very useful if one aims for an exact algorithm
faster than a subexponential one.
Our main graph-theoretic contribution is a strengthening of Theorem~\ref{thm:ICALP:weight} to the following.
\begin{theorem}\label{thm:esd:intro}
For every fixed integer $t$, there exists an integer $c_t$ and a polynomial-time algorithm that, given an $n$-vertex graph $G$
and a weight function $\w : V(G) \to [0,+\infty)$, 
returns one of the following outcomes:
\begin{enumerate}
    \item an induced copy of $S_{t,t,t}$ in $G$;
    \item a subset $X \subseteq V(G)$ of size at most $c_t \cdot \log(n)$ such that
    every component of $G-N[X]$ has weight at most $0.99\w(G)$;
    \item a rigid extended strip decomposition of $G$ where no particle is of weight larger than $0.5\w(G)$.
\end{enumerate}
\end{theorem}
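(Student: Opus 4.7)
The plan is to bootstrap \cref{thm:ICALP:weight} with a case split on whether the residual esd it produces has a single dominating component. First apply \cref{thm:ICALP:weight} to $(G,\w)$. If it returns an induced $\Sttt$, we report it and are done with outcome 1. Otherwise it returns a family $\mathcal{P}$ of $\Oh_t(\log n)$ induced paths and a rigid extended strip decomposition of $G - N[V(\mathcal{P})]$ in which every particle has weight at most $\tfrac{1}{2}\w(G)$. Set $X_0 := V(\mathcal{P})$, note $|X_0| = \Oh_t(\log n)$, and enumerate the connected components $C_1, C_2, \ldots$ of $G - N[X_0]$ so that $\w(C_1)$ is maximum. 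If $\w(C_1) \leq 0.99\,\w(G)$, then all components are already $0.99$-balanced and we return $X_0$ as outcome 2.

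The interesting case is when $\w(C_1) > 0.99\,\w(G)$, so the periphery $R := V(G) \setminus C_1$ carries less than $0.01\,\w(G)$ of the total weight. The restriction of the esd to the host-part mapping into $C_1$ is a rigid esd of $G[C_1]$ whose particles all have weight at most $\tfrac{1}{2}\w(G)$. The plan is to recurse on $(G[C_1], \w|_{C_1})$ and combine the result. If the recursion returns a $\Sttt$ in $G[C_1]$ we pass it up. If it returns a separator $X' \subseteq C_1$ of size $\Oh_t(\log n)$ with every component of $G[C_1] - N_{G[C_1]}[X']$ of weight at most $0.99\,\w(C_1)$, then $X := X_0 \cup X'$ witnesses outcome 2 for $G$: each component of $G - N_G[X]$ is either a subset of $R$ (weight $< 0.01\,\w(G)$) or contained in a component of $G[C_1] - N_{G[C_1]}[X']$ (weight $\leq 0.99\,\w(G)$). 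Finally, if the recursion returns a rigid esd of $G[C_1]$ with balanced particles, we must lift it to a rigid esd of all of $G$ by absorbing the peripheral vertex set $R$ without inflating any particle past $\tfrac{1}{2}\w(G)$.

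The main obstacle is this last lifting step. A peripheral vertex $r \in R$ can have neighbors scattered across many particles of the esd of $G[C_1]$, and an esd cannot natively encode adjacencies between a newly added vertex and multiple disjoint particles. The strategy for overcoming this is to fold each $r \in R$ in by introducing a fresh host vertex $x_r$ (and, where needed, a fresh host edge through $x_r$) and to verify the attachment by applying \cref{thm:three-in-a-tree:intro} to an auxiliary graph that records how $N_G(r)$ meets the particles of the current decomposition; any obstruction is converted into an induced $\Sttt$ in $G$, sending us back to outcome 1. The weight budget accommodates $R$ easily since each particle's slack is at least $0.49\,\w(G)$ while $\w(R) < 0.01\,\w(G)$, so the difficulty is purely combinatorial: preserving rigidity and the esd adjacency axioms after the attachment, and controlling the recursion so that the cumulative separator size stays $\Oh_t(\log n)$ — a bound we can maintain because each recursive invocation of \cref{thm:ICALP:weight} is on a strictly smaller vertex set and the $\Oh_t(\log n)$ contributions from successive levels collapse via the geometric shrinkage of the heavy-component sizes.
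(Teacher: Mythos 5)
Your high-level plan matches the paper's (apply \cref{thm:ICALP:weight}, then absorb the removed vertices back into the decomposition via three-in-a-tree), but there are two genuine gaps. First, the recursion on $(G[C_1], \w|_{C_1})$ re-applies \cref{thm:ICALP:weight} at every level, paying another $\Oh_t(\log n)$ vertices each time; the recursion only stops when the heaviest component drops below $99\%$ of the current weight, which caps neither the depth nor the accumulated separator. Removing $N[X_0]$ may delete a single vertex and arbitrarily little weight, so the depth can be $\Theta(n)$ and the separator $\Theta(n\log n)$. The ``geometric shrinkage'' you invoke never materializes --- neither weight nor vertex count drops by a constant fraction per level --- and the induction $f(n) \geq \Oh_t(\log n) + f(n-1)$ does not close for $f(n) = \Oh(\log n)$. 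The paper avoids this entirely: it calls \cref{thm:ICALP:weight} exactly once and then folds the vertices of $N[Z]$ back one at a time via \cref{lem:esd:step}, which is engineered so that when a fold fails it yields a balanced separator dominated by a \emph{constant} number of vertices (not $\Oh(\log n)$), and that is what makes the union with $N[Z]$ stay $\Oh(\log n)$.

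Second, the fold-in step is vastly understated. A new vertex $r$ may be adjacent to vertices in $\eta(e)$ for many different host edges $e$, and no fresh host vertex $x_r$ can absorb such a neighborhood while satisfying the extended strip decomposition axioms. Before three-in-a-tree even applies, the paper must locally clean the decomposition, extract a large wall governing a tangle in the host (\cref{lem:find-safe-wall}), show that $N(r)$ is captured by a small separation in the host or else $G$ contains $\Sttt$ (\cref{lem:capture-projection}), and clean up the backdoors of that separation. Crucially, several of these steps fail by producing a small balanced separator rather than an $\Sttt$ --- for example, when no tangle of the required order exists --- so your claim that ``any obstruction is converted into an induced $\Sttt$'' does not hold. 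The separator outcome of \cref{lem:esd:step} is essential, and your sketch has no analogue of it.
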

That is, we either provide an extended strip decomposition of the \emph{whole} graph 
(not only after deleting a neighborhood of a small number of vertices as in Theorem~\ref{thm:ICALP:weight})
or a small number of vertices such that deletion of their neighborhood breaks the graph into multiplicatively smaller (in terms of weight) components.

The proof of Theorem~\ref{thm:esd:intro} is provided in Section~\ref{sec:esd}. 
Let us briefly sketch it. 
  We start by applying Theorem~\ref{thm:ICALP:weight} to $G$; we are either already done 
or we have a set $Z \coloneqq  \bigcup_{P \in \mathcal{P}} V(P)$ of size $\Oh(\log n)$ 
and an extended strip decomposition $(H,\eta)$ of $G-N[Z]$ with small particles.
Our goal is now to add the vertices of $N[Z$] one by one back to $(H,\eta)$,
possibly exhibiting one of the other outcomes of Theorem~\ref{thm:esd:intro} along the way.
That is, we want to prove the following lemma:
\begin{lemma}\label{lem:esd:step:intro}
For every fixed integer $t$ there exists an integer $c_t$ and a polynomial-time algorithm that, given an $n$-vertex graph $G$,
a weight function $\w : V(G) \to [0,+\infty)$, a real $\tau \geq \w(G)$, a vertex $v \in V(G)$,
and a rigid extended strip decomposition $(H,\eta)$ of $G-v$ with every particle of weight at most $0.5\tau$, 
returns one of the following:
\begin{enumerate}
    \item an induced copy of $S_{t,t,t}$ in $G$;
    \item a set $Z \subseteq V(G)$ of size at most $c_t$ such that every connected component
    of $G-N[Z]$ has weight at most $0.99 \tau$;
    \item a rigid extended strip decomposition of $G$ where no particle is of weight larger than $0.5\tau$.
\end{enumerate}
\end{lemma}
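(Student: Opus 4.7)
The plan is to analyze where the newly added vertex $v$ ``wants to sit'' inside the given rigid extended strip decomposition $(H,\eta)$ of $G-v$, based on its neighborhood $N \coloneqq N_G(v)\setminus\{v\}$. Let $\Sigma$ be the \emph{support} of $v$: the set of atoms of $(H,\eta)$ --- vertices, edges and triangles of $H$ --- whose $\eta$-image meets $N$. I would split the argument into a small-support regime and a large-support regime, and use Theorem~\ref{thm:three-in-a-tree:intro} together with Theorem~\ref{thm:ICALP:weight} as black boxes.

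In the small-support case, when $|\Sigma|\leq c_t'$ for a sufficiently large constant $c_t'$, I would attempt a finite list of local insertion moves on $(H,\eta)$: enlarging one $\eta$-set with $v$, attaching a new pendant vertex or edge of $H$ holding $\{v\}$, or subdividing an edge of $H$ to create a fresh pocket for $v$. The ESD adjacency axioms restrict which of these are legal, and a case analysis on how $N$ is distributed across the constantly many atoms in $\Sigma$ determines which move works, if any. If no legal insertion exists, the concrete adjacency conflict certifies $v$ as the centre of an induced near-star whose three rays can each be extended to length $t$ through the host $H$, because rigidity of $(H,\eta)$ guarantees that every relevant atom still contains a vertex of $G-v$ usable as the next step of an induced path; this yields an induced $\Sttt$ in $G$. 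The weight accounting is routine, using the slack $\tau\geq\w(G)\geq\w(v)$: whenever insertion tips a particle above $0.5\tau$, we either split that particle along an existing edge of $H$ or fall back to outcome~$2$ with $Z=\{v\}$ together with the $\Oh_t(1)$ neighbours of $v$ that witnessed the difficulty.

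In the large-support case, when $|\Sigma|>c_t'$, the rigidity of $(H,\eta)$ lets us select three atoms $\alpha_1,\alpha_2,\alpha_3\in\Sigma$ pairwise far apart in $H$, together with representatives $z_i\in N\cap\eta(\alpha_i)$. I would then apply Theorem~\ref{thm:three-in-a-tree:intro} to $G' \coloneqq G[(V(G)\setminus N[v])\cup\{z_1,z_2,z_3\}]$ with terminal set $\{z_1,z_2,z_3\}$, so that any induced tree returned is automatically non-adjacent to $v$ except at the $z_i$'s. In the first outcome, that tree combines with the edges $vz_1,vz_2,vz_3$ into an induced subdivided claw centred at $v$; since the $z_i$'s are chosen far apart in $H$, each leg must traverse many atoms of $(H,\eta)$ and so is long enough to contain an induced $\Sttt$. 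In the second outcome, we receive an extended strip decomposition of $G'$ that separates the $z_i$'s into distinct degree-$1$ pockets, and we reconcile it with $(H,\eta)$ to either directly assemble an ESD of $G$ satisfying outcome~$3$ or reduce to the small-support case.

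The main obstacle I foresee is weight control for outcome~$3$ in the large-support case. Theorem~\ref{thm:three-in-a-tree:intro} produces an ESD with no a priori bound on particle weights, and the bounds for $(H,\eta)$ do not transfer automatically to a decomposition with a different host. My plan is to feed the resulting ESD of $G'$ (together with $v$) into Theorem~\ref{thm:ICALP:weight} applied to $G$: the output either already gives outcome~$3$ directly, or produces $\Oh(\log n)$ path neighborhoods whose union is essentially localised around $v$ --- because $(H,\eta)$ already provides a weight-balanced decomposition on $G-v$ --- allowing us to extract a bounded set $Z$ for outcome~$2$. Making this localisation precise, and ensuring that recursion between the two regimes terminates without blowing up $c_t$, is where I expect the technical heart of the proof to lie.
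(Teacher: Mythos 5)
Your large-support case has a gap that looks fatal and, in fact, reproduces exactly the obstacle this lemma is meant to overcome. You propose applying Theorem~\ref{thm:three-in-a-tree:intro} to $G' = G[(V(G)\setminus N[v])\cup\{z_1,z_2,z_3\}]$ with terminal set $\{z_1,z_2,z_3\}$. When the algorithm returns an extended strip decomposition, it is an ESD of $G'$, which is $G$ with the entire set $N[v]\setminus\{z_1,z_2,z_3\}$ deleted. You are thus back to the original problem: you hold an ESD of $G$ minus many vertices and need an ESD of all of $G$ --- but now the ESD is a fresh object unrelated to $(H,\eta)$, so you cannot even invoke the present lemma one vertex at a time. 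Your fallback of feeding the result into Theorem~\ref{thm:ICALP:weight} is circular for the same reason: that theorem also only delivers an ESD of $G - N[Z']$ for a logarithmic-size $Z'$, while outcome~2 here demands a \emph{constant}-size $Z$. The "reconcile" step is where the real difficulty lives, and it is not addressed. The small-support case has a parallel problem: if $N(v)$ touches, say, three far-apart vertex sets $\eta(x_1),\eta(x_2),\eta(x_3)$, no local insertion move can place $v$, and the claim that rigidity alone lets you extend three rays to length $t$ through $H$ needs precisely the kind of long-disjoint-path infrastructure (subdivided wall, $v$-safe paths, projection control) that the statement never supplies.

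For contrast, the paper's actual proof is structured so that it \emph{never} calls three-in-a-tree on a graph missing $N[v]$. It first locally cleans $(H,\eta)$, extracts a tangle in the host of the cleaned ESD of $G-N[v]$ (or immediately outputs a small balanced separator), pulls a $v$-safe wall $W$ out of the tangle via the grid/wall theorem, and then finds a \emph{separation} $(A,B)$ in $H$ of constant order that ``captures'' the projection of $v$ (the part of $G$ reachable from $v$ without entering edge sets), with $W$ living entirely on the $B$ side. It then builds an auxiliary graph $G_A$ consisting of the $A$-side of $G$ together with $v$ and a few artificial terminal vertices placed at the interface $A\cap B$, and applies three-in-a-tree to $G_A$ with that (possibly larger than three) interface set as terminals. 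The two crucial features your proposal misses are (i) the terminal set encodes the whole cut boundary, so any returned tree can be grown into an $S_{t,t,t}$ using anti-adjacent long paths through $W$ on the $B$ side, and (ii) the returned ESD has peripheral degree-one vertices exactly at the interface, so it can be \emph{glued} back onto $H[B]$ to yield an ESD of all of $G$ whose $B$-side particles retain their old weight bounds and whose $A$-side is light because the tangle certifies $\w(\preimage{(H,\eta)}{A\setminus B})$ is small. Without a mechanism that keeps most of $(H,\eta)$ intact while absorbing $v$ on one side of a small cut, neither the weight control nor the "ESD of the whole graph" guarantee can be recovered from a fresh three-in-a-tree call.
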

A simple yet important observation for Lemma~\ref{lem:esd:step:intro} is that for
$x \in V(H)$ of degree at least two, the set $\bigcup_{y \in N_H(x)} \eta(xy,x)$ can be dominated by at most two vertices,
as the sets $\eta(xy,x)$ for $y \in N_H(x)$ are complete to each other. 
Consequently, if $(A,B)$ is a separation in $H$ of small order, 
then the part of $G$ that is placed by $\eta$ in $H[A]$ and the part of $G$ that is placed by $\eta$ in $H[B]$
can be separated by deleting at most $2|A \cap B|$ vertex neighborhoods in $G$.
Hence, if there is a separation $(A,B)$ in $H$ of constant order where both sides of this separation have substantial weight
(at least $0.01\tau$), we can provide the second outcome of Lemma~\ref{lem:esd:step:intro}.

As $N[v]$ is just one neighborhood,
the same observation holds if, instead of looking at $(H,\eta)$, we look at the inherited extended strip decomposition
$(H',\eta')$ of $G-N[v]$. Here, $(H',\eta')$ is obtained from $(H,\eta)$ by first deleting vertices of $N(v)$
from sets $\eta(\cdot)$ and then performing a cleanup operation that trims unnecessary empty sets
and ensures that for every $xy \in E(H')$ there is a path in $G[\eta'(xy)]$ between $\eta'(xy,x)$ and $\eta'(xy,y)$.
Hence, we can take all separations $(A,B)$ in $H'$ of order bounded by a large constant (depending on $t$)
and orient them from the side that contains less than $0.01\tau$ weight to the side containing almost all the weight of $G$.
This orientation defines a tangle in $H'$.
By classic results from the theory of graph minors, this tangle implies the existence of a large wall $W$ in $H'$ which
is always mostly on the ``large weight'' side of any separation $(A,B)$ of constant order.
The cleaning operation ensures that the wall $W$ is also present in $(H,\eta)$. 

An important observation now is that, because $(H',\eta')$ is cleaned as described below, any family of vertex-disjoint paths
in $H'$ projects down to a family of induced, vertex-disjoint, and anti-adjacent paths in $G$ of roughly the same length (or longer):
for a path $P$ in $H$, just follow paths from $\eta(xy,x)$ to $\eta(xy,y)$ in $G[\eta(xy)]$ for consecutive edges $xy$ on $P$.
Furthermore, a wall $W$ is an excellent and robust source of long vertex-disjoint paths.

This allows us to prove that if the neighbors of $v$ are well-connected to the wall $W$ in $(H,\eta)$ --- either they are spread around the wall itself, or one can connect them to $W$ via three vertex-disjoint paths in $H$ --- then $G$ contains an induced $S_{t,t,t}$.
Otherwise, we show that there is a separation $(A,B)$ in $H$ with the neighbors of $v$ essentially all contained
in the sets of $H[A]$, while $W$ lies on the $B$-side of the separation. 
(Here, a large number of technical details are hidden in the phrase ``essentially contained''.)
We construct a graph $G_A$ being the subgraph of $G$ induced by the vertices contained in the $\eta$ sets of $H[A]$,
augmented with a set $Z$ of artificial vertices attached to $\bigcup_{y \in N_H(x) \cap A} \eta(xy,x)$ for $x \in A \cap B$; vertices of $Z$
signify possible ``escape paths'' to the wall $W$. These ``escape paths'' allow us to show that any induced tree in $G_A$
that contains at least three vertices of $Z$ lifts to an induced $S_{t,t,t}$ in $G$, see \cref{fig:sttt}.
Hence, the algorithm of Theorem~\ref{thm:three-in-a-tree:intro} applied to $G_A$ and $Z$
can be used to rebuild $H[A]$ to accommodate $v$ there as well, or to expose an induced $S_{t,t,t}$. This finishes the sketch of the proof of Lemma~\ref{lem:esd:step:intro}
and of Theorem~\ref{thm:esd:intro}.

\begin{figure}
    \centering
    \includegraphics[scale=0.8]{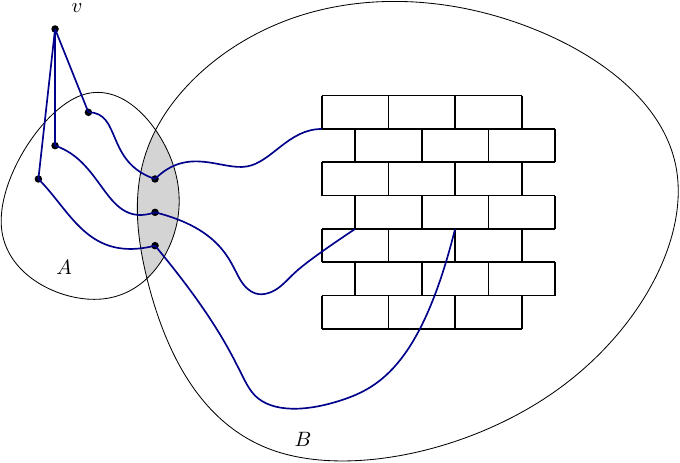}
    \caption{Extending a subdivided claw in $G_A$ to an $S_{t,t,t}$ using the large wall $W$ in $B$.}
    \label{fig:sttt}
\end{figure}

We would like to highlight a significant difference between previous works~\cite{ACDR21,DBLP:conf/soda/ChudnovskyPPT20,ICALP-qptas}
and our use of the three-in-a-tree theorem to exhibit an $S_{t,t,t}$ in a graph or obtain an extended strip decomposition.
All aforementioned previous works essentially picked three anti-adjacent paths $P_1$, $P_2$, $P_3$ of length $t$ each,
with endpoints say $x_i$ and $y_i$ for $i=1,2,3$,
removed their neighborhood except for the neighbors of $y_i$s, and called three-in-a-tree for the set $Z = \{x_1,x_2,x_3\}$;
note that any induced tree in the obtained graph that contains $Z$ contains also an induced $S_{t,t,t}$. 
This method inherently produced extended strip decompositions not for the entire graph, but only for after removal of
a number of neighborhoods. Furthermore, it used the assumption of being $S_{t,t,t}$-free only in a very local sense:
there is no $S_{t,t,t}$ with paths extendable to the given three vertices of $Z$.
In this work, in contrast, we apply the three-in-a-tree theorem to a potentially much bigger set $Z$, and use a subdivided wall
in the host graph of the extended strip decomposition to extend any induced tree found to an induced $S_{t,t,t}$. 
In this way, we used the assumption of being $S_{t,t,t}$-free in a more global way than just merely asking
for three particular leaves. 

\subsubsection{Branching}\label{subsec:branching}

We now proceed with a sketch of our recursive branching algorithm.
On a very high level, it is based on techniques used in the quasi-polynomial time algorithm for independent set on $P_k$-free graphs found in \cite{GartlandL20}, though multiple new ideas are required to make the reasoning work in the setting of $S_{t,t,t}$-free graphs, making both the algorithm and its running time analysis quite a bit more technical. We will soon sketch the algorithm found in \cite{GartlandL20} and describe how to extend it to $\Sttt$-free graphs, but first we must address a major barrier. The fact that $P_k$-free graphs have balanced separators dominated by $k$ vertices, as discussed after Theorem~\ref{thm:gyarfas}, is a crucial fact used in the algorithm of \cite{GartlandL20}. But, as mentioned previously, $\Sttt$-free graphs have no such property (take for instance the line graph of a clique). This is where Theorem~\ref{thm:esd:intro} comes to the rescue.  

When applying Theorem~\ref{thm:esd:intro} to $G$ (the input graph of the current call of the algorithm), since we assume that $G$ is $\Sttt$-free, we are guaranteed that outcome $(1)$ will not occur. If outcome $(3)$ occurs then we get an extended strip decomposition $(H, \eta)$ and, as previously mentioned, we can reduce finding a maximum independent set of $G$ to finding a maximum independent set in each particle of~$(H, \eta)$. That is great news, as each particle has at most half of the weight of $G$, and we can easily employ a divide-and-conquer strategy by recursively calling the algorithm on each particle of~$(H, \eta)$. So, since outcome $(1)$ never happens and outcome $(3)$ gives us an easy algorithm, we can always assume that outcome $(2)$ happens, that is, that Theorem~\ref{thm:esd:intro} gives us a balanced separator of $G$ that is dominated by $\Oh(\log n)$ vertices, and now we can try to extend the techniques found in \cite{GartlandL20} to work for $\Sttt$-free graphs. Therefore, for the rest of this subsection we will focus on sketching an algorithm for independent set on an $\Sttt$-free graph $G$ such that all induced subgraphs of $G$ have a balanced separator dominated by some constant number of vertices (the stronger assumption of a constant number of vertices versus $\log n$ vertices does not change the algorithm very much and simplifies the discussion).

Before sketching the algorithm let us give a few short definitions around balanced separators for an $\Sttt$-free graph $G$ (see Section~\ref{sec:prelims} for formal definitions of balanced separators). For $n'>0$, we say that a set $S \subseteq V(G)$ is a {\em{$n'$-balanced separator}}\ for $G$ if no component of $G-S$ has more than $n'$ vertices. If $A \subseteq V(G)$ and no component of $G-S$ contains over $n'$ vertices of $A$, we say that $S$ is a $n'$-balanced separator for $(G, A)$. The outcome $(2)$ of Theorem~\ref{thm:esd:intro} gives us a $0.99|A|$-balanced separator for $(G, A)$ dominated by $\Oh(\log n)$ vertices (again here for simplicity we will assume that these balanced separators are in fact dominated by a constant number of vertices). 
%
However, by picking a constant number of balanced separators as provided by Theorem~\ref{thm:esd:intro} and taking their union, we can obtain $c|A|$-balanced separators for $(G, A)$ dominated by a constant number of vertices for any fixed $c \in (0,1)$, so we will assume we have access to such strengthened balanced separators.



\paragraph{Summary of the Quasi-Polynomial Time Algorithm for MWIS on $P_k$-free Graphs.}
The starting point for our algorithm is the algorithm for MWIS on $P_k$-free graphs by Gartland and Lokshtanov~\cite{GartlandL20}, who in turn build on an algorithm of Bacsó, Lokshtanov, Marx, Pilipczuk, Tuza, and van Leeuwen \cite{BacsoLMPTL19}. We therefore give a brief summary of these algorithms.

We first consider the simple $n^{\Oh(k\sqrt{n}\log n)}$ time algorithm of \cite{BacsoLMPTL19} for MWIS on $P_k$-free graphs. We begin with an $n$-vertex $P_k$-free graph $G$ and branch on all vertices of degree at least $\sqrt{n}$: we either exclude such a vertex from the solution (and thus remove it from the graph), or we include it (and then remove its whole neighborhood from the graph). After this we may assume that the graph in our current instance (we will still refer to this graph as $G$ although some vertices of the original graph $G$ have been removed) now has maximum degree at most $\sqrt{n}$. We solve this instance by finding an $n/2$-balanced separator, $S$, for $G$ that is dominated by at most $k$ vertices. Since $G$ has maximum degree $\sqrt{n}$ and $S$ is dominated by at most $k$ vertices, $S$ can have size at most $k\sqrt{n}$. We then branch on all $k\sqrt{n}$ vertices of $S$ simultaneously, which then breaks up the graph into small connected components and we recurse on each component. A simple analysis shows that this runs in $n^{\Oh(k\sqrt{n}\log n)}$ time.

Now, let us try to improve it to an algorithm that runs in time $n^{\Oh(kn^{1/3}\log n)}$. We first state a modified form of a lemma that appears in \cite{GartlandL20}.

\begin{lemma}\label{lem:cant pack bs:intro}
Let $G$ be an $n$-vertex $P_k$-free graph and ${\cal F}$ a multi-set of subsets of $V(G)$ such that for every $S \in {\cal F}$ no component of $G-S$ has more than $n/2$ vertices. Assume that no vertex belongs to more than $c$ sets of ${\cal F}$ counting multiplicity. Then provided $|{\cal F}| \geq 3ck$, no component of $G$ contains more than $3n/4$ vertices.
\end{lemma}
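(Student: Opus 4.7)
The plan is to argue by contradiction combined with a double-counting step that transfers the packing hypothesis on ${\cal F}$ into a bound on the average size of its members. Suppose $|{\cal F}| \geq 3ck$ but, contrary to the conclusion, some connected component $C$ of the relevant graph has more than $3n/4$ vertices. Because no vertex belongs to more than $c$ sets of ${\cal F}$ (counting multiplicity), we have $\sum_{S \in {\cal F}} |S| \leq cn$, so the average size of a member of ${\cal F}$ is at most $cn/|{\cal F}| \leq n/(3k)$. By a Markov-style pigeonhole, at least $2|{\cal F}|/3 \geq 2ck$ members $S \in {\cal F}$ satisfy $|S| \leq n/k$; I call these the \emph{small} separators and discard the rest.

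I would then invoke the balanced-separator property together with $P_k$-freeness. For any small separator $S$ in the retained sublist, no component of $G - S$ has more than $n/2$ vertices, so the hypothetical component $C$ of size exceeding $3n/4 > n/2$ cannot fit on a single side of the cut induced by $S$: the separator $S$ genuinely splits $C$. Selecting a chain of small separators $S_1, S_2, \ldots$ from the retained sublist, the plan is to build a sequence of representative vertices, one from each $S_i \cap V(C)$, and connect consecutive ones by paths that remain inside $C$ and avoid previously visited vertices; the overlap bound $c$ ensures that enough fresh vertices remain available after accounting for multiplicity in ${\cal F}$. If the resulting chain contains at least $k$ vertices, it can be pruned to an induced subpath on $k$ vertices, contradicting the $P_k$-freeness of $G$ and completing the proof.

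The main obstacle will be the standard difficulty of controlling chords while stitching pieces across distinct separators, since a naive concatenation yields only a walk. The resolution is a Gy\'arf\'as-style device of taking shortest-path realizations of each connecting segment within $C$, so that no chord across non-consecutive segments can shortcut the overall path, combined with careful bookkeeping to ensure the representative vertices and the connecting segments use pairwise distinct vertex sets. Tracking the bound $c$ precisely during this bookkeeping is where the exact factor $3ck$ appears: roughly $k$ vertices along the path to be built, a factor $c$ for the multiplicity in ${\cal F}$, and a factor $3$ for the Markov cut that identifies the small separators.
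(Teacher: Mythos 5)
Your proposal takes a genuinely different route from the paper's, and it has a gap at its core. The paper's argument is probabilistic and avoids constructing a long path: pick two vertices $a, b$ uniformly at random from the large component $C$, observe that for each $S \in {\cal F}$ the probability that $a$ and $b$ lie in distinct components of $G - S$ is at least $1/3$ (since each component of $G-S$ has at most $n/2$ vertices while $|C| > 3n/4$), and conclude by linearity of expectation that some fixed pair $a, b$ is separated by at least $ck$ members of ${\cal F}$. Then any induced $a$--$b$ path must meet every one of those separating sets, and as each vertex lies in at most $c$ of them, that path has at least $k$ vertices, contradicting $P_k$-freeness. The lower bound on path length is forced post hoc on an arbitrary $a$--$b$ path; no explicit path construction is needed.

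Your chain-construction step is where the argument breaks. Choosing a representative $v_i \in S_i \cap V(C)$ for each retained separator and joining $v_i$ to $v_{i+1}$ by shortest paths inside $C$ makes each individual segment induced, but it does not control chords between non-consecutive segments, nor between a representative and a far-away segment. The separators are not nested, can intersect one another and each other's connecting paths arbitrarily, and consequently ``pruning to an induced subpath'' can collapse the entire walk to a path of constant length no matter how many separators were threaded through. To rescue the idea you would need an invariant guaranteeing that the induced path you end up with still meets many distinct separators after pruning --- but supplying that invariant is precisely the missing content, and once you try to enforce it you are pushed back toward the paper's fixed-endpoint counting. A further remark: the Markov filtering to small separators ($|S| \leq n/k$) does no work downstream; what you need from each retained $S$ is that it separates a fixed pair of vertices in $C$, not that $S$ has small cardinality.
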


\begin{proof}[Sketch of proof.]
Let $S \in {\cal F}$ and assume for a contradiction that the largest component of~$G$, call it~$C$, has more than $3n/4$ vertices. Select vertices $a,b$ uniformly at random from $C$. As $|C| > 3n/4$ the probability that $a$ and $b$ belong to different components of $G-S$ is at least $1/3$. If we let $X_S$ be the random variable that is $1$ if $a$ and $b$ are in different components of $G-S$ and 0 otherwise, then  $\mathbb{E}[X_S] \geq \frac{1}{3}$. By the linearity of expectation, we have $\mathbb{E}[\sum_{S \in {\cal F}} X_S]$ $\geq$ $\frac{1}{3} \cdot 3ck \geq ck$. It follows that there exists vertices $a,b \in S$ such that for at least $ck$ sets, $S'$, in ${\cal F}$ (counting multiplicity) $a$ and $b$ are in different components of $G-S'$. Let ${\cal F}'$ be the subset of ${\cal F}$ that contains these sets $S'$. It follows that for any induced path $P$ with $a$ and $b$ as its endpoints, if $S' \in {\cal F}'$ then $V(P) \cap S' \neq \emptyset$. Since ${\cal F}'$ has at least $ck$ sets and no vertex of $P$ belongs to more than $c$ sets in ${\cal F}'$, $P$ must have at least $k$ vertices, contradicting the assumption that $G$ is $P_k$-free. 
\end{proof}

For the $n^{\Oh(kn^{1/3}\log n)}$ algorithm, we again begin by branching on vertices of high degree, but this time we set the threshold to vertices with degree at least $n^{2/3}$. After this we may assume the graph in our current instance, call it $G^1$, has maximum degree $n^{2/3}$. We then find a balanced separator, $S^1$, for $G^1$ that is dominated by $k$ vertices, hence $S^1$ has at most $kn^{2/3}$ vertices. We then branch on all vertices with at least $n^{1/3}$ neighbors in $S^1$. Now we assume the graph considered in our current instance, call it $G^2$, has maximum degree $n^{2/3}$ and a balanced separator $S^1$ such that no vertex of $G^2$ has more than $n^{1/3}$ neighbors in $S^1$. We then find a balanced separator, $S^2$, for $G^2$ that is dominated by $k$ vertices, hence $S^2$ has at most $kn^{2/3}$ vertices and $S^1 \cap S^2$ has size at most $kn^{1/3}$. We then branch on all vertices with at least $n^{1/3}$ vertices in $S^2$ and we branch on all vertices that belong to $S^1 \cap S^2$, so $S^1$ and $S^2$ ``become disjoint''. We repeat this $3k$ times until we are in an instance where we have a graph $G^{3k}$ and $3k$ pairwise disjoint balanced separators $S^1, \ldots S^{3k}$. By Lemma~\ref{lem:cant pack bs:intro}, $G^{3k}$ has no component with over $3n/4$ vertices and we then recurse on each component. A somewhat more involved, but still fairly simple analysis shows that this runs in $n^{\Oh(kn^{1/3}\log n)}$~time.

In the $n^{\Oh(kn^{1/3}\log n)}$-time algorithm, we branched on vertices that: had over $n^{2/3}$ neighbors, or had $n^{1/3}$ neighbors in any of the balanced separators we picked up, or belonged to two of the balanced separators we picked up. In order to modify this algorithm to run in quasi-polynomial time all that must be done is change the branching threshold. In particular, the algorithm collects balanced separators (each dominated by at most $k$ vertices) and will branch on any vertex that has over $n/2^i$ neighbors that belong to $i$ or more of the collected balanced separators (the algorithm no longer branches on vertices that only have high degree). Any vertex that belongs to $\log n$ of the collected balanced separators will then be branched on, so no vertex will ever belong to more than $\log n$ of the collected balanced separators. So, by Lemma~\ref{lem:cant pack bs:intro}, after collecting $3k\log n$ of these balanced separators, the graph will not have any large component. A runtime analysis of this algorithm shows that it runs in quasi-polynomial time. Note that in all three algorithms discussed here (the $n^{\Oh(kn^{1/2}\log n)}$-time, $n^{\Oh(kn^{1/3}\log n)}$-time, and quasi-polynomial-time algorithm) it is crucial for efficient runtime that the balanced separators we use are dominated by few vertices (they were dominated by $k$ vertices here, but being dominated by $\mathrm{polylog}(n)$ vertices would still be sufficent).

\paragraph{Back to $S_{t,t,t}$-free Graphs.}
Recall that we wish to get a quasi-polynomial time algorithm for MWIS on $S_{t,t,t}$-free graphs for the case where every induced subgraph of the input graph $G$ has a set $S$ of at most $c_t$ vertices such that $N[S]$ is a $n/2$-balanced separator. Up to the bound on the set dominating the separator, this is precisely the case when we keep getting outcome (2) whenever we apply Theorem~\ref{thm:esd:intro}.

We want to mimic the algorithm for $P_k$-free graphs. This algorithm used that the input graph is $P_k$-free in precisely two places. The first is to keep getting constant size sets $S$ such that $N[S]$ is an $n/2$-balanced separator. This is easily adapted to our new setting because we keep getting such sets whenever we apply Theorem~\ref{thm:esd:intro}.

The second place where $P_k$-freeness is used is in Lemma~\ref{lem:cant pack bs:intro}, which  states that a $P_k$-free graph cannot have a set of $3k \log n$ balanced separators such that no vertex of $G$ appears in at most $O(\log n)$ of them.
If we could strengthen the statement of Lemma~\ref{lem:cant pack bs:intro} to $S_{t,t,t}$-free graphs we would be done! Unfortunately such a strengthening is false, indeed a path is a counterexample (each vertex close to the middle of the path is a balanced separator).

Nevertheless, a subtle weakening of Lemma~\ref{lem:cant pack bs:intro} does turn out to be true. In particular, in $S_{t,t,t}$-free graphs it is not possible to pack ``very strong" balanced separators that are dominated by ``very few'' vertices. We will call such balanced separators {\em $c$-boosted balanced separators}. A somewhat simplified definition of a $c$-boosted balanced separator is a set $N[S]$ dominated by a set $S$ of at most $c$ vertices, such that no component of $G-N[S]$ has more than $|V(G)|/16c^2$ vertices (see Definition~\ref{def:bbs}). 
It turns out that on $S_{t,t,t}$-free graphs Lemma~\ref{lem:cant pack bs:intro} is true if ``balanced separators'' are replaced by ``$s$-boosted balanced separators'' for appropriately chosen integer $s$.



\begin{lemma}\label{lem:cant pack bbs:intro}
Let $G$ be an $n$-vertex $\Sttt$-free graph, $s$ an integer, and ${\cal F}$ a multi-set of subsets of $V(G)$ such that every set in ${\cal F}$ is an $s$-boosted balanced separator. Assume no vertex belongs to more than $c$ sets of ${\cal F}$. Then, provided $|{\cal F}| \geq 80sct$, no component of $G$ contains over $3n/4$ vertices.
\end{lemma}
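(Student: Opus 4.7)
The plan mirrors the proof of Lemma~\ref{lem:cant pack bs:intro}, using random triples of vertices (rather than pairs) and leveraging the strong small-component guarantee of $s$-boosted separators. Assume for contradiction that some component $C$ of $G$ satisfies $|C| > 3n/4$; the goal is to exhibit an induced $\Sttt$ in $G$.

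The first step is a probabilistic estimate. Sample $a, b, d$ independently and uniformly at random from $C$, and fix an $s$-boosted separator $N[S_i] \in \mathcal{F}$. Since each component of $G - N[S_i]$ has at most $n/(16s^2)$ vertices, a short calculation (using the full definition of $s$-boosted balanced separator to bound $|N[S_i] \cap C|$) shows that
\[\Pr\bigl[a, b, d \text{ lie in three pairwise distinct components of } G - N[S_i]\bigr] \ge \tfrac{1}{80s}.\]
Summing over $\mathcal{F}$ by linearity of expectation and invoking the hypothesis $|\mathcal{F}| \ge 80sct$, one can fix a deterministic triple $(a^*, b^*, d^*)$ and a subfamily $\mathcal{F}' \subseteq \mathcal{F}$ of size at least $ct$ such that every separator in $\mathcal{F}'$ splits $\{a^*, b^*, d^*\}$ into three distinct components.

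Next I would build a subdivided-claw structure. Since $G[C]$ is connected, pick a minimum-vertex induced subtree $T$ of $G[C]$ containing $\{a^*, b^*, d^*\}$; by standard graph-theoretic arguments $T$ is a subdivided claw with $a^*, b^*, d^*$ as its three leaves, internally disjoint branches $B_a, B_b, B_d$, and a Steiner vertex $w$ of degree three. For every $N[S] \in \mathcal{F}'$, I would argue that each of the three branches $B_a, B_b, B_d$ must contain a vertex of $N[S]$. Combined with the multiplicity bound (any vertex of $T$ lies in at most $c$ members of $\mathcal{F}$), each branch then has length at least $|\mathcal{F}'|/c \ge t$, so $T$ contains an induced copy of $\Sttt$, contradicting the hypothesis that $G$ is $\Sttt$-free.

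The hardest part, and where the bulk of the care is needed, is forcing all three branches (and not only two of them) to hit $N[S]$ for every $N[S] \in \mathcal{F}'$. A direct case analysis guarantees only that at least two of the three branches must intersect $N[S]$: if the Steiner vertex $w$ happens to lie in the same component of $G - N[S]$ as one of the leaves, then the branch going to the opposite leaf can evade $N[S]$ entirely. Closing this gap is the main obstacle. I would try to resolve it by one of the following routes: strengthening the probabilistic step so that $w$ is also placed into a fourth distinct component of $G - N[S]$ (exploiting the $\Omega(s^2)$ small components produced by the boost property); using the minimality of $T$ to extract a shortcut whenever a branch fails to hit, contradicting minimality; or averaging over symmetric labellings of $(a^*, b^*, d^*)$ and pigeonholing over the three pairs of branches to guarantee a balanced distribution of hits. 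The extra factor of $s$ in the hypothesis $|\mathcal{F}| \ge 80sct$ is precisely the slack needed to absorb the losses of whichever variant ultimately works.
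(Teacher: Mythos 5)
You have correctly located the critical obstruction in your own argument: with only a random triple $\{a^*,b^*,d^*\}$, each separator $N[S]$ is only guaranteed to be hit by two of the three branches of the Steiner tree, since the Steiner vertex $w$ may sit in the same component of $G - N[S]$ as one of the leaves. None of your three proposed repairs closes this cleanly: $w$ is a deterministic function of the sampled triple and cannot be ``also placed'' into a fourth component; the failure of one branch to meet $N[S]$ does not contradict minimality of the tree (a branch that evades $N[S]$ is perfectly consistent with being a shortest connection); and a pigeonhole over the three pairs of branches lower-bounds only two of the three branch lengths, which is not enough to extract an induced $\Sttt$ from a subdivided claw. There is also a smaller issue in the probabilistic step: you ask for $a,b,d$ to land in three \emph{distinct components} of $G - N[S]$, but the boost hypothesis controls component sizes, not $|N[S]\cap C|$, which could swallow most of $C$ and kill your $1/(80s)$ bound; the paper sidesteps this by asking only that no two samples lie in a common component (vertices inside $N[S]$ are then harmless).

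The paper's proof takes a structurally different route that makes essential use of the boost. It samples not $3$ but $r = 4s$ vertices $R$ from the large component; because each component of $G' - N[S]$ has at most $|C|/(16s^2)$ vertices, a union bound over $\binom{r}{2}$ pairs shows that with probability $>1/2$ no component of $G'-N[S]$ contains two vertices of $R$, giving a subfamily $\mathcal{F}_1$ of size $> |\mathcal{F}|/2$ that ``shatters'' $R$. It then fixes an inclusion-minimal connected induced subgraph $G^*\supseteq R$, a spanning tree $T^*$, and the set $M$ of leaves and branch vertices; minimality of $G^*$ guarantees (Lemma~\ref{lem:sillyObsAboutGstar}) that every edge of $G^*$ not in $T^*$ has an endpoint in $N_{T^*}[M]$, which is what later makes the three paths induced and anti-adjacent. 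Discarding the $\Oh(tr)$ vertices near $M$, it finds some $Y_i = N_G[X_i]\in\mathcal{F}_1$ disjoint from them; the shattering forces $Y_i$ to meet at least $r-1 = 4s-1 > 2|X_i|$ of the bare paths of $T^*-M$, so by pigeonhole some single core vertex $x\in X_i$ has neighbors in three distinct such paths. The center of the $\Sttt$ is this external vertex $x$, \emph{not} the Steiner vertex of a tree on three terminals. This is exactly what circumvents your gap: the three legs grow outward from $x$ into three separate bare paths of $T^*-M$, and $x$ is automatically far (distance $> t$ in $T^*$) from $M$, so all three legs have length $\geq t$ simultaneously. In short, the missing ingredient in your proposal is to take $|R| = \Theta(s)$ rather than $3$ and to move the claw's center from the tree's branch vertex to a vertex of a separator core, which is where the extra factor of $s$ in the hypothesis $|\mathcal{F}| \geq 80sct$ is actually spent.
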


We skip sketching the proof of Lemma~\ref{lem:cant pack bbs:intro} here (see Section~\ref{sec:cannot pack bbs} for a formal statement and proof of this lemma), but we will remark that one of the key ingredients of the proof is a probabilistic argument akin to the proof of Lemma~\ref{lem:cant pack bs:intro} (the proof of Lemma~\ref{lem:cant pack bbs} is a bit more involved).

At this point we are one ``disconnect'' away from being able to utilize the strategy for $P_k$ free graphs: Theorem~\ref{thm:esd:intro} keeps giving us balanced separators, while Lemma~\ref{lem:cant pack bbs:intro} tells us that we can't pack {\em boosted}\ balanced separators. 
Indeed, if we assumed our $\Sttt$-free graphs always had, say, $c_t$-boosted balanced separators (where $c_t$ is some constant that depends on $t$), then by the exact same reasoning as before, the strategy of iteratively collecting a $c_t$-boosted balanced separator and then branching (on all vertices that have over $n/2^i$ neighbors that belong to $i$ or more of the collected $c_t$-boosted balanced separators) would work. Any vertex that belongs to $\log n$ of the collected $c_t$-boosted balanced separators will then be branched on, so no vertex will ever belong to over $\log n$ of the collected balanced separators. So, by Lemma~\ref{lem:cant pack bbs:intro}, after collecting $80c_tt\log n$ of these $c_t$-boosted balanced separators, the graph will not have any large component. A running time analysis identical to the one for $P_k$-free graphs~\cite{GartlandL20} would then show that this algorithm runs in quasi-polynomial time.


Is it possible to bridge the ``disconnect'' from the other side and keep getting {\em boosted}\ balanced separators? This looks difficult, but we are able to bridge the gap algorithmically, by branching in such a way that a ``normal'' balanced separator becomes boosted. We can then add this boosted balanced separator to our collection of previously created boosted balanced separators, and then apply Lemma~\ref{lem:cant pack bbs:intro} to this collection to conclude that the graph gets sufficiently disconnected before the collection grows too large. We now sketch how to ``boost'' a separator. 


\paragraph{Boosting Separators.}
We begin with a balanced separator $N[S]$, dominated by a set $S$ of at most $c_t$ vertices, such that no component of $G-N[S]$ has more than $n/2$ vertices.
(For technical reasons in the actual algorithm $N[S]$ is not a balanced separator, but rather a set given by Theorem~\ref{thm:ICALP:weight} so that $G-N[S]$ has an extended strip decomposition with no large particles; from the viewpoint of efficient independent set algorithms this is just as useful.) We wish to turn $N[S]$ into a $c_t$-boosted balanced separator. In order to do this, we consider all vertices of $N[S]$ that have a neighbor in a large component of $G-N[S]$; we call this set $\rel(G,S)$ (see Figure~\ref{fig:relevant}. This is a slight simplification of the actual definition of $\rel(G,S)$ that we use in the algorithm, see Definition~\ref{def:relevant}). 
By ``large component'' we mean any component of $G-N[S]$ that has more than $n/16c_t^2$ vertices (note that if there are no such components, then $N[S]$ is a $c_t$-boosted balanced separator). In order to branch in a way that turns $N[S]$ into a $c_t$-boosted balanced separator, we use the following lemma, similar to Lemmas~\ref{lem:cant pack bs:intro} and \ref{lem:cant pack bbs:intro}. 

\begin{figure}
    \centering
    \includegraphics[width=.6\textwidth]{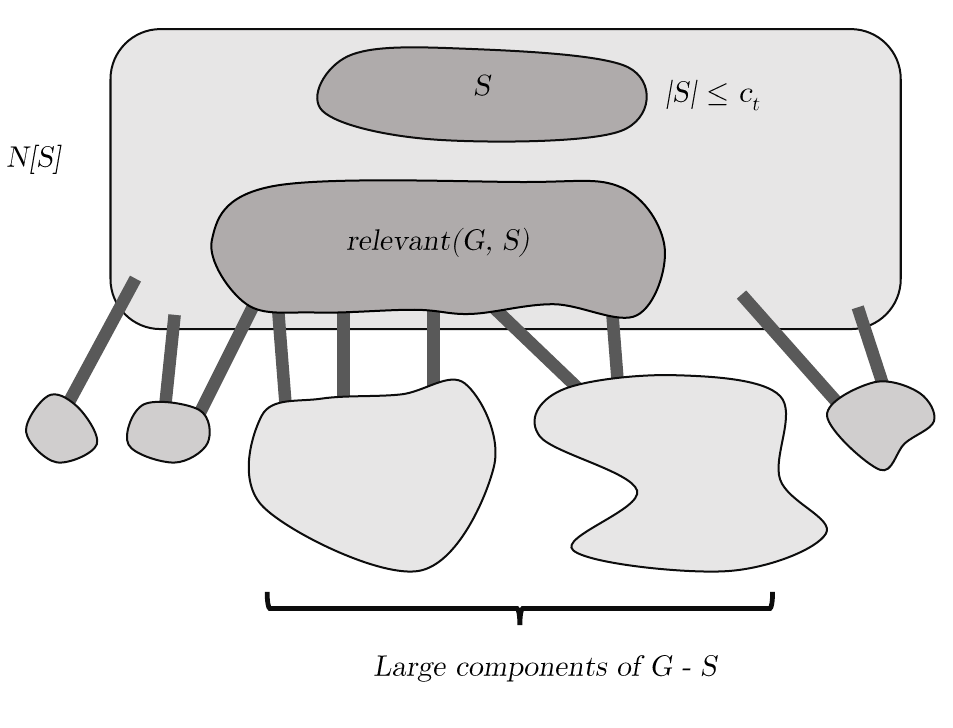}
    \caption{Illustration of how the set $\rel(G,S)$ is obtained from $S$.}
    \label{fig:relevant}
\end{figure}

\begin{lemma}\label{lem:cant pack bs 2:intro}
Let $G$ be an $n$-vertex $\Sttt$-free graph, let $N[S]$ be a balanced separator for $G$ dominated by a set $S$ of at most $c_t$ vertices, and let ${\cal F}$ be a multi-set of $|\rel(G,S)|/100c_t^3$-balanced separators for $(G, \rel(G,S))$. Assume no vertex belongs to over $c$ sets of ${\cal F}$. If $|{\cal F}| \geq 10ct$, either $S$ is a $c_t$-boosted balanced separator or no component of $G$ contains more than $3n/4$ vertices.
\end{lemma}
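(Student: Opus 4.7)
The plan is to proceed by contradiction. Assume $|\mathcal{F}| \geq 10ct$, $S$ is not a $c_t$-boosted balanced separator, and some component $C$ of $G$ has $|C| > 3n/4$; the goal is to construct an induced $\Sttt$ in $G$, which contradicts $\Sttt$-freeness. Writing $R \df \rel(G,S)$, the non-boosted hypothesis on $S$ supplies a component $K$ of $G - N[S]$ with $|K| > n/(16c_t^2)$. First I establish $|R| \geq 100c_t^3$: if $|R| < 100c_t^3$, then the fractional threshold $|R|/(100c_t^3)$ is strictly less than one, which forces every $S' \in \mathcal{F}$ to contain $R$ entirely, so each vertex of $R$ belongs to $|\mathcal{F}| \geq 10ct > c$ members of $\mathcal{F}$, contradicting the multiplicity bound. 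The degenerate sub-case $R = \emptyset$, where every large component of $G - N[S]$ is itself a component of $G$, is excluded by a direct counting argument using $|C| > 3n/4$ and the fact that $N[S]$ is an $n/2$-balanced separator of~$G$.

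Next I run a probabilistic argument in the spirit of Lemma~\ref{lem:cant pack bs:intro}. Pick $a, b \in R$ uniformly at random. For any $S' \in \mathcal{F}$, if $D_1, \ldots, D_j$ are the components of $G - S'$, the balanced-separator bound gives
\[
\Pr[a, b \text{ in the same component of } G - S'] \;=\; \sum_i \bigl(|D_i \cap R|/|R|\bigr)^2 \;\leq\; \max_i |D_i \cap R|/|R| \;\leq\; 1/(100c_t^3).
\]
By linearity of expectation, the expected number of $S' \in \mathcal{F}$ that separate $a$ from $b$ is at least $(99/100)|\mathcal{F}| \geq 9.9ct$, so I can fix a specific pair $a, b \in R$ together with a subfamily $\mathcal{F}' \subseteq \mathcal{F}$ of size $\geq 9.9ct$ whose members all separate $a$ from $b$ in $G$. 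Restricting the sampling to $R \cap C$, or symmetrically to a dominant component of $G[V \setminus C]$ when $R$ happens to be concentrated outside $C$, lets me also assume $a, b$ belong to the same component of $G$. A shortest (hence induced) path $P$ from $a$ to $b$ must intersect every $S' \in \mathcal{F}'$, so, since each vertex of $G$ belongs to at most $c$ members of $\mathcal{F}'$, we get $|V(P)| \geq |\mathcal{F}'|/c \geq 9.9t$.

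The last step promotes this long induced path to an induced $\Sttt$ by exploiting the defining property of $R$: the endpoint $a$ has a neighbor $u_a$ in some large component $K_a$ of $G - N[S]$, and similarly $b$ has a neighbor $u_b$ in some $K_b$. I aim to designate a branching vertex $v$ on $P$ and build three pairwise anti-complete induced paths of length $\geq t$ from $v$: two along $P$ in opposite directions (available since $|V(P)| \geq 9.9t$) and a third ``exiting'' $P$ through a neighbor into a large component. To realize a clean third branch I upgrade the probabilistic step to sample three vertices $a_1, a_2, a_3 \in R$ pairwise separated by $\geq 9ct$ members of $\mathcal{F}$, and then invoke the three-in-a-tree theorem (\cref{thm:three-in-a-tree:intro}) on $\{a_1, a_2, a_3\}$ inside their common component of~$G$. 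In the tree outcome, the separator counts from $\mathcal{F}$ force each of the three Steiner branches to have length $\geq t$, yielding an induced $\Sttt$ directly; in the extended-strip-decomposition outcome, most of $R$ is concentrated near three degree-one particles, which is incompatible with $\mathcal{F}$ consisting of $|R|/(100c_t^3)$-balanced separators of $(G, R)$. The chief obstacle I anticipate is this final step: ensuring that each of the three branches is individually of length at least $t$ (not merely that pairwise distances between the $a_i$ are large) and pairwise anti-complete. This is precisely where the boosting threshold $|R|/(100c_t^3)$ and the precise definition of $\rel(G,S)$ pay off, providing the slack needed to extend any short branch into a large component without disturbing the induced structure.
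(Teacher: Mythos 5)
Your probabilistic step is sound, and your $|R|\geq 100c_t^3$ and "shortest path has $\geq 9.9t$ vertices" observations are both correct. But the step you yourself flag as the "chief obstacle" is exactly where the proof collapses, and none of the ideas you sketch for it are workable as stated.

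The paper's proof (this is \cref{lem:cant pack strong bs} in the body) relies on a structural refinement that your proposal never discovers: do not sample the triple from $R$ at large. Instead, first fix a single large component $C$ of $G-N[S]$ maximizing $|N[C]\cap R|$ (a pigeonhole argument: there are at most $16c_t^2$ large components, so one has boundary $|C^*| \geq |R|/(16c_t^2)$), then fix a single vertex $x\in S$ maximizing $|N[x]\cap C^*|$, and sample $a,b,c$ from $C^{*,x} := N[x]\cap C^*$. Because $|S|\leq c_t$, this set still has size $\geq |R|/(16c_t^3)$, which is large enough compared to the threshold $|R|/(100c_t^3)$ to push the separation probability through. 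This restriction buys two things your argument lacks. First, $a,b,c$ are all adjacent to the \emph{same} vertex $x\in S$, and $x$ has no neighbor in $C$ (since $C$ is a component of $G-N[S]$), so $x$ can serve as the center of the $\Sttt$. Second, since $a,b,c$ all also have neighbors in the \emph{same} connected $C$, the three legs can be taken as the initial $t$ vertices of shortest $a$--$b$, $b$--$c$, $a$--$c$ paths routed through $C$; any pair of these prefixes touching would create a short $a$--$b$ path, contradicting the distance lower bound, so they are pairwise anticomplete, and each already has $t$ vertices.

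By contrast, your two candidate finishers do not close the gap. Designating a branching vertex $v$ on the long path $P$ and "exiting into a large component" gives no control over whether the exit path is anticomplete to the two $P$-segments; nor is $v$ guaranteed anticomplete to the exit path's interior. Your three-in-a-tree variant has a sharper defect: pairwise distance bounds $d(a_i,a_j)\geq 2t$ do \emph{not} imply that all three branches of the Steiner tree have length $\geq t$. The Steiner point can coincide with one of the $a_i$, yielding branch lengths $(0,\geq 2t,\geq 2t)$, which is not an $\Sttt$, and transplanting the center along a long branch does not manufacture a third anticomplete leg. Your dismissal of the extended-strip-decomposition outcome ("most of $R$ is concentrated near three degree-one particles") is also unsupported --- the ESD from \cref{thm:three-in-a-tree:intro} constrains only the three marked vertices, not where the rest of $R$ sits. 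Finally, the side remark that the $R=\emptyset$ sub-case is "excluded by a direct counting argument" is not right as stated: with $R=\emptyset$, the non-boosted hypothesis gives a large component $K$ of $G-N[S]$ that is a whole component of $G$ disjoint from $C$, and $|C|+|K|\leq n$ with $|C|>3n/4$ only forces $|K|<n/4$, which is compatible with $|K|>n/(16c_t^2)$; no contradiction arises. (In the paper this case is not a contradiction --- it is handled separately via \cref{obs:bbs or good esd}.)
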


The proof of Lemma~\ref{lem:cant pack bs 2:intro} follows a similar ``expectation argument'' that Lemma~\ref{lem:cant pack bs:intro} uses, although it is a bit more involved. We do not sketch a proof of Lemma~\ref{lem:cant pack bs 2:intro} here (this lemma statement is more or less a combination of Observation~\ref{obs:bbs or good esd} and Lemma~\ref{lem:cant pack strong bs})

This lemma suggests the following branching strategy. We first pick up an $n/2$-balanced separator $N[S]$ dominated by a set $S$ of $c_t$ vertices, and we will try use Lemma~\ref{lem:cant pack bs 2:intro} to turn $N[S]$ into a $c_t$-boosted balanced separator or break up $G$ into small components. We use the same reasoning as before: iteratively collect $|\rel(G,S)|/100c_t^3$-balanced separators for $(G, \rel(G,S))$ and branch (on all vertices that have over $n/2^i$ neighbors that belong to $i$ or more of the collected balanced separators). Any vertex that belongs to $\log n$ of the collected  balanced separators will then be branched on, so no vertex will ever belong to over $\log(n)$ of the collected balanced separators. So, by Lemma~\ref{lem:cant pack bs 2:intro} after collecting $10t\log n$ of these $|\rel(G,S)|/100c_t^3$-balanced separators for $(G, \rel(G,S))$, either the graph will have no large component (and then we make large progress by calling the algorithm recursively on the components) or $S$ is now a $c_t$-boosted balanced separator, which we then add to our collection of $c_t$-boosted balanced separators. By Lemma~\ref{lem:cant pack bbs:intro} 
this collection cannot grow larger than $80c_tt\log n$ before our graph no longer has large connected components. 

The running time analysis of this algorithm essentially looks like this: if we could assume that boosting a single balanced separator to become a boosted balanced separator took constant time, then the analysis would be more or less identical to the analysis of the algorithm for MWIS on $P_k$-free graphs. 
However, now each individual ``boosting'' step is instead a branching algorithm whose analysis again is very similar to the analysis of the algorithm for MWIS on $P_k$-free graphs, so each boosting step corresponds to a recursive algorithm with quasi-polynomially many leaves. 
Since quasi-polynomial functions compose the entire running time is still quasi-polynomial. Finally we need to take into account what would happen if outcome (3) of Theorem~\ref{thm:esd:intro} does occur, but this can fairly easily be shown to only be good for the progress of the algorithm.


\section{Preliminaries}\label{sec:prelims}
\paragraph*{Basic notation.}

For a family $\mathcal{Q}$ of sets, by $\bigcup \mathcal{Q}$ we denote $\bigcup_{Q \in \mathcal{Q}} Q$.
Let $G$ be a graph. For $X \subseteq V(G)$, by $G[X]$ we denote the subgraph of $G$ induced by $X$, i.e., $(X, \{uv \in E(G) : u,v \in X\})$.
If the graph $G$ is clear from the context, we will often identify induced subgraphs with their vertex sets.
The sets $X,Y \subseteq V(G)$ are \emph{complete} to each other if for every $x \in X$ and $y \in Y$ the edge $xy$ is present in $G$.
Note that this, in particular, implies that $X$ and $Y$ are disjoint.
We say that two sets $X,Y$ \emph{touch} if $X \cap Y \neq \emptyset$ or there is an edge with one endpoint in $X$ and another in $Y$.
Finally, two disjoint sets are \emph{anti-adjacent} or \emph{anti-complete} if they do not touch.

For a vertex $v$, by $N_G(v)$ we denote the set of neighbors of $v$, and by $N_G[v]$ we denote the set $N_G(v) \cup \{v\}$.
For a set $X \subseteq V(G)$, we also define $N_G(X) \coloneqq  \bigcup_{v \in X} N_G(v) \setminus X$, and $N_G[X]\coloneqq N_G(X) \cup X$.
If it does not lead to confusion, we omit the subscript and write simply $N( \cdot)$ and $N[ \cdot]$. Additionally, if $G'$ is an induced subgraph of $G$, we use $N_G^{G'}(X)$ and  $N_G^{G'}[X]$ to mean $N_G(X) \cap V(G')$ and $N_G[X] \cap V(G')$ respectively.
We often say that a set of vertices $X\subseteq V(G)$ is \emph{dominated} by a set $Y\subseteq V(G)$ if $X\subseteq N_G[Y]$.

The length of a path is the number of edges of the path. $P_t$ denotes an induced path with $t$ vertices (and $t-1$ edges). A claw is a set of three independent vertices, $v_1$, $v_2$, and $v_3$ along with a a vertex $u$ that is neighbors with each $v_i$. An $\Sttt$ is three anti-complete $P_t$'s along with a vertex $u$ that is neighbors with exactly one endpoint from each $P_t$ and no other vertices, so a claw is $S_{1,1,1}$.

Given a graph $G$ and a graph $H$, $G$ is said to be {\em $H$-free} if $G$ does not contain $H$ as an induced subgraph. If ${\cal H}$ is a set of graphs, then $G$ is ${\cal H}$-free if for each $H \in {\cal H}$, $G$ is $H$-free.

\paragraph*{Balanced separators.} 
We define a {\em vertex list}, or more simple a {\em list}, to be an ordered multi-set of subsets $V(G)$. If ${\cal F} = \{F_1, F_2, \ldots, F_k\}$ is a list and $S \subseteq V(G)$ we define ${\cal F} \cup S$ to be the list ${\cal F}$ with $S$ appended at the end, that is ${\cal F} \cup S \coloneqq \{F_1, F_2, \ldots, F_k, S\}$. We define $N_G^{G'}[{\cal F}] \coloneqq \{N_G^{G'}[F_1], N_G^{G'}[F_2], \ldots, N_G^{G'}[F_k]\}$.

Let $G$ be a graph, $G'$ an induced subgraph of $G$, $Y \subseteq V(G')$, $c$ non-negative integer, and $\w$ a weight function for the vertices of $G'$. We say $Y$ is a {\em $c$-balanced separator for $(G', \w)$} if no component, $C$, of $G'-Y$ has $\w(C) > c$.
Now let $Z \subseteq V(G')$. We say that $Y$ is a {\em $c$-balanced separator for $(G',Z)$} when no component of $G'-Y$ contains over $c$ vertices of $Z$. When $Z = V(G')$ then we say that $Y$ is a {\em $c$-balanced separator for $G'$}. Furthermore, if there is a set $X \subseteq V(G)$ such that $Y = N_G^{G'}[X]$ then we say that {\em $Y$ has a core $X$ originating in $G$}. We note that while unintuitive, if $Y$ is a $c$-balanced separator for $(G', \w)$, it may be possible for $G'-Y$ to have {\em fewer} components then $G'$. For instance this is true when $Y = V(G')$.

\paragraph*{Extended strip decompositions.} 
By $T(G)$, we denote the set of all triangles in $G$.
Similarly to writing $xy \in E(G)$, we write $xyz \in T(G)$ to indicate that $G[\{x,y,z\}] \simeq K_3$.
Now let us define a certain graph decomposition which will play an important role in the paper.
An \emph{extended strip decomposition} of a graph $G$ is a pair $(H, \eta)$ that consists of:
\begin{itemize}
\item a simple graph $H$,
\item a \emph{vertex set} $\eta(x) \subseteq V(G)$ for every $x \in V(H)$,
\item an \emph{edge set} $\eta(xy) \subseteq V(G)$ for every $xy \in E(H)$, and its subsets $\eta(xy,x),\eta(xy,y) \subseteq \eta(xy)$,
\item a \emph{triangle set} $\eta(xyz) \subseteq V(G)$  for every $xyz \in T(H)$,
\end{itemize}
which satisfy the following properties (also see \cref{fig:esd}):
\begin{enumerate}
\item The family $\{\eta(o)~|~o \in V(H)\cup E(H) \cup T(H)\}$ is a partition of $V(G)$.
\item For every $x \in V(H)$ and every distinct $y,z \in N_H(x)$, the set $\eta(xy,x)$ is complete to $\eta(xz,x)$.
\item Every $uv \in E(G)$ is contained in one of the sets $\eta(o)$ for $o \in V(H) \cup E(H)\cup T(H)$, or is as follows:
\begin{itemize}
\item $u \in \eta(xy,x), v\in \eta(xz,x)$ for some $x \in V(H)$ and $y,z \in N_H(x)$, or
\item $u \in \eta(xy,x), v\in \eta(x)$ for some $xy \in E(H)$, or
\item $u \in \eta(xyz)$ and $v\in \eta(xy,x) \cap \eta(xy,y)$ for some $xyz \in T(H)$. 
\end{itemize}
\end{enumerate}
An extended strip decomposition $(H,\eta)$ is \emph{rigid} if for every $xy \in E(H)$, the sets $\eta(xy)$, $\eta(xy,x)$, and
$\eta(xy,y)$ are nonempty, and for every \emph{isolated} (i.e., with no incident edge) $x \in V(H)$, the set $\eta(x)$ is nonempty.

\begin{figure}
    \centering
    \includegraphics{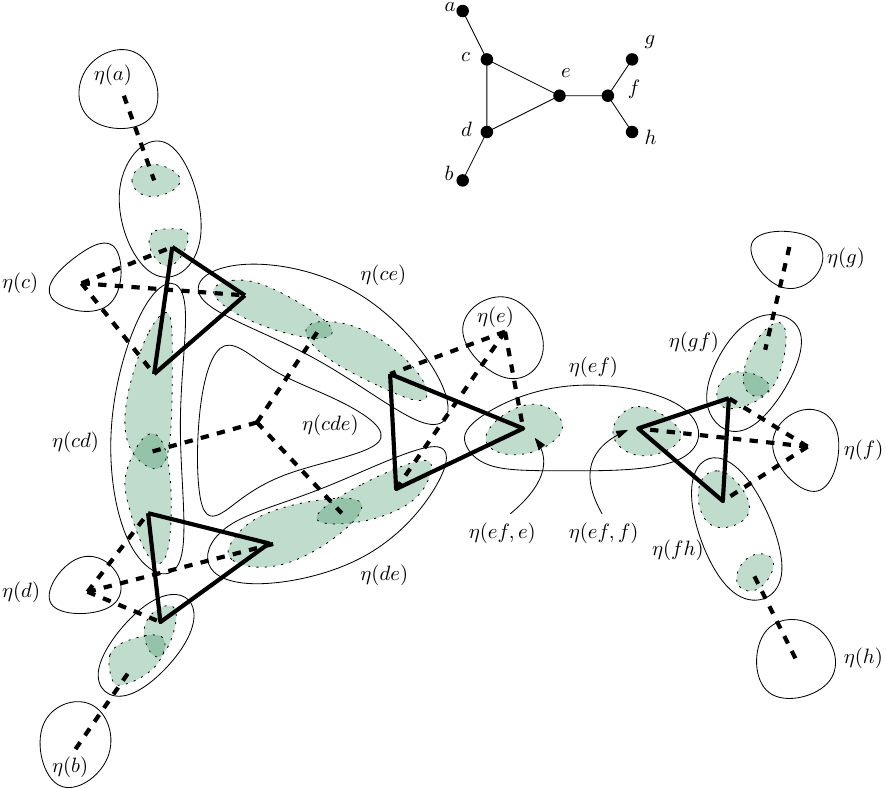}
    \caption{A graph $H$ and an extended strip decomposition $(H,\eta)$ of some graph $G$. Edges within sets $\eta(\cdot)$ are arbitrary. A solid edge across two sets indicates that there are complete to each other. A dashed edge means that edges across these sets are allowed but not mandatory. No edge means that the sets do not touch.}
    \label{fig:esd}
\end{figure}

We say that a vertex $v \in V(G)$ is \emph{peripheral} in $(H,\eta)$ if there is a degree-one vertex $x$ of $H$,
such that $\eta(xy,x)=\{v\}$, where $y$ is the (unique) neighbor of $x$ in $H$.
For a set $Z \subseteq V(G)$, we say that $(H, \eta)$ is an \emph{extended strip decomposition of $(G,Z)$} if $H$ has $|Z|$ degree-one vertices and each vertex of $Z$ is peripheral in $(H,\eta)$.

The following theorem by Chudnovsky and Seymour~\cite{DBLP:journals/combinatorica/ChudnovskyS10}
is a slight strengthening of their celebrated solution of the famous \emph{three-in-a-tree} problem.
We will use it as a black-box to build extended strip decompositions.

\begin{theorem}[{\cite[Section 6]{DBLP:journals/combinatorica/ChudnovskyS10}}]\label{thm:three-in-a-tree}
Let $G$ be an $n$-vertex  graph and consider $Z \subseteq V(G)$ with $|Z| \geq 2$.
There is an algorithm that runs in time $\Oh(n^5)$ and returns one of the following:
\begin{itemize}
\item an induced subtree of $G$ containing at least three elements of $Z$, or
\item a rigid extended strip decomposition $(H,\eta)$ of $(G,Z)$.
\end{itemize}
\end{theorem}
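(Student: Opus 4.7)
The plan is to derive Theorem~\ref{thm:three-in-a-tree} by bootstrapping the classical three-in-a-tree algorithm of Chudnovsky and Seymour from the case of three marked vertices to an arbitrary set $Z$ with $|Z|\geq 2$. First I would dispose of the trivial base case $|Z|=2$ by an explicit construction: writing $Z=\{z_1,z_2\}$, let $H$ consist of a single edge $x_1 x_2$, place every vertex of $G$ into the edge-set $\eta(x_1 x_2)$, and set $\eta(x_1 x_2, x_1)=\{z_1\}$, $\eta(x_1 x_2, x_2)=\{z_2\}$. Since both endpoints of $H$ have degree one, the completeness axiom is vacuous, and every edge of $G$ is trivially contained in the single edge-set $\eta(x_1 x_2)$; rigidity is immediate because the three required sets are nonempty, and both $z_1$ and $z_2$ become peripheral.

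For $|Z|\ge 3$, I would proceed by induction on $|V(G)|$. Choose any triple $\{z_1,z_2,z_3\}\subseteq Z$ and invoke the classical three-in-a-tree algorithm on this triple. If it returns an induced subtree containing $\{z_1,z_2,z_3\}$, return it. Otherwise, the negative answer is certified by a structural decomposition which, after a cleaning pass that deletes empty vertex- and edge-sets and suppresses degree-two host vertices whose sets are empty, can be packaged as a rigid extended strip decomposition $(H_0,\eta_0)$ of $(G,\{z_1,z_2,z_3\})$. The remaining task is to augment $(H_0,\eta_0)$ so that every $z\in Z\setminus\{z_1,z_2,z_3\}$ is also peripheral. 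For each such $z$ I would locate the unique particle $P$ of $(H_0,\eta_0)$ that currently contains $z$, recurse inside $P$ with $z$ and one of the peripheral vertices already associated with $P$ playing the role of marked vertices, and either lift a returned induced tree through the edge-strips of $(H_0,\eta_0)$ to an induced tree of $G$ meeting three vertices of $Z$, or stitch the returned finer ESD of $P$ back into $(H_0,\eta_0)$ so that $z$ becomes a new degree-one vertex of the refined global host.

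The main obstacle will be this stitching step: replacing a single particle of an ESD by a smaller ESD must yield a globally valid rigid ESD, which requires carefully matching the boundary sets $\eta(xy,x)$ on both sides and preserving the adjacency axioms across the glued interface. A more uniform alternative, closer to the strategy of Chudnovsky and Seymour, is to prove the theorem directly for arbitrary $Z$ by induction on $|V(G)|$ using a fixed set of local reductions (vertex and edge removals that preserve both the existence of a $Z$-tree and the ability to build a rigid ESD), and to show that an irreducible no-instance admits a rigid ESD of $(G,Z)$ readable from its structure. In either variant, the total cost is a polynomial number of calls to the three-in-a-tree primitive together with polynomial-time bookkeeping, comfortably yielding the claimed $\Oh(n^5)$ running time.
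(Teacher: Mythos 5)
The statement you are attempting to prove is not proved in the paper at all: it is imported verbatim as a black box, cited to Section~6 of Chudnovsky and Seymour's paper, which establishes the result for arbitrary $Z$ with $|Z|\ge 2$ directly. So there is no ``paper approach'' for you to reproduce here; the right answer was simply to cite it.

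Your proposed derivation has a genuine gap precisely where you flag one. You want to run the $|Z|=3$ case, obtain an ESD of $(G,\{z_1,z_2,z_3\})$, then, for each remaining $z\in Z$, recurse inside a particle and stitch the result back. Several things go wrong. First, ``the unique particle containing $z$'' is not well defined: a vertex lies in a unique set $\eta(o)$, but belongs to several particles (e.g.\ a vertex in $\eta(xy)$ lies in $A_{xy}^\perp$, $A_{xy}^x$, $A_{xy}^y$, and $A_{xy}^{xy}$). Second, the particle where $z$ sits need not contain any of $z_1,z_2,z_3$, so there may be no ``peripheral vertex already associated with $P$'' to recurse on; and if you recurse on only two marked vertices you land in your own base case, which produces a trivial one-edge ESD and never detects a tree through three vertices of $Z$. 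Third, and most seriously, the stitching step — replacing one particle of an ESD by a finer ESD of that particle so that the composite remains a valid rigid ESD of $G$ that keeps all previously peripheral $z_i$ peripheral and additionally makes $z$ peripheral — is exactly the hard content, and nothing in the sketch addresses how the boundary interfaces $\eta(xy,x)$ on both sides of the glued cut are to be reconciled with the ESD axioms (including the completeness condition and the edge-coverage condition) across the seam. The paper itself devotes the bulk of Section~3 (Lemma~\ref{lem:esd:step} and its supporting machinery: local cleaning, tangles and walls in $H$, backdoor purification, a three-in-a-tree call on an augmented graph $G_A$) to the much more modest task of inserting a \emph{single additional vertex} into an existing ESD, which is strong evidence that this stitching cannot be handled with a few sentences. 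Your ``more uniform alternative'' is not a proof either; it is a restatement of what Chudnovsky and Seymour's Section~6 accomplishes. To repair the proposal, you would either cite the theorem as the paper does, or genuinely reproduce the Section~6 argument, including the reduction rules and the analysis of irreducible no-instances.
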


Let us point out that actually, an extended strip decomposition produced by \Cref{thm:three-in-a-tree} satisfies more structural properties,
but for our purpose, we will only use the fact that it is rigid. 

\paragraph*{Particles of extended strip decompositions.}
Let $(H, \eta)$ be an extended strip decomposition of a graph $G$.
We introduce some special subsets of  $V(G)$ called \emph{particles}, divided into five \emph{types}.
\begin{align*}
\textrm{vertex particle:} &\quad A_{x} \coloneqq  \eta(x) \text{ for each } x \in V(H),\\
\textrm{edge interior particle:} &\quad A_{xy}^{\perp} \coloneqq  \eta(xy) \setminus (\eta(xy,x) \cup \eta(xy,y)) \text{ for each } xy \in E(H),\\
\textrm{half-edge particle:} &\quad A_{xy}^{x} \coloneqq   \eta(x) \cup \eta(xy) \setminus \eta(xy,y) \text{ for each } xy \in E(H),\\
\textrm{full edge particle:} &\quad A_{xy}^{xy} \coloneqq  \eta(x) \cup \eta(y) \cup \eta(xy) \cup \bigcup_{z ~:~ xyz \in T(H)} \eta(xyz) \text{ for each } xy \in E(H),\\
\textrm{triangle particle:} &\quad A_{xyz} \coloneqq  \eta(xyz) \text{ for each } xyz \in T(H).
\end{align*}

\paragraph*{Wall notation.}

A \emph{wall} of \emph{sidelength} $\ell$ is depicted in Figure~\ref{fig:wall}; it consists of $\ell$ rows and $\ell$ columns
as in the figure.
A \emph{peg} is a vertex of degree three in a wall.
A path between two pegs that has no other peg as an internal vertex is called a \emph{basic path} in a wall.
We say that wall is $k$-\emph{subdivided} if every basic path has length more than $k$.
A \emph{subwall} of a wall $W$ is a wall whose rows and columns are subpaths of the rows and columns of $W$.

\begin{figure}
    \centering
    \includegraphics{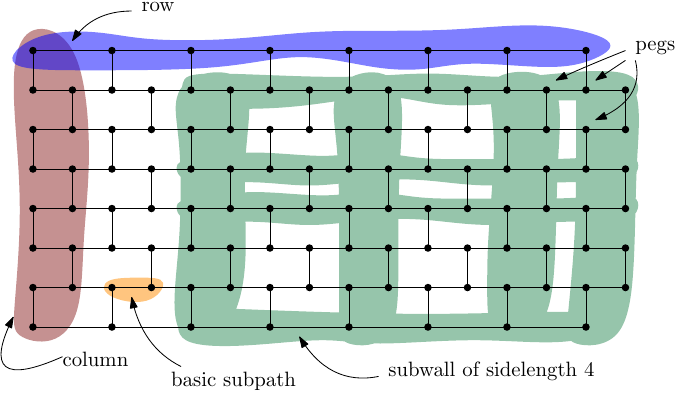}
    \caption{Wall of sidelength 8. The lines between pegs denote paths of arbitrary length.}
    \label{fig:wall}
\end{figure}

\paragraph*{Separations and tangles.}
Let $G$ be a graph. A \emph{separation} in $G$ is an ordered pair $(A,B)$ of vertex sets $A,B \subseteq V(G)$ such that
 $A \cup B = V(G)$ and there is no edge of $G$ with one endpoint in $A \setminus B$ and the second endpoint in $B \setminus A$. 
The \emph{order} of the separation $(A,B)$ is $|A \cap B|$.

A \emph{tangle of order $k$} in a graph $G$ is a family $\mathcal{T}$ of separations of order less than $k$ such that:
\begin{itemize}
    \item For every separation $(A,B)$ of order less than $k$ in $G$, exactly one of the separations $(A,B)$ and $(B,A)$ belongs to $\mathcal{T}$.
    \item For every triple $(A_1,B_1),(A_2,B_2),(A_3,B_3) \in \mathcal{T}$ we have $A_1 \cup A_2 \cup A_3 \neq V(G)$. 
\end{itemize}
Observe that if $\mathcal{T}$ is a tangle of order $k$ and $k' < k$, then the family $\mathcal{T}'$ consisting of all separations of $\mathcal{T}$
of order less than $k'$ is a tangle of order $k'$. We call such $\mathcal{T}'$ the \emph{restriction of $\mathcal{T}$ to order $k'$}.

Let $W$ be a wall in $G$ of sidelength $k$.
Let $(A,B)$ be a separation in $G$ of order $k' < k$. Note that for exactly one $\Gamma \in \{A \setminus B, B \setminus A\}$,
$\Gamma$ contains at least $k-k'$ full rows and at least $k-k'$ full columns of $W$. 
Let $\mathcal{T}_W$ be the family of those separations $(A,B)$ of order less than $\lceil k/3 \rceil$ such that
$B \setminus A$ contains at least $k - \lceil k/3 \rceil + 1$ full rows and at least $k - \lceil k/3 \rceil + 1$ full columns of $W$.
It is straightforward to verify that $\mathcal{T}_W$ is a tangle of order $\lceil k/3 \rceil$; we call it the tangle
\emph{governed by $W$}. 

We make the following simple but important observation.
\begin{lemma}\label{lem:subwall-tangle}
If $W$ is a wall in a graph $G$ and $W'$ is a subwall of $W$, then $\mathcal{T}_{W'} \subseteq \mathcal{T}_W$.
\end{lemma}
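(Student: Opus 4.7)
My plan is to take an arbitrary $(A,B) \in \mathcal{T}_{W'}$ and show directly that $(A,B) \in \mathcal{T}_W$, by ruling out the alternative via a row-counting argument. Write $k$ and $k'$ for the sidelengths of $W$ and $W'$ respectively, and let $j$ denote the order of $(A,B)$. Since $j < \lceil k'/3 \rceil \leq \lceil k/3 \rceil$, the separation $(A,B)$ is admissible for $\mathcal{T}_W$, so its tangle axiom forces exactly one of $(A,B), (B,A)$ into $\mathcal{T}_W$. I would suppose for contradiction that it is $(B,A)$.

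The paragraph introducing $\mathcal{T}_W$ actually provides a strictly stronger fact than the threshold appearing in the definition: for any separation of order $j$, the heavy side contains at least $k-j$ full rows (and columns) of $W$. Under the contradictory hypothesis, that heavy side is $A \setminus B$. I would then invoke the subwall structure: the $k'$ rows of $W'$ are subpaths of rows of $W$, and distinct rows of $W'$ must be subpaths of distinct rows of $W$, because the rows of a wall are pairwise vertex-disjoint and cannot occupy the same horizontal level. Consequently, at most $k-k'$ of the $\geq k-j$ rows of $W$ lying in $A \setminus B$ can fail to contain a row of $W'$ as a subpath, so at least $(k-j)-(k-k') = k'-j$ rows of $W'$ are themselves fully contained in $A \setminus B$.

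On the other hand, the same strengthened observation applied to $(A,B) \in \mathcal{T}_{W'}$ yields at least $k'-j$ full rows of $W'$ in $B \setminus A$. Since the $k'$ rows of $W'$ are pairwise vertex-disjoint paths and the sets $A \setminus B$, $B \setminus A$ are disjoint, these two counts must sum to at most $k'$, forcing $2(k'-j) \leq k'$, i.e., $k' \leq 2j$. However, $j < \lceil k'/3 \rceil$ combined with integrality yields $j \leq (k'-1)/3$ and hence $2j < k'$ for every $k' \geq 1$, the desired contradiction.

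The step I expect to require the most care is the subwall bookkeeping, namely the injective correspondence between rows of $W'$ and rows of $W$. This is the standard interpretation of a subwall but is not spelled out in the one-line definition given in the preliminaries, so I would add a short justification using the disjointness of rows within a wall. Everything else is an entirely routine counting argument leveraging the stronger bound $k-j$ that is already implicit in the preamble to the definition of $\mathcal{T}_W$; in particular, the symmetric statement for columns is available but not needed, since the rows alone already yield a contradiction.
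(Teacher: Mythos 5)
Your proof is correct and rests on the same ingredients as the paper's short argument: the observation preceding the definition of $\mathcal{T}_W$ (that the heavy side of a separation of order $j$ contains at least $k-j$ full rows and columns of the wall) together with the fact that distinct rows of $W'$ are subpaths of distinct rows of $W$. You recast as an explicit proof by contradiction the step where the heavy side of $(A,B)$ with respect to $W$ is identified with its heavy side with respect to $W'$, filling in the row bookkeeping that the paper compresses into the phrase ``since $W'$ is a subwall of $W$,'' but the underlying approach is the same.
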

\begin{proof}
Let $k$ and $k'$ be the sidelengths of $W$ and $W'$, respectively. 
Let $(A,B)\in\mathcal{T}_{W'}$ be a separation of order less than $\lceil k'/3 \rceil$.
Then, $B \setminus A$ contains at least $k' - \lceil k'/3 \rceil + 1$ full rows of $W$ and at least $k' - \lceil k'/3 \rceil + 1$ full
columns of $W$. Since $W'$ is a subwall of $W$, $B \setminus A$ contains at least $k - \lceil k'/3 \rceil + 1$ full rows of $W$
and at least $k - \lceil k'/3 \rceil + 1$ full columns of $W$. 
Hence, $(A,B) \in \mathcal{T}_{W}$, as desired.
\end{proof}

We will need the following result, which follows from the combination of the polynomial grid minor theorem~\cite{ChekuriC16,ChuzhoyT21},
the duality of tangles and branchwidth~\cite{GM10}, and \cite[Lemma~14.6]{KTW20}.
\begin{theorem}\label{thm:tangle2wall}
There exists a function $f_{\mathrm{KTW20}}(k) = \widetilde{O}(k^{18})$
such that if a graph $G$ admits a tangle $\mathcal{T}$ of order $f_{\mathrm{KTW20}}(k)$ for an integer $k$,
then $G$ contains a wall $W$ of sidelength $3k$ such that $\mathcal{T}_W$ is the restriction of $\mathcal{T}$ to order $k$.
\end{theorem}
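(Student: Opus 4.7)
The plan is to derive Theorem~\ref{thm:tangle2wall} by chaining three well-known ingredients: (i) tangle/branchwidth duality, (ii) the polynomial grid-minor theorem, and (iii) the wall-extraction Lemma~14.6 of \cite{KTW20}. The target $f_{\mathrm{KTW20}}(k)=\widetilde{O}(k^{18})$ will be obtained as the composition of the polynomial overheads incurred at each step.

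First I would use the duality between tangles and branchwidth: the existence of a tangle $\mathcal{T}$ of order $m$ in $G$ is equivalent (up to small constants) to $G$ having branchwidth at least $m$. Setting $m \df f_{\mathrm{KTW20}}(k)$, we thus obtain that $G$ has branchwidth (hence also treewidth) polynomially large in $k$. Second, I would invoke the polynomial grid-minor theorem of Chekuri--Chuzhoy and Chuzhoy--Tan, which guarantees that a graph of treewidth at least $\widetilde{\Omega}(r^9)$ contains the $r\times r$ grid as a minor. Converting a grid minor of sufficiently large side into a wall (this costs only a constant factor) yields a large wall $W_0$ in $G$, of sidelength a polynomial in $k$ that we are free to choose.

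At this point we still need to align the tangle $\mathcal{T}_{W_0}$ governed by the wall with the restriction of $\mathcal{T}$ to order $k$. This is exactly the role of Lemma~14.6 of \cite{KTW20}: given a sufficiently large wall in a graph together with a tangle of sufficiently high order, one can refine the wall to a subwall $W$ of sidelength $3k$ whose governed tangle $\mathcal{T}_W$ coincides with the restriction of $\mathcal{T}$ to order $k$. Combining this with \cref{lem:subwall-tangle} (which ensures $\mathcal{T}_W \subseteq \mathcal{T}_{W_0}$, so refining is consistent) produces the desired wall.

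The main technical hurdle is bookkeeping the polynomial losses so that the final bound is indeed $\widetilde{O}(k^{18})$: the grid-minor theorem contributes a factor roughly $k^9$, converting grids to walls is essentially free, and the refinement step in \cite[Lem.~14.6]{KTW20} costs another polynomial factor, which together yields the claimed $\widetilde{O}(k^{18})$ threshold. Since each of the three ingredients is a cited black box, the proof itself is a short chain of invocations; the delicate part is simply choosing the parameters at each stage so that the tangle of order $k$ survives all restrictions and ultimately matches $\mathcal{T}_W$ on the produced subwall.
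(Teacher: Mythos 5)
Your proposal invokes precisely the three ingredients the paper cites for this theorem---tangle/branchwidth duality, the polynomial grid-minor theorem, and Lemma~14.6 of \cite{KTW20}---and the paper itself gives no further detail beyond naming them, so you have reconstructed the intended chain.

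One point in your framing deserves care. You propose to first extract an arbitrary large wall $W_0$ from the grid-minor theorem and then ``refine'' it to a subwall whose governed tangle matches the restriction of $\mathcal{T}$. This ordering does not quite work as stated: a graph can contain several high-order tangles living in essentially disjoint parts, so the wall $W_0$ you get from the grid-minor theorem may govern a tangle unrelated to $\mathcal{T}$, in which case no subwall of $W_0$ will align with $\mathcal{T}$ (by \cref{lem:subwall-tangle}, any subwall's tangle is contained in $\mathcal{T}_{W_0}$, not in $\mathcal{T}$). The correct reading of the black box is that the wall is constructed so as to respect $\mathcal{T}$ from the outset---one finds the grid minor on the ``big'' side of every low-order separation in $\mathcal{T}$---and this is precisely the content of Lemma~14.6 of \cite{KTW20} (with the grid-minor theorem supplying the polynomial threshold and duality translating between tangle order and branchwidth). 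If you cite the lemma in that role, rather than as a post-hoc refinement of an unconstrained wall, the chain closes and your exponent-$18$ bookkeeping ($\widetilde{O}(k^9)$ from the grid-minor step, composed with a polynomial sidelength requirement for the tangle-respecting wall) is consistent with the claimed $f_{\mathrm{KTW20}}(k)=\widetilde{O}(k^{18})$.
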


\section{Extended Strip Lemma}\label{sec:esd}
The main result of this section is the following:
\begin{lemma}[Extended strip decomposition or small balanced separator]\label{lem:esd}
For every fixed integer~$t$, there exists an integer $c_t$ and a polynomial-time algorithm that, given an $n$-vertex graph $G$
and a weight function $\w : V(G) \to [0,+\infty)$, 
returns one of the following:
\begin{enumerate}
    \item an induced copy of $S_{t,t,t}$ in $G$;
    \item a $0.99 \w(G)$-balanced separator for $(G,\w)$ dominated by $c_t \cdot \log n$ vertices;
    \item a rigid extended strip decomposition of $G$ where no particle is of weight larger than $0.5\w(G)$.
\end{enumerate}
\end{lemma}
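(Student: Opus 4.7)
The plan is to invoke Theorem~\ref{thm:ICALP:weight} as a starting point and then fold the ``deleted'' vertex neighborhoods back into the extended strip decomposition one vertex at a time. Concretely, first apply Theorem~\ref{thm:ICALP:weight} to $(G,\w)$. If the algorithm returns an induced $S_{t,t,t}$ we output~(1). Otherwise we obtain a family $\mathcal{P}$ of at most $11\log n + 6$ induced paths of length at most $t+1$ and a rigid extended strip decomposition $(H_0,\eta_0)$ of $G - N[\bigcup \mathcal{P}]$ in which every particle has weight at most $\tfrac{1}{2}\w(G)$. Set $Z_0 \coloneqq \bigcup_{P \in \mathcal{P}} V(P)$, so $|Z_0| = O_t(\log n)$, and set $\tau \coloneqq \w(G)$.

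The main engine is an auxiliary one-vertex extension lemma (this is Lemma~\ref{lem:esd:step:intro} in the overview), which, given a rigid extended strip decomposition $(H,\eta)$ of $G - v$ with all particles of weight at most $0.5\tau$, returns either an induced $S_{t,t,t}$, or a set $Z \subseteq V(G)$ of constant size $c_t$ whose closed neighborhood is a $0.99\tau$-balanced separator of $G$, or a rigid extended strip decomposition of $G$ itself with the same particle weight bound. Using this lemma, we process the vertices of $N[Z_0]$ one at a time: starting from $(H_0,\eta_0)$ (which is a rigid ESD of $G$ minus these remaining vertices), each successful extension produces a rigid ESD of a strictly larger induced subgraph of $G$, while maintaining that every particle has weight at most $\tfrac{1}{2}\tau = \tfrac{1}{2}\w(G)$. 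If at any step the one-vertex extension outputs a small set $Z$, we return $Z \cup N[Z_0]$ as the dominating set for a $0.99\w(G)$-balanced separator of $G$; since $|Z| \le c_t$ and $|Z_0| = O_t(\log n)$, the total is $O_t(\log n)$, yielding outcome~(2) with an appropriate choice of the constant $c_t$. If we complete all extensions we end with a rigid ESD of $G$ with small particles, giving outcome~(3).

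The real work, and the main obstacle, is proving the one-vertex extension lemma. The strategy is the one indicated in the overview: starting from $(H,\eta)$, derive the ``cleaned'' extended strip decomposition $(H',\eta')$ of $G - N[v]$ by removing $N(v)$ from every $\eta$-set and trimming. Use the key observation that for any $x \in V(H')$ of degree at least two, $\bigcup_{y \in N_{H'}(x)} \eta'(xy,x)$ is dominated by at most two vertices (from the completeness axiom of an ESD), so any separation $(A,B)$ in $H'$ of order $k$ lifts to a separator in $G$ dominated by at most $2k$ vertices together with $v$. Orient each separation of $H'$ of order at most $f_{\mathrm{KTW20}}(k_t)$ (for a suitable $k_t$ depending on $t$) toward the heavier side of the corresponding partition of $G$; if both sides ever carry weight above $0.01\tau$ we immediately get the required small dominating balanced separator, output~(2). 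Otherwise these orientations form a tangle, and Theorem~\ref{thm:tangle2wall} produces a wall $W$ in $H'$ of sidelength $3k_t$ which by \Cref{lem:subwall-tangle} and the cleaning also lives in $(H,\eta)$.

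Now examine how $N(v)$ relates to the wall. If many neighbors of $v$ project into many distinct bricks of $W$, or if they can be connected to $W$ through three vertex-disjoint paths in $H$, then lifting those paths through $(H,\eta)$ (using that $(H',\eta')$ is cleaned, so paths in the host project to induced anti-adjacent paths of at least the same length in $G$) yields an induced $S_{t,t,t}$ centered at $v$, outcome~(1). Otherwise a constant-order separation $(A,B)$ in $H$ isolates $N(v)$ on the $A$-side while keeping the wall in $B$. Build the auxiliary graph $G_A$ from the union of $\eta$-sets in $H[A]$ augmented by artificial ``escape'' vertices $Z$ attached to the boundary sets $\bigcup_{y \in N_H(x) \cap A} \eta(xy,x)$ for $x \in A \cap B$, together with $v$ as one of the distinguished vertices, and invoke Theorem~\ref{thm:three-in-a-tree} on $(G_A, Z \cup \{v\})$. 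As depicted in \Cref{fig:sttt}, any induced tree in $G_A$ meeting three elements of $Z \cup \{v\}$ can be combined with three anti-adjacent paths in the wall (which has sidelength $\geq 3k_t$, so paths of length $\geq t$ exist abundantly) to produce an induced $S_{t,t,t}$ in $G$; conversely, a rigid ESD of $(G_A, Z \cup \{v\})$ allows us to rebuild $(H,\eta)$ by substituting this decomposition on the $A$-side, thus incorporating $v$ and yielding a rigid ESD of $G$ with small particles. Verifying that (a) the tangle conditions and wall constants can be chosen so that the wall is large enough to always supply three length-$t$ anti-adjacent paths regardless of which $O(1)$ bricks the tree uses, and (b) the reassembled decomposition remains rigid and preserves the weight bound, is the technical heart of the argument.
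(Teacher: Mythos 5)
Your proposal matches the paper's proof: you invoke Theorem~\ref{thm:ICALP:weight} to obtain an extended strip decomposition of $G-N[Z_0]$ and then restore the deleted vertices one at a time via a one-vertex extension lemma (the paper's Lemma~\ref{lem:esd:step}), whose proof you sketch along the same lines (local cleaning, a tangle from small-order separations, a wall from Theorem~\ref{thm:tangle2wall}, and three-in-a-tree on an auxiliary graph $G_A$). The only nit is a phrasing slip in the second paragraph — you should return $Z \cup Z_0$ (not $Z \cup N[Z_0]$) as the \emph{dominating} set, with $N[Z]\cup N[Z_0]$ being the separator itself — but your size estimate $|Z|+|Z_0|=O_t(\log n)$ makes clear this is what you intend.
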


The main difference between Lemma~\ref{lem:esd} and the main result of~\cite{ICALP-qptas}, namely Theorem~\ref{thm:ICALP:weight},
is that Lemma~\ref{lem:esd} promises in the last output an extended strip decomposition of the entire graph, not the graph
with a small number of neighborhoods deleted. 

The algorithm of Lemma~\ref{lem:esd} first applies Theorem~\ref{thm:ICALP:weight} to find either an induced copy of $S_{t,t,t}$
(which can be immediately returned) or a set $Z$ of size $\Oh(\log n)$ together with a rigid extended strip decomposition $(H,\eta)$
of $G-N[Z]$ such that every particle of $(H,\eta)$ has weight at most $0.5 \w(G)$. 
Then, we attempt to put back vertices of $N[Z]$ one-by-one to $(H,\eta)$, maintaining the property that every particle
of $(H,\eta)$ has weight at most $0.5 \w(G)$. The following lemma, whose proof spans the remainder of this section,
shows that in every such attempt, we can either succeed or obtain one of the first two outcomes of Lemma~\ref{lem:esd}.

\begin{lemma}\label{lem:esd:step}
For every fixed integer $t$ there exists an integer $c_t$ and a polynomial-time algorithm that, given an $n$-vertex graph $G$,
a weight function $\w : V(G) \to [0,+\infty)$, a real $\tau \geq \w(G)$, a vertex $v \in V(G)$,
and a rigid extended strip decomposition $(H,\eta)$ of $G-v$ with every particle of weight at most $0.5\tau$, 
returns one of the following:
\begin{enumerate}
    \item an induced copy of $S_{t,t,t}$ in $G$;
    \item a $0.99 \tau$-balanced separator for $(G,\w)$ dominated by at most $c_t$ vertices;
    \item a rigid extended strip decomposition of $G$ where no particle is of weight larger than $0.5\tau$.
\end{enumerate}
\end{lemma}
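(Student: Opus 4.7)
The plan is to either find a small balanced separator inside the host graph $H$ of the decomposition, or extract a large subdivided wall from $H$ and apply \cref{thm:three-in-a-tree} to reinsert $v$. I begin by ``cleaning'' the decomposition after removing $N[v]$: let $(H',\eta')$ denote the rigid extended strip decomposition of $G - N[v]$ obtained from $(H,\eta)$ by deleting vertices of $N(v)$ from every $\eta$-set and then trimming degenerate edges and isolated vertices of $H$ to restore rigidity (thereby ensuring that for every remaining edge $xy$ of $H'$ the sets $\eta'(xy,x),\eta'(xy,y)$ are nonempty and connected by a path inside $G[\eta'(xy)]$). A basic observation is that, since $\eta(xy,x)$ is complete to $\eta(xz,x)$ for distinct $y,z \in N_H(x)$, the union $\bigcup_{y \in N_H(x)} \eta(xy,x)$ is dominated by at most two vertices of $G$; therefore a separation $(A,B)$ in $H'$ of order $k$ yields, together with $v$, a set of at most $2k+1$ vertices whose closed neighborhood separates the portion of $G$ placed in the $\eta'$-sets of $H'[A]$ from that placed in $H'[B]$.

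Next, choose a constant threshold $K_t$ (polynomial in $t$) and examine all separations of $H'$ of order less than $K_t$. If any is balanced, i.e.\ both sides carry $\ge 0.01\tau$ weight, return the associated $0.99\tau$-balanced separator dominated by at most $2K_t+1 \le c_t$ vertices. Otherwise, orient each such separation toward its heavier side to obtain a tangle $\mathcal{T}$ of order $K_t$ in $H'$. Choosing $K_t \ge f_{\mathrm{KTW20}}(3C_t)$ for a sufficiently large constant $C_t = C_t(t)$, \cref{thm:tangle2wall} produces a wall $W'$ in $H'$ of sidelength $3C_t$ whose governed tangle is the restriction of $\mathcal{T}$; if necessary, I pass to a subwall (using \cref{lem:subwall-tangle} to keep it compatible with $\mathcal{T}$) so that $W'$ is highly subdivided. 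The cleaning step ensures that $W'$ persists as a wall $W$ in $(H,\eta)$, whose basic paths realize long induced paths in $G$ avoiding $N[v]$ via the connectedness of each $G[\eta(xy)]$.

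I then examine how $N(v)$ is placed in $(H,\eta)$ relative to $W$. If $N(v)$ can be routed to three pegs of $W$ by vertex-disjoint paths in $H$ whose $G$-realizations are pairwise anti-adjacent, the three routes concatenated with long induced basic paths of $W$ produce three anti-adjacent $P_t$'s sharing $v$ as a common neighbor, which is an induced $S_{t,t,t}$. Otherwise, a Menger-type argument furnishes a small-order separation $(A,B)$ of $H$ that isolates essentially all of $N(v)$ to the $H[A]$-side while $W \subseteq H[B]$. I build an auxiliary graph $G_A$ on the $\eta$-mass of $H[A]$, together with the vertex $v$ (attached to its neighbors in $V(G_A)$) and a set $Z$ of artificial portal vertices, one per apex $x \in A \cap B$, where each portal is made adjacent to $\bigcup_{y \in N_H(x) \cap A} \eta(xy,x)$; $v$ is added to $Z$ as well. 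By construction, any induced tree in $G_A$ containing three vertices of $Z$ lifts, using escape routes through $W$ for the portal-branches, to an induced $S_{t,t,t}$ in $G$ (see \cref{fig:sttt}): each portal branch is extended by a long induced path through $W$, while the high subdivision guarantees length $\ge t$ and pairwise anti-adjacency. Applying \cref{thm:three-in-a-tree} to $(G_A, Z)$ therefore either hands us the $S_{t,t,t}$ directly, or returns a rigid extended strip decomposition $(H_A,\eta_A)$ of $(G_A, Z)$ with each $Z$-vertex peripheral; I glue $(H_A,\eta_A)$ into $(H,\eta)$ by identifying each portal-peripheral vertex of $H_A$ with the matching apex of $A \cap B$ in $H[B]$ (with the appropriate attachment sets merged), obtaining a rigid extended strip decomposition of $G$ with $v$ peripheral. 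Particle weights remain bounded by $0.5\tau$: the $B$-side particles are essentially unchanged, while every new $A$-side particle is contained in the old $A$-side plus $\{v\}$, whose total $\w$-mass is at most $0.01\tau + \w(v) \le 0.5\tau$ (using $\w(v) \le \w(G) \le \tau$ and that the $A$-side was the lighter side of every small separation).

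The main obstacle is the lifting step: proving that any induced tree returned by \cref{thm:three-in-a-tree} really extends, through $W$, into an induced $S_{t,t,t}$ in $G$. This calls for careful bookkeeping to ensure (a) three legs each projecting to a $P_t$ in $G$, (b) pairwise anti-adjacency between the legs, and (c) high enough subdivision of $W$ to absorb the padding needed to reach length $t$ on each leg. The complete-to-each-other structure of the sets $\eta(xy,x)$ at a fixed apex is essential to force the desired induced-subgraph structure at the branching point, while the cleaning property is essential to project paths in $H$ to induced paths in $G$. A secondary technical task is to make the phrase ``essentially all of $N(v)$'' precise; this typically involves tolerating a constant-sized exceptional set of neighbors of $v$, which can be added to the balanced-separator output in the corresponding case.
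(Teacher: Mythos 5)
Your high-level strategy is the paper's: locally clean $(H,\eta)$, extract a tangle in the host graph $H'$ of the cleaned decomposition of $G-N[v]$, turn it into a wall, and then either route $v$'s neighborhood to the wall along three disjoint paths to find an $S_{t,t,t}$ or isolate $v$'s side and apply three-in-a-tree to an auxiliary graph $G_A$. Two of your choices, however, are genuine gaps rather than bookkeeping.

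First, adding $v$ to the terminal set $Z$ fed to \cref{thm:three-in-a-tree} breaks the lifting step. The theorem only guarantees a tree containing at least three elements of $Z$; if two of them are portals and the third is $v$, the tree has three leaves but only two of them have escape routes into the wall $W$. The leg ending at $v$ cannot be extended, so the lifted configuration is an $S_{t,t,\ell}$ for some possibly small $\ell$ rather than an $S_{t,t,t}$. The paper avoids this by putting \emph{only} artificial portal vertices into $\mathcal{Z}$, and in fact uses an adjacent \emph{twin pair} $a_{x,1},a_{x,2}$ per apex $x\in A\cap B$: twins force any induced tree to pick three portals at three \emph{distinct} apices, and in the strip-decomposition outcome they guarantee the two peripheral leaves sit at the same hub $\zeta_x$, which is what makes the gluing with $H[B]$ clean. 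The vertex $v$ lives in $G_A$ but not in $\mathcal{Z}$.

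Second, the ``essentially all of $N(v)$'' issue is substantially more than a constant-sized exception you can throw into a balanced separator. The vertex $v$ can reach the wall's side not only along edge sets $\eta(e)$ but also through vertex sets $\eta(z)$ and triangle sets $\eta(xyz)$ for apices $z\in A\cap B$; the paper must first introduce the projection $\Projection$, pass to a $v$-pure subwall, and then enforce two structural conditions on the separation: \emph{triangle-safety} (\cref{lem:clean-triangles}) and \emph{pure backdoors} (\cref{lem:cleanup-backdoor}), the latter saying that every $z$-entry point is either complete or anticomplete to $\mathfrak{P}_z=\bigcup_{x\in N_H(z)\cap(B\setminus A)}\eta(zx,z)$. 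These conditions are exactly what make the $G_A$ construction and the subsequent three-in-a-tree lifting sound, and enforcing them is an iterative process that can trigger either an $S_{t,t,t}$, a small balanced separator, or a replacement of $(A,B)$ — not a one-off patch. Finally, a minor point: your particle-weight bound ``$0.01\tau+\w(v)\le 0.5\tau$'' does not follow from $\w(v)\le\tau$ alone; you need the early return ``if $\w(V(G)\setminus N[v])\le 0.99\tau$ output $N[v]$,'' after which $\w(N[v])<0.01\tau$ and the combined $A$-side mass (including $v$) is at most $0.02\tau$.
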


Let us formally prove Lemma~\ref{lem:esd} using Lemma~\ref{lem:esd:step}.
\begin{proof}[Proof of Lemma~\ref{lem:esd}.]
Let $\tau \coloneqq \w(G)$. 
Run Theorem~\ref{thm:ICALP:weight} on $(G,\w)$. If an $S_{t,t,t}$ is returned, return it as well. 
Otherwise, we have a set $Z$ of size $\Oh(\log n)$ together with a rigid extended strip decomposition $(H,\eta)$
of $G-N[Z]$ such that every particle of $(H,\eta)$ has weight at most $0.5 \tau$. 

Enumerate $N[Z]$ as $\{v_1,v_2,\ldots,v_k\}$. Let $G_i = G-\{v_1,\ldots,v_i\}$ for $0 \leq i \leq k$, so that 
$G_0 = G$ and $G_k = G-N[Z]$. Denote $(H_k,\eta_k) \coloneqq (H,\eta)$.
We
compute a sequence $(H_i,\eta_i)_{i=k}^0$ of rigid extended strip decompositions of graphs $G_i$ whose every particle has weight at most $0.5\tau$
as follows.
For each $i=k,k-1,\ldots,1$ apply Lemma~\ref{lem:esd:step} to $G_{i-1}$, $v_i$ (recall that $G_i = G_{i-1}-v_{i-1}$), $\tau$,
and the rigid extended strip decomposition $(H_i,\eta_i)$. 
If an $S_{t,t,t}$ is returned, terminate the algorithm and return it, too.
If a $0.99\tau$-balanced separator $X$ is returned, return $X \cup N[Z]$ as a $0.99\tau$-balanced separator of $G$
dominated by $\Oh(\log n)$ vertices.
Otherwise, denote the output rigid extended strip decomposition of $G_{i-1}$ by $(H_{i-1},\eta_{i-1})$ and continue with the next step.
If we reach $(H_0,\eta_0)$, we return it as the third output of Lemma~\ref{lem:esd}.
\end{proof}

It will be useful in future work in this direction if we prove a slight strengthening of Lemma~\ref{lem:esd:step}, and will only add a small amount of additional work. To this end, we define a {\em $\Pttt$} to be an $S_{t_1,t_2,t_3}$, where $t_1,t_2,t_3 \geq t$, with the three leaves labeled $x_1,x_2$, and $x_3$, along with three additional vertices $y_1, y_2$, and $y_3$ such that the $y_i$'s are complete with each other (they induced a $K_3$), $y_i$ is a neighbor of $x_j$ if and only if $i = j$, and the $y_i$'s have no other neighbors in the $S_{t_1,t_2,t_3}$. The following lemma is the strengthening of Lemma~\ref{lem:esd:step}. 

\begin{lemma}\label{lem:esd:step-pyramid}
For every fixed integer $t$ there exists an integer $c_t$ and a polynomial-time algorithm that, given an $n$-vertex graph $G$,
a weight function $\w : V(G) \to [0,+\infty)$, a real $\tau \geq \w(G)$, a vertex $v \in V(G)$,
and a rigid extended strip decomposition $(H,\eta)$ of $G-v$ with every particle of weight at most $0.5\tau$, 
returns one of the following:
\begin{enumerate}
    \item an induced copy of $\Pttt$ in $G$;
    \item a $0.99 \tau$-balanced separator for $(G,\w)$ dominated by at most $c_t$ vertices;
    \item a rigid extended strip decomposition of $G$ where no particle is of weight larger than $0.5\tau$.
\end{enumerate}
\end{lemma}

Since a $\Pttt$ contains an $\Sttt$ as an induced subgraph, it is trivial to see that Lemma~\ref{lem:esd:step-pyramid} implies Lemma~\ref{lem:esd:step}. The remainder of this section is devoted to the proof of Lemma~\ref{lem:esd:step-pyramid}.


\subsection{Turning separations in $H$ into separators in $G$}

Let us make the following trivial observation.
\begin{lemma}\label{lem:dom-potato}
    If $(H,\eta)$ is a rigid extended strip decomposition of a graph $G$
    and $x \in V(H)$ is of degree more than one,
    then $\bigcup_{y \in N_{H}(x)} \eta(xy,x)$ is dominated by two vertices.
\end{lemma}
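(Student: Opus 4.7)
The plan is to unpack the definition of an extended strip decomposition and exploit property (2), namely that $\eta(xy,x)$ is complete to $\eta(xz,x)$ whenever $y,z$ are distinct neighbors of $x$ in $H$. Since $x$ has degree at least two, I would fix two distinct neighbors $y_1,y_2 \in N_H(x)$. Rigidity of $(H,\eta)$ guarantees that both $\eta(xy_1,x)$ and $\eta(xy_2,x)$ are nonempty, so I may pick vertices $v_1 \in \eta(xy_1,x)$ and $v_2 \in \eta(xy_2,x)$ arbitrarily. The claim will be that $\{v_1,v_2\}$ dominates $\bigcup_{y \in N_H(x)} \eta(xy,x)$.

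To verify domination, I would take an arbitrary $y \in N_H(x)$ and an arbitrary $u \in \eta(xy,x)$, and do a small case analysis on whether $y$ equals $y_1$, $y_2$, or neither. If $y \notin \{y_1,y_2\}$, then by the completeness property applied to the pairs $(y,y_1)$ and $(y,y_2)$, the vertex $u$ is adjacent to both $v_1$ and $v_2$. If $y = y_1$, then either $u = v_1$ (and so $u \in N_G[\{v_1,v_2\}]$ trivially) or $u$ lies in $\eta(xy_1,x)$ which is complete to $\eta(xy_2,x)$, so $u$ is adjacent to $v_2$. The case $y = y_2$ is symmetric. In every case $u \in N_G[\{v_1,v_2\}]$, which is exactly the conclusion.

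I do not expect any real obstacle here; the lemma is a one-line consequence of the definition together with rigidity (which is used solely to ensure that $\eta(xy_1,x)$ and $\eta(xy_2,x)$ contain at least one vertex each, so that the dominating pair can actually be chosen). The authors' own characterization of this lemma as "trivial" matches this assessment.
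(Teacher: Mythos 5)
Your proof is correct and follows the same route as the paper's: fix two neighbors $y_1,y_2$ of $x$, use rigidity to pick $v_i \in \eta(xy_i,x)$, and invoke the completeness property of the interfaces. The paper phrases the final step slightly more compactly (noting $v_i$ dominates $\bigcup_{y \neq y_i}\eta(xy,x)$), but this is the same argument as your case split.
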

\begin{proof}
    Pick two neighbors $y_1,y_2 \in N_{H}(x)$ and any $v_i \in \eta(xy_i,x)$ for $i=1,2$.
    (Recall that we mandate the interfaces $\eta(xy_i,x)$ to be nonempty in a rigid extended strip decomposition.)
    Then, $v_i$ dominates $\bigcup_{y \in N_{H}(x) \setminus \{y_i\}} \eta(xy,x)$, so $\{v_1,v_2\}$ dominates
    $\bigcup_{y \in N_{H}(x)} \eta(xy,x)$.
\end{proof}

For an extended strip decomposition $(H,\eta)$ of a graph $G$ and a set $A \subseteq V(H)$, the \emph{preimage} of $A$ in $G$
is the set $\preimage{(H,\eta)}{A} \subseteq V(G)$ consisting of:
\begin{itemize}
    \item all vertex sets $\eta(x)$ for $x \in A$;
    \item all edge sets $\eta(xy)$ for $|\{x,y\} \cap A| \geq 1$;
    \item all triangle sets $\eta(xyz)$ for $|\{x,y,z\} \cap A| \geq 2$.
\end{itemize}

We make the following two observations based on Lemma~\ref{lem:dom-potato}. 
\begin{lemma}\label{lem:sepH2G}
Let $(H,\eta)$ be an extended strip decomposition of a graph $G$ and let $(A,B)$ be a separation in $H$.
Let $X = \bigcup_{x \in A \cap B} \bigcup_{y \in N_H(x)} \eta(xy,x)$. 
Then, every connected component of $G-X$
is contained in one of the following sets: $\preimage{(H,\eta)}{A \setminus B}$, $\preimage{(H,\eta)}{B \setminus A}$, 
$\eta(x)$ for some $x \in A \cap B$, $\eta(xy)$ for some $xy \in E(H[A])$, or $\eta(xyz)$ for some triangle $xyz \in T(H)$
with $|\{x,y,z\} \cap A \cap B| \geq 2$. 
Furthermore, if $(H,\eta)$ is rigid and every vertex of $A \cap B$ has degree at least $2$,
then $X$ is dominated by at most $2|A \cap B|$ vertices.
\end{lemma}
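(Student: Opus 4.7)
The plan is to perform a direct case analysis on the location of a vertex $v \in V(G) \setminus X$ within the partition $\{\eta(o) : o \in V(H) \cup E(H) \cup T(H)\}$, showing that in each case every neighbor of $v$ in $G - X$ lives inside the same ``region'' (one of the five listed sets). The driving principle is that the set $X$ is defined precisely to cover all interface sets $\eta(xy,x)$ incident to vertices $x \in A \cap B$, and that since $(A,B)$ is a separation in $H$ there is no edge between $A \setminus B$ and $B \setminus A$. Combined with axiom~(3) of extended strip decompositions, which lists all possible edges of $G$ between different parts of the partition, this immediately localizes components of $G - X$.

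Concretely, let $v \in V(G) \setminus X$. If $v \in \eta(x)$ for $x \in V(H)$, then by axiom~(3) its only external neighbors lie in $\bigcup_{y \in N_H(x)} \eta(xy,x)$; if $x \in A \cap B$ this is contained in $X$, so the component of $v$ is inside $\eta(x)$, while if $x \in A \setminus B$ (respectively $B \setminus A$) then every $y \in N_H(x)$ lies in $A$ (respectively $B$) and the component stays inside $\preimage{(H,\eta)}{A \setminus B}$ (respectively $\preimage{(H,\eta)}{B \setminus A}$). If $v \in \eta(xy)$, then $v \notin X$ forces $x \in A \cap B \Rightarrow v \notin \eta(xy,x)$ and likewise for $y$, so the external neighbors of $v$ are restricted to sets attached at endpoints that are outside $A \cap B$; the analysis splits according to whether $\{x,y\}$ meets $A \cap B$ in $0$, $1$, or $2$ points, giving either a preimage set (if some endpoint lies in $A \setminus B$ or $B \setminus A$) or the interior set $\eta(xy)$ itself (if both endpoints are in $A \cap B$). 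The triangle case $v \in \eta(xyz)$ is analogous: since no edge of $H$ crosses from $A \setminus B$ to $B \setminus A$, all three of $x, y, z$ lie on the same side of the separation modulo $A \cap B$, and counting how many of $x, y, z$ belong to $A \cap B$ places the component either in the appropriate preimage or inside $\eta(xyz)$ with $|\{x,y,z\} \cap A \cap B| \geq 2$.

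For the ``furthermore'' part, rigidity of $(H,\eta)$ and the assumption $\deg_H(x) \geq 2$ for every $x \in A \cap B$ let us invoke Lemma~\ref{lem:dom-potato} at each such $x$, yielding two vertices of $G$ that dominate $\bigcup_{y \in N_H(x)} \eta(xy,x)$; taking the union over $A \cap B$ gives the claimed $2|A \cap B|$ dominators of $X$. The main obstacle is only bookkeeping: the triangle subcase requires care to check that all ``extra'' adjacencies granted by axiom~(3) across triangle sets are indeed cut off by $X$, but this follows immediately once one notes that an edge from $\eta(xyz)$ can only reach $\eta(xy,x) \cap \eta(xy,y)$, both factors of which lie in $X$ as soon as either $x$ or $y$ belongs to $A \cap B$.
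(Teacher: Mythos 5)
Your proposal is correct and follows essentially the same argument as the paper: both rely on axiom~(3) of extended strip decompositions to enumerate the possible edges of $G$ between distinct $\eta(\cdot)$-sets, observe that $X$ absorbs every interface $\eta(xy,x)$ incident to a vertex $x \in A \cap B$, and then invoke Lemma~\ref{lem:dom-potato} for the domination bound. The paper phrases the separation part set-centrically (for each of the small sets $\Gamma$ one has $N_G(\Gamma) \subseteq X$, and edges leaving the two preimages land in $X$), whereas you organize the same checks vertex-by-vertex via a case analysis on the location of $v \notin X$; this is only a cosmetic difference.
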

\begin{proof}
Observe that every set $\Gamma$ being either $\eta(x)$ for $x \in A \cap B$, $\eta(xy)$ for some $xy \in E(H[A])$, or
$\eta(xyz)$ for a triangle $xyz \in T(H)$ with $|\{x,y,z\} \cap A \cap B| \geq 2$ satisfies $N_G(\Gamma) \subseteq X$. 
Similarly, every edge that has exactly one endpoint on $\preimage{(H,\eta)}{A \setminus B} \setminus X$ has its second endpoint in $X$
and every edge that has exactly one endpoint on $\preimage{(H,\eta)}{B \setminus A} \setminus X$ has its second endpoint in $X$.
This proves the desired separation properties of $X$. The second part of the lemma follows directly from Lemma~\ref{lem:dom-potato}.
\end{proof}

\begin{lemma}\label{lem:substantial-particle}
Let $0 < \delta < 0.5$ be a constant.
Let $(H,\eta)$ be an extended strip decomposition of a graph $G$ with weight function $\w$.
Assume that no particle of $(H,\eta)$ has weight more than $(1-\delta)\w(G)$,
but there is a particle of $(H,\eta)$ that has weight at least $\delta\w(G)$. 
Then there exists a set $F \subseteq V(H)$ of size at most $2$
such that $X \coloneqq \bigcup_{x \in F} \bigcup_{y \in N_H(x)} \eta(xy,x)$ is an
$(1-\delta)\w(G)$-balanced separator in $G$.
Furthermore, if $(H,\eta)$ is rigid, then $X$ is dominated by at most four vertices. 
\end{lemma}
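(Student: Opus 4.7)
The plan is a case analysis on the type of heavy particle and on the $H$-degrees of its underlying vertices. In each case I will select $F \subseteq V(H)$ of size at most $2$, verify the balanced-separator property via Lemma~\ref{lem:sepH2G}, and, in the rigid case, extract at most four dominators for $X$ via Lemma~\ref{lem:dom-potato}. Two weight inequalities will be used throughout: every particle has weight at most $(1-\delta)\w(G)$ by hypothesis, and whenever a full-edge particle $A_{xy}^{xy}$ (respectively a vertex particle $\eta(x)$) is heavy, its complement in $V(G)$ has weight at most $(1-\delta)\w(G)$.

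Suppose first that the heavy particle is a vertex particle $\eta(x)$. If $x$ is isolated in $H$, property~3 of the extended strip decomposition forbids any edge from $\eta(x)$ to $V(G) \setminus \eta(x)$, so $F = \emptyset$ suffices: the components of $G$ split between $\eta(x)$ (bounded as a particle) and its complement (bounded because $\eta(x)$ is heavy). If $x$ has degree at least $2$, take $F = \{x\}$ and invoke Lemma~\ref{lem:sepH2G} with $A = \{x\}, B = V(H)$; every component of $G - X$ lies in $\eta(x)$ or in $V(G) \setminus \eta(x)$, each of weight at most $(1-\delta)\w(G)$. If $x$ has degree $1$ with neighbor $y$ of degree at least $2$, take $F = \{y\}$: since $x$ has no further neighbor to escape through, the component of $G - X$ containing $\eta(x)$ is trapped inside the half-edge particle $A_{xy}^{x}$; the $\eta(y)$-component is a vertex particle, and every remaining component lies in $V(G) \setminus A_{xy}^{xy}$, which is bounded because $A_{xy}^{xy} \supseteq \eta(x)$ is heavy. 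If both $x$ and $y$ have degree $1$, then $xy$ is an isolated component of $H$, so $A_{xy}^{xy}$ is a union of connected components of $G$; both $A_{xy}^{xy}$ and its complement are bounded, and $F = \emptyset$ again suffices.

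Next, suppose the heavy particle $P$ is an edge-interior, half-edge, full-edge, or triangle particle. In each case $P$ is contained in some full-edge particle $A_{xy}^{xy}$ (for a triangle particle $A_{xyz}$, choose any edge of the underlying triangle), so $\w(A_{xy}^{xy}) \geq \delta\w(G)$. If both $x$ and $y$ have degree at least $2$ in $H$, take $F = \{x,y\}$: Lemma~\ref{lem:sepH2G} shows every component of $G - X$ is contained in one of $\eta(x)$, $\eta(y)$, $A_{xy}^{\perp}$, a triangle set $\eta(xyz)$, or the tail $V(G) \setminus A_{xy}^{xy}$, all of weight at most $(1-\delta)\w(G)$. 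If exactly one of $x, y$, say $y$, has degree $1$ -- impossible when $P = A_{xyz}$, since the three vertices of a triangle in $H$ all have degree at least $2$ -- take $F = \{x\}$; the $y$-side component is forced inside the half-edge particle $A_{xy}^{y}$, and the remainder sits in $V(G) \setminus A_{xy}^{xy}$. If both endpoints have degree $1$, we fall back to $F = \emptyset$ as in the vertex-particle case.

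For the rigidity addendum, observe that in every subcase above the vertices placed in $F$ all have $H$-degree at least $2$, so Lemma~\ref{lem:dom-potato} furnishes two dominators of $\bigcup_{y \in N_H(x)} \eta(xy,x)$ for each such $x \in F$, yielding at most $2|F| \leq 4$ dominators of $X$ in total. The main technical obstacle will be the bookkeeping of the degree-$1$ subcases: the decisive observation is that a degree-$1$ vertex of $H$ cannot leak a component past the boundary of its incident half-edge or full-edge particle, which both preserves the separator bound and lets us keep that vertex out of $F$.
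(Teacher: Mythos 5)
Your proof is correct, and the core idea coincides with the paper's: put into $F$ precisely the degree-$\geq 2$ endpoints of the vertex/edge supporting the heavy particle, observe that $\bigcup_{x\in F}\bigcup_{y\in N_H(x)}\eta(xy,x)$ separates the corresponding maximal particle from the rest, and bound each resulting piece by the hypotheses. The paper streamlines the case analysis by first noting that inclusion-wise maximal particles are only isolated-vertex particles and full-edge particles, so it suffices to handle those two types (degree-one endpoints are then absorbed into a single sentence, since such an endpoint has no outgoing interfaces anyway); you instead enumerate particle types and degree patterns explicitly, which is longer but equivalent. One very small slip: in the subcase "exactly one of $x,y$, say $y$, has degree $1$" you assert the remainder sits in $V(G)\setminus A_{xy}^{xy}$, but a component contained in $\eta(x)$ is also possible; this is harmless since $\eta(x)$ is a particle and hence already weight-bounded, but it should be listed.
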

\begin{proof}
    Observe that inclusion-wise maximal particles are vertex particles for isolated vertices of $H$
    and full edge particles. 
    Without loss of generality, we can assume that there is a particle of $(H,\eta)$ of one of those two types that has weight at least 
    $\delta\w(G)$.

    Assume first that $\w(\eta(x)) \geq \delta\w(G)$ for some isolated $x \in V(H)$.
    We also have $\w(\eta(x)) \leq (1-\delta)\w(G)$ by the assumptions of the lemma.
    Since $\eta(x)$ is the union of some connected components of $G$ by the properties of an extended strip decomposition, 
    $\emptyset$ is a $(1-\delta)\w(G)$-balanced separator in $G$ and the we are done.

    Assume now that $\w(A_{xy}^{xy}) \geq \delta\w(G)$ for an edge $xy \in E(H)$. 
    Again, by the assumptions of the lemma we have $\w(A_{xy}^{xy}) \leq (1-\delta)\w(G)$.
    Let $F$ be the set of those vertices of $\{x,y\}$ that are of degree more than one in $H$
    and let $X \coloneqq \bigcup_{x' \in F} \bigcup_{y' \in N_H(x')} \eta(x'y',x')$.
    It follows from the properties of an extended strip decomposition
    that $X$ separates $A_{xy}^{xy}$ from $V(G) \setminus A_{xy}^{xy}$, i.e., every path from $A_{xy}^{xy}$ to $V(G) \setminus A_{xy}^{xy}$ contains a vertex from $X$.
    If $(H,\eta)$ is rigid, then Lemma~\ref{lem:dom-potato} implies that $X$ is dominated by at most $2|F| \leq 4$ vertices.
    Since $\delta\w(G) \leq \w(A_{xy}^{xy}) \leq (1-\delta)\w(G)$, $X$ is the desired $(1-\delta)\w(G)$-balanced separator.
\end{proof}

\subsection{Locally cleaning an extended strip decomposition}\label{sub:localCleaning}

We will need a few connectivity properties of an extended strip decomposition, a bit stronger than just being rigid.
Luckily, they are easy to obtain via local modifications.

Let $(H,\eta)$ be an extended strip decomposition of a graph $G$. 
A \emph{local cleaning step} for $(H,\eta)$ is one of the following modifications.
\begin{description}
    \item[removing an isolated vertex with an empty set]\phantomsection \label{mov:EmptyIV}
      If $x \in V(H)$ is an isolated vertex satisfying $\eta(x) = \emptyset$,
    delete $x$ from $V(H)$.
  \item[moving an isolated vertex set]\phantomsection \label{mov:IsolatedVS}
    If $x \in V(H)$ is an isolated vertex with $\eta(x)$ nonempty
    and ${y \in V(H)}$ is any other vertex that is not an isolated
    vertex with $\eta(y)$ empty, we set $\eta(y) \coloneqq \eta(y) \cup \eta(x)$ and $\eta(x) \coloneqq \emptyset$.
    \item[moving a disconnected component of an edge set]\phantomsection \label{mov:DisconncetedES}
      If for an edge $xy \in E(H)$ there exists a connected component $C$
    of $\eta(xy) \setminus (\eta(xy,x) \cup \eta(xy,y))$ with no neighbors in $\eta(xy,y) \setminus \eta(xy,x)$, we set $\eta(x) \coloneqq \eta(x) \cup C$
    and $\eta(xy) \coloneqq \eta(xy) \setminus C$. 
    \item[moving a disconnected component of a triangle set]\phantomsection \label{mov:DisconnectedTS}
      If $C$ is a connected component of a triangle $xyz \in T(H)$ with
    no neighbors in $\eta(yz,y) \cap \eta(yz,z)$, then we set $\eta(x) \coloneqq \eta(x) \cup C$ and $\eta(xyz) \coloneqq \eta(xyz) \setminus C$.
    \item[moving a disconnected vertex of an interface]\phantomsection \label{mov:DisconnectedVI}
      If for an edge $xy \in E(H)$ there is a vertex $v \in \eta(xy,x) \setminus \eta(xy,y)$
    such that $N_G[v] \cap \eta(xy) \subseteq \eta(xy,x)$, set $\eta(x) \coloneqq \eta(x) \cup \{v\}$ and $\eta(xy) \coloneqq \eta(xy) \setminus \{v\}$. 
    \item[removing an edge with an empty interface]\phantomsection \label{mov:EmptyEI}
      If $xy \in E(H)$ is an edge with $\eta(xy,x) = \emptyset$,
    we set $\eta(y) \coloneqq \eta(y) \cup \eta(xy)$ and delete the edge $xy$ from $H$.
    \item[suppressing a degree-1 vertex]\phantomsection \label{mov:Sup} If $x \in V(H)$ is of degree $1$ in $H$, with its unique neighbor $y$, set $\eta(y) \coloneqq \eta(y) \cup \eta(xy) \cup \eta(x)$, $\eta(x) \coloneqq \emptyset$ and delete the edge $xy$. 
\end{description}
An extended strip decomposition is \emph{locally cleaned} if no local cleaning step is applicable.

The following observations are immediate.
\begin{lemma}\label{lem:clean-step-ok}
If $(H,\eta)$ is an extended strip decomposition of $G$ and $(H',\eta')$ is a result of applying the first applicable local cleaning step
to $(H,\eta)$, then $(H',\eta')$ is also an extended strip decomposition of $G$ with $V(H') \subseteq V(H)$ and $E(H') \subseteq E(H)$.
Furthermore, we have the following:
\begin{itemize}
    \item For every $xy \in E(H')$ we have $\eta'(xy) \subseteq \eta(xy)$, $\eta'(xy,x) \subseteq \eta(xy,x)$, and $\eta'(xy,y) \subseteq \eta(xy,y)$.
    \item For every $xyz \in T(H')$, we have $\eta'(xyz) \subseteq \eta(xyz)$. 
    \item The following potential strictly increases from $(H,\eta)$ to $(H',\eta')$: 
the number of vertices of $G$ in  vertex sets of  $H/H'$, minus the number of vertices and edges of $H/H'$, and additionally 
minus the number of vertices of $H/H'$ that are not isolated vertices with empty vertex sets.
\end{itemize}
\end{lemma}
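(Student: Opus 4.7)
My plan is a direct case analysis over the seven local cleaning rules. For each rule I will verify (a) that the resulting pair $(H',\eta')$ is an extended strip decomposition of $G$, (b) the claimed subset relations on edge sets, interfaces, and triangle sets hold, and (c) the stated potential strictly increases. The containments $V(H') \subseteq V(H)$ and $E(H') \subseteq E(H)$ are immediate by inspection, since no rule introduces new host vertices or edges.

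The subset claims in (b) can be handled uniformly. Each rule either leaves an edge or triangle set intact, removes a subset of its vertices while shifting them into some vertex set $\eta(x')$, or drops the host element entirely (which voids the obligation). In particular, no interface $\eta(xy,x)$ ever grows, so the interface completeness property (property~2 of the definition) persists automatically: two interfaces that were complete to each other before the step remain complete after, since both can only shrink.

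The main work lies in checking the edge-classification property (property~3) for each rule. The partition of $V(G)$ is trivially preserved, since every rule shuffles $G$-vertices between parts (or deletes an empty part). For property~3, the side condition attached to each rule is precisely the hypothesis needed to keep the external edges of the moved vertices inside the allowed patterns. For instance, in the rule that pushes a component $C$ of $\eta(xy) \setminus (\eta(xy,x) \cup \eta(xy,y))$ into $\eta(x)$, the hypothesis that $C$ has no neighbor in $\eta(xy,y) \setminus \eta(xy,x)$ forces every $G$-edge incident to $C$ to lie either inside $C$ (hence inside $\eta'(x)$) or to land in $\eta(xy,x) = \eta'(xy,x)$, which matches the allowed pattern ``$u \in \eta(xy,x)$, $v \in \eta(x)$''. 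Analogous checks handle the triangle-component and disconnected-interface-vertex rules; the ``move isolated vertex set'' rule works because an isolated $x$ forces $\eta(x)$ to be a union of whole components of $G$, so its vertices have no external adjacencies at all; and the ``remove edge with empty interface'' and ``suppress degree-one vertex'' rules re-house the freed $\eta(xy)$ into $\eta'(y)$, with the emptiness of $\eta(xy,x)$ in the former case (or the absence of a third neighbor of $x$ in the latter) killing the only potentially problematic adjacency patterns.

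Finally, (c) is a monotonicity check. The three rules that delete an element of $H$ (an isolated vertex with empty set, an edge with empty interface, a degree-one vertex) strictly shrink the ``$-|V(H)|-|E(H)|$'' summand without changing $V(G)$; the remaining four rules leave $H$ untouched but strictly enlarge some $\eta(x)$ at the expense of an edge, triangle, or interface set, boosting the first summand. The main obstacle I anticipate is the bookkeeping for property~3 in the two disconnected-component rules and the empty-interface edge removal, where one must carefully enumerate the potential external neighbors of the moved vertices (including interactions with triangle sets $\eta(xyz)$ that become orphaned when a host edge disappears) and confirm that each remains covered by an allowed edge pattern in the new decomposition.
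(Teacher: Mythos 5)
You correctly sketch a rule-by-rule verification and you do flag, in your closing paragraph, that triangle sets $\eta(xyz)$ becoming orphaned when a host edge is deleted is the sticking point---but the preceding paragraph treats that case as already dispatched, and incorrectly so. For the rule removing an edge $xy$ with $\eta(xy,x)=\emptyset$ you assert that the emptiness of $\eta(xy,x)$ ``kills the only potentially problematic adjacency patterns.'' That is not the real obstruction. If some triangle $xyz\in T(H)$ has $\eta(xyz)\neq\emptyset$, then deleting the edge $xy$ removes $xyz$ from $T(H')$, and the rule re-houses only $\eta(xy)$ (into $\eta(y)$), never $\eta(xyz)$. The vertices of $\eta(xyz)$ would then belong to no part of $\eta'$, so what fails is property~1 of the ESD definition (the partition of $V(G)$), not an adjacency-pattern constraint. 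And the emptiness of $\eta(xy,x)$ does not, on its own, force $\eta(xyz)=\emptyset$: the ESD axioms permit a non-empty $\eta(xyz)$ that touches the triangle only through the interfaces of $yz$ and $xz$, or touches nothing at all.

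The missing ingredient---and the single point the paper's proof bothers to write out---is the hypothesis that you are applying the \emph{first} applicable cleaning step. The ``moving a disconnected component of a triangle set'' rule is listed earlier than ``removing an edge with an empty interface,'' so it must be inapplicable. Instantiated so that the destination vertex is the third vertex of the triangle, inapplicability means every component of $\eta(xyz)$ has a neighbor in $\eta(xy,x)\cap\eta(xy,y)$; since that intersection is empty, $\eta(xyz)=\emptyset$ and nothing is orphaned. (Your observation for the degree-one suppression rule---that $x$ has no third neighbor and hence lies in no triangle---is exactly the right analogue there.) Insert this one appeal to the rule ordering and the rest of your case analysis reduces to the routine checking that the paper dismisses as trivial.
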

\begin{proof}
    The only nontrivial check is that whenever we delete an edge $e$ of $H$, all triangles involving $e$ already have empty sets.
    This follows for the ``\hyperref[mov:EmptyEI]{removing an edge with an empty interface}'' step due to inapplicability of the
    ``\hyperref[mov:DisconnectedTS]{moving a disconnected component of a triangle}'' step. 
\end{proof}

Note that the last property ensures that the local cleaning operation terminates and indeed produces a locally cleaned extended strip decomposition.
\begin{lemma}\label{lem:locally-cleaned-ok}
Let $(H,\eta)$ be an extended strip decomposition of $G$ that is locally cleaned.
Then $(H,\eta)$ is a rigid extended strip decomposition such that 
either $H$ consists of a single vertex with the whole $V(G)$ in its vertex set,
or every vertex of $H$ has degree at least $2$.
\end{lemma}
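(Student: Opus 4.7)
The plan is to verify each required property of a locally cleaned $(H,\eta)$ by checking that its violation would make one of the local cleaning steps applicable, contradicting the assumption of local cleanness. There are essentially four conditions to establish: nonemptiness of $\eta(xy,x)$, $\eta(xy,y)$, and $\eta(xy)$ for every $xy \in E(H)$; nonemptiness of $\eta(x)$ for every isolated $x \in V(H)$; and the structural dichotomy that either $H$ is a single isolated vertex holding all of $V(G)$ or every vertex of $H$ has degree at least $2$.

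First I would verify rigidity. For any $xy \in E(H)$, if $\eta(xy,x) = \emptyset$ then the step ``removing an edge with an empty interface'' applies, contradicting local cleanness; the same argument with the roles of $x$ and $y$ swapped gives $\eta(xy,y) \neq \emptyset$. Since $\eta(xy) \supseteq \eta(xy,x) \neq \emptyset$, the set $\eta(xy)$ is automatically nonempty. For any isolated $x \in V(H)$, if $\eta(x) = \emptyset$ then the step ``removing an isolated vertex with an empty set'' applies, again a contradiction.

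Next I would establish the structural dichotomy. Inapplicability of ``suppressing a degree-$1$ vertex'' immediately rules out any vertex of degree exactly one, so every vertex of $H$ has degree $0$ or degree at least $2$. Suppose some isolated vertex $x$ exists; by the previous paragraph $\eta(x) \neq \emptyset$. If $H$ contained any second vertex $y$, then by the same paragraph $y$ is not an isolated vertex with empty $\eta(y)$, so the preconditions of ``moving an isolated vertex set'' are met for the pair $(x,y)$---contradicting local cleanness. Hence $x$ is the unique vertex of $H$, and since $\eta$ partitions $V(G)$ among the host objects of $H$, necessarily $\eta(x) = V(G)$.

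The argument is essentially a bookkeeping exercise and I do not anticipate any substantial obstacle. The only point worth a second look is making sure the preconditions of each invoked cleaning step match exactly the negation of the corresponding rigidity or structural condition---in particular, that ``moving an isolated vertex set'' truly applies as soon as there exists any second vertex in $H$ which is itself not an isolated vertex with empty vertex set, which is precisely how the step is phrased in \cref{sub:localCleaning}. The triangle-moving and disconnected-interface steps play no role in this lemma; they will matter only later when we need the stronger connectivity properties of the cleaned decomposition.
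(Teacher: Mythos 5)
Your proof is correct and follows essentially the same approach as the paper: each claimed property is established by noting that its failure would make one of the local cleaning steps applicable, contradicting local cleanness. If anything, your handling of the isolated-vertex dichotomy is slightly more explicit than the paper's terse ``unless already $\eta(x) = V(G)$'' remark, since you directly argue that no second vertex can coexist with an isolated vertex of nonempty vertex set.
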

\begin{proof}
  If there was an isolated $x \in V(H)$ with nonempty $\eta(x)$, the ``\hyperref[mov:IsolatedVS]{moving isolated vertex set}'' step would apply,
    unless already $\eta(x) = V(G)$.
    If there were an isolated vertex $x \in V(H)$ with $\eta(x) = \emptyset$, the ``\hyperref[mov:IsolatedVS]{removing an isolated vertex with an empty set}'' step
    would apply.
    If there were a vertex $x \in V(H)$ of degree one, the ``\hyperref[mov:Sup]{suppressing a degree-1 vertex}'' step would apply. 
    If there were an edge $xy \in E(H)$ with empty $\eta(xy)$, $\eta(xy,x)$, or $\eta(xy,y)$, the ``\hyperref[mov:EmptyEI]{removing an edge with an empty interface}'' step would apply.
    This concludes the proof.
\end{proof}

Recall that in the context of Lemma~\ref{lem:esd:step-pyramid}, we have access to a rigid extended strip decomposition $(H,\eta)$ of $G-v$
with all particles of weight at most $0.5\tau$. We want to add $v$ to the extended strip decomposition; on the way there
we can identify an induced $\Pttt$ or a $0.99\tau$-balanced separator that is dominated by a small number of vertices. 

If $\w(V(G) \setminus N_G[v]) \leq 0.99\tau$, then we can return $N_G[v]$ as the promised $0.99\tau$ balanced separator.
Hence, we assume $\w(V(G) \setminus N_G[v]) \geq 0.99\tau$, that is, 
\begin{equation}\label{eq:esd:tau-small}
\tau \leq 0.99^{-1} \w(V(G) \setminus N_G[v]). 
\end{equation}
Let $S \subseteq N_G[v]$ with $v \in S$. 
A \emph{local cleaning operation applied to $S$} consists of the following:
\begin{enumerate}
    \item Computing an extended strip decomposition $(H_S,\eta_S)$ of $G-S$ by restricting each set $\eta(\cdot)$ from $(H,\eta)$ to $V(G) \setminus S$.
    (Note that in this step $(H_S,\eta_S)$ may not be rigid, as some sets $\eta_S(x)$, $\eta_S(xy)$ or interfaces
    $\eta_S(xy,x)$ may be empty.)
    \item Iteratively, while possible, apply the first applicable local cleaning operation to $(H_S,\eta_S)$.
    \item If at any moment of the process there exists a particle of $(H_S,\eta_S)$ whose weight is
    at least $0.01 \w(V(G) \setminus S)$, apply Lemma~\ref{lem:substantial-particle} to it, obtaining
    a $0.99\w(V(G) \setminus S)$-balanced separator $X$ of $G-S$ 
    equal to $\bigcup_{x \in F} \bigcup_{y \in N_{H_S}(x)} \eta_S(xy,x)$ for some $F \subseteq V(H_S)$ of size at most $2$.
    Since $V(H_S) \subseteq V(H)$ and $\eta_S(xy,x) \subseteq \eta(xy,x)$ for every $xy \in E(H_S)$ by Lemma~\ref{lem:clean-step-ok},
    $X \subseteq \bigcup_{x \in F} \bigcup_{y \in N_H(x)} \eta(xy,x)$. Hence, by Lemma~\ref{lem:dom-potato}, 
    $X$ is dominated by at most four vertices in $G$ (not necessarily in $G - S$). 
    By~\eqref{eq:esd:tau-small} and since $\tau \geq \w(G)$, every connected component of $G-S-X$ has weight at most 
    \[ \w(V(G) \setminus S) - 0.01 \w(V(G) \setminus S) = 0.99 \w(V(G) \setminus S) \leq 0.99\tau.\]
    Thus, we return $X \cup S$ as a $0.99\tau$-balanced separator of $G$ dominated by at most five vertices. 
\end{enumerate}
Initially, every particle of $(H_S,\eta_S)$ has weight at most $0.5\tau$ which, by~\eqref{eq:esd:tau-small}, is upper bounded by $\frac{0.5}{0.99}\w(V(G)\setminus N_G[v]) \leq 0.55\w(V(G) \setminus S)$. 
Thus, if Lemma~\ref{lem:substantial-particle} is triggered before the first cleaning operation, its assumptions are satisfied. 
Later in the process, we apply a local cleaning operation to an extended strip decomposition whose every particle
is of weight at most $0.01 \w(V(G) \setminus S)$. 
Since every local cleaning operation moves a subset of one set $\eta(\cdot)$ to another, after a single local cleaning operation
every particle is of weight at most $0.02 \w(V(G) \setminus S)$. This justifies the assumptions
of Lemma~\ref{lem:substantial-particle} if triggered in later steps. 

We conclude with the following straightforward summary of the properties of the result of local cleaning (cf. Lemma~\ref{lem:locally-cleaned-ok}).
\begin{lemma}\label{lem:local-cleaning}
Let $S \subseteq V(G)$ with $v \in S$ and assume that the local cleaning operation applied to $S$ 
finished with an  extended strip decomposition $(H_S,\eta_S)$ of $G-S$. Then:
\begin{itemize}
    \item $(H_S,\eta_S)$ is a rigid extended strip decomposition of $G-S$.
    \item $V(H_S) \subseteq V(H)$ and $E(H_S) \subseteq E(H)$.
    \item For every $x \in V(H_S)$ we have $\eta_S(x) \supseteq \eta(x)$.
    \item For every $xy \in E(H_S)$ we have $\eta_S(xy) \subseteq \eta(xy)$, $\eta_S(xy,x) \subseteq \eta(xy,x)$, and $\eta_S(xy,y) \subseteq \eta(xy,y)$.
    \item For every $xyz \in T(H_S)$ we have $\eta_S(xyz) \subseteq \eta(xyz)$.
    \item Every particle of $(H_S,\eta_S)$ is of weight at most $0.01 \w(V(G) \setminus S) \leq 0.01 \tau$.
    \item Every vertex of $H_S$ is of degree at least $2$.
\end{itemize}
\end{lemma}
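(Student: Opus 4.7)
The plan is to reduce everything to Lemmas~\ref{lem:clean-step-ok} and \ref{lem:locally-cleaned-ok} by induction on the sequence of local cleaning steps that the operation actually performs. First, the monotone potential recalled in Lemma~\ref{lem:clean-step-ok} strictly increases with every step, so the cleaning loop terminates; at termination, by definition no step is applicable, hence the resulting $(H_S,\eta_S)$ is locally cleaned. Applying Lemma~\ref{lem:locally-cleaned-ok} immediately gives rigidity and the dichotomy that either every vertex of $H_S$ has degree at least two, or $H_S$ is a single vertex whose vertex set is all of $V(G)\setminus S$. I would rule out the latter case using the weight bound: such a vertex particle would have weight $\w(V(G)\setminus S)\ge 0.99\tau$ by~\eqref{eq:esd:tau-small}, vastly exceeding the guard threshold $0.01\w(V(G)\setminus S)$, which would have triggered the balanced-separator branch of the cleaning operation instead of returning $(H_S,\eta_S)$. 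This simultaneously establishes the first and last bullets.

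Next I would handle the containment bullets $V(H_S)\subseteq V(H)$, $E(H_S)\subseteq E(H)$, $\eta_S(xy)\subseteq\eta(xy)$, $\eta_S(xy,x)\subseteq\eta(xy,x)$, and $\eta_S(xyz)\subseteq\eta(xyz)$ by induction on the steps performed. These inclusions hold immediately after the initialization (which only restricts each $\eta(\cdot)$ to $V(G)\setminus S$), and Lemma~\ref{lem:clean-step-ok} states precisely that each individual cleaning step preserves them. Transitivity of $\subseteq$ then yields the claim after the entire sequence of steps. For the bullet $\eta_S(x)\supseteq\eta(x)$ (understood as containment after removing $S$ from the original $\eta(x)$, since $v\in S$ forces us to restrict), I would inspect the seven cleaning operations individually: the only ones that \emph{remove} vertices from a set $\eta_S(x)$ are \hyperref[mov:IsolatedVS]{moving an isolated vertex set} and \hyperref[mov:Sup]{suppressing a degree-1 vertex}, both of which immediately strip $x$ from $H_S$. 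Thus for every $x\in V(H_S)$ that \emph{survives} in the final extended strip decomposition, its vertex set has only been augmented since initialization, so $\eta_S(x)\supseteq \eta(x)\setminus S$ as required.

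The weight bound is purely an invariant of the guarded loop: the operation checks, after every step (and at initialization, which is fine by~\eqref{eq:esd:tau-small}), whether some particle has weight at least $0.01\w(V(G)\setminus S)$; if so, it returns a balanced separator rather than $(H_S,\eta_S)$. By hypothesis we reach $(H_S,\eta_S)$, so this threshold was never crossed, and the bound follows from~\eqref{eq:esd:tau-small}. Since the other bullets are packaged directly from the two previous lemmas, I do not anticipate a real obstacle; the only delicate point is the case analysis needed for the $\eta_S(x)\supseteq \eta(x)\setminus S$ direction, and even there the argument is essentially bookkeeping over the seven local cleaning step types.
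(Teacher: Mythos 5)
Your proposal is correct and takes the expected route: the paper itself labels this lemma a ``straightforward summary'' and offers no proof beyond a pointer to \cref{lem:locally-cleaned-ok}, so you are essentially supplying the bookkeeping the paper leaves implicit. The termination argument via the potential of \cref{lem:clean-step-ok}, the invocation of \cref{lem:locally-cleaned-ok} for rigidity plus the degree dichotomy, the guard-based elimination of the single-vertex alternative, and the step-by-step propagation of the containment inclusions are all correct. You also rightly read the bullet $\eta_S(x)\supseteq\eta(x)$ as $\eta_S(x)\supseteq\eta(x)\setminus S$ --- the stated inclusion cannot hold literally since $\eta(x)$ may meet $S\setminus\{v\}$ --- and your case analysis on the seven cleaning steps (only \hyperref[mov:IsolatedVS]{moving an isolated vertex set} and \hyperref[mov:Sup]{suppressing a degree-1 vertex} shrink a vertex set, and in both cases that vertex is deleted from $H_S$ by the ensuing application of \hyperref[mov:EmptyIV]{removing an isolated vertex with an empty set}) is exactly the right observation.

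Two small inaccuracies in attribution, neither of which damages the argument. First, the final inequality $0.01\,\w(V(G)\setminus S)\le 0.01\tau$ does not follow from~\eqref{eq:esd:tau-small}; that bound runs the wrong direction (it gives $\w(V(G)\setminus S)\ge 0.99\tau$). What you actually need is $\w(V(G)\setminus S)\le\w(G)\le\tau$, which comes from the hypothesis $\tau\ge\w(G)$ of \cref{lem:esd:step}. (Also note that for ruling out the single-vertex outcome you do not need~\eqref{eq:esd:tau-small} at all: the particle weight $\w(V(G)\setminus S)$ trivially satisfies $\w(V(G)\setminus S)\ge 0.01\,\w(V(G)\setminus S)$, so the guard fires regardless.) Second, the phrase ``immediately strip $x$ from $H_S$'' is a slight overstatement: those two steps leave $x$ behind as an isolated vertex with empty vertex set, and it is the \emph{next} application of the first cleaning rule that deletes it; but since $x$ is then guaranteed to be removed before the process terminates, the conclusion you draw is sound.
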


We start the algorithm of Lemma~\ref{lem:esd:step-pyramid} with applying the local cleaning operation to $(H,\eta)$, that is, 
the case $S = \{v\}$.
We either return a $0.99\tau$-balanced separator dominated by at most five vertices
or (we reuse the name $(H,\eta)$ for the obtained extended strip decomposition, slightly abusing the notation)
ensure the following two properties.
\begin{equation}\label{eq:esd:H-small-particles}
\mathrm{Every\ particle\ of\ }(H,\eta)\mathrm{\ is\ of\ weight\ at\ most\ }0.01\tau.
\end{equation}
\begin{equation}\label{eq:esd:H-degree-two}
\mathrm{Every\ }x \in V(H)\mathrm{\ is\ of\ degree\ at\ least\ }2.
\end{equation}

We henceforth proceed assuming~\eqref{eq:esd:H-small-particles} and~\eqref{eq:esd:H-degree-two}.
Note that Lemma~\ref{lem:dom-potato} implies the following.
\begin{equation}\label{eq:esd:H-dom-separator}
\mathrm{For\ every\ }F \subseteq V(H)\mathrm{\ the\ set\ }\bigcup_{x \in F} \bigcup_{y \in N_H(x)} \eta(xy,x)\mathrm{\ is\ dominated\ by\ at\ most\ }2|F|\mathrm{\ vertices\ in\ }G.
\end{equation}

\newcommand{\tanglethreshold}{\sigma}

\subsection{A wall avoiding $N[v]$}

An edge $xy \in E(H)$ is \emph{$v$-safe} if 
there exists a path in $G[\eta(xy)]-N[v]$ between a vertex of $\eta(xy,x)$ and a vertex of $\eta(xy,y)$.
A subgraph of $H$ is \emph{$v$-safe} if all its edges are $v$-safe.

We now show the following.
\begin{lemma}\label{lem:find-safe-wall}
For every constant $\tanglethreshold$ there exists a constant $\tanglethreshold'$ such that
in polynomial time we can either find a $0.99\tau$-balanced separator dominated by at most $\tanglethreshold'$ vertices in $G$
or a $v$-safe wall $W$ in~$H$ of sidelength $3\tanglethreshold$ with $\mathcal{T}_W$, the tangle governed by $W$, having the following property:
\begin{equation}\label{eq:W-tangle}
\forall_{(A,B) \in \mathcal{T}_W} \w(\preimage{(H,\eta)}{B \setminus A}) \geq 0.99\tau.
\end{equation}
\end{lemma}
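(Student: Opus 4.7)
The plan is to combine two ingredients: the local cleaning of Section~\ref{sub:localCleaning} (to secure $v$-safety of edges) and a weighted-tangle construction feeding into Theorem~\ref{thm:tangle2wall} (to locate a wall in the heavy region of $(H,\eta)$).

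First I apply local cleaning to $S\coloneqq N_G[v]$ starting from $(H,\eta)$. If the procedure already returns a $0.99\tau$-balanced separator dominated by $\Oh(1)$ vertices through its built-in invocation of Lemma~\ref{lem:substantial-particle}, I output it. Otherwise Lemma~\ref{lem:local-cleaning} yields a rigid extended strip decomposition $(H',\eta')$ of $G-N_G[v]$ with $V(H')\subseteq V(H)$, $E(H')\subseteq E(H)$, every particle of weight at most $0.01\tau$, and every vertex of $H'$ of degree at least $2$. The cleaning operations that move disconnected components of edge sets (and triangle sets and interface vertices) guarantee that for every $xy\in E(H')$ there is a path in $G[\eta'(xy)]\subseteq G[\eta(xy)]\setminus N_G[v]$ from $\eta'(xy,x)\subseteq\eta(xy,x)$ to $\eta'(xy,y)\subseteq\eta(xy,y)$; thus every edge of $H'$ is $v$-safe in the original $(H,\eta)$.

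Next I set $k\coloneqq f_{\mathrm{KTW20}}(\sigma)$ and enumerate in polynomial time all separations $(A,B)$ of $H'$ of order less than $k$. For each such separation I compute the preimage weights $\w(\preimage{(H,\eta)}{A\setminus B})$ and $\w(\preimage{(H,\eta)}{B\setminus A})$ with respect to the original $(H,\eta)$. If some separation is found with both sides strictly below $0.99\tau$, Lemma~\ref{lem:sepH2G} together with~\eqref{eq:esd:H-dom-separator} gives that $\bigcup_{x\in A\cap B}\bigcup_{y\in N_H(x)}\eta(xy,x)$ is dominated by at most $2|A\cap B|\leq 2k$ vertices in $G$; adjoining $v$ and using~\eqref{eq:esd:H-small-particles} to bound the weights of the residual single-particle components, I return this set as a $0.99\tau$-balanced separator of $G$ dominated by $\Oh_\sigma(1)$ vertices. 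Otherwise every small-order separation of $H'$ admits a unique heavy side, so I obtain a candidate tangle $\mathcal{T}$ of order $k$ on $H'$ by orienting each separation towards its heavy side. To verify the tangle axioms I distribute each edge- and triangle-set weight uniformly among its endpoints in $V(H')$, producing a vertex weight $w:V(H')\to[0,\infty)$ of total mass $\w(G-N_G[v])\geq 0.99\tau$ (using $\w(N_G[v])\leq 0.01\tau$ from~\eqref{eq:esd:tau-small}); with the tangle threshold ``$w(B\setminus A)\geq (1-\epsilon)\w(G-N_G[v])$'' for a suitable small $\epsilon$, axiom~(a) reduces to $w(A\setminus B)+w(B\setminus A)\leq \w(G-N_G[v])$ (its failure being exactly the balanced-separator case already handled), and axiom~(b) follows from the classical inequality $\w(G-N_G[v])=w(V(H'))\leq\sum_i w(A_i)<3\epsilon\cdot\w(G-N_G[v])$ when $\bigcup_i A_i=V(H')$, contradicting $\epsilon<1/3$. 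Theorem~\ref{thm:tangle2wall} then produces a wall $W$ in $H'$ of sidelength $3\sigma$ whose governed tangle equals the restriction of $\mathcal{T}$ to order $\sigma$; since $E(H')\subseteq E(H)$ and edges of $H'$ are $v$-safe, $W$ is a $v$-safe wall in $H$, and the inclusion $\mathcal{T}_W\subseteq\mathcal{T}$ delivers~\eqref{eq:W-tangle}.

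The main obstacle is the constant juggling in the tangle construction. The vertex weight $w$ on $V(H')$ must be calibrated so that its tangle-threshold inequality ``$w(B\setminus A)\geq(1-\epsilon)\w(G-N_G[v])$'' simultaneously (i) forces $\w(\preimage{(H,\eta)}{B\setminus A})\geq 0.99\tau$ despite the additional edge- and triangle-sets that touch vertices of $V(H)\setminus V(H')$ (a discrepancy controlled by $\w(N_G[v])\leq 0.01\tau$), and (ii) in the failure case of axiom~(a) still yields a genuine $0.99\tau$-balanced separator of $G$ via Lemma~\ref{lem:sepH2G}. The value $\sigma'=\Oh(k)=\widetilde{\Oh}(\sigma^{18})$ is ultimately fixed by the separator size in the failure case.
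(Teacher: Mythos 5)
Your high-level outline matches the paper's: locally clean with $S=N_G[v]$ to obtain $(H',\eta')$, check for a low-order separation with two light sides, build a tangle of order $k=f_{\mathrm{KTW20}}(\sigma)$, apply Theorem~\ref{thm:tangle2wall}, and observe the resulting wall in $H'$ is $v$-safe in $H$ since $E(H')\subseteq E(H)$ and every edge of $H'$ has a realizing path in $G-N[v]$.

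The divergence is in the tangle construction. The paper defines $\mathcal{T}'$ directly by the preimage condition $\w(\preimage{(H_S,\eta_S)}{B\setminus A})\geq 0.99\tau$; axiom (a) follows from the disjointness of the two preimages (sum at most $\w(G)\leq\tau<1.98\tau$), and an axiom-(b) failure is not driven to a contradiction but is converted, via Lemma~\ref{lem:sepH2G} applied to $F=\bigcup_i(A_i\cap B_i)$, into a $0.99\tau$-balanced separator (an acceptable output). You instead push the edge and triangle weights onto $V(H')$, define the tangle by $w(B\setminus A)\geq(1-\epsilon)\w(G-N_G[v])$, and derive the axioms by a counting contradiction. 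This is genuinely different, but it opens a gap you flag and do not close. Two concrete problems: first, property~\eqref{eq:W-tangle} demands $\w(\preimage{(H,\eta)}{B\setminus A})\geq 0.99\tau$, yet your threshold yields only $\w(\preimage{(H',\eta')}{B\setminus A})\geq w(B\setminus A)\geq(1-\epsilon)\cdot 0.99\tau$, which is strictly below $0.99\tau$ for every $\epsilon>0$, and your axiom-(b) argument requires $\epsilon>0$. Second, the vertex weight of the separator itself, $w(A\cap B)$, is not controlled — the (constantly many) vertices of $A\cap B$ can be incident to arbitrarily many edge sets — so the vertex-weight orientation of a separation need not agree with the preimage-weight orientation you used to pick the ``heavy side''; the tangle you verify may differ from the tangle you oriented. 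The paper avoids both issues by keeping the preimage weight as the defining condition throughout and handling axiom-(b) failure as a separator-returning branch rather than a contradiction. A smaller point: you cannot literally ``enumerate all separations of order $<k$'' in polynomial time, since for a fixed separator $A\cap B$ the components of $H'-(A\cap B)$ can be distributed across the two sides in exponentially many ways; the checks must be phrased as deciding whether a separation with a given property exists (which is what the paper does, using that $k$ is constant).
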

\begin{proof}
  Apply the local cleaning operation to $(H_S,\eta_S)$ where $S \coloneqq N[v]$. 
If a $0.99\tau$-balanced separator $X$ is found, return $X \cup S$ as the promised $0.99\tau$-balanced separator.
Otherwise, we have an extended strip decomposition $(H_S,\eta_S)$  of $G-N[v]$
that satisfies the properties of Lemma~\ref{lem:local-cleaning}.

Let $k \coloneqq f_{\mathrm{KTW20}}(\tanglethreshold)$
where $f_{\mathrm{KTW20}}$ comes from Theorem~\ref{thm:tangle2wall}. 
Note that $k$ is still a constant.

Assume that there exists a separation $(A,B)$ in $H_S$ of order less than $k$ with
both \[\w(\preimage{(H_S,\eta_S)}{A \setminus B}) \leq 0.99\tau \quad \text{  and  } \quad \w(\preimage{(H_S,\eta_S)}{B \setminus A}) \leq 0.99\tau.\]
By Lemma~\ref{lem:sepH2G}, the set $X \coloneqq \bigcup_{x \in A \cap B} \bigcup_{y \in N_{H_S}(x)} \eta_S(xy,x)$ is
a $0.99\tau$-balanced separator of ${G-S}$.
Due to Lemma~\ref{lem:local-cleaning}, we have $X \subseteq \bigcup_{x \in A \cap B} \bigcup_{y \in N_H(x)} \eta(xy,x)$, which
is in turn dominated by at most ${2|A \cap B| \leq 2k-2}$ vertices in $G$ due to Lemma~\ref{lem:dom-potato}. 
Since $S = N[v]$, $G$ admits a $0.99\tau$-balanced separator dominated by at most $2k-1$ vertices. 
Since $k$ is a constant, we can find such a separator in polynomial time and return it.
Thus, henceforth we assume that such a separation $(A,B)$ does not exist. 

Let $\mathcal{T}'$ be a set consisting of every separation $(A,B)$ in $H_S$ of order less than $k$ with 
\[\w(\preimage{(H_S,\eta_S)}{B \setminus A}) \geq 0.99\tau.\]
Since $\w(G) \leq \tau$ and due to our assumption from the previous paragraph,
for every separation $(A,B)$ in $H_S$ of order less than $k$, exactly one of $(A,B)$ and $(B,A)$ belongs to $\mathcal{T}'$. 
Also, for every $(A,B) \in \mathcal{T}'$ it holds that $\w(\preimage{(H_S,\eta_S)}{A \setminus B}) \leq 0.01\tau$.

Assume that $\mathcal{T}'$ is not a tangle of order $k$. Then, there exist $(A_1,B_1),(A_2,B_2),(A_3,B_2) \in \mathcal{T}'$
with $A_1 \cup A_2 \cup A_3 = V(H_S)$. 
Let $F = (A_1 \cap B_1) \cup (A_2 \cap B_2) \cup (A_3 \cap B_3)$. We have $|F| \leq 3k-3$. 
Let $X = \bigcup_{x \in F} \bigcup_{y \in N_{H_S}(x)} \eta_S(xy,y)$.
Since $\w(\preimage{(H_S,\eta_S)}{A_i \setminus B_i}) \leq 0.01\tau$ for $i=1,2,3$ and every particle of $(H_S,\eta_S)$
is of weight at most $0.01\tau$, $X$ is a $0.99\tau$-balanced separator in $G-S$.
Similarly as before, Lemmas~\ref{lem:local-cleaning}
and~\ref{lem:dom-potato} imply that $X$ is dominated by at most $6k-6$ vertices in $G$.
Hence, $X \cup S$ is a $0.99\tau$-balanced separator in $G$ dominated by at most $6k-5$ vertices in $G$. 
Since $k$ is a constant, we can check in polynomial time if such a separator exists. 
Thus, henceforth we continue with the assumption that $\mathcal{T}'$ is a tangle of order $k$ in $H_S$.

We now apply Theorem~\ref{thm:tangle2wall} to obtain a wall $W$ in $H_S$ of sidelength $3\tanglethreshold$ such that
$\mathcal{T}_W'$, the tangle of order $\tanglethreshold$ governed by $W$ in $H_S$, is the restriction of $\mathcal{T}'$ to order $\tanglethreshold$.
We observe that as $\tanglethreshold$ is a constant, $W$ can be computed in polynomial time for example by
first guessing its pegs, and then applying an algorithm for \textsc{Disjoint Paths} of~\cite{GM13}.
Since $E(H_S) \subseteq E(H)$, the wall $W$ exists also in $H$. 
Observe that, due to the local cleaning operation, every edge $xy \in E(H_S)$ is $v$-safe in $H$. 
Hence, $W$ is $v$-safe.

Let $(A,B)$ be a separation in $H$ of order less than $k$.
Since $E(H_S) \subseteq E(H)$, we observe that $(A \cap V(H_S), B \cap V(H_S))$ is a separation in $H_S$ of order less than $k$.
Let then $\mathcal{T}$ be the set of all separations $(A,B)$ of $H$ of order less than $k$ such that
$(A \cap V(H_S), B \cap V(H_S))$ belongs to~$\mathcal{T}'$.
Similarly, let $\mathcal{T}_W$ be the set of all the separations $(A,B)$ of $H$ of order less than $\tanglethreshold$
such that $(A \cap V(H_S), B \cap V(H_S))$ is in $\mathcal{T}_W'$.
Then, $\mathcal{T}$ is a tangle of order $k$ in $H$ that satisfies
\[
\forall_{(A,B) \in \mathcal{T}} \;\; \w(\preimage{(H_S,\eta_S)}{B \setminus A}) \geq 0.99\tau.
\]
Moreover, $\mathcal{T}_W$ is the tangle of order $\tanglethreshold$ governed by $W$ in $H$ and
is equal to the restriction of $\mathcal{T}$ to order~$\tanglethreshold$ satisfying~\eqref{eq:W-tangle}.
\end{proof}

We fix the wall $W$ obtained via Lemma~\ref{lem:find-safe-wall} for the remainder of the proof.
In subsequent steps we are going to obtain more and more structural properties of $(H,\eta)$ and $W$, at the cost
of gradually shrinking $W$. The actual value of $\tanglethreshold$ will be fixed at the end of the proof,
so that the final remnants of $W$ are still substantial.

\subsection{Finding a $\Pttt$ in a $v$-safe wall}
\newcommand{\Projection}{\Pi}
\newcommand{\StartV}[1]{\overleftarrow{\mathbf{u}}(#1)}
\newcommand{\EndV}[1]{\overrightarrow{\mathbf{u}}(#1)}
\newcommand{\IthV}[2]{\mathbf{u}_{#1}(#2)}

In what follows, we will be thinking of every path $P$ as a path with an orientation, so that $P$ has a starting
vertex $\StartV{P}$ and an ending vertex $\EndV{P}$, and for an integer $1 \leq i \leq |V(P)|$ we can speak
of the $i$-th vertex $\IthV{i}{P}$ of a path. Clearly, $\StartV{P} = \IthV{1}{P}$ and $\EndV{P} = \IthV{|V(P)|}{P}$. Additionally, we will call a set of paths $P_1, P_2, \ldots, P_i$ {\em almost vertex-disjoint} if they all share the same final vertex and are otherwise vertex-disjoint. More formally, if there is a vertex $u$ such that for all $1 \leq j \leq i$, $\EndV{P_j} = u$, and the set of paths $P_1 - u, P_2 - u, \ldots, P_i-u$ is vertex-disjoint set of paths, then we the paths are almost vertex-disjoint.

For a moment, let us get out of the context of the proof of Lemma~\ref{lem:esd:step} and introduce an auxiliary tool for finding long almost disjoint paths in some parts of $H$ and subwalls of $W$.

\begin{lemma}[Finding paths of length $t$ in a wall]\label{lem:pathInWall}
    Let $\widetilde{W}$ be a $2t$-subdivided wall of sidelength at least $3$ in a graph $\widetilde{H}$. 
    Let $P_1$, $P_2$, $P_3$ be vertex-disjoint paths such that $\EndV{P_i}$ is a peg of $\widetilde{W}$ for $i=1,2,3$.
    Let $\widetilde{H}'$ be the subgraph of $\widetilde{H}$ consisting of $\widetilde{W}$ and all paths $P_i$, $i=1,2,3$.

    Then, there exist three almost vertex-disjoint paths $P_1'$, $P_2'$, $P_3'$ in $\widetilde{H}'$ such that for every $i=1,2,3$ there
    exists an integer $1 \leq k_i \leq \min(|V(P_i')|,|V(P_i)|)$ so that
    \begin{itemize}
        \item the prefix of $P_i'$ up to $\IthV{k_i}{P_i'}$ equals the prefix of $P_i$ up to $\IthV{k_i}{P_i}$; and
        \item the suffix of $P_i'$ from $\IthV{k_i}{P_i'}$ is an induced path in $\widetilde{W}$ of length greater than $t$.
    \end{itemize}
\end{lemma}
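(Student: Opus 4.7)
The plan is to truncate each $P_i$ at its first entry into $\widetilde{W}$ and then extend it by a long induced path inside the wall. For each $i$, let $k_i$ be the smallest index with $\IthV{k_i}{P_i} \in V(\widetilde{W})$; this is well-defined because $\EndV{P_i}$ is a peg of $\widetilde{W}$. Put $u_i \coloneqq \IthV{k_i}{P_i}$. By the minimality of $k_i$, the prefix of $P_i$ strictly before $u_i$ lies outside $V(\widetilde{W})$, and since $P_1,P_2,P_3$ are pairwise vertex-disjoint, the $u_i$'s are pairwise distinct.

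With this setup, the task reduces to finding three pairwise vertex-disjoint induced paths $Q_1,Q_2,Q_3$ in $\widetilde{W}$, starting respectively at $u_1,u_2,u_3$, each of length at least $t$: concatenating the prefix of $P_i$ ending at $u_i$ with $Q_i$ yields $P_i'$. Vertex-disjointness of the resulting $P_i'$'s is automatic: prefixes are disjoint because they are segments of the originally disjoint $P_j$'s, prefixes are disjoint from every $Q_j$ because prefixes lie outside $V(\widetilde{W})$ while the $Q_j$'s lie inside it, and the $Q_i$'s are chosen vertex-disjoint by construction.

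To produce $Q_1,Q_2,Q_3$, I exploit that $\widetilde{W}$ is $2t$-subdivided with sidelength at least $3$. Each $u_i$ lies on some basic path $B_i$ of $\widetilde{W}$, either as an internal vertex or as a peg; since every basic path has length at least $2t+1$, walking along $B_i$ from $u_i$ in at least one direction provides an induced subpath of $\widetilde{W}$ of length at least $t$. If the three $u_i$'s lie on pairwise distinct basic paths, we simply pick an appropriate direction on each. Otherwise, some pair of $u_i$'s shares a basic path and must leave it in different directions; we then walk a short prefix along the shared basic path to reach three distinct pegs $p_1,p_2,p_3$ and route from them into disjoint parts of the wall along fresh basic paths. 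The sidelength-$3$ condition supplies the global connectivity needed for this routing, while the $2t$-subdivision guarantees that even after a constant-length rearrangement at the start, at least $t$ edges of walking remain along a final basic path. Each resulting $Q_i$ is induced in $\widetilde{W}$ because a basic path is induced in the wall and two non-consecutive basic paths in a wall share no vertices apart from pegs that we explicitly traverse.

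The main technical obstacle will be the tight case analysis handling the situation where two or three of the $u_i$'s collapse onto the same basic path, or where they are very close to each other so that the simple ``walk along $B_i$'' strategy creates collisions. This is a small-scale Menger-style routing in the skeleton of $\widetilde{W}$, combined with length-budget tracking to ensure each $Q_i$ still contributes length at least $t$ after the rearrangement.
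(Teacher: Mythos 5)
Your approach diverges from the paper's and contains a genuine gap in the routing step. You fix $k_i$ to be the \emph{first} index at which $P_i$ meets $\widetilde{W}$, and then claim the problem reduces to finding vertex-disjoint induced paths $Q_1,Q_2,Q_3$ inside $\widetilde{W}$ of length at least $t$ starting at $u_1,u_2,u_3$. But this reduced problem is not always solvable. Consider the case where $u_1,u_2,u_3$ are three consecutive (or nearly consecutive) internal vertices of a single basic path $B$ of $\widetilde{W}$, say $u_2$ in the middle. Since $u_2$ is an internal vertex of $B$, it has degree two in $\widetilde{W}$, so any path in $\widetilde{W}$ starting at $u_2$ must travel along $B$ in one of the two directions; both directions are blocked by $u_1 \in Q_1$ and $u_3 \in Q_3$ within fewer than $t$ steps. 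Your stated fix --- ``walk a short prefix along the shared basic path to reach three distinct pegs'' --- cannot work either, because a single basic path has only two peg endpoints, and the middle entry point is cut off from both of them by the other two. This configuration is genuinely reachable: the original $P_i$'s live in $\widetilde{H}$, not $\widetilde{W}$, and so they can escape from $u_2$ via a chord of $\widetilde{H}$ directly to a far peg, while your truncated version is confined to $\widetilde{W}$ and trapped.

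The lemma statement anticipates this by allowing $k_i$ to be any index up to $\min(|V(P_i')|,|V(P_i)|)$, i.e., one may need to retain a \emph{longer} prefix of $P_i$ than the stretch up to the first contact with the wall, so that the true paths can carry you past local congestion before you re-route inside $\widetilde{W}$. That is precisely what the paper's argument does: it never commits to the first entry point. Instead it iteratively applies a ``finishing touch'' replacement --- swapping a suffix of some $P_i$ for a subpath of a basic path leading to a peg, but only when doing so does not collide with the other paths --- until no further replacement is possible, and then exploits the resulting structure (each $P_i$ has $\EndV{P_i}$ as its only peg, and each basic path that meets the $P_i$'s internally does so in a controlled way) to select, for each $i$, a basic path $Q_i$ incident to $\EndV{P_i}$ whose other endpoint is free. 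This carefully preserves enough of the original $P_i$'s to sidestep exactly the obstruction described above. To repair your proof, you would need to replace the first-entry truncation by something like the paper's potential-decreasing modification argument, since no amount of Menger-style routing inside $\widetilde{W}$ alone can escape the described trap.
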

\begin{proof}
    A \emph{finishing touch} for a vertex $v$ in $\widetilde{W}$ is a path defined as follows:
    \begin{itemize}
        \item if $v$ is a peg of $\widetilde{H}$, then a finishing touch is a zero-length path consisting of $v$ only;
        \item otherwise, if $Q$ is the basic path of $\widetilde{W}$ containing $v$,
          then a finishing touch of $v$ is a subpath of $Q$ between $v$ and one of its endpoints (which is always a peg). Note that if $v$ is not a peg then $v$ has two finishing touches.
    \end{itemize}
    Thus, a finishing touch connects $v$ with a peg of $\widetilde{W}$, without any other peg on the way. 
    
    Consider the following modification step. Let $i \in \{1,2,3\}$ and assume $P_i$ contains a vertex $v$
    such that the suffix of $P_i$ starting in $v$ is not a finishing touch for $v$, but there is a finishing touch of $v$
    whose intersection with $V(P_1) \cup V(P_2) \cup V(P_3)$ is contained in the suffix of $P_i$ starting at $v$.
    Then, modify $P_i$ by replacing the suffix starting at $v$ with the said finishing touch of $v$. In the case where this can be done with either finishing touch, choose the shorter one, so that the other finishing touch (the one not chosen) contains over $t$ vertices.   
    Note that if we find the paths $P_1'$, $P_2'$, and $P_3'$ for the modified paths, then they are also good for the original
    paths, as one can assume that $\IthV{k_i}{P_i'}$ is not later than $v$ on $P_i'$. 

    As this modification either strictly decreases the number of edges of $\widetilde{H}'$ that are not in $\widetilde{W}$
    or strictly decreases the number of pegs on the paths $P_1$, $P_2$, $P_3$ while keeping $\widetilde{H}'$ intact,
    without loss of generality we can assume that the modification step is not possible.
    This, in particular, implies that for $i=1,2,3$, the only peg on $P_i$ is $\EndV{P_i}$.

    Let $Q$ be a basic path in $\widetilde{W}$.
    Assume that there is an internal vertex of $Q$ which is on one of the paths $P_i$. We claim that either
    \emph{both} endpoints of $Q$ are endpoints of two out of three paths $P_1$, $P_2$, $P_3$, or 
    the intersection of $Q$ with the union of paths $P_1$, $P_2$, $P_3$ is a suffix of one of those paths and is equal to a finishing touch of some internal vertex, $v$, of $Q$ and is the shorter of $v$'s two finishing touches (and therefore has at most $t$ vertices).

    To this end, assume that an endpoint $u$ of $Q$ is not an endpoint of any of the paths $P_i$, $i=1,2,3$. 
    Since the endpoints are the only pegs on the paths $P_i$, $i=1,2,3$, $u$ does not lie on either of the paths $P_i$, $i=1,2,3$.
    Let $w$ be the closest to $u$ vertex of $Q$ that lies on one of the paths $P_i$, $i=1,2,3$, and let $i \in \{1,2,3\}$ be such
    that $w$ lies on $P_i$. By our assumption, $w$ is an internal vertex of $Q$.

    Note that the modification step is applicable and we could replace the suffix of $P_i$ starting at~$w$ with the finishing touch being the subpath of $Q$ from $w$ to $u$. The only reason this modification step is invalid is that the suffix of $P_i$ starting from $w$ is the second (and shorter) finishing touch of~$w$, that is, the suffix is the subpath of $Q$ from $w$ to the second endpoint, and that finishing touch is the shorter of the two. This proves the claim.


    It is straightforward to verify that for any wall, $H$, of sidelength at least 3 and for any three pegs, $x,y,z$ of the wall, there is a connected component, $C$, of $H-\{x,y,z\}$ such that $x,y,z \in N(C)$. Hence, there exists a connected component, $C$, of $\widetilde{W} - \{ \EndV{P_1}$, $\EndV{P_2}$, $\EndV{P_3}\}$ such that $\EndV{P_1}$, $\EndV{P_2}$, $\EndV{P_3}$ $\in N(C)$. Let $P$ be a shortest path from $\EndV{P_1}$ to $\EndV{P_2}$ with internal vertices in $C$, and let $P'$ be $P$ along with a shortest path from $\EndV{P_3}$ to an internal vertex of $P$, with internal vertices in $C$.

    First, we show that $P'$ is a $S_{t_1,t_2,t_3}$, where $t_1,t_2,t_3 > 2t$, with $\EndV{P_1}$, $\EndV{P_2}$, and $\EndV{P_3}\}$ as its leaves. That $P'$ is not a path follows from the fact that none of $\EndV{P_1}$, $\EndV{P_2}$, nor $\EndV{P_3}$ belong to $C$. Furthermore, pegs of $\widetilde{W}$, that is all vertices of degree 3, are of distance greater than $2t$ from one another in $\widetilde{W}$ by the definition of a $2t$-subdivided wall. It follows that $P'$ must be a $S_{t_1,t_2,t_3}$. 

    Let $c$ be the center of the claw (degree 3 vertex) that is $P'$ and for the leaves $\EndV{P_i}$, $i=1,2,3$, let $P_i^*$ be the path of the claw from $\EndV{P_i}$ to $c$. Let $Q_i$ denote the basic path of $\widetilde{W}$ contained in $P_i^*$ that has $\EndV{P_i}$ as one of its endpoints. Note that it follows from our previous claim that because the only vertex of $P_i^*$ that is the endpoint of $P_1, P_2,$ or $P_3$ is $\EndV{P_i}$, we may conclude that for $j \neq i$, $P_j$ does not intersect $P_i^*$. Furthermore, from the same claim it follows that $P_i$ only intersects $P_i^*$ in $Q_i$ and this intersection is a suffix of $P_i$, which we denote as $R_i$, and $R_i$ is a finishing touch for vertex $v_i$ of $Q_i$ where the subpath of $Q_i$ from $v_i$ to the other endpoint of $Q_i$ length greater than $t$. Finally, to obtain the desired path $P_i'$ to complete the proof, we replace $R_i$ in $P_i$ with the subpath of $P_i^*$ from $v_i$ to $c$, which has length greater than $t$. This completes the proof.    
\end{proof}

Now we get back to the context of the proof of Lemma~\ref{lem:esd:step-pyramid}, i.e., we work with an extended strip decomposition $(H,\eta)$ of $G-v$. First, consider a collection $P_1, P_2, \ldots, P_i$ of induced paths, where path $P_j$ has endpoints $x_j, y_j$. We say this collection is a {\em pyramid base} if the paths $P_1-y_1, P_2-y_2, \ldots, P_i-y_i$ are pairwise anti-complete, and $y_1, y_2, \ldots, y_i$ form a clique of size $i$.
We make the following simple observation that we will later use multiple times to find $\Pttt$s within our proofs.
\begin{lemma}\label{lem:wall-path-to-real-path}
Let $\widetilde{H}$ be a $v$-safe subgraph of $H$ and let $\mathcal{P}$ be a family of almost vertex-disjoint paths in $\widetilde{H}$, each of length at least one. 
Then one can find a family $\mathcal{Q} = \{Q_P~|~P \in \mathcal{P}\}$ of induced paths in $G-N[v]$ which form a pyramid base and for every $P \in \mathcal{P}$, the path $Q_P$:
\begin{itemize}
    \item is of length at least $|V(P)|-2$;
    \item starts in a vertex of $\eta(\StartV{P}\IthV{2}{P}, \StartV{P})$;
    \item ends in a vertex of $\eta(\EndV{P}\IthV{|V(P)|-1}{P}, \EndV{P})$; and
    \item has all internal vertices contained in 
    \[ \bigcup_{e \in E(P)} \eta(e) \setminus (\eta(\StartV{P}\IthV{2}{P}, \StartV{P}) \cup \eta(\EndV{P}\IthV{|V(P)|-1}{P}, \EndV{P})). \]
\end{itemize}
\end{lemma}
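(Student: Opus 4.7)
The plan is to realize each path $P\in\mathcal{P}$ in $G$ edge by edge: for every edge $e\in E(P)$, use the $v$-safety of $\widetilde{H}$ to traverse $\eta(e)$ inside $G-N[v]$ between the two interfaces of $e$, and then glue consecutive traversals together using the completeness of interfaces at a common vertex of $H$. Fix $P\in\mathcal{P}$ and write its vertices as $x_0,x_1,\ldots,x_k$, so $|V(P)|=k+1$ and $k\geq 1$; set $e_i=x_{i-1}x_i$. Since $e_i$ is $v$-safe, I would choose $R_i$ to be a \emph{shortest} path in $G[\eta(e_i)]-N[v]$ from $\eta(e_i,x_{i-1})$ to $\eta(e_i,x_i)$; write $s_i\in\eta(e_i,x_{i-1})$ and $t_i\in\eta(e_i,x_i)$ for its first and last vertex. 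By the shortest-path property, $R_i$ is induced in $G-N[v]$, and its only vertices in $\eta(e_i,x_{i-1})\cup\eta(e_i,x_i)$ are $s_i$ and $t_i$. I then form $Q_P$ by concatenating $R_1,R_2,\ldots,R_k$ in order, bridging $R_i$ and $R_{i+1}$ by the edge $t_is_{i+1}$, which exists because $\eta(e_i,x_i)$ is complete to $\eta(e_{i+1},x_i)$.

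The verifications would proceed as follows. First, $Q_P$ is induced in $G-N[v]$: within each $R_i$ this is by minimality; between $R_i$ and $R_j$ with $i<j$, by the axioms of an extended strip decomposition any $G$-edge joining $\eta(e_i)$ and $\eta(e_j)$ must arise from a shared endpoint of $e_i$ and $e_j$ in $H$ and lie across the corresponding interfaces (no vertex set or triangle set is used by $Q_P$). If $|i-j|\geq 2$, the edges $e_i,e_j$ share no endpoint along $P$, so no such edge exists; if $j=i+1$, the only vertices of $R_i, R_{i+1}$ in the interfaces $\eta(e_i,x_i),\eta(e_{i+1},x_i)$ are $t_i,s_{i+1}$, yielding exactly the desired bridging edge. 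Second, for distinct $P,P'\in\mathcal{P}$, the vertex-disjointness of $P$ and $P'$ in $\widetilde{H}$ implies that $\{e_i^{P}\}$ and $\{e_j^{P'}\}$ share no endpoint in $H$, so the $\eta$-sets used by $Q_P, Q_{P'}$ are disjoint and mutually anti-complete in $G$, giving disjointness within $Q_P$ (since distinct edges of $P$ have disjoint $\eta$-sets) and anti-adjacency between distinct $Q_P, Q_{P'}$.

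Finally, the endpoint and internal-vertex conditions hold: $s_1\in\eta(e_1,x_0)=\eta(\StartV{P}\IthV{2}{P},\StartV{P})$ and $t_k\in\eta(e_k,x_k)=\eta(\EndV{P}\IthV{|V(P)|-1}{P},\EndV{P})$, while any internal vertex of $Q_P$ landing in $\eta(e_1,x_0)$ or $\eta(e_k,x_k)$ would contradict the shortest-path minimality of $R_1$ or $R_k$ (the sets $\eta(e_i)$ for $i\neq 1,k$ are disjoint from $\eta(e_1)$ and $\eta(e_k)$ by the partition property). The length bound is immediate: the $k-1$ bridging edges alone give length at least $k-1=|V(P)|-2$.

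The main subtlety to flag is the induced-ness of each $Q_P$, which is exactly why \emph{shortest} (rather than arbitrary) $v$-safe sub-paths through each $\eta(e_i)$ are needed: an arbitrary choice could place extra vertices in $\eta(e_i,x_{i-1})$ or $\eta(e_i,x_i)$, and each such extra vertex would create a chord across $x_{i-1}$ or $x_i$ through the completeness property. A minor bookkeeping point is that $R_i$ may degenerate to a single vertex when $\eta(e_i,x_{i-1})\cap\eta(e_i,x_i)$ meets $V(G)\setminus N[v]$; this is compatible with all of the checks above, and in particular with the length count, since the $k-1$ bridging edges suffice.
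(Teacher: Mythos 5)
Your proof is correct and follows essentially the same approach as the paper: realize each edge $e$ of $P$ via a $v$-safe traversal of $G[\eta(e)]-N[v]$ between the two interfaces, concatenate using the completeness of interfaces at shared vertices of $H$, and use the extended strip decomposition axioms plus vertex-disjointness of $\mathcal{P}$ to get induced-ness, disjointness, and anti-adjacency. Your explicit use of \emph{shortest} subpaths (guaranteeing both induced-ness and that only the endpoints touch the interfaces) is a clean way to supply a detail the paper leaves implicit, but the argument structure is the same.
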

\begin{proof}
    Fix $P \in \mathcal{P}$. Since $\widetilde{H}$ is $v$-safe, for every $1 \leq i < |V(P)|$ there exists a path 
    $Q_{P,i}$ in $G[\eta(\IthV{i}{P}\IthV{i+1}{P})]-N[v]$ with endpoints in $\eta(\IthV{i}{P}\IthV{i+1}{P}, \IthV{i}{P})$
    and in $\eta(\IthV{i}{P}\IthV{i+1}{P}, \IthV{i+1}{P})$ and all internal vertices
    in $\eta(\IthV{i}{P}\IthV{i+1}{P}) \setminus (\eta(\IthV{i}{P}\IthV{i+1}{P}, \IthV{i}{P}) \cup \eta(\IthV{i}{P}\IthV{i+1}{P}, \IthV{i+1}{P}))$. 
    By the properties of an extended strip decomposition, the ending vertex of $Q_{P,i}$ is adjacent to the starting vertex
    of $Q_{P,i+1}$ for $1 \leq i < |V(P)|-1$. Thus, the concatenation of those paths gives a path $Q_P$ in $G-N[v]$
    with the starting and ending vertices placed as desired and with at least $|V(P)|-1$ vertices.
    Finally, the properties of an extended strip decomposition, together with the assumption that the paths $\mathcal{P}$
    are almost vertex-disjoint imply that the paths $\{Q_P~|~P \in \mathcal{P}\}$ form a pyramid base. 
    This completes the proof.
\end{proof}

By the properties of an extended strip decomposition, every connected component of $G-v-\bigcup_{e \in E(H)} \eta(e)$ lies
in a single set $\eta(x)$ for some $x \in V(H)$ or in a single set $\eta(xyz)$ for some $xyz \in T(H)$.
We will be thinking of vertices that are reachable from $v$ without visiting $\bigcup_{e \in E(H)} \eta(e)$ as vertices
close to $v$ in the following sense.
\begin{definition}[projection]
The \emph{projection of $v$}, denoted $\Projection$, is the set of those vertices $u \in \bigcup_{e \in E(H)} \eta(e)$
for which there exists a path $P^\Projection_u$ with endpoints $v$ and $u$ and no internal vertex in $\bigcup_{e \in E(H)} \eta(e)$. 
\end{definition}
Note that $P^\Projection_u$ is either a single edge,
or a path whose all internal vertices lie in a single set $\eta(x)$ for some $x \in V(H)$, or in a single set $\eta(xyz)$ for some $xyz \in T(H)$. In particular, $\Projection \cap N[v] = \bigcup_{e \in E(H)} \eta(e) \cap N[v]$.

We need one more tool that will help us exhibit induced $\Pttt$s.
\begin{lemma}\label{lem:connect-via-projection}
Let $P$ be a path in $H$ of length at least one with $x = \StartV{P}$, $y = \IthV{2}{P}$, and $\Projection \cap \eta(xy) \neq \emptyset$.

Then there is a path $Q$ in $G$ that starts in $v$, ends in a vertex of $\eta(\EndV{P}\IthV{|V(P)|-1}{P}, \EndV{P})$,
and has all internal vertices in
\[ \eta(x) \cup \eta(y) \cup \bigcup_{e \in E(P)} \eta(e) \cup \bigcup_{z \in V(H)~|~xyz \in T(H)} \eta(xyz). \]
Furthermore, if $\Projection \cap (\eta(xy) \setminus \eta(xy,x)) \neq \emptyset$, then $Q$ can be chosen with all internal vertices in
\[ \eta(y) \cup \bigcup_{e \in E(P)} \eta(e). \]
\end{lemma}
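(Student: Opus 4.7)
The plan is to construct the path $Q$ piece by piece, starting from $v$ and marching along $P$, crossing each edge set $\eta(e)$ for $e \in E(P)$ in turn. The key ingredients are: (a) the projection path $P^\Projection_u$, which gives entry from $v$ into $\eta(xy)$; (b) the local cleaning property, which guarantees that within each edge set $\eta(e)$ any vertex can reach either of the two interfaces via a path staying inside $\eta(e)$; and (c) the completeness property from condition~2 of an extended strip decomposition, which allows a one-edge jump from one interface to an adjacent interface at a common host vertex.

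Concretely, I would pick $u \in \Projection \cap \eta(xy)$, or in the furthermore case $u \in \Projection \cap (\eta(xy) \setminus \eta(xy,x))$, and consider the corresponding path $P^\Projection_u$ from $v$ to $u$. By the definition of $\Projection$, its internal vertices avoid every edge set, so by the partition property (condition 1) they all lie in a single vertex set $\eta(z)$ or a single triangle set $\eta(z_1 z_2 z_3)$. The unique neighbor of $u$ on $P^\Projection_u$, if any, must be admissible under condition~3 of an extended strip decomposition; a short case analysis on where $u$ sits inside $\eta(xy)$ forces this set to be $\eta(x)$, $\eta(y)$, or $\eta(xyz')$ for some $z'$ with $xyz' \in T(H)$. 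In the furthermore case, the assumption $u \notin \eta(xy,x)$ rules out both the $\eta(x)$ and the triangle-set possibilities, leaving only $\eta(y)$.

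Next, I would extend $P^\Projection_u$ from $u$ to a vertex $u_1 \in \eta(xy,y)$ via a path that stays in $G[\eta(xy)]$. The local cleaning operations applied earlier to $(H,\eta)$ guarantee that no component of $\eta(xy) \setminus (\eta(xy,x) \cup \eta(xy,y))$ is detached from $\eta(xy,y) \setminus \eta(xy,x)$, and no vertex of $\eta(xy,x) \setminus \eta(xy,y)$ has all of its $\eta(xy)$-neighbors in $\eta(xy,x)$; from these two facts one deduces that every vertex of $\eta(xy)$ reaches $\eta(xy,y)$ via a path inside $G[\eta(xy)]$. I then march along $P$ edge by edge: for each successive edge $e_i = \IthV{i}{P}\IthV{i+1}{P}$ with $i \geq 2$, I jump from the current vertex in $\eta(e_{i-1}, \IthV{i}{P})$ to a vertex of $\eta(e_i, \IthV{i}{P})$ (these two interfaces are complete to each other at the common host vertex $\IthV{i}{P}$ by condition~2), and then cross $G[\eta(e_i)]$ from that vertex to a vertex of $\eta(e_i, \IthV{i+1}{P})$, again by the cleaning property. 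The final such step terminates in a vertex of $\eta(\EndV{P}\IthV{|V(P)|-1}{P}, \EndV{P})$, as required.

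Concatenating these pieces yields a walk in $G$ from $v$ to the chosen endpoint whose internal vertices lie in $\eta(x) \cup \eta(y) \cup \bigcup_{e \in E(P)} \eta(e) \cup \bigcup_{z : xyz \in T(H)} \eta(xyz)$ in general, and in $\eta(y) \cup \bigcup_{e \in E(P)} \eta(e)$ in the furthermore case. Taking a shortest $v$-to-endpoint subpath (i.e., shortcutting any repeated vertices) yields a simple path with the same endpoints contained in the same set, which is the desired $Q$. The main obstacle I anticipate is the case analysis in the second step, where one must correctly track which vertex or triangle set the internal portion of $P^\Projection_u$ lies in, and verify that the furthermore hypothesis truly eliminates the $\eta(x)$ and triangle-set cases; once that is in hand, the remaining construction is essentially a routine walk-to-path concatenation using the cleaning and completeness properties.
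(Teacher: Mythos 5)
Your proposal is correct and takes essentially the same route as the paper: pick $u \in \Projection \cap \eta(xy)$ (preferring $u \notin \eta(xy,x)$ when possible), observe that the internal vertices of $P^\Projection_u$ lie in a single $\eta(z)$ or $\eta(z_1z_2z_3)$ which condition~3 forces to be $\eta(x)$, $\eta(y)$, or a triangle set on $xy$, then use the ``moving a disconnected vertex of an interface'' and ``moving a disconnected component of an edge set'' steps being inapplicable to walk from $u$ to $\eta(xy,y)$ inside $\eta(xy)$, march edge-by-edge along $P$, and shortcut. The only cosmetic difference is that the paper additionally keeps $Q_u$'s internal vertices out of $\eta(xy,x)$, which is not needed for the lemma's statement but is convenient downstream.
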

\begin{proof}
    Pick $u \in \Projection \cap \eta(xy)$, preferably not in $\eta(xy,x)$ if possible. 
    The path $P^\Projection_u$ is either a direct edge, has all internal vertices in $\eta(y)$, all internal vertices
    in $\eta(x)$, or all internal vertices in $\eta(xyz)$ for some $z \in V(H)$ such that $xyz \in T(H)$.
    Furthermore, if $u \notin \eta(xy,x)$, then the last two options are impossible.

    Since $(H,\eta)$ is locally cleaned, in particular, the ``\hyperref[mov:DisconnectedVI]{moving a disconnected vertex of an interface}'' and ``\hyperref[mov:DisconncetedES]{moving a disconnected component of an edge set}'' steps are inapplicable, there is a path $Q_u$ in $\eta(xy)$ from $u$ to a vertex $w \in \eta(xy,y)$
    with all internal vertices not in $\eta(xy,x)$. 
    By the properties of a locally cleaned extended strip decomposition, there is a path $R_u$ from $w$ to a vertex
    of $\eta(\EndV{P}\IthV{|V(P)|-1}{P}, \EndV{P})$ via $\bigcup_{e \in E(P) \setminus \{xy\}} \eta(e)$.
    By concatenating $P^\Projection_u$, $Q_u$, and $R_u$, and possibly shortcuting it to an induced path we obtain the desired
    path $Q$.
\end{proof}

We are ready to find our first $\Pttt$ of the proof.
\begin{lemma}\label{lem:filter-projection-from-wall}
For every constant $t_1$ there is a constant $t_2$ such that if $W$ is a $v$-safe wall in $H$
of sidelength at least $t_2$, then either $G$ contains an induced $\Pttt$ or
there exists a subwall $W'$ of $W$ of sidelength $t_1$
such that for every $e \in E(H)$ with $\Projection \cap \eta(e) \neq \emptyset$, at most one endpoint of $e$
lies in~$W'$.
\end{lemma}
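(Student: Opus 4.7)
The plan is to prove the contrapositive. Assume that for every subwall $W'$ of $W$ of sidelength $t_1$ there exists an edge $e \in E(H)$ with both endpoints in $V(W')$ and $\Projection \cap \eta(e) \neq \emptyset$; call such edges \emph{bad}. The goal is to construct an induced $S_{t,t,t}$ in $G$ with $v$ as its center.

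First, I choose $t_2$ large enough that $W$ contains three pairwise vertex-disjoint subwalls $W^{(1)}, W^{(2)}, W^{(3)}$ of sidelength at least $t_1$, placed far apart in $W$ with generous buffer strips of wall rows and columns between them; by dropping to a sparser subwall if necessary, I also arrange the ambient wall to be $2t$-subdivided. By the failure hypothesis, each $W^{(i)}$ contains a bad edge $e_i = x_iy_i$ whose endpoints both lie in $V(W^{(i)})$. Inside each $W^{(i)}$, I route a short path $R_i$ from $y_i$ to a peg $p_i$ of $W^{(i)}$ using wall edges only; concatenating $e_i$ with $R_i$ gives a path $P_i$ in $H$ ending at a peg $p_i$ and beginning with the bad edge $e_i$. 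The $P_i$ are vertex-disjoint since the $W^{(i)}$ are.

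Next, I invoke Lemma~\ref{lem:pathInWall} with $\widetilde{W}$ equal to the ambient $2t$-subdivided wall and the $P_i$ as input paths, obtaining vertex-disjoint paths $P_1', P_2', P_3'$ in $H$, each of whose prefix matches $P_i$ (so still begins with the bad edge $e_i$) and whose suffix is an induced wall path of length at least $t$. Applying Lemma~\ref{lem:connect-via-projection} to each $P_i'$ then yields an induced path $Q_i$ in $G$ from $v$ to a vertex of the terminal interface $\eta(\EndV{P_i'}\IthV{|V(P_i')|-1}{P_i'}, \EndV{P_i'})$; since $Q_i$ traverses $\eta(e)$ for every edge $e \in E(P_i')$, its length is at least $|V(P_i')| - 1 \geq t$.

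The main obstacle is verifying that $\{v\} \cup V(Q_1) \cup V(Q_2) \cup V(Q_3)$ induces an $S_{t,t,t}$ with $v$ as the degree-three center and the three arms of length $t$ provided by the $Q_i$. Lemma~\ref{lem:connect-via-projection} localizes the interior of each $Q_i$ inside the $\eta$-sets of vertices, edges, and triangles touching $P_i'$; since the $P_i'$ are vertex-disjoint in $H$ and the subwalls $W^{(i)}$ are well separated, these collections of $\eta$-sets are pairwise disjoint, and the rules of an extended strip decomposition forbid cross-adjacencies between vertices living in such separated sets. Furthermore, since each $Q_i$ is induced, $v$ is adjacent only to the second vertex of $Q_i$, and by choosing the launching projection paths $P^\Projection_{u_i}$ appropriately (or by noting that neighbors of $v$ in the relevant $\eta$-sets themselves localize to $P_i'$) the three second-vertices can be taken distinct. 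Setting $t_2 = t_1 + O(t)$ with a sufficiently large implicit constant suffices to realize all the buffering and subdivision requirements.
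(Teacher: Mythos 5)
You identify the right high-level plan — find three far-apart ``bad'' edges and convert them into the three legs of an $S_{t,t,t}$ centered at $v$ — and that matches the paper's intent, but there is a genuine gap in how you handle the wall-rerouting step. You feed paths $P_i$ that begin with the (possibly non-wall) bad edge $e_i=x_iy_i$ into Lemma~\ref{lem:pathInWall} and then assert that the returned $P_i'$ ``still begin with the bad edge $e_i$.'' That is not what Lemma~\ref{lem:pathInWall} guarantees: it only promises some $k_i\geq 1$, and the modification step in its proof can replace the \emph{entire} suffix of $P_i$ starting at $x_i=\StartV{P_i}$ by a finishing touch inside $\widetilde{W}$, giving $k_i=1$. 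When $k_i=1$ the bad edge is stripped off, $P_i'$ is a pure wall path, and its first edge has no guaranteed nonempty intersection with $\Projection$ — so you cannot even invoke Lemma~\ref{lem:connect-via-projection}, whose hypothesis is precisely $\Projection\cap\eta(\StartV{P}\IthV{2}{P})\neq\emptyset$. Separately, you claim the $Q_i$ output by Lemma~\ref{lem:connect-via-projection} have length at least $|V(P_i')|-1$, but that lemma gives no length lower bound: its proof explicitly shortcuts the concatenated walk to an induced path, and vertices in $\eta(xy,x)\cap\eta(xy,y)$ can produce shortcuts across $P_i'$.

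The paper avoids both issues by splitting the construction differently. It picks three pairwise-distant bad edges $x_iy_i$ via a marking (``radially close'') process, applies Lemma~\ref{lem:connect-via-projection} only to the \emph{single edge} $x_iy_i$ to get a short induced path from $v$ to some $u_i\in\eta(x_iy_i,x_i)$, and independently constructs three vertex-disjoint wall paths of length $t+1$ starting at $x_i$ and avoiding $\{y_1,y_2,y_3\}$; these are translated via Lemma~\ref{lem:wall-path-to-real-path}, which \emph{does} guarantee length at least $|V(P)|-2$, and are glued onto $u_i$ using the completeness of $\eta(x_iz,x_i)$ to $\eta(x_iy_i,x_i)$. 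No use of Lemma~\ref{lem:pathInWall} is needed, so the $k_i=1$ pathology never arises. Your three-disjoint-subwalls device for selecting bad edges is a fine alternative to the paper's radial marking, but the rest of the argument should follow the paper's two-lemma structure (single-edge Lemma~\ref{lem:connect-via-projection} plus Lemma~\ref{lem:wall-path-to-real-path} for the long legs) rather than a single application of Lemma~\ref{lem:connect-via-projection} to the long rerouted path.
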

\begin{proof}

    Consider the natural plane embedding of the wall $W$ as given in Figure \ref{fig:wall}. Assume that the columns and rows of $W$ have been sequentially labeled 1 through $t_2$. Remove rows $t_2-t$ through $t_2$ of $W$, if it now holds that every $e \in E(H)$ with $\Projection \cap \eta(e) \neq \emptyset$, at most one endpoint of $e$ lies in $W$, then we are done, so assume that this does not happen. So there is an $x_1$ and $y_1$ in $W$ such that $\Projection \cap \eta(x_1y_1) \neq \emptyset$. Let $C_{x_1}$ and $C_{y_1}$ be the columns of $W$ that $x_1$ and $y_1$ lie in. If $x_1$ and $y_2$ lie on the same column, then choose $x_1$ to be the vertex that is closer to row $t_2$ in $W$. Remove $C_{x_1}$ and $C_{y_1}$ from $W$. Again, if it now holds that every $e \in E(H)$ with $\Projection \cap \eta(e) \neq \emptyset$, at most one endpoint of $e$ lies in $W$, then we are done, so assume that this does not happen. Repeat this process to find a $x_2$, $y_2$ and $x_3$, $y_3$.

    Now consider the original wall $W$, without any of the columns or rows removed. By how $x_1$, $x_2$, and $x_3$ were chosen, there exists paths $P_1$, $P_2$, and $P_3$ each of length at least $t+1$ contained in columns $C_{x_1}$, $C_{x_2}$, $C_{x_3}$ from $x_1$, $x_2$, and $x_3$ respectively down to row $t_2$ which are vertex-disjoint and do not contain $y_1$, $y_2$, nor $y_3$. Assume, without loss of generality, that the end vertex of $P_2$ that lies in row $t_2$, call it $z_2$, (which must be a peg) is in-between the end vertices of $P_1$ and $P_3$ that lie in row $t_2$. Extended $P_1$ and $P_3$ to reach $z_2$ using vertices in row $t_2$. Now $P_1$, $P_2$, and $P_3$ are three almost vertex-disjoint paths of lengths at least $t+1$.

    Now, for $i=1,2,3$, proceed as follows. 
    Apply Lemma~\ref{lem:connect-via-projection} to a path consisting of the edge $x_iy_i$ only,
    obtaining a path $Q_i$ from $v$ to a vertex $u_i \in \eta(x_iy_i, x_i)$ with all internal vertices in $\eta(x_i) \cup \eta(y_i) \cup (\eta(x_iy_i) \setminus \eta(x_iy_i,x_i)) \cup \bigcup_{z \in V(H)~|~x_iy_iz \in T(H)} \eta(x_iy_iz)$.
    Since the vertices $x_1,y_1,x_2,y_2,x_3,y_3$ are pairwise distinct, the paths
    $\{Q_i-\{v\}~|~1 \leq i \leq 3\}$ are anti-adjacent.
    Hence, $V(Q_1) \cup V(Q_2) \cup V(Q_3)$ induce a tree with three leaves $u_1,u_2,u_3$ and $v$ being the unique vertex of
    degree $3$. 
    We now extend this tree with paths $Q_{P_i}$ for $i=1,2,3$, obtained from paths $P_i$, $i=1,2,3$, using Lemma~\ref{lem:wall-path-to-real-path} for $\widetilde{H}=W$. This gives the desired $\Pttt$.
\end{proof}

A wall $W'$ in $H$ is \emph{$v$-pure} if it is $v$-safe, it is $2t$-subdivided, and for every $e \in E(H)$
with $\eta(e) \cap \Projection \neq \emptyset$, at most one endpoint of $e$ lies in $W'$.

The following statement follows directly from Lemma~\ref{lem:filter-projection-from-wall} and the fact that by leaving only every $(2t+1)$-th column and row of a wall, we can extract a $2t$-subdivided subwall.
\begin{lemma}\label{lem:safe-to-pure}
For every constant $t_1$ there exists a constant $t_2$ such that if $W$ is a $v$-safe wall in $H$
of sidelength at least $t_2$, then either $G$ contains an induced $\Pttt$ or $W$ contains a $v$-pure
subwall $W'$ of sidelength $t_1$.
\end{lemma}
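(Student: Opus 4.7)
The plan is to combine Lemma~\ref{lem:filter-projection-from-wall} with a standard thinning operation on walls: first use Lemma~\ref{lem:filter-projection-from-wall} to obtain a large subwall that avoids the projection except possibly at single endpoints, then sparsify that subwall to make its basic paths long enough to witness $2t$-subdivision.

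Concretely, I set $t_1' \coloneqq (2t+1) \cdot t_1$ and take $t_2$ to be the constant produced by Lemma~\ref{lem:filter-projection-from-wall} when invoked with parameter $t_1'$. Given a $v$-safe wall $W$ in $H$ of sidelength at least $t_2$, I apply Lemma~\ref{lem:filter-projection-from-wall} to $W$. If it outputs an induced $S_{t,t,t}$, we are done. Otherwise, we obtain a subwall $W''$ of $W$ of sidelength $t_1'$ such that for every $e \in E(H)$ with $\Projection \cap \eta(e) \neq \emptyset$, at most one endpoint of $e$ lies in $W''$. From $W''$, I extract a subwall $W'$ of sidelength $t_1$ by retaining only every $(2t+1)$-th row and every $(2t+1)$-th column of $W''$ (together with the connecting subpaths of the remaining rows and columns).

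It remains to verify that $W'$ is $v$-pure. Each basic path of $W'$ is the concatenation of $2t+1$ consecutive basic paths of $W''$, with the intermediate pegs of $W''$ becoming internal vertices; since basic paths of a wall have length at least one, each basic path of $W'$ has length at least $2t+1 > 2t$, so $W'$ is $2t$-subdivided. Since $E(W') \subseteq E(W'') \subseteq E(W)$ and every edge of $W$ is $v$-safe, $W'$ is $v$-safe. Finally, the projection-avoidance property transfers from $W''$ to $W'$ since $V(W') \subseteq V(W'')$: if $e \in E(H)$ has $\Projection \cap \eta(e) \neq \emptyset$, then at most one endpoint of $e$ lies in $W''$, hence at most one lies in $W'$. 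This yields the required $v$-pure subwall.

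The only nonroutine ingredient is Lemma~\ref{lem:filter-projection-from-wall} itself, which is already available; the row/column sparsification is a purely combinatorial manipulation on walls, and the arithmetic on sidelengths and basic-path lengths is straightforward. Thus I do not anticipate any real obstacle beyond bookkeeping.
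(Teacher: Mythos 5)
Your proof is correct and follows exactly the route the paper takes: the paper explicitly states that Lemma~\ref{lem:safe-to-pure} follows from Lemma~\ref{lem:filter-projection-from-wall} together with the observation that keeping every $(2t+1)$-th row and column of a wall yields a $2t$-subdivided subwall. Your write-up merely spells out the bookkeeping (the choice $t_1' = (2t+1)t_1$, the inheritance of $v$-safety and the projection-avoidance property under taking subwalls) that the paper leaves implicit.
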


\subsection{The case of $\Projection$ being well-connected to a $v$-pure wall}\label{sub:noSmallCut}

Lemma~\ref{lem:safe-to-pure} allows us to find a large $v$-pure wall $W$ in $H$. 
We now observe that if there is a substantial connection between edges of $H$ containing elements of $\Projection$
in their sets and $W$, then $G$ admits an induced $\Pttt$.

\begin{lemma}[Pyramid rooted at $v$ without a small cut]\label{lem:clawAtv}
Let $W$ be a $v$-pure wall in $H$ of sidelength at least $3$.
Assume that $H$ contains three vertex-disjoint paths $P_1$, $P_2$, and $P_3$ such that for every $i=1,2,3$,
the first edge $e_i$ of $P_i$ is such that $\eta(e_i) \cap \Projection \neq \emptyset$ and the ending vertex of $P_i$
is a peg of~$W$. Then $G$ contains an induced $\Pttt$. 
\end{lemma}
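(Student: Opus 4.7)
The plan is, for each $i \in \{1,2,3\}$, to build an induced path $L_i$ in $G$ from $v$ that first hops through $\Projection$ into $\eta(e_i)$, and then traces a long stretch of $W$; taking the first $t$ vertices of each $L_i - \{v\}$ as the arms of an $S_{t,t,t}$ centered at $v$ will then finish the proof.

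First, I apply Lemma~\ref{lem:pathInWall} in $H$ to $W$ (which is $2t$-subdivided since $W$ is $v$-pure) and the three input paths $P_1, P_2, P_3$. This yields vertex-disjoint paths $P_1', P_2', P_3'$, each with an induced wall-suffix $P_i''$ in $W$ of length at least $t$. A small massage ensures that each $P_i'$ starts with its prescribed edge $e_i$: the only obstruction is that $P_i'$ may begin \emph{inside} $W$ without traversing $e_i$, which occurs precisely when $x_i \in V(W)$ and $k_i = 1$; in that case $y_i \notin V(W)$ by $v$-purity and $y_i$ lies on none of the $P_j'$, so prepending $y_i$ gives a simple path $\hat{P}_i$ in $H$ starting with $e_i$. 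Let $w_i \in V(W)$ denote the junction vertex where the prefix of $\hat{P}_i$ meets $P_i''$. Now I apply Lemma~\ref{lem:wall-path-to-real-path} to $P_1'', P_2'', P_3''$ inside the $v$-safe wall $W$, obtaining pairwise vertex-disjoint, anti-adjacent, induced paths $R_1, R_2, R_3$ in $G - N[v]$, each of at least $t$ vertices, with $R_i$ starting in a specific interface at $w_i$; and in parallel I apply Lemma~\ref{lem:connect-via-projection} to the prefix of $\hat{P}_i$ (whose first edge $e_i$ meets $\Projection$), obtaining an induced path $Q_i$ in $G$ from $v$ ending at the neighboring interface of $w_i$, with internal vertices lying only in $\eta$-sets attached to the prefix.

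By property~(2) of extended strip decompositions, the two interfaces at $w_i$ are complete to each other, so concatenating $Q_i$ with $R_i$ produces a long path $L_i$ from $v$ into $W$. The vertex-disjointness of the $\hat{P}_i$ in $H$, combined with the partition property of $\eta$, rules out any of the three kinds of edges allowed between distinct $\eta$-sets in an extended strip decomposition (interface-interface, interface-vertex, triangle-double-interface) connecting $V(Q_i) \cup V(R_i)$ with $V(Q_j) \cup V(R_j)$ for $i \neq j$, so the three arms $L_i - \{v\}$ are pairwise vertex-disjoint and anti-adjacent. Within a single index, the only chord possible between $Q_i$ and $R_i$ runs from a $Q_i$-vertex in $\eta(w_i)$ to the start of $R_i$; such a vertex must be the immediate predecessor on $Q_i$ of its endpoint (else $Q_i$ itself would have a chord), so a one-vertex shortcut at $w_i$ makes $L_i$ induced while still leaving at least $t+1$ vertices after $v$. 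Finally, $v$ is adjacent only to $\IthV{2}{Q_i}$ in $L_i - \{v\}$ since $Q_i$ is an induced path with $v$ as endpoint and $R_i \subseteq G - N[v]$. Thus $v$ together with the first $t$ vertices of each $L_i - \{v\}$ forms an induced $S_{t,t,t}$ in $G$. I expect the main technical obstacle to be precisely this per-case adjacency analysis across the three arms, which has to be handled by walking through each of the three edge types of the extended strip decomposition.
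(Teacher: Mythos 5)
Your proposal is correct and follows essentially the same route as the paper's proof: apply Lemma~\ref{lem:pathInWall} to redirect $P_1,P_2,P_3$ into $W$, prepend $e_i$ in the degenerate case $k_i=1$ (using $v$-purity to see that $y_i\notin V(W)$ and hence lies on none of the other paths), split at the junction $w_i$, and then combine the outputs of Lemma~\ref{lem:wall-path-to-real-path} on the wall-suffixes with the outputs of Lemma~\ref{lem:connect-via-projection} on the prefixes, gluing across the complete interfaces at $w_i$. The paper simply asserts that the resulting union contains an induced $S_{t,t,t}$ whereas you supply a chord analysis; that analysis is not fully exhaustive (internal vertices of $Q_i$ can also land in $\eta(w_ia,w_i)$, or in a triangle set $\eta(x_iy_iz)$, not only in $\eta(w_i)$), but all such chords still run only to the first vertex $u_R$ of $R_i$, so your proposed shortcut at $w_i$ does yield an induced arm of length at least $t$, and the conclusion stands.
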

\begin{proof}
Apply Lemma~\ref{lem:pathInWall} to $P_1$, $P_2$, $P_3$, and $W$, obtaining paths $P_1'$, $P_2'$, $P_3'$.
For every $i=1,2,3$ define the path $P_i''$ as follows. If $e_i$ is the first edge of $P_i'$ (i.e., $k_i > 1$ from Lemma~\ref{lem:pathInWall}), 
then keep $P_i'' \coloneqq P_i'$. 
The other case is only possible if $e_i=x_iy_i$ with $x_i \in V(W)$, $y_i \notin V(W)$, $x_i = \StartV{P_i}$,
and the path $P_i'$ is fully contained in $W$.
Note that, because $P_j$ and $P_j'$ differ only inside $W$, the vertex $x_i$ is not used by any other path $P_j'$. 
Define $P_i''$ to be the path $P_i'$ prepended with the edge $e_i$ (so now $\StartV{P_i''} = y_i$).
(Note that $e_i$ cannot be an edge of $W$ as $y_i \notin V(W)$.)
In this manner, $(P_i'')_{i=1,2,3}$ are almost vertex-disjoint, each starts with $e_i$ and contains a suffix of length at least $t$ contained in the wall~$W$. 

Split each $P_i''$ into the said suffix $R_i$ of length $t$ and the remaining prefix $Q_i$. 
Lemma~\ref{lem:wall-path-to-real-path} applied to $\{R_i~|~1 \leq i \leq 3\}$ and $\widetilde{H}=W$
induced paths $R_i'$ on at least $t$ vertices each which forms a pyramid base.
Lemma~\ref{lem:connect-via-projection} applied to $Q_i$ gives a path $Q_i'$ from $v$ to a vertex of
$\eta(\EndV{Q_i}\IthV{|V(Q_i)|-1}{Q_i}, \EndV{Q_i})$. Because the paths $P_i''$ for $i=1,2,3$ are almost vertex-disjoint,
the paths $Q_i'-\{v\}$ for $i=1,2,3$ are anti-adjacent. 
Then, $Q_1' \cup Q_2' \cup Q_3' \cup R_1' \cup R_2' \cup R_3'$ contains an induced $\Pttt$ with $v$ as the apex (the unique degree three vertex in $\Pttt$ with an independent neighborhood).
\end{proof}


\begin{lemma}[Pyramid in wall rooted at $v$ where paths in $H$ start at a single vertex]\label{lem:clawAtv2}
Let $W$ be a $v$-pure wall of sidelength at least $3$.
Assume that $H$ contains three vertex-disjoint paths $P_1$, $P_2$, $P_3$ and a vertex $x \in V(H)$ such that 
for every $i=1,2,3$ the ending vertex of $P_i$ is a peg of $W$ while $x\StartV{P_i} \in E(H)$
and $\Projection \cap (\eta(x\StartV{P_i}) \setminus \eta(x\StartV{P_i},x)) \neq \emptyset$. 
Then, $G$ contains an induced $\Pttt$. 
\end{lemma}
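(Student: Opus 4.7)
The plan is to mirror the proof of Lemma~\ref{lem:clawAtv}, with the key twist that the three natural extended paths $\tilde{P}_i = x, y_i, \IthV{2}{P_i}, \dots, \EndV{P_i}$ in $H$ all share the starting vertex $x$ and so are not vertex-disjoint. The way to deal with this is to exploit the ``furthermore'' clause of Lemma~\ref{lem:connect-via-projection}: it produces a path in $G$ all of whose non-$v$ vertices avoid not only $\eta(x)$ but also the $x$-interface $\eta(xy_i,x)$, and these are precisely the sets through which the only $x$-induced cross-adjacencies between the three lifts could ever appear.

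Concretely, first apply Lemma~\ref{lem:pathInWall} to $P_1, P_2, P_3$ and $W$ to obtain vertex-disjoint paths $P_1', P_2', P_3'$ in $H$, each matching $P_i$ on a prefix and ending with a subpath of $W$ of length at least $t$; we may assume without loss of generality that $x \notin V(P_1') \cup V(P_2') \cup V(P_3')$, rerouting through $W$ if needed. Split each $P_i'$ into a suffix $R_i$ of length $t$ lying in $W$ and a prefix $Q_i$ (with $\StartV{Q_i} = y_i$ and $\EndV{Q_i} = \StartV{R_i}$), and define $\tilde{Q}_i$ by prepending $x$ to $Q_i$ via the edge $xy_i \in E(H)$. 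The hypothesis $\Projection \cap (\eta(xy_i)\setminus\eta(xy_i,x)) \neq \emptyset$ makes the ``furthermore'' clause of Lemma~\ref{lem:connect-via-projection} applicable to $\tilde{Q}_i$, and this yields an induced path $Q_i^*$ in $G$ from $v$ to a vertex of $\eta(\EndV{Q_i}\IthV{|V(Q_i)|-1}{Q_i}, \EndV{Q_i})$ with all non-$v$ vertices contained in
\[
T_i \coloneqq \eta(y_i) \cup (\eta(xy_i)\setminus\eta(xy_i,x)) \cup \bigcup_{e \in E(Q_i)} \eta(e).
\]
A parallel application of Lemma~\ref{lem:wall-path-to-real-path} to $\{R_i\}_{i=1,2,3}$ with $\widetilde{H}=W$ then produces pairwise disjoint, anti-adjacent, induced paths $R_1', R_2', R_3' \subseteq G - N[v]$ of length close to $t$, with $R_i'$ starting in $\eta(\StartV{R_i}\IthV{2}{R_i}, \StartV{R_i})$. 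Since $\EndV{Q_i} = \StartV{R_i}$, property~(2) of extended strip decompositions makes the last vertex of $Q_i^*$ adjacent to the first vertex of $R_i'$, giving an induced arm of length at least $t$ from $v$ through $Q_i^*$ into $R_i'$.

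The hard part, and the only step that requires genuine care, is verifying that the three arms $(Q_i^* - \{v\}) \cup R_i'$ for $i = 1, 2, 3$ are pairwise anti-adjacent. The territories $T_i$ are pairwise disjoint because the $y_i$'s are distinct, the edges $xy_i$ are distinct, and the $V(Q_i)$'s are disjoint (inherited from the vertex-disjointness of the $P_i'$). By property~(3) of extended strip decompositions, any cross-edge between $T_i$ and $T_j$ would have to go through a vertex of $H$ incident to sets on both sides; the sole candidate is $x$, but every such edge passes through either $\eta(xy_i,x)$ or $\eta(xy_j,x)$, both of which the ``furthermore'' clause has excluded from the lifts. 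Cross-edges between $Q_i^* - \{v\}$ and $R_j'$ for $i \neq j$, between $R_i'$ and $R_j'$, and any would-be chords between $Q_i^* - \{v\}$ and $R_i'$ beyond the single connecting edge are blocked by the same structural rules together with $V(P_i') \cap V(P_j') = \emptyset$ and $x \notin V(P_j')$. Extracting the initial $t$-vertex subpath of each arm starting at the neighbor of $v$ then yields the desired induced $S_{t,t,t}$ centered at~$v$.
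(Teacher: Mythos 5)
Your proof follows essentially the same route as the paper's: apply Lemma~\ref{lem:pathInWall}, lift the prefix parts of the $P_i'$'s into $G$ via the ``furthermore'' clause of Lemma~\ref{lem:connect-via-projection} (so as to avoid $\eta(xy_i,x)$), lift the wall suffixes via Lemma~\ref{lem:wall-path-to-real-path}, and argue anti-adjacency of the three arms because the only vertex of $H$ shared between the territories of distinct arms is $x$ and the $x$-interfaces have been excised. The one cosmetic difference is that you apply Lemma~\ref{lem:connect-via-projection} once to the whole prepended path $\tilde{Q}_i$, whereas the paper applies it to the single edge $xy_i$ and then extends through the sets $\eta(e)$ for $e$ along the prefix of $P_i'$ by hand; these amount to the same construction when both are available.

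There is, however, a genuine gap in the step ``we may assume without loss of generality that $x \notin V(P_1') \cup V(P_2') \cup V(P_3')$, rerouting through $W$ if needed.'' The lemma statement places no restriction keeping $x$ off the paths $P_i$, and the paper explicitly flags this: ``$x$ may appear on one of the paths $P_i'$ and even one of those paths can start with the edge $x\StartV{P_i}$.'' If $x = \IthV{m}{P_i}$ for some $2 \le m \le k_i$, then $x$ lies on the \emph{prefix} portion of $P_i'$, which Lemma~\ref{lem:pathInWall} fixes to equal a prefix of $P_i$; rerouting through $W$ only affects the suffix and cannot remove $x$. In this case $\tilde{Q}_i$ repeats the vertex $x$ and is not a path in $H$, so Lemma~\ref{lem:connect-via-projection} simply does not apply to it. This is exactly why the paper splits the lift into two steps: it applies Lemma~\ref{lem:connect-via-projection} only to the single edge $xy_i$ (always a valid one-edge path, regardless of where $x$ sits), extends from $v_i$ along the prefix edges separately, and invokes the $v$-purity of $W$ only to handle the degenerate subcase $k_i=1$ with $\IthV{2}{P_i'}=x$. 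Your argument is correct whenever $x$ happens to avoid the prefixes of the $P_i'$'s, but as written it does not prove the lemma in full generality; adopting the paper's two-step lift closes the gap.
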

\begin{proof}
    We start by applying Lemma~\ref{lem:pathInWall} to $P_1$, $P_2$, and $P_3$, obtaining almost vertex-disjoint paths $P_1'$, $P_2'$, and $P_3'$
    and indices $k_1$, $k_2$, and $k_3$.
    We remark that $x$ may appear on one of the paths $P_i'$ and even one of those paths can start with the edge $x\StartV{P_i}$.

    Fix $i \in \{1,2,3\}$. Denote $y_i = \StartV{P_i} = \StartV{P_i'}$.
    Lemma~\ref{lem:connect-via-projection}, applied to the single edge $xy_i$
    gives a path $Q_i$ from $v$ to $v_i \in \eta(y_ix,y_i)$ with all 
    internal vertices in $\eta(y_i) \cup (\eta(xy_i) \setminus (\eta(xy_i,x) \cup \eta(xy_i,y_i)))$.
    Observe that because vertices $y_i$ are pairwise distinct,
    the paths $Q_i-\{v\}$ are pairwise disjoint and anti-adjacent for $i=1,2,3$.

    If $k_i > 1$, extend $Q_i$ from $v_i$ to a vertex of $\eta(\IthV{k_i}{P_i'}\IthV{k_i-1}{P_i'}, \IthV{k_i}{P_i'})$
    via sets $\eta(e)$ for $e$ lying on the prefix of $P_i'$ till $\IthV{k_i}{P_i'}$,
    and shorten the walk to an induced path in the end.
    Let $Q_i'$ be the resulting path; set $Q_i' = Q_i$ if $k_i=1$ and observe that if $k_i=1$ then the first edge of $P_i'$ is not $xy_i$
    as $xy_i \notin E(W)$ due to $W$ being $v$-pure. 
    To obtain the desired $\Pttt$ with apex $v$, extend every path $Q_i'$ with a path obtained
    from Lemma~\ref{lem:wall-path-to-real-path} applied to the suffix of $P_i'$ from $\IthV{k_i}{P_i'}$ and $\widetilde{H}=W$. 
\end{proof}

We will use Lemmas~\ref{lem:clawAtv} and~\ref{lem:clawAtv2} to find a separation that separates $W$
from $\Projection$. The precise meaning of ``separate'' is encapsulated in the following statement.

\begin{definition}[capturing a projection]
Let $(A,B)$ be a separation in $H$ and let $Z \subseteq A \cap B$. 
We say that \emph{$(A,B)$ captures $\Projection$ with backdoor set $Z$}
if for every $e \in E(H)$ with $\Projection \cap \eta(e) \neq \emptyset$, either
$e \subseteq A$ or there is an endpoint $x \in e$ such that $x \in Z$ and $\Projection \cap \eta(e) \subseteq \eta(e,x)$. 
\end{definition}

\begin{lemma}\label{lem:capture-projection}
Let $W$ be a $v$-pure wall in $H$ of sidelength at least $55$. 
Then either $G$ contains an induced $\Pttt$ or there exists a separation $(A,B) \in \mathcal{T}_W$ of order less than $19$
that captures $\Projection$ with a backdoor set of size at most $2$.
\end{lemma}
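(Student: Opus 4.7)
The strategy is a Menger-type argument combined with the ability to ``escape'' conflicts via a small backdoor set. Let $\mathcal{E}_\Pi := \{e \in E(H) : \eta(e) \cap \Projection \neq \emptyset\}$ denote the set of $\Pi$-edges, and let $T$ be the pegs of $W$. I would set up an auxiliary graph $H^+$ obtained from $H$ by adding a super-source $s$ and, for each $e = xy \in \mathcal{E}_\Pi$, a gadget vertex $m_e$ adjacent to $s$, $x$, and $y$. Internally vertex-disjoint $s$-to-$T$ paths in $H^+$ use pairwise distinct $m_e$'s, hence correspond to paths in $H$, each having crossed a distinct $\Pi$-edge before proceeding along $H$-edges to a peg of $W$.

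The plan is to compute the maximum number $k$ of internally vertex-disjoint $s$-to-$T$ paths in $H^+$ via Menger's theorem. If $k \geq 3$, I would attempt to lift the three paths into the full hypothesis of Lemma~\ref{lem:clawAtv} by prepending, to each $H$-path starting at an endpoint $w_i$ of a $\Pi$-edge $e_i$, the other endpoint $z_i$ of $e_i$. If the prepending preserves vertex-disjointness, Lemma~\ref{lem:clawAtv} produces an induced $\Sttt$. Otherwise a conflict vertex $x$ (equal to some $z_i$ and lying on another path $P_j$) is incident to two $\Pi$-edges from the triple, and a case analysis around $x$ would either exhibit three $\Pi$-edges at $x$, each non-dispensable (i.e., $\Projection \cap (\eta(e) \setminus \eta(e, x)) \neq \emptyset$), so that Lemma~\ref{lem:clawAtv2} applied at $x$ yields an induced $\Sttt$, or designate $x$ as a new backdoor vertex. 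Adding $x$ to $Z$ removes from $\mathcal{E}_\Pi$ all $\Pi$-edges dispensable through $Z$, and the Menger computation is repeated on the reduced instance.

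If $k \leq 2$ at some iteration, Menger yields a vertex cut $C$ of size at most $2$ in $H^+$, which combined with the accumulated backdoor $Z$ becomes the cut of a separation $(A, B)$ in $H$: every ``active'' $\Pi$-edge has both endpoints in $A$, every $Z$-dispensable $\Pi$-edge hangs off $A$ through its $Z$-endpoint, and $W \subseteq B$. Bounding the iteration by $|Z| \leq 2$ gives total order $|A \cap B| \leq |C| + |Z| \leq 4 < 19$; moreover $B \setminus A$ loses at most $|A \cap B|$ rows and columns of $W$, so $(A, B) \in \mathcal{T}_W$. The main obstacle is the collision analysis in the prepending step: it must be checked, using the fact that $(H, \eta)$ is locally cleaned and $W$ is $v$-pure, that every collision is resolved either by Lemma~\ref{lem:clawAtv2} or by a strictly useful backdoor addition, and that at most two backdoor additions are ever needed. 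The constants $2$ and $19$ in the statement are then explained respectively by this two-step backdoor process and by the tangle order $\lceil 55/3 \rceil = 19$ associated with a wall of sidelength at least $55$.
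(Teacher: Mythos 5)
Your proposal is in the same general spirit as the paper's proof — a Menger-type argument separating the $\Pi$-edges from the pegs of $W$, combined with Lemmas~\ref{lem:clawAtv} and~\ref{lem:clawAtv2} to convert path systems into induced $S_{t,t,t}$'s, and backdoors to absorb the awkward $\Pi$-edges. However, there is a genuine gap precisely at the point you flag yourself: the collision analysis in the prepending step, and the claim that at most two backdoor additions suffice.

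With a threshold of only $k \geq 3$ internally disjoint $s$-to-$T$ paths, there is far too little slack to resolve conflicts. If prepending the second endpoint $z_i$ of $e_i$ to $P_i$ collides with another path, you cannot simply discard a path — you would drop below three, and the hypothesis of Lemma~\ref{lem:clawAtv} requires three vertex-disjoint paths. Nor does a single conflict vertex $x$ automatically give you the hypothesis of Lemma~\ref{lem:clawAtv2}: that lemma needs three \emph{vertex-disjoint} paths to pegs of $W$ starting from three distinct neighbors of $x$, each via an edge with a projection vertex not in the $x$-interface. Observing that $x$ is incident to ``two $\Pi$-edges from the triple'' doesn't produce a third such edge, and doesn't produce three new disjoint paths from those neighbors. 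You would need a fresh Menger computation rooted at the neighbors of $x$, and you haven't argued why this terminates after two backdoor additions or why the resulting cut stays small. The paper sidesteps the iterative issue entirely by pre-computing a single target set $X = X_1 \cup X_2$ (where $X_2$ is built from a maximal matching on the ``bad'' $\Pi$-edges with projection in both interfaces), asking Menger for \emph{seventeen} disjoint $X$-to-peg paths, and then spending the slack in a nontrivial combinatorial argument — a ``killing'' relation $K$ on the paths, a vertex-cover-of-size-$8$-or-matching-of-size-$9$ dichotomy on $K$, and a Ramsey argument (independent set of size $3$ vs.\ clique of size $4$) — before invoking Lemma~\ref{lem:clawAtv} or~\ref{lem:clawAtv2}. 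When Menger fails, a second auxiliary Menger step of order $3$ restricts the backdoor set to size at most $2$, which is where the bound $17 + 2 = 19$ comes from (not from a two-step iteration as in your sketch, and not yielding a cut of size $4$). So the constants and the structure of your argument would both need to change substantially; the piece you identified as ``the main obstacle'' really is missing, and the paper's proof shows it is where the bulk of the work lies.
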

\begin{proof}
    Let 
    \[ X_1 = \{x \in V(H)~|~\exists_{e_x = xy_x \in E(H)} \Projection \cap (\eta(e_x) \setminus \eta(e_x,y_x)) \neq \emptyset \}. \]
    For each $x \in X_1$ fix one edge $e_x$ (and thus also the endpoint $y_x$) as in the definition above.
    Let
    \[ F = \{e = xy \in E(H)~|~x,y \notin X_1 \textrm{ and } \Projection \cap \eta(e,x) \cap \eta(e,y) \neq \emptyset \}.\]
    Let $F_2$ be a maximal matching in $F$ and let $X_2 \coloneqq V(F_2)$.
    For an edge $e = xy \in F_2$ we denote $e_x = e_y = e$.
    Furthermore, for every $x \in X_2$, if $xy$ is the unique edge of $F_2$ containing $x$, then
    we denote $y_x = y$. 
    Let $X \coloneqq X_1 \cup X_2$. 
    Note that $e_x$ and $y_x$ has been defined for all $x\in X$.
    Observe that, by the definition of $X$,
    \begin{equation}\label{eq:capture-X}
    \forall_{e \in E(H)} \left(\eta(e) \cap \Projection \neq \emptyset\right) \Longrightarrow
    \left(e \subseteq X\right) \textrm{ or } \left( e \cap X = \{x\} \textrm{ and } \eta(e) \cap \Projection \subseteq \eta(e,x)\right).
    \end{equation}

    Assume first that there exists a family $\mathcal{P}$ of 17 vertex-disjoint paths in $H$ with starting points in $X$
    and ending points in pegs of $W$. Let $X' \subseteq X$ be the set of the starting points of 
    the paths in $\mathcal{P}$ and for $x \in X'$ let $P_x \in \mathcal{P}$ be the path starting at $x$.
    We say that $x \in X'$ \emph{kills} $x' \in X'\setminus\{x\}$ if $y_x$ lies on $P_{x'}$.
    Observe that $x$ kills at most one $x'\in X'$ as paths in $\mathcal{P}$ are pairwise vertex-disjoint.

    Consider an auxiliary bipartite graph $K$ with sides being two copies of $X'$ and an edge $(x,x') \in E(K)$ if $x$ kills $x'$. 
    We consider two subcases.
    In the first subcase, $K$ has a matching of size $9$.
    Then, $K$ contains three edges $(x_1,x_1')$, $(x_2,x_2')$, and $(x_3,x_3')$ such that $x_1,x_1',x_2,x_2',x_3,x_3'$ are six pairwise distinct vertices of $X'$.
    Then, the paths $P_{x_i}$ prepended 
    with the edge $e_{x_i}$ are vertex-disjoint and satisfy the requirements of 
    Lemma~\ref{lem:clawAtv}, giving an induced $\Pttt$ in $G$.
    In the other case, $K$ has a vertex cover of size at most $8$. By deleting these vertices from $X'$, we obtain 
    a subset $X'' \subseteq X'$ of size $9$ where no vertex kills another one. 

    Consider now a graph $L$ on the vertex set $X''$ where $xx' \in E(L)$ if $y_x = y_{x'}$. By Ramsey's Theorem,
    $L$ has an independent set of size $3$ or a clique of size $4$.
    If $I$ is an independent set of size $3$ in $L$, then for every $x \in I$ obtain 
    a path $P_x'$ as follows: if $y_x$ does not lie on $P_x$, set $P_x'$ to be $P_x$ prepended with the edge $xy_x$, and
    otherwise set $P_x'$ to be the edge $xy_x$ with the suffix of $P_x$ starting in $y_x$; note that $\{P_x'~|~x \in I\}$
    satisfy the requirements of Lemma~\ref{lem:clawAtv}.
    If $I$ is a clique of size $4$ in $L$, denote by $z$ the vertex that is equal to $y_x$ for every $x \in I$.
    Note that because $F_2$ is a matching, at most one element of $I$ is in $X_2$.
    Hence, $z$ and $\{P_x~|~x \in I \setminus X_2\}$ satisfy the requirements of Lemma~\ref{lem:clawAtv2}.
    This finishes the case where the family of paths $\mathcal{P}$ exists.

    In the other case, by Menger's theorem, there is a separation $(A,B)$ in $H$ of order less than $17$ such that
    $X \subseteq A$ but all pegs of $W$ lie in $B$. 
    By~\eqref{eq:capture-X}, $(A,B)$ captures $\Projection$, but the set of backdoors can be as large as $A \cap B$.
    Our goal is now to modify $(A,B)$ a bit to restrict the set of backdoors.
    To this end, consider a subgraph $H'$ of $H$ with $V(H') = B$ and $e \in E(H')$ if $e \subseteq B \setminus A$
    or $|e \cap A \cap B| = 1$ and $\eta(e) \cap \Projection \neq \emptyset$. 
    
    We consider two subcases. In the first subcase, $H'$ contains a family $\mathcal{Q}$
    of three vertex-disjoint paths from $A \cap B$ to the set of pegs of $W$ that are in $B \setminus A$.
    By the construction of $H'$, each path $Q \in \mathcal{Q}$ starts with a vertex $x \in A \cap B$
    and an edge $e_x = xy_x$, $y_x \in B \setminus A$, and $\Projection \cap \eta(e_x) \neq \emptyset$.
    Then, Lemma~\ref{lem:clawAtv} applied to $\mathcal{Q}$ yields an induced $\Pttt$ in $G$.

    In the second subcase, there is a separation $(A',B')$ in $H'$ of order less than $3$ such that
    $A \cap B \subseteq A'$ while all pegs of $W$ that lie in $B \setminus A$ belong to $B'$. 
    Let $A'' \coloneqq A \cup A'$ and $B'' \coloneqq B' \cup (A \cap B)$. 
    Then, $(A'', B'')$ is a separation in $H$ with $X \subseteq A''$ and all pegs of $W$ lying in $B''$.
    Furthermore, $A'' \cap B'' = (A' \cap B') \cup (A \cap B)$.
    In particular, the order of $(A'',B'')$ is less than $19$ and hence $(A'',B'') \in \mathcal{T}_W$
    as $W$ has sidelength at least $55$ (so $\mathcal{T}_W$ has order at least $19$) and all pegs of $W$ lie in $B''$.

    Consider now $z \in A'' \cap B''$ such that there exists an edge $zy \in E(H)$ with $y \in B'' \setminus A''$ and
    $\Projection \cap \eta(zy) \neq \emptyset$. 
    By~\eqref{eq:capture-X}, $\{y,z\} \cap X \neq \emptyset$. However, as $X \subseteq A$ while $y \notin A$,
    we have $z \in X$, $z \in A$, and, by~\eqref{eq:capture-X} again, $\Projection \cap \eta(zy) \subseteq \eta(zy,z)$. 
    The edge $zy$ belongs to $H'$. Since $y \in B'' \setminus A'' \subseteq B' \setminus A'$, we have $z \in B'$.
    Since $z \in A\cap B \subseteq A'$, we have $z \in A' \cap B'$. 

    Hence, $(A'',B'') \in \mathcal{T}_W$ is a separation of order less than 19
    that captures $\Projection$ with backdoor set $Z \subseteq A' \cap B'$, which is of size at most $2$. 
    This finishes the proof.
\end{proof}

\subsection{Cleaning the backdoors}
Lemma~\ref{lem:capture-projection} allows us to find a separation of small order that captures $\Projection$
with at most two backdoors. Our goal in this section is to further clean the situation with regards to how exactly
the neighbors of $v$ can appear around $\eta(z)$ and $\eta(zx,x)$ for a backdoor vertex $z$. 
On the way there, we will need to sacrifice small parts of the wall $W$ defining the tangle 
or slightly increase the size of the allowed separation.

We start with the following straightforward observation from the definition of capturing $\Projection$
and the fact that $(H,\eta)$ is locally cleaned.
\begin{lemma}\label{lem:triangle-entries}
Let $(A,B)$ be a separation in $H$ that captures $\Projection$ with backdoor set $Z$. Suppose $xyz \in T(H)$ is such that
$\eta(xyz) \cap N(v) \neq \emptyset$. Then either $x,y,z \in A$,
or two of the vertices $x,y,z$ belong to $Z$ and the third one is in $B \setminus A$.
\end{lemma}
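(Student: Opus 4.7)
The plan is to use the witness $w \in \eta(xyz) \cap N(v)$ to produce three projection vertices, one in each of the edge sets $\eta(xy)$, $\eta(yz)$, $\eta(xz)$, and then to derive the dichotomy by applying the definition of $(A,B)$ capturing $\Projection$ to each of these three edges.

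First I would exploit that $(H,\eta)$ is locally cleaned. The ``\hyperref[mov:DisconnectedTS]{moving a disconnected component of a triangle set}'' move was designed exactly to rule out components of $\eta(xyz)$ that miss one of the three edge interfaces, and applying it to all three relabelings of $xyz$ ensures that every connected component of $G[\eta(xyz)]$ has at least one neighbor in each of the sets $\eta(xy,x) \cap \eta(xy,y)$, $\eta(yz,y) \cap \eta(yz,z)$, and $\eta(xz,x) \cap \eta(xz,z)$. Taking the component $C$ of $G[\eta(xyz)]$ containing $w$, I would fix witnesses $u_{xy}, u_{yz}, u_{xz}$ in these three interfaces adjacent to $C$. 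For each $e \in \{xy,yz,xz\}$, the edge $vw$ concatenated with a walk inside $C$ to a neighbor of $u_e$ in $C$, followed by the edge to $u_e$, gives a $v$-to-$u_e$ path whose internal vertices lie in $\eta(xyz)$, hence outside $\bigcup_{f \in E(H)} \eta(f)$. This certifies $u_e \in \Projection$, so $\eta(e) \cap \Projection \neq \emptyset$ for each of the three sides of the triangle.

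Next I would run a straightforward case analysis on the capturing condition, applied to each of the three edges. For every $e \in \{xy,yz,xz\}$, capturing yields either (i) both endpoints of $e$ lie in $A$, or (ii) one endpoint of $e$ is in $Z \subseteq A \cap B$ and the other lies in $B \setminus A$. If (i) holds for all three edges then $x,y,z \in A$, giving the first conclusion. Otherwise at least one vertex of the triangle, say $y$, lies in $B \setminus A$; then both edges $xy$ and $yz$ must fall under case (ii), and since $y \notin Z$ (as $Z \subseteq A$), the condition forces $x \in Z$ and $z \in Z$, producing the second conclusion.

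I do not foresee any serious obstacle: the argument is a clean combination of local cleanliness (to witness $\Projection$ on all three edges simultaneously) and the capturing definition (for the case split). The one subtlety worth stating explicitly is that the triangle-cleaning move must be understood symmetrically in the three roles of $x,y,z$, so that each of the three interfaces is indeed hit by the component $C$; this is routine but needs to be spelled out.
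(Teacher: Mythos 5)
Your argument is correct and matches the approach the paper indicates (the paper states this as a "straightforward observation from the definition of capturing $\Projection$ and the fact that $(H,\eta)$ is locally cleaned" without writing out the details): local cleanliness forces the component of $G[\eta(xyz)]$ containing a neighbor of $v$ to touch all three edge-interface intersections, so $\Projection$ hits $\eta(xy)$, $\eta(yz)$, and $\eta(xz)$, and the dichotomy then follows by applying the capturing definition to each edge, using $Z \subseteq A \cap B$. The only cosmetic point is that the concatenation $v$--$w$--(walk in $C$)--$u_e$ is a priori a walk, not a path, but shortcutting it inside $\{v\}\cup C\cup\{u_e\}$ yields a genuine $P^\Projection_{u_e}$, so the argument goes through.
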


\newcommand{\PotatoZ}{\mathfrak{P}_{z}}

We need a few definitions that describe how a separation behaves with respect to the triangles.
Let $(A,B)$ be a separation in $H$ that captures $\Projection$ with backdoor set~$Z$.
\begin{definition}[triangle-safe separation]
We say that $(A,B)$ is \emph{triangle-safe} if for every triangle $T=xyz \in T(H)$ with
$\eta(T) \cap N(v) \neq \emptyset$ either $x,y,z \in A$ or, assuming without loss of generality $y,z \in A \cap B$ and $x \in B \setminus A$, we have that $v$ is complete to both $\eta(xy,y)$ and $\eta(xz,z)$.
\end{definition}

Assume $(A,B)$ is triangle-safe separation in $H$ that captures $\Projection$ with backdoor set~$Z$.
For $z \in Z$, let \[\PotatoZ \coloneqq \bigcup_{x \in N_H(z) \cap (B \setminus A)} \eta(zx,z).\]
Observe that for every $u \in \Projection \cap \PotatoZ$, the
path $P^\Projection_u$ is either a direct edge or goes via $\eta(z)$:
it cannot go through $\eta(xyz)$ for a triangle with say $x \in B \setminus A$ as, thanks to triangle-safeness,
$v$ is then complete to $\eta(xz,z)$, where $u$ resides.
A \emph{$z$-entry point} is a vertex $v' \in \{v\} \cup \eta(z)$ that has a neighbor in $\PotatoZ$
and admits a path $Q^\Projection_{v'}$ from $v$ to $v'$ (it can be of zero length when $v=v'$)
whose internal vertices have no neighbors in $\PotatoZ$.
A backdoor vertex $z\in Z$ is \emph{pure} if every $z$-entry point is complete to $\PotatoZ$. 

In the next two lemmas we first ensure that we have a separation in $H$ that captures $\Projection$ and
is triangle-safe, and then we ensure that all backdoors are pure.
\begin{lemma}[triangle cleanup]\label{lem:clean-triangles}
Let $W$ be a $v$-pure wall in $G$ and let $(A,B) \in \mathcal{T}_W$ be a separation capturing $\Projection$
with a set of backdoors $Z$ of size at most $2$ such that the sidelength of $W$
is $k_W \geq 4|A \cap B| + 7$.
Then, either $G$ admits an induced $\Pttt$ or there exists a subwall $W'$ of $W$ of sidelength at least $k_W - |A \cap B|$ 
and a separation $(A',B') \in \mathcal{T}_{W'}$ in $H$ of order at most $|A \cap B| + 2$ that captures $\Projection$ with a backdoor set
of size at most $2$, $A \subseteq A'$, $B' \subseteq B$, and is triangle-safe.
\end{lemma}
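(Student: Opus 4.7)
If $|Z| \leq 1$, Lemma~\ref{lem:triangle-entries} immediately forces every triangle $xyz \in T(H)$ with $\eta(xyz) \cap N(v) \neq \emptyset$ to have all three vertices in $A$ (the alternative requires two of them in $Z$), so $(A,B)$ is already triangle-safe and we output it together with $W' \coloneqq W$. The remainder of the proof handles the case $|Z|=2$; write $Z = \{z_1,z_2\}$. By Lemma~\ref{lem:triangle-entries}, triangle-safety can fail only at triangles of the form $xz_1z_2$ with $x \in B \setminus A$, $\eta(xz_1z_2) \cap N(v) \neq \emptyset$, and $v$ not complete to at least one of $\eta(xz_1,z_1)$, $\eta(xz_2,z_2)$. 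Let $X_{\mathrm{bad}} \subseteq B \setminus A$ collect the third vertices of such bad triangles.

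If $|X_{\mathrm{bad}}| \leq 2$, set $A' \coloneqq A \cup X_{\mathrm{bad}}$ and $B' \coloneqq B$. Then $(A',B')$ is a separation of $H$ of order at most $|A \cap B|+2$; it captures $\Projection$ with the same backdoor set $Z$ (moving vertices from $B \setminus A$ into $A \cap B$ cannot create new capture obligations); and it is triangle-safe because every formerly offending third vertex now lies in $A'$. The wall $W'$ is extracted from $W$ by standard wall surgery: the tangle condition $(A,B)\in\mathcal{T}_W$ provides $\geq k_W - \lceil k_W/3\rceil + 1$ full rows and full columns of $W$ inside $B \setminus A$, and these can be trimmed/re-chosen to avoid the at most two vertices of $X_{\mathrm{bad}}$. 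The hypothesis $k_W \geq 4|A \cap B|+7$ provides exactly the slack needed to end up with a subwall of sidelength at least $k_W - |A\cap B|$ and $(A',B') \in \mathcal{T}_{W'}$.

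If instead $|X_{\mathrm{bad}}| \geq 3$, we construct an induced $S_{t,t,t}$ in $G$. Pick three distinct $x_1,x_2,x_3 \in X_{\mathrm{bad}}$ and, for each $i$, choose $w_i \in \eta(x_iz_1z_2) \cap N(v)$; since the ESD adjacency rules forbid edges between distinct triangle sets, $\{v,w_1,w_2,w_3\}$ is already an induced claw. Local cleanliness of $(H,\eta)$ implies that the connected component of $w_i$ in $\eta(x_iz_1z_2)$ reaches, via an internal path, a vertex adjacent to some $u_i' \in \eta(x_iz_1,x_i) \cap \eta(x_iz_1,z_1)$, and from $u_i'$, property~(2) of an extended strip decomposition gives a complete adjacency to $\eta(x_iy,x_i)$ for every neighbor $y$ of $x_i$ in $H$ different from $z_1$. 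Because $x_i \in B \setminus A$ while $(A,B)\in\mathcal{T}_W$ pins the wall to the $B$-side, a Menger-type argument (in the same spirit as in the proof of Lemma~\ref{lem:capture-projection}) yields three vertex-disjoint paths in $H[B]$ from $x_1,x_2,x_3$ to distinct pegs of $W$, whose first edges $x_iy_i$ leave $x_i$ on a side different from $z_1$. Lifting these paths via Lemma~\ref{lem:wall-path-to-real-path} produces three long, pairwise anti-adjacent induced paths in $G$; splicing each one back through $\eta(x_iy_i,x_i)$, the vertex $u_i'$, the component of $w_i$, and finally $w_i$ to $v$ yields an induced $S_{t,t,t}$ centered at $v$. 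The $v$-purity of $W$ ensures that the wall portion is disjoint from $N[v]$.

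\textbf{Main obstacle.} The technical core lies in the $|X_{\mathrm{bad}}|\geq 3$ construction: one must argue that three vertex-disjoint paths to pegs of $W$ really exist inside $H[B]$ (exploiting the tangle structure to rule out small cuts that would block them), and then verify that when the lifted paths are spliced with the short initial segments through the triangle sets, the interfaces around $x_i$, and the claw at $v$, the result is pairwise anti-adjacent and induced. This hinges on the strict adjacency rules of extended strip decompositions---in particular that a vertex of a triangle set has external neighbors only in the three opposite interface intersections, and that distinct triangle sets are mutually anti-adjacent---together with the $v$-purity of $W$.
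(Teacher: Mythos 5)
Your overall framing—reduce to $|Z|=2$ via Lemma~\ref{lem:triangle-entries}, collect the offending third vertices into a set in $B\setminus A$, then either fold a small piece into the separation or exhibit an $S_{t,t,t}$—matches the paper's strategy, but the proposal has two substantive gaps.

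First, the case split is wrong. You branch on $|X_{\mathrm{bad}}|\leq 2$ versus $|X_{\mathrm{bad}}|\geq 3$, and in the second case you assert that a ``Menger-type argument \dots yields three vertex-disjoint paths'' from $X_{\mathrm{bad}}$ to pegs of $W$. But that assertion does not follow: Menger's theorem is a dichotomy, and the tangle $\mathcal{T}_W$ only guarantees that the wall cannot be pushed to the small side of \emph{any} separation of the given order—it does not prevent the three specific vertices of $X_{\mathrm{bad}}$ from living in a small pocket of $B\setminus A$ cut off from the wall by a separator of size $\leq 2$ inside $H[B\setminus A]$. In exactly that situation your $|X_{\mathrm{bad}}|\geq 3$ branch has no argument. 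The paper instead splits on whether three vertex-disjoint paths from $X$ to pegs of $W'$ \emph{exist} in $H[B\setminus A]$: if yes, build the $S_{t,t,t}$; if no, Menger gives a separation $(A_1,B_1)$ of order $<3$, from which $(A',B')=(A\cup A_1,\,B_1\cup(A\cap B))$ is assembled. This Menger alternative is exactly the missing ``small-cut'' subcase, and it subsumes your $|X_{\mathrm{bad}}|\leq 2$ case.

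Second, your $S_{t,t,t}$ construction in the ``paths exist'' case fails on anti-adjacency. You center the claw at $v$ and route each leg through a vertex $u_i'\in\eta(x_iz_1,x_i)\cap\eta(x_iz_1,z_1)$. But since $x_1,x_2,x_3$ are distinct neighbors of $z_1$, Property~2 of an extended strip decomposition makes $\eta(x_iz_1,z_1)$ complete to $\eta(x_jz_1,z_1)$ for $i\neq j$, so $u_1',u_2',u_3'$ are pairwise adjacent and the three legs are \emph{not} anti-adjacent. The paper's construction is deliberately asymmetric to avoid this: it centers the claw at a vertex $y_1\in\eta(z_1x_1,z_1)\cap\eta(z_1x_1,x_1)$, uses that $v$ is not complete to $\eta(z_1x_2,z_1)$ to pick $y_2\in\eta(z_1x_2,z_1)$ nonadjacent to $v$ (so $y_1y_2\in E(G)$ is the start of one leg, after which the path immediately leaves the $z_1$-interface via $R_2'$), picks $y_3$ on the $z_2$ side so that $y_3$ is anti-adjacent to both $y_1$ and $y_2$, and routes the third leg through $C_1$--$v$--$C_3$--$y_3$. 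Without this asymmetry (in particular, without using $z_2$ for one of the three and exploiting the nonadjacency $vy_2\notin E(G)$), the three legs collide.
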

\begin{proof}
If $(A,B)$ is already triangle-safe, we can return $W = W'$ and $(A',B') = (A,B)$, so assume otherwise.
By Lemma~\ref{lem:triangle-entries}, this is only possible if $|Z| = 2$, say $Z = \{z_1,z_2\}$
and there is a triangle $z_1z_2x \in T(H)$ with $\eta(xyz) \cap N(v) \neq \emptyset$ and $x\in B\setminus A$, but
$v$ is not complete to $\eta(zx,z)$ for some $z \in \{z_1,z_2\}$. 
Let us call such a triangle a \emph{violating triangle}.
Let $X$ be the set of those $x \in B \setminus A$ for which $z_1z_2 x$ is a violating triangle.
 
 Since $(A,B) \in \mathcal{T}_W$, $B \setminus A$ contains $k_W - |A \cap B|$ full rows and $k_W - |A \cap B|$
 full columns of $W$; let $W'$ be a subwall of $W$ completely contained in $B \setminus A$ of sidelength
 at least $k_W - |A \cap B|$.

 We consider two cases. In the first case, there are three vertex-disjoint paths $P_1$, $P_2$, $P_3$ from $X$ to 
 the pegs of $W'$ in the graph $H[B \setminus A]$. Let $x_i = \StartV{P_i}$ for $i=1,2,3$.
 In this case, we exhibit an induced $\Pttt$ in $G$. 

 To this end, apply Lemma~\ref{lem:pathInWall} to $P_1$, $P_2$, $P_3$, and $W'$, obtaining paths $P_1'$, $P_2'$, and $P_3'$. 
 We note that as all paths $P_1$, $P_2$, and $P_3$, as well as the wall $W'$ are in $H[B \setminus A]$, the paths 
 $P_1'$, $P_2'$, and $P_3'$ are also contained in $H[B \setminus A]$; in particular, they do not contain vertices $z_1$ nor $z_2$.
 Since $(A,B)$ captures $\Projection$, $H[B \setminus A]$ is $v$-safe. 
 Thus, Lemma~\ref{lem:wall-path-to-real-path} applied to $\{P_1', P_2', P_3'\}$ in $H[B \setminus A]$ yields induced paths $R_1, R_2, R_3$ which form a pyramid base.

 For $i=1,2,3$, let $C_i$ be a component of $G[\eta(x_iz_1z_2)]$ that contains a neighbor of $v$.
 Since $z_1$ and $z_2$ are symmetric so far, as $z_1 z_2 x_2$ is a violating triangle,
 we can assume that $v$ is not complete to $\eta(z_1 x_2, z_1)$;
 pick $y_2 \in \eta(z_1x_2, z_1)$ nonadjacent to $v$. 
 Since $(H,\eta)$ is locally cleaned, $C_1$ has a neighbor $y_1 \in \eta(z_1x_1, z_1) \cap \eta(z_1x_1, x_1)$ and
 $C_3$ has a neighbor $y_3 \in \eta(z_2x_3,z_2) \cap \eta(z_2x_3,x_3)$. 
 (We emphasize the intended lack of symmetry in the choice of $z_1$ vs $z_2$ in the last two sentences.)
 By the properties of an extended strip decomposition, $y_1y_2 \in E(G)$,
 $y_1y_3,y_2y_3 \notin E(G)$, and $C_1$ and $C_3$ are anti-adjacent to $y_2$. 
 Let $Q$ be a shortest path from $y_1$ to $y_3$ via $C_1$, $v$, and $C_3$ 
 (note that it may go via direct edges $vy_1$ or $vy_3$ if they exist). 
 Let
 $R_2'$ be a shortest path from $y_2$ to a vertex of $\eta(z_1 x_2, x_2)$ with all internal vertices
 in $\eta(z_1 x_2) \setminus (\eta(z_1x_2, z_1) \cup \eta(z_1x_2, x_2))$. 
 By appending $R_1$ to $y_1$, $R_2'$ and $R_2$ to $y_2$, and $R_3$ to $y_3$
 and connecting $y_1$ and $y_3$ via $Q$, we obtain an induced $\Pttt$
 in $G$ with apex in $y_1$.


 In the second case, there is a separation $(A_1, B_1)$ in $H[B \setminus A]$ of order less than $3$ with $X \subseteq A_1$
 and all pegs of $W'$ lying in $B_1$.
 Define $A' = A \cup A_1$ and $B' = B_1 \cup (A \cap B)$. 
 Clearly, $(A',B')$ is a separation of order at most $|A \cap B| + 2$ in $H$ with $A \subseteq A'$ and $B' \subseteq B$.
 Since the sidelength of $W'$ is at least $3|A \cap B| + 7$ and all pegs of $W'$ lie in $B'$,
 we have $(A',B') \in \mathcal{T}_{W'}$. 
 Since $A \subseteq A'$, any backdoor vertex of $(A',B')$ is also a backdoor vertex of $(A,B)$, and thus is in the set
 $Z$ of size $2$. 
 Finally, since $X \subseteq A_1$, for every $xyz \in T(H)$ with $\eta(xyz) \cap N(v) \neq \emptyset$ we have
 $x,y,z \in A'$. 
 Thus, the wall $W'$ and the separation $(A',B')$ is the desired outcome.
\end{proof}

\begin{lemma}[backdoor cleanup]\label{lem:cleanup-backdoor}
Let $W$ be a $v$-pure wall in $G$ and let $(A,B) \in \mathcal{T}_W$ be a separation capturing $\Projection$
with a set of backdoors $Z$ of size at most $2$ that is triangle-safe and such that the sidelength of $W$
is $k_W \geq 4|A \cap B| + 7$.

Suppose there exists a backdoor vertex $z \in Z$ that is not pure.
%
Then, one of the following holds:
\begin{itemize}
    \item $G$ admits an induced $\Pttt$;
    \item $H$ admits a separation $(A_0,B_0) \in \mathcal{T}_W$ of order $1$ that captures $\Projection$ with backdoor set $A\cap B$;
    \item $H$ admits a subwall $W'$ of $W$ of sidelength at least $k_W - |A \cap B|$
and a separation $(A',B') \in \mathcal{T}_{W'}$ of order at most $|A \cap B| + 2$ with $A \subseteq A'$ and $B' \subseteq B$
capturing $\Projection$ with backdoor set contained in $Z \setminus \{z\}$;  or
   \item $H$ admits a separation $(A^\ast, B^\ast) \in \mathcal{T}_W$ of order at most $|A \cap B|$
   with $A \subseteq A^\ast$, $B^\ast \subseteq B$, and $(A,B) \neq (A^\ast, B^\ast)$.
\end{itemize}
\end{lemma}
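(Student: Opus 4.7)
The plan is to unpack the non-purity of $z$ structurally, then either build an induced $S_{t,t,t}$ using the wall $W$, or expose a small separation in $H[B\setminus A]$ that detaches $N_H(z)\cap(B\setminus A)$ from $W$. To begin, let $v'\in\{v\}\cup\eta(z)$ be the $z$-entry point witnessing non-purity, let $u^*\in\eta(zx^*,z)\subseteq\PotatoZ$ be a non-neighbor of $v'$ with $x^*\in N_H(z)\cap(B\setminus A)$, and let $u'\in\eta(zx',z)\subseteq\PotatoZ$ be a neighbor of $v'$ guaranteed by the definition of a $z$-entry point. Because the interfaces $\eta(zy,z)$ at $z$ are pairwise complete in $G$ by the axioms of an extended strip decomposition, whenever $x'\neq x^*$ the triple $\{v',u',u^*\}$ induces a $P_3$ in $G$ with the distinguished non-edge $v'u^*$; this non-edge will be the one place in the construction where two would-be chord-causing interfaces are safely separated.

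Since $(A,B)\in\mathcal{T}_W$, I extract a subwall $W'\subseteq W$ of sidelength at least $k_W-|A\cap B|$ fully contained in $B\setminus A$. Apply Menger's theorem in the $v$-safe graph $H[B\setminus A]$ between $N_H(z)\cap(B\setminus A)$ and the pegs of $W'$. If three vertex-disjoint paths $P_1,P_2,P_3$ are returned, I first argue (by a short pigeonhole/Ramsey step, rerouting if necessary) that they can be taken to start at three distinct $z$-neighbors $x_1,x_2,x_3$, with $x_1=x^*$. I then construct an induced $S_{t,t,t}$: the center of the claw is placed on the path $Q^\Projection_{v'}$ from $v$ to $v'$ (not inside $\PotatoZ$, to avoid the forbidden completeness of interfaces). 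One arm runs through the non-neighbor $u^*$ into $\eta(zx^*)$ and along the $G$-lift of $P_1$ to the wall, the second arm runs through $u'$ into $\eta(zx')$ and along the $G$-lift of $P_2$, and the third arm enters $\eta(zx_3)$ through a path in $\eta(zx_3)$ avoiding $\eta(zx_3,z)$ (available by local cleanliness, in particular the inapplicability of the ``moving a disconnected vertex of an interface'' step) and continues along the $G$-lift of $P_3$. The suffixes inside the wall are lifted via Lemma~\ref{lem:pathInWall} and Lemma~\ref{lem:wall-path-to-real-path}, and the backbone from $v$ to the center uses Lemma~\ref{lem:connect-via-projection}; the non-edge $v'u^*$ precisely rules out the chord that the interface completeness would otherwise force between the first two arms.

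If Menger instead returns a separator $F\subseteq V(H[B\setminus A])$ of size at most $2$, let $C$ be the union of components of $H[B\setminus A]-F$ meeting $N_H(z)\cap(B\setminus A)$, and set $A'=A\cup F\cup C$ and $B'=(V(H)\setminus A')\cup F\cup(A\cap B)$. Then $(A',B')$ is a separation of $H$ with $A\subseteq A'$, $B'\subseteq B$, and order at most $|A\cap B|+2$. By construction every edge $zy$ with $y\in B\setminus A$ is now contained in $A'$, so the backdoor set of $(A',B')$ lies in $Z\setminus\{z\}$; since the pegs of $W'$ remain on the $B'\setminus A'$ side, $(A',B')\in\mathcal{T}_{W'}$ and we obtain outcome~3. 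When $|F|\le 1$ a strictly stronger conclusion is available: if the resulting $A^*\supsetneq A$ has order at most $|A\cap B|$ we get outcome~4, and in the degenerate case $A\cap B=\{z\}$ with $|F|=0$ the separation has order $1$ and captures $\Projection$ with backdoor $A\cap B=\{z\}$, yielding outcome~2. Triangle-safeness is used throughout to preclude triangle sets $\eta(xyz)$ offering alternative crossings between $A$ and $B$ near $v$.

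The main obstacle is the $S_{t,t,t}$ construction in the three-path case. The pairwise completeness of the interfaces $\eta(zx_i,z)$ forbids two arms of the putative claw from meeting through two different interfaces at $z$ without producing a chord, so the whole construction hinges on the non-edge $v'u^*$ supplied by non-purity. Making only one arm pass through a $z$-interface (via $u^*$) while the other two arms skirt around $\eta(zx_j,z)$ using the cleanliness of $(H,\eta)$, and simultaneously ensuring that all three arms reach length $t$ inside the wall and are pairwise anti-adjacent, is the delicate bookkeeping that I expect to dominate the proof.
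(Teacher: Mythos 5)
Your outline resembles the paper's, but the $S_{t,t,t}$ construction in the three-path case has a fatal flaw, and a crucial auxiliary set is missing. You place the center of the claw on $Q^\Projection_{v'}$ ``not inside $\PotatoZ$, to avoid the forbidden completeness of interfaces.'' This is precisely backwards. By the definition of a $z$-entry point, the only vertex of $Q^\Projection_{v'}$ with any neighbor in $\PotatoZ$ is the endpoint $v'$, so at most one of your three arms can ever enter $\PotatoZ$, yet you want two arms to pass through $u^*$ and $u'$. If you force the center to be $v'$, one arm's first ``edge'' is the non-edge $v'u^*$; and your third arm ``entering $\eta(zx_3)$ avoiding $\eta(zx_3,z)$'' cannot start from a vertex of $\{v\}\cup\eta(z)$ at all, since the extended strip decomposition axioms permit no edge between $\{v\}\cup\eta(z)$ and $\eta(zx_3)\setminus\eta(zx_3,z)$. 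The paper instead places the center \emph{inside} $\PotatoZ$, at a neighbor $u_\bullet\in\eta(zx_\bullet,z)$ of $v'$. Interface completeness then \emph{supplies} the first edge $u_\bullet u_\circ$ of one leg for free (to a non-neighbor $u_\circ\in\eta(zx_\circ,z)$ of $v'$ with $x_\circ\neq x_\bullet$), and the non-edge $v'u_\circ$ keeps that leg anti-adjacent to the third leg, which runs $u_\bullet\to v'\to Q^\Projection_{v'}\to v$ and then leaves $A$ through a vertex of the set $X_\bot$ of $B\setminus A$-vertices reachable from a projection-bearing edge of $A$ by an $H$-path avoiding $z$ and exiting $A$ only at its last step. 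You never introduce $X_\bot$, and without it the third leg has no route anti-adjacent to the two legs passing through interfaces at $z$.

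Because of this, the Menger step must be run in an auxiliary graph richer than $H[B\setminus A]$: the paper uses three source vertices $a_\circ$ (adjacent to $\PotatoZ$-non-neighbors of $v'$), $a_\bullet$ (adjacent to $\PotatoZ$-neighbors of $v'$), and $a_\bot$ (adjacent to $X_\bot$). Your Menger between $N_H(z)\cap(B\setminus A)$ and the pegs cannot force one path through a non-neighbor interface, one through a neighbor interface, and one through $X_\bot$; in particular there is no reason a ``pigeonhole/Ramsey re-routing'' should let you designate $x_1=x^*$ among the three sources of the disjoint paths, which depend on the cut structure of $H[B\setminus A]$. The case split for outcomes 2--4 also hinges on which of $a_\circ$, $a_\bullet$, $a_\bot$ can reach the pegs: the order-one outcome arises exactly when $X_\bot$ is cut off from the pegs, not when ``$A\cap B=\{z\}$ and $|F|=0$'' as you state, and the last outcome also covers the case where $z$ stubbornly remains a backdoor of the modified separation, which your write-up does not address. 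Both halves of your argument therefore need the same auxiliary construction.
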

\begin{proof}
    Recall that for $z\in A\cap B$, $\PotatoZ = \bigcup_{x \in N_H(z) \cap (B \setminus A)} \eta(zx,z)$.
    Fix a $z$-entry point $v' \in \{v\} \cup \eta(z)$
    that causes $z$ not to be pure: $v'$ has some neighbors in $\PotatoZ$, but is not complete
    to $\PotatoZ$.

    Since $(A,B) \in \mathcal{T}_W$ while the sidelength of $W$ is at least $4|A \cap B| + 7$, 
    we define $W'$ to be a subwall of $W$ of sidelength at least $k_W - |A \cap B| \geq 3|A \cap B| + 7$
    that is fully contained in $H[B \setminus A]$.

    We define $X_1$ to be the set of those vertices $x \in A$ for which either $xz \in E(H)$ and $\Projection \cap (\eta(xz) \setminus \eta(xz,z)) \neq \emptyset$, or there exists a neighbor $y \in N_H(x) \cap (A \setminus \{z\})$ with $\Projection \cap \eta(xy) \neq \emptyset$.
    Let $X_\bot$ be the set of those vertices $y \in B \setminus A$ for which there exists $x_y \in X_1$ and
    a path $P_y$ in $H$ from $x_y$ to $y$ that avoids $z$ and whose only vertex outside $A$ is $y$; denote the last 
    edge of $P_y$ by $e_y$ and the penultimate vertex of $P_y$ by $z_y$ (so that $e_y = z_yy$). 
    Note that Lemma~\ref{lem:connect-via-projection} implies that for every $y \in X_\bot$ there exists an induced
    path $R_y$ from $v$ to a vertex $\eta(e_y,y)$ whose all internal vertices belong to 
    \begin{equation}\label{eq:backdoor:Rv} (\eta(e_y) \setminus \eta(e_y,y)) \cup \bigcup_{x \in N_H(z) \cap A} (\eta(zx) \setminus \eta(zx, x))
      \cup \bigcup_{e \in E(H[A \setminus \{z\}])} \eta(e) \cup \bigcup_{x \in A \setminus \{z\}} \eta(x) \cup \bigcup_{T \in T(H[A])} \eta(T). 
      \end{equation}

    We construct an auxiliary graph $H_1$ as follows. 
    Start with $H_1 \coloneqq H[B \setminus A]$.
    Add all vertices of $\PotatoZ$ to $H_1$ and for a vertex $u \in \PotatoZ$,
    if $u \in \eta(zy,z)$ for $y \in B \setminus A$, make $u$ adjacent to $y$ in $H_1$.
    Add three new vertices $a_\circ$, $a_\bullet$, and $a_\bot$.
    Make $a_\bot$ adjacent to all vertices of $X_\bot$.
    Make $a_\circ$ adjacent to all vertices of $\PotatoZ$ that are nonadjacent to $v'$ in $G$
    and make $a_\bullet$ adjacent to all vertices of $\PotatoZ$ that are adjacent to $v'$ in $G$.

    We consider two cases. In the first case, there are three vertex-disjoint paths $P_\circ$, $P_\bullet$, and $P_\bot$
    in $H_1$ from $a_\circ$, $a_\bullet$, and $a_\bot$, respectively, to the set of pegs of $W'$. 
    In this case, we will exhibit an induced $\Pttt$ in $G$.

    Since every vertex of $\PotatoZ$ is of degree $2$ in $H_1$, 
    the path $P_\circ$ starts in $a_\circ$, continues via $u_\circ \in \eta(zx_\circ,z)$
    to $x_\circ \in B \setminus A$ and then stays in $B \setminus A$ till the end. Similarly,
    the path $P_\bullet$ starts in $a_\bullet$, continues via $u_\bullet \in \eta(zx_\bullet,z)$
    to $x_\bullet \in B \setminus A$ and then stays in $B \setminus A$ till the end.
    By construction, $v'u_\circ \notin E(G)$ while $v'u_\bullet \in E(G)$.
    Since $P_\circ$ and $P_\bullet$ are vertex-disjoint, $x_\circ \neq x_\bullet$ and thus
    $u_\circ u_\bullet \in E(G)$ by the properties of an extended strip decomposition.
    Let $a_\bot x_\bot$ be the first edge of $P_\bot$ for some $x_\bot \in X_\bot$.
    
    Apply Lemma~\ref{lem:pathInWall} to $P_\circ$, $P_\bullet$, $P_\bot$, and the wall $W'$, obtaining almost vertex-disjoint paths $P_\circ'$,
    $P_\bullet'$, and $P_\bot'$. Note that $a_\circ u_\circ$ and $u_\circ x_\circ$ remain the first two edges of $P_\circ'$,
    $a_\bullet u_\bullet$ and $u_\bullet x_\bullet$ remain the first two edges of $P_\bullet'$,
    and $a_\bot x_\bot$ remains the first edge of $P_\bot'$.

    The suffixes of paths $P_\circ'$, $P_\bullet'$, and $P_\bot'$ from $x_\circ$, $x_\bullet$, and $x_\bot$, respectively,
    stay in $H[B \setminus A]$, which is $v$-safe due to the assumption that $(A,B)$ captures $\Projection$.
    Apply Lemma~\ref{lem:wall-path-to-real-path} to these three suffixes and $H[B \setminus A]$, obtaining induced
    paths $R_\circ$, $R_\bullet$, and $R_\bot$ which form a pyramid base.

    We now exhibit an induced $\Pttt$ in $G$ with apex $u_\bullet$. For the first leg,
    as $(H,\eta)$ is locally cleaned (in particular, the ``\hyperref[mov:DisconnectedVI]{moving a disconnected vertex of an interface}'' and
    ``\hyperref[mov:DisconncetedES]{moving a disconnected component of an edge set}'' operations are inapplicable), there is a path in $\eta(zx_\bullet)$ 
    (possibly of zero length) from $u_\bullet$ to a vertex of $\eta(zx_\bullet, x_\bullet)$ whose only
    vertex of $\eta(zx_\bullet, z)$ is $u_\bullet$; concatenate this path with $R_\bullet$.
    For the second leg, start with an edge $u_\bullet u_\circ$, continue via an analogous path
    in $\eta(zx_\circ)$ from $u_\circ$ to a vertex of $\eta(zx_\circ, x_\circ)$ whose only vertex of $\eta(zx_\circ,z)$
    is $u_\circ$ and append $R_\circ$ at the end.
    For the third leg, go with a shortest path from $u_\bullet$ to the starting vertex of $R_\bot$ via
    $v'$, $Q^\Projection_{v'}$, $v$, $R_{x_\bot}$, and finish with $R_\bot$. 
    Condition~\eqref{eq:backdoor:Rv} ensures that this is indeed an induced $\Pttt$ in $G$.

    We are left with the second case where $H_1$ contains a separation $(A_1,B_1)$ of order less than three
    with $a_\circ, a_\bullet, a_\bot \in A_1$ but all pegs of $W'$ in $B_1$. 
    Pick such $(A_1,B_1)$ with $B_1$ inclusion-wise minimal. 

    Observe that no vertex of $\PotatoZ$ is in $A_1 \cap B_1$, as if
    some $u \in A_1 \cap B_1$ with $u \in \eta(zx, z)$, $x \in B \setminus A$, then as $a_\circ, a_\bullet \in A_1$,
    $(A_1 \cup \{x\}, B_1 \setminus \{u\})$ is also a separation of order less than $3$, contradicting the choice of $(A_1,B_1)$.
    Let $B'$ consist of the pegs of $W'$ and $N_H[(B \setminus A) \cap (B_1 \setminus A_1)]$ and
    $A' = V(H) \setminus ((B \setminus A) \cap (B_1 \setminus A_1))$.
    Clearly, $(A', B')$ is a separation in $H$. 
    Observe that $A' \cap B' \subseteq (A \cap B) \cup (A_1 \cap B_1)$, and thus
    the order of $(A',B')$ is at most $|A \cap B| + |A_1 \cap B_1| \leq |A \cap B| + 2$.
    As all pegs of $W'$ are in $B'$, $(A',B') \in \mathcal{T}_{W'} \subseteq \mathcal{T}_W$.  
    We have $B' \subseteq B$ and thus $A \subseteq A'$, and hence $(A',B')$ is triangle-safe and captures $\Projection$ 
    with a backdoor set being a subset of $Z$. 

    If there is no path in $H[B \setminus A]$ from $X_\bot$ to a peg of $W'$, then define a separation as follows.
    Let $B_0$ consist of $z$ and all vertices of $H$ reachable in $H-\{z\}$ from a peg of $W'$, and let
    $A_0 = (V(H) \setminus B_0) \cup \{z\}$. Clearly, $(A_0, B_0)$ is a separation of order $1$ in $H$.
    Since all pegs of $W'$ lie in $B_0$, we have $(A_0,B_0) \in \mathcal{T}_{W'} \subseteq \mathcal{T}_W$.
    By the assumption, $X_\bot \subseteq A_0$. Hence, $(A_0,B_0)$ is $\Projection$-capturing. This gives the second outcome
    of the lemma.

    If for some $a \in \{a_\circ, a_\bullet\}$ there is no path in $H_1$ from $a$ to a peg of $W'$, define a separation 
    $(A^\ast, B^\ast)$ as follow. Let $C$ be all vertices of $B \setminus A$ reachable from $a$ in $H_1$. 
    Since $v'$ has at least one neighbor in $\PotatoZ$, but is not complete to $\PotatoZ$, we have that $a$ is
    not an isolated vertex of $H_1$ and thus $C \neq \emptyset$. 
    Let $A^\ast = A \cup C$ and $B^\ast = B \setminus C$. Then, 
    $(A^\ast, B^\ast)$ is also a separation in $H$ with $A \cap B = A^\ast \cap B^\ast$ 
    and all pegs in $W'$ lying in $B^\ast$ (thus $(A^\ast, B^\ast) \in \mathcal{T}_{W'} \subseteq \mathcal{T}_W$)
    but with $A \subsetneq A^\ast$, $B^\ast \subsetneq B$. This gives the last outcome of the lemma.

    If $z$ is not a backdoor vertex of $(A',B')$, then we are done with the penultimate outcome.
    Otherwise, there exists $x \in B '\setminus A'$ with $\Projection \cap \eta(zx,z) \neq \emptyset$.
    We have $x \in B_1 \setminus A_1$ and thus either $a_\circ$ or $a_\bullet$ belongs to $A_1 \cap B_1$. 
    Since $|A_1 \cap B_1| < 3$ and for every $a \in \{a_\circ, a_\bullet, a_\bot\}$ 
    there is a path from $a$ to a peg of $W'$ in $H_1$, 
    $A_1 \cap B_1$ is of size $2$ and consists of $a \in \{a_\circ, a_\bullet\}$ and a vertex $z' \in B \setminus A$.
    In particular, $a_\bot \in A_1 \setminus B_1$ hence $X_\bot \subseteq A_1$. 
    Observe that for every $y \in X_\bot$, we have $N_H(z_y) \cap (B \setminus A) \subseteq X_\bot$. Hence,
    for every $y \in X_\bot$, the vertex $z_y$ lies in $A' \setminus B'$. 
    As there is a path from $X_\bot$ to a peg of $W'$ in $H[B \setminus A]$, in particular we have $X_\bot \neq \emptyset$,
    so at least one $z_y$ in $(A' \setminus B') \cap (A \cap B)$ exists. 
    As $A_1 \cap B_1 = \{a, z'\}$, we have $A' \cap B' \subseteq (A \cap B) \cup \{z'\}$. 
    As $(A' \setminus B') \cap (A \cap B) \neq \emptyset$, we have $|A' \cap B'| \leq |A \cap B|$ and $(A', B') \neq (A,B)$. 
    This proves that $(A',B')$ satisfies the properties of the last outcome of the lemma.
\end{proof}

We conclude this section with a lemma summarizing what we obtained so far.
\begin{lemma}\label{lem:esd:final-cut}
   For every constant $t$, there exists a constant $c_t$ and a polynomial-time algorithm that, given input as in Lemma~\ref{lem:esd:step},
   either returns one of the promised outputs of Lemma~\ref{lem:esd:step}
   or outputs all the following objects:
   \begin{itemize}
       \item a locally clean extended strip decomposition $(H,\eta)$ of $G-v$ 
   where every vertex of $H$ has degree at least~$2$ and every particle of has weight at most $0.01\tau$,
       \item a separation $(A,B)$ in $H$ of order at most $c_t$ that is triangle-safe, captures $\Projection$ with 
       backdoor set $Z$ of size at most $2$ such that every $z \in Z$ is pure, and satisfies
           $\w(\preimage{(H,\eta)}{B \setminus A}) \geq 0.99\tau$, and
        \item for every $X \subseteq A \cap B$ of size three, 
          a family $\mathcal{Q}^X = \{Q_x^X~|~x \in X\}$ of three  induced paths in $G$ on $t$ vertices each which induce a pyramid base,
        where for every $x \in X$ there exists ${y_x \in N_H(x) \cap (B \setminus A)}$ such that
        the path $Q_x^X$ starts in a vertex of $\eta(xy_x,x)$
        and has all remaining vertices in $\bigcup_{e \in E(G[B \setminus A])} \eta(e)$
        or in $\eta(xy_x) \setminus \eta(xy_x, x)$.
    \end{itemize}
\end{lemma}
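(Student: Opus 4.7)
The proof composes the lemmas of this section into an iterative refinement procedure. The plan is to start by running the local cleaning operation on $S=\{v\}$ from \Cref{sub:localCleaning}: if it produces a $0.99\tau$-balanced separator dominated by at most five vertices, we return it as the second output of \Cref{lem:esd:step}; otherwise we obtain $(H,\eta)$ satisfying \eqref{eq:esd:H-small-particles} and \eqref{eq:esd:H-degree-two}, which becomes the first bullet of the conclusion. Next, we fix a sufficiently large constant $\tanglethreshold$ (depending on $t$ and the constants from the subsequent lemmas), invoke \Cref{lem:find-safe-wall} to either return a balanced separator or secure a $v$-safe wall of sidelength $3\tanglethreshold$, then apply \Cref{lem:safe-to-pure} to either produce an induced $S_{t,t,t}$ or shrink the wall to a $v$-pure subwall $W_0$. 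Finally, we invoke \Cref{lem:capture-projection} on $W_0$ to either exhibit an induced $S_{t,t,t}$ or to obtain an initial separation $(A,B)\in\mathcal{T}_{W_0}$ of order less than $19$ capturing $\Projection$ with a backdoor set of size at most $2$.

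The heart of the argument is an iterative refinement loop that alternately invokes \Cref{lem:clean-triangles} and \Cref{lem:cleanup-backdoor} until triangle-safety and backdoor purity coexist. Whenever either lemma returns an induced $S_{t,t,t}$, we terminate with that output. Otherwise each invocation yields one of a handful of progress outcomes: $|Z|$ strictly decreases, or $A$ strictly grows, or the separation collapses to order $1$, at the cost of shrinking the wall by at most a constant additive amount. Termination is witnessed by the lexicographic potential $(|Z|,|V(H)\setminus A|)$, which takes at most polynomially many values; since each iteration strips off only a constant number of rows and columns of the wall and we begin with a wall whose sidelength exceeds what is needed by a large enough margin, the surviving wall $W$ at termination still has sidelength much larger than $t$. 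At the fixed point we obtain a triangle-safe separation $(A,B)$ of order bounded by a constant $c_t$, with all backdoors pure, and $(A,B)\in\mathcal{T}_W$, which by \eqref{eq:W-tangle} yields $\w(\preimage{(H,\eta)}{B\setminus A})\geq 0.99\tau$.

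For the path families $\mathcal{Q}^X$, we may assume every vertex of $A\cap B$ has at least one neighbor in $B\setminus A$, since isolated $A\cap B$-vertices can be harmlessly moved to $A\setminus B$. Fix an arbitrary triple $X=\{x_1,x_2,x_3\}\subseteq A\cap B$ and arbitrary neighbors $y_{x_i}\in N_H(x_i)\cap(B\setminus A)$. Since $|A\cap B|$ is bounded by a constant, $W$ is a large $v$-pure wall lying in $H[B\setminus A]$, and $(A,B)\in\mathcal{T}_W$, applying Menger's theorem (with an extra refinement pass folded into the loop if a small obstruction appears) secures three vertex-disjoint paths $P_1,P_2,P_3$ in $H$ such that each $P_i$ starts with the edge $x_iy_{x_i}$, continues inside $H[B\setminus A]$, and ends at a peg of $W$. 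We then apply \Cref{lem:pathInWall} to trim the suffixes of the $P_i$ so that they finish with length-$t$ subpaths lying entirely inside a $2t$-subdivided subwall of $W$, and \Cref{lem:wall-path-to-real-path} applied to those trimmed paths and the $v$-safe subgraph of $H$ containing them lifts them to disjoint and anti-adjacent induced paths in $G-N[v]$ of length at least $t$ with the required endpoint and support properties; truncating to exactly $t$ vertices yields the desired $Q_{x_i}^X$.

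The main difficulty is orchestrating the refinement loop so that all four possible outcomes of \Cref{lem:cleanup-backdoor} are absorbed by a single monotone potential while simultaneously guaranteeing that the wall survives with enough sidelength to support the final path-finding step. A secondary delicacy is ensuring that \emph{every} triple $X\subseteq A\cap B$ admits three vertex-disjoint routes from $X$ to the wall in $H[B]$: any Menger obstruction encountered at this stage would indicate that the refinement loop had not yet reached its true fixed point, so this path-finding condition must also be folded into the loop's termination criterion.
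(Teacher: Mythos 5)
Your high-level plan matches the paper's proof closely through the chain \Cref{lem:find-safe-wall} $\to$ \Cref{lem:safe-to-pure} $\to$ \Cref{lem:capture-projection} $\to$ \Cref{lem:clean-triangles} $\to$ iterated \Cref{lem:cleanup-backdoor}, and the final Menger-plus-\Cref{lem:pathInWall}-plus-\Cref{lem:wall-path-to-real-path} construction of the $\mathcal{Q}^X$, including the necessary fixed-point condition that no $(A',B')\in\mathcal{T}_W$ with $A\subseteq A'$ has strictly smaller order. However, there are two genuine gaps.

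First, you list the second outcome of \Cref{lem:cleanup-backdoor} (``a separation $(A_0,B_0)\in\mathcal{T}_W$ of order $1$ that captures $\Projection$'') as one of the ``progress outcomes'' absorbed by your lexicographic potential. It is not: $(A_0,B_0)$ need not refine $(A,B)$ (neither $A\subseteq A_0$ nor $B_0\subseteq B$ is guaranteed), its single backdoor $z$ need not be pure, and there is no way to continue the loop from it. In the paper this case is a \emph{terminal} case handled by a separate construction: $z$ is a cutvertex of $H$, so one can build a rigid extended strip decomposition of the whole graph $G$ by taking $H[B_0]$ with its $\eta$-sets and dumping $V(G)\setminus\preimage{(H,\eta)}{B_0}$ (including $v$ and $N(v)$) into $\eta'(z)$; the weight bounds then give the third output of \Cref{lem:esd:step}. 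Your proposal never produces this output and would loop without progress here.

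Second, your termination argument is internally inconsistent. You observe that the potential $(|Z|,|V(H)\setminus A|)$ takes ``at most polynomially many values,'' i.e.\ the fourth outcome of \Cref{lem:cleanup-backdoor} (strict growth of $A$) can recur $\Oh(|V(H)|)$ times, but you also assert that ``each iteration strips off only a constant number of rows and columns of the wall.'' A wall of constant sidelength $3\tanglethreshold$ cannot survive polynomially many constant shrinkages. The correct bookkeeping (which the paper makes explicit) is that only the third outcome of \Cref{lem:cleanup-backdoor} shrinks $W$, and it occurs at most twice because it strictly decreases $|Z|\leq 2$; the fourth outcome preserves both $W$ and $|A\cap B|$ and only grows $A$. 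A lesser quibble: in the last paragraph you fix $y_{x_i}\in N_H(x_i)\cap(B\setminus A)$ \emph{in advance} and then ask Menger to deliver paths starting with exactly those edges, which Menger does not promise; the vertices $y_x$ must be read off from the vertex-disjoint paths one obtains, as in the paper.
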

\begin{proof}
    We compute $(H,\eta)$ as discussed in Section~\ref{sub:localCleaning}, by exhaustively
    applying the local cleaning operation to the input extended strip decomposition of $G-v$ and discarding isolated
    vertices of $H$ with empty vertex sets. Either a $0.99\tau$-balanced separator dominated by a constant number of vertices
    is returned, or $(H,\eta)$ satisfies properties~\eqref{eq:esd:H-small-particles}, \eqref{eq:esd:H-degree-two},
    and~\eqref{eq:esd:H-dom-separator}.
    
    We fix $\tanglethreshold$ to be large enough constant depending on $t$, emerging from the proof.
    We start searching for $W$ and $(A,B)$ by applying Lemma~\ref{lem:find-safe-wall}. Thus, we can either already conclude, or get
    a $v$-safe wall $W$ of sidelength $3\tanglethreshold$ satisfying~\eqref{eq:W-tangle}.
    
    We proceed with $W$ with a series of 
    lemmas that each either concludes the reasoning,
    or exhibit a subwall $W'$ of $W$
    with some additional property. 
    The new subwall will be large in the following
    sense: if for every constant $\tanglethreshold'$
    there exists a constant $\tanglethreshold$
    such that if $W$ is of sidelength at least
    $\tanglethreshold$, then $W'$ is guaranteed
    to be of sidelength at least $\tanglethreshold'$.
    After each step, we rename $W'$ back to $W$.
    For the sake of clarity, we will not 
    follow the exact computation of the dependencies
    of $\tanglethreshold$ on $\tanglethreshold'$,
    but only refer to them as a ``decrease''
    or ``losing on the sidelength''.
    
    We start with applying Lemma~\ref{lem:safe-to-pure} to $W$; by losing a bit on the sidelength of $W$, we can assume that $W$ is actually
    $v$-pure. 
    We then apply Lemma~\ref{lem:capture-projection}. This yields either an induced $\Pttt$ in $G$ (which we can return)
    or a separation ${(A,B) \in \mathcal{T}_W}$ of order less than $19$ that captures $\Projection$
    and has backdoor set of size at most $2$.
    We pass it to Lemma~\ref{lem:clean-triangles} that either finds $\Pttt$ or, at the cost of a slight decrease (of at most $|A\cap B|$)  of the sidelength 
  of $W$ and an increase of the order of $(A,B)$ by at most $2$, upgrades $(A,B)$ to be triangle-safe. (Here, we slightly abuse the notation
    by denoting the output of Lemma~\ref{lem:clean-triangles} by $W$ and $(A,B)$, again.)

    We now iterate on improving $W$ and $(A,B)$ using Lemma~\ref{lem:cleanup-backdoor}.
    While $(A,B)$ has a backdoor vertex $z$ that is not pure, apply Lemma~\ref{lem:cleanup-backdoor} to $W$, $(A,B)$, and $z$.
    If the third outcome happens, replace $W$ with $W'$ and $(A,B)$ with $(A',B')$ and repeat; note that this can happen
    only twice as initially $(A,B)$ has at most two backdoors. 
    If the fourth outcome happens, replace $(A,B)$ with $(A^\ast, B^\ast)$ and repeat; note that this step can happen
    $\Oh(|V(H)|)$ times, but does not degrade the order of $(A,B)$ nor the size of the wall $W$.
    
    If we reach $(A,B)$ with all backdoor vertices being pure, we proceed as follows.
    First, note that we can assume that there is no $(A',B') \in \mathcal{T}_W$ with $A \subseteq A'$ but
    $|A' \cap B'| < |A \cap B|$, as then we can replace $(A,B)$ with $(A',B')$, because such an operation cannot turn a
    pure backdoor into a non-pure backdoor. 
    We would like to return $(H,\eta)$ and $(A,B)$ as the third outcome of the lemma; to this end, we need to construct
    the path families $\mathcal{Q}^X$. 
    Fix a subwall $W'$ of $W$ contained in $H[B \setminus A]$; assuming $\tanglethreshold$ is large enough, we can choose $W'$  of sidelength
    at least $3|A \cap B| + 4$.
    Since $A \cap B$ and the pegs of $W'$ cannot be separated in $H[B]$ by a separation of order less than $|A \cap B|$, 
    by Menger's theorem, there is a family $\mathcal{P} = \{P_x~|~x \in A \cap B\}$ of vertex-disjoint paths in $H[B]$
    such that every $P_x \in \mathcal{P}$ starts in $x$ and ends in a peg of $W'$. 
    For every $X \subseteq A \cap B$ of size three, construct $\mathcal{Q}^X$ as follows:
    Apply Lemma~\ref{lem:pathInWall} to $\{P_x~|~x \in X\}$ and $W'$, obtaining paths $\{P_x'~|~x \in X\}$.
    Let $y_x$ be the second vertex of a path $P_x'$ for $x \in X$. 
    As $H[B \setminus A]$ is $v$-pure, apply Lemma~\ref{lem:wall-path-to-real-path} to the paths $P_x'$ with the first edges removed,
    obtaining paths $R_x$ for $x \in X$. Prepend every path $R_x$ with a shortest path in $\eta(xy_x)$ from $\eta(xy_x, x)$ to $\eta(xy_x, y_x)$, obtaining the desired path $Q_x^X$.

    The remaining case is when one of the applications of Lemma~\ref{lem:cleanup-backdoor} returns a separation
    $(A_0,B_0) \in \mathcal{T}_W$ of order $1$ that captures $\Projection$.
    Let $\{z\} = A_0 \cap B_0$.
    Compute an extended strip decomposition $(H',\eta')$ of $G$ as follows.
    Start with $H' \coloneqq H[B_0]$ and $\eta'(\alpha) \coloneqq \eta(\alpha)$ for every $\alpha \in V(H') \cup E(H') \cup T(H')$.
    Let $V'$ be the set of vertices of $G$ that are already in some set of $\eta'$.
    Add every vertex of $V(G) \setminus V'$ to the vertex set $\eta'(z)$. 

    We first observe that $(H',\eta')$ is indeed an extended strip decomposition of $G$.
    Indeed, since $z$ is a cutvertex of $H$, every vertex of $V(G) \setminus (V' \cup \{v\})$ has only neighbors in
    $V(G) \setminus V'$ and $\bigcup_{y \in N_H(z) \cap B_0} \eta(zy,z)$. 
    Furthermore, since $(A_0,B_0)$ captures $\Projection$, all neighbors of $v$ are also only in $V(G) \setminus V'$
    and $\bigcup_{y \in N_H(z) \cap B_0} \eta(zy,z)$. 
    Since $(A_0,B_0) \in \mathcal{T}_W$, we have $\w(\preimage{(H,\eta)}{B_0 \setminus A_0}) \geq 0.99\tau$ by~\eqref{eq:W-tangle}.
    Hence, as $\w(G) \leq \tau$, we have $\w(\eta'(z)) \leq 0.01\tau$.
    Since every particle of $(H,\eta)$ is of weight at most $0.01\tau$ by~\eqref{eq:esd:H-small-particles},
    every particle of $(H',\eta')$ is of weight at most $(0.01 + 0.01)\tau = 0.02\tau$. 
    Hence, we can output $(H',\eta')$ as the third output of Lemma~\ref{lem:esd:step}.

    To finish the proof, we observe that after getting the first wall $W$ from Lemma~\ref{lem:find-safe-wall},
    all further steps tackle separations of constant order and subwalls of a wall of constant sidelength. 
    Hence, all computations in later steps can be done in polynomial time naively.
\end{proof}

\subsection{Applying three-in-a-tree}
Lemma~\ref{lem:esd:final-cut} allows us to conclude the proof of Lemma~\ref{lem:esd:step}
by applying the three-in-a-tree theorem (Theorem~\ref{thm:three-in-a-tree}) in the following way.

Given the input to Lemma~\ref{lem:esd:step}, we apply Lemma~\ref{lem:esd:final-cut} for the constant $t$.
Unless we are already done, we obtained the last output,
consisting of $(H,\eta)$, $(A,B)$, and $W$. Recall that $(A,B)$ captures $\Projection$ with a backdoor set $Z$
of size at most $2$, it is triangle-safe, and all $z \in Z$ are pure backdoors.

We will need the following notation. Fix $z \in Z$. Recall that
    $\PotatoZ = \bigcup_{x \in N_H(z) \cap (B \setminus A)} \eta(zx,z)$.
    Let $V_z \subseteq \eta(z)$ 
    be the set of those vertices $u \in \eta(z)$
that are reachable from $v$ by a path contained in $G[\{v\} \cup \eta(z)]$ whose all vertices, except for possibly the last one,
are anti-adjacent to $\PotatoZ$. 
Note that as $z$ is a pure backdoor, every vertex $u \in V_z$
is either completely adjacent to $\PotatoZ$ or completely anti-adjacent to $\PotatoZ$;
we denote by $V_z^\bullet$ and $V_z^\circ$ the sets of vertices of $u \in V_z$ that are completely adjacent 
and completely anti-adjacent to $\PotatoZ$, respectively.
Furthermore, observe that as $z$ is a pure backdoor, $v$ is either completely adjacent to $\PotatoZ$ or completely
anti-adjacent to $\PotatoZ$ and, in the first case, $V_z = \emptyset$.

We construct an auxiliary graph $G_A$ as follows. Let $X_A \subseteq V(G)$ consist of the following:
\begin{itemize}
    \item the vertex $v$;
    \item every $\eta(e)$ for $e \in E(G[A])$;
    \item every $\eta(xyz)$ for $xyz \in T(G[A])$;
    \item every $\eta(x)$ for $x \in A \setminus B$; and
    \item for every $z \in Z$, the set $V_z$.
\end{itemize}
Observe that $X_A$ is disjoint with $\preimage{(H,\eta)}{B \setminus A}$ and thus
$\w(X_A) \leq 0.01\tau$.

We start with $G_A \coloneqq G[X_A]$ and then, for every $x \in A \cap B$ we add two new adjacent vertices
$a_{x,1}$ and $a_{x,2}$ to $G_A$, adjacent to every vertex of $\bigcup_{y \in N_H(x) \cap A} \eta(xy,x)$.
Furthermore, if $x \in Z$, we make $a_{x,1}$ and $a_{x,2}$ fully adjacent to 
$V_z^\bullet$ and, if $v$ is complete to $\PotatoZ$, also adjacent to $v$. 
Observe that for $x \in A \cap B$, the vertices $a_{x,1}$ and $a_{x,2}$ are true twins in $G_A$.
Let $\mathcal{Z} = \{a_{x,i}~|~x \in A \cap B, i \in \{1,2\}\}$. 

We apply Theorem~\ref{thm:three-in-a-tree} to $G_A$ and the set $\mathcal{Z}$. There are two possible outcomes:
either an induced tree in $G_A$ containing at least three elements of $\mathcal{Z}$ or
a rigid extended strip decomposition $(H_A,\eta_A)$ of $(G_A,\mathcal{Z})$. We deal with these cases separately.

\paragraph{An induced tree in $G_A$.}
Let $K$ be an induced tree in $G_A$ containing at least three elements of~$\mathcal{Z}$. 
Without loss of generality, we can assume that $K$ contains exactly three elements of $\mathcal{Z}$,
and $K$ is either a path with two endpoints in $\mathcal{Z}$ or a tree with exactly three leaves being the elements 
of $\mathcal{Z}$.
For $x \in A \cap B$, as $a_{x,1}$ and $a_{x,2}$ are true twins while $K$ contains three elements of $\mathcal{Z}$
and is an induced tree, $K$ can contain only one of $a_{x,1}$ and $a_{x,2}$. Without loss of generality, we can assume that
$V(K) \cap \mathcal{Z} = \{a_{x,1}~|~x \in X\}$ for some $X \subseteq A \cap B$ of size three.

Consider the family $\mathcal{Q}_X = \{Q_x^X~|~x \in X\}$ promised by Lemma~\ref{lem:esd:final-cut} and let 
$u_x$ be the starting vertex of $Q_x^X$ for $x \in X$. Observe that 
$u_x$ has exactly the same neighbors in $X_A$ as $a_{x,1}$, while all other vertices of $Q_x^X$ are anti-adjacent to $X_A$. 
Hence, $(V(K) \setminus \{a_{x_1}~|~x \in X\}) \cup \{V(Q_x^X)~|~x \in X\}$ induces a tree in $G$ that contains an $\Pttt$, as desired.

\paragraph{An extended strip decomposition of $G_A$.}
We will now show that one can merge $(H_A,\eta_A)$ and $(H,\eta)$ into an extended strip decomposition $(H^\ast,\eta^\ast)$
of the whole graph $G$.

Recall that for every $a_{x,i} \in \mathcal{Z}$ there is a degree-$1$ vertex $\xi_{x,i} \in V(H_A)$ with
a unique neighbor $\zeta_{x,i} \in V(H_A)$ and $\eta_A(\xi_{x,i}\zeta_{x,i}, \xi_{x,i}) = \{a_{x,i}\}$.
For every $x \in A \cap B$, since $a_{x,1}$ and $a_{x,2}$ are adjacent, we have $\zeta_{x,1} = \zeta_{x,2}$ (and we henceforth
denote this vertex by $\zeta_x$)
and $a_{x,i} \in \eta_A(\xi_{x,i}\zeta_x, \zeta_x)$ for $i=1,2$. Thus, we can assume without loss of generality that
actually $\eta_A(\xi_{x,i} \zeta_x) = \{a_{x,i}\}$ and $\eta_A(\xi_{x,i}) = \emptyset$,
as all vertices of $\eta_A(\xi_{x,i} \zeta_x) \cup \eta_A(\xi_{x,i})$ except for $a_{x,i}$ can be moved to $\eta_A(\zeta_x)$.
Furthermore,
since $a_{x,i}$ and $a_{y,j}$ are nonadjacent for $x \neq y$, $i,j \in \{1,2\}$, we have that $\zeta_x$ and $\zeta_y$ are distinct.

Denote $H_A' \coloneqq H_A - \{\xi_{x,i}~|~x \in A \cap B, i = 1,2\}$ and observe
that $H_A'$ with $\eta_A'$ defined as $\eta_A$ restricted to the vertices, edges, and triangles present in $H_A'$
is an extended strip decomposition of $G[X_A]$. 

Note that every vertex of $V(G) \setminus X_A$ appears in $(H,\eta)$ either in $\eta(x)$ for some $x \in B$,
in $\eta(e)$ for an edge $e$ with at least one endpoint in $B \setminus A$ and both endpoints in $B$,
or in $\eta(xyz)$ for a triangle $xyz$ with $x,y,z \in B$ and at least one vertex in $B \setminus A$.
Hence, $H_B' \coloneqq H[B]$ with $\eta_B'$ defined as $\eta$ with the domain 
restricted to the objects present in $H[B]$ and every value restricted to vertices of $V(G) \setminus X_A$ is an extended strip 
decomposition of $G-X_A$.
We note that $\eta_B'(e) = \emptyset$ for an edge $e$ with both endpoints in $A \cap B$, but we retain
the edge $e$ in $H_B'$ as there may be triangles of $H[B]$ involving it.

We construct $H^\ast$ by taking a disjoint union of $H_A'$ and $H_B'$,
discarding edges $e$ of $H_B'$ that have both endpoints in $A \cap B$,
and identifying $\zeta_x$ with $x$ for every $x \in A \cap B$.
We define $\eta^\ast$ as follows: 
\begin{itemize}
    \item $\eta^\ast(e)$ for $e \in E(H^\ast)$ equals $\eta_A'(e)$ or $\eta_B'(e)$, depending on whether $e$ came from $H_A'$ or $H_B'$;
    \item $\eta^\ast(xyz)$ for $xyz \in T(H^\ast)$ equals $\eta_A'(xyz)$, $\eta_B'(xyz)$, or $\emptyset$,
    depending on whether the triangle $xyz$
    is present in $H_A'$, or at least one of $x,y,z$ is in $B \setminus A$, or none of these options happen;
    \item $\eta^\ast(x)$ for $x \in V(H^\ast)$ equals $\eta_A'(x)$ for $x \in V(H_A') \setminus \{\zeta_x~|~x \in A \cap B\}$, 
    equals $\eta_B'(x)$ for $x \in B \setminus A$, equals $\eta_A'(\zeta_x) \cup \eta_B'(x)$ for $x \in A \cap B$.
\end{itemize}
Finally, we discard from $H^\ast$ every edge $e$ for which $\eta^\ast(e) = \emptyset$ and $e$ does not participate in any 
triangle with a nonempty set in $\eta^\ast$, and every isolated vertex with an empty vertex set.

We now check that $(H^\ast,\eta^\ast)$ is indeed a rigid extended strip decomposition of $G$. 
Recall that $(H_A',\eta_A')$ and $(H_B',\eta_B')$ are extended strip decompositions of $G[X_A]$ and $G-X_A$, respectively.
Furthermore, $(H_A', \eta_A')$ is rigid; for $(H_B',\eta_B')$, we have $\eta_B'(e) = \eta(e)$ and $\eta_B'(e,x) = \eta(e,x)$ for every
$e$ with at least one endpoint in $B \setminus A$ and any $x \in e$, so $(H_B',\eta_B')$ can violate the requirements
of being rigid only at vertex sets and edges contained in $A \cap B$.
We infer that it remains to check the following three properties:
\begin{enumerate}
    \item For every $uw \in E(G)$ with $u \in X_A$ and $w \notin X_A$,  $u$ and $w$ are placed in $(H^\ast, \eta^\ast)$
    in a way allowing the edge $uw$, that is, both $u$ and $w$ either 
    \begin{itemize}
        \item belong to one set $\eta^\ast(e)$ for some $e \in E(H^\ast)$, or 
        \item belong to one set $\eta^\ast(x) \cup \bigcup_{y \in N_{H^\ast}(x)} \eta(xy,x)$ for some $x \in V(H^\ast)$, or 
        \item belong to one set $\eta^\ast(xyz) \cup (\eta^\ast(xy,x) \cap \eta^\ast(xy,y)) \cup (\eta^\ast(yz,y) \cap \eta^\ast(yz,z)) \cup (\eta^\ast(zx,z) \cap \eta^\ast(zx,x))$
    for some triangle $xyz \in T(H^\ast)$.
    \end{itemize}
    \item For every $u \in X_A$ and $w \in V(G) \setminus X_A$ such that for some $x \in A \cap B$
    it holds that $u,w \in \bigcup_{y \in N_{H^\ast}(x)} \eta^\ast(xy,x)$, we have $uw \in E(G)$.
    \item For every edge $xy \in E(H)$ with $x,y\subseteq A \cap B$, if there exists a triangle $xyz$ with $z \in B \setminus A$
    and $\eta(xyz) = \eta_B'(xyz) \neq \emptyset$, the edge $\zeta_x \zeta_y$ exists in $H_A'$. 
\end{enumerate}

For the first property, let $u \in X_A$ and $w \notin X_A$ be adjacent. We observe that, since $(H,\eta)$ is an extended
strip decomposition of $G-v$, we can break into the following subcases.
\begin{itemize}
    \item $u = v$. Then, as $(A,B)$ is triangle-safe and captures $\Projection$, 
    we have two options when $w$ exists:
    \begin{itemize}
        \item There exists $z \in Z$
    with $v$ complete to $\PotatoZ$ and $w \in \PotatoZ$. 
    Then, $v$ is adjacent to $a_{z,1}$ and $a_{z,2}$. Hence, in $(H_A',\eta_A')$ we have that
    $v \in \eta_A'(\zeta_z) \cup \bigcup_{y \in N_{H_A'}(\zeta_z)} \eta(\zeta_z y, \zeta_z)$. 
    Consequently, in $(H^\ast, \eta^\ast)$ we have that both $u=v$ and $w$ belong to 
    $\eta^\ast(z) \cup \bigcup_{y \in N_{H^\ast}(z)} \eta^\ast(zy,z)$, as desired.
    \item There exists a triangle $xyz \in T(H)$ with $x,y \in Z$ and $z \in B \setminus A$ with
    $w \in \eta(xyz)$. Then, as $(A,B)$ is triangle-safe, we have that $v$ is complete to $\eta(xz,x) \cup \eta(yz,y)$.
    Hence, $v$ is adjacent to $a_{x,1}$, $a_{x,2}$, $a_{y,1}$, and $a_{y,2}$ in $G_A$. 
    The only way how $(H_A,\eta_A)$ can accommodate this is if $\zeta_{x} \zeta_{y} \in E(H_A)$ and
    $v \in \eta_A(\zeta_{x} \zeta_{y}, \zeta_x) \cap \eta_A(\zeta_x \zeta_y, \zeta_y)$. 
    Hence, $v \in \eta^\ast(xy,x) \cap \eta^\ast(xy,y)$ while $w \in \eta^\ast(xyz)$, as desired.
    \end{itemize}
    \item $u \in \eta(z)$ for some $z \in V(H)$. By the definition of $X_A$, we have actually $z \in A$.
    By the existence of $w$, we have $z \in A \cap B$. Since the only parts of sets $\eta(x)$ for $x \in A \cap B$
    that are in $X_A$ are sets $V_x$ for $x \in Z$, we have $z \in Z$ and $v$ is anti-complete to $\PotatoZ$.
    Furthermore, as $V_z^\circ$ is nonadjacent to $\PotatoZ$ and to $\eta(z) \setminus V_z$, we have $u \in V_z^\bullet$
    and $w \in \PotatoZ \cup (\eta(z) \setminus V_z)$.
    Then, $u$ is adjacent to $a_{z,1}$ and $a_{z,2}$ in $G_A$. Thus, $u \in \eta_A'(\zeta_z) \cup \bigcup_{y \in N_{H_A'}(z)} \eta(\zeta_z y, \zeta_z)$. 
    Hence, $u,w \in \eta^\ast(z) \cup \bigcup_{y \in N_{H^\ast}(z)} \eta^\ast(zy,z)$, as desired.
    \item $u \in \eta(e)$ for some $e \in E(H)$. By the definition of $X_A$, both endpoints of $e$ are in $A$. 
    Due to the existence of the vertex $w \in N_G(u) \setminus X_A$, we have the following options:
    \begin{itemize}
        \item There exists an endpoint $x$ of $e$ that lies in $A \cap B$, $u \in \eta(e,x)$, and 
        $w \in \eta(x)$ or $w \in \bigcup_{y \in N_H(x) \cap (B \setminus A)} \eta(xy,x)$.
        Then, $u$ is adjacent to $a_{x,1}$ and $a_{x,2}$ in $G_A$, and therefore
        $u \in \eta_A'(\zeta_x) \cup \bigcup_{y \in N_{H_A'}(\zeta_x)} \eta_A'(\zeta_x y, \zeta_x)$.
        Thus, both $u$ and $w$ belong to $\eta^\ast(x) \cup \bigcup_{y \in N_{H^\ast}(x)} \eta^\ast(xy,x)$.
        \item Both endpoints $x,y$ of $e$ lie in $A \cap B$, $u \in \eta(xy,x) \cap \eta(xy,y)$ and there exists $z \in B \setminus A$
        with $xz,yz \in E(H)$ and $w \in \eta(xyz)$. 
        We infer that $u$ is adjacent to $a_{x,1}$, $a_{x,2}$, $a_{y,1}$, and $a_{y,2}$ in $G_A$. 
        The only way how $u$ is accommodated in $(H_A, \eta_A)$ is that $\zeta_x \zeta_y \in E(H_A)$
        and $u \in \eta_A(\zeta_x \zeta_y, \zeta_x) \cap \eta_A(\zeta_x\zeta_y, \zeta_y)$. 
        Hence, $u \in \eta^\ast(xy,x) \cap \eta^\ast(xy,y)$ while $w \in \eta^\ast(xyz)$. 
    \end{itemize}
    \item $u \in \eta(xyz)$ for some $xyz \in T(H)$. By the definition of $X_A$, we have $x,y,z \in A$.
     Then, $N_G(\eta(xyz)) \subseteq X_A$, contradicting the existence of $w$. Hence, this case is impossible.
\end{itemize}

For the second property, let $x \in A \cap B$ and $u,w \in \bigcup_{y \in N_{H^\ast}(x)} \eta^\ast(xy,x)$
with $u \in X_A$ and $w \notin X_A$. 
By the way we obtained $(H^\ast, \eta^\ast$), 
there exists $y_u \in N_{H_A'}(\zeta_x)$ with $u \in \eta_A(\zeta_x y_u, \zeta_x)$
and $y_w \in N_{H}(x) \cap (B \setminus A)$ with $w \in \eta(x y_w, x)$.
In particular, $u$ is adjacent to $a_{x,1}$ and $a_{x,2}$ in $G_A$. 
Therefore, by the way we constructed $G_A$, we have that either
$u \in \bigcup_{y \in N_H(x) \cap A} \eta(xy,x)$, 
or for $z \in Z$, either $u \in V_z^\bullet$, or $u=v$ and $v$ is completely adjacent to $\PotatoZ$. 
On all cases, $u$ is completely adjacent to $\bigcup_{y \in N_H(x) \cap (B \setminus A)} \eta(xy,x)$,
where $w$ resides. This proves $uw \in E(G)$, as desired. 

For the third property, consider $xyz \in T(H)$ with $x,y \in A \cap B$, $z \in B \setminus A$, and $\eta(xy) \neq \emptyset$.
Since $(H,\eta)$ is locally cleaned, there is an edge $uw \in E(G)$ with $w \in \eta(xyz)$ and $u \in \eta(xy,x) \cap \eta(xy,y)$. 
Then, $u \in X_A$ and $u$ is adjacent to $a_{x,1}$, $a_{x,2}$, $a_{y,1}$, and $a_{y,2}$ in $G_A$. 
Hence, the only option to accommodate $u$ in $(H_A,\eta_A)$ is that $\zeta_x \zeta_y \in E(H_A)$ and $u \in \eta_A(\zeta_x \zeta_y, \zeta_x) \cap \eta_A(\zeta_x \zeta_y, \zeta_y)$. This proves the third property.

Hence, $(H^\ast, \eta^\ast)$ is indeed a rigid extended strip decomposition of $G$. 
Since $\w(X_A) \leq 0.01\tau$ and every particle of $(H,\eta)$ has weight at most $0.01 \tau$, 
every particle of $(H^\ast, \eta^\ast)$ has weight at most $0.5\tau$. 
This proves that $(H^\ast, \eta^\ast)$ satisfies the requirements for the last outcome of Lemma~\ref{lem:esd:step}.
This concludes the proof of Lemma~\ref{lem:esd:step}, and thus also the proof of Lemma~\ref{lem:esd}.

\section{Algorithm}

In this section, we define $\log(n) = \max(2, \lceil \log_2(n) \rceil)$. Let $t$ be a positive integer, throughout this section, we will use $c_t$ to denote the constant given in Lemma~\ref{lem:esd} of the same name. In order to make dealing with constants easier (in particular the constants that arise from Definition~\ref{def:esd bs generation}), we will assume that $c_t \geq 34t$. Additionally, in this section we will assume that all graphs, $G$, come equipped with a weight function $\w : V(G) \to [0,+\infty)$. If $G'$ is an induced subgraph of $G'$, then we assume that $G$ inherits its weight function from $G$, that is the weight function for the vertices of $G'$ is the weight function for the vertices of $G$ when restricted to the vertices of $G'$. For a subset $X \subseteq V(G)$, $\w(X)$ = $\sum_{x \in X} \w(x)$.

\subsection{Definitions and Observations}

In this subsection we collect most of the definitions we will use for this section and immediate observations about these definitions.

The interpretation of $G$ and $G'$ in the coming definitions will be as follows: we are running the algorithm in order to find the maximum size independent set of $G$. The algorithm is recursive, and only makes recursive calls on induced subgraphs of the input graph. Suppose we want to analyze a recursive call in which the current induced subgraph of $G$ that is being considered is $G'$. When arguing about the behavior of the algorithm on input $G'$, it is useful to be able to conclude that the bigger graph $G$ contains an $\Sttt$.



Throughout our algorithm, balanced separators as defined in Section~\ref{sec:prelims} 
will often be readily available. However we will sometimes need balanced separators with even stronger properties; in particular we will need the amount that the separator disconnects the graph to depend super-linearly in the size of its core. We will call such balanced separators {\em boosted} (see Definition~\ref{def:bbs} below), and a substantial part of our algorithm will consist of trying to reduce the input graph $G$ so that some vertex set becomes boosted (in the sense of Definition~\ref{def:bbs}). Unlike for normal balanced separators, the following definition is always used with $Z=V(G')$, so reference to $Z$ is dropped in the following definition.

\begin{definition}[$s$-boosted balanced separator]\label{def:bbs}
Let $G$ be a graph, $G'$ an induced subgraph of $G$, and $s$ be a positive integer.
A vertex set $Y \subseteq V(G')$ is an {\em $s$-boosted balanced separator for $G'$ with a core $X$ originating in $G$} 
if $Y$ is a $c$-balanced separator for $G'$ with core $X$ originating in $G$, $|X| \leq s$, and $c \leq \frac{|C|}{16s^2}$, where $C$ is a largest component of $G'$. 
When $G$ and $G'$ are clear from context, we may say that $X$ is a core of the boosted balanced separator, $Y$.
\end{definition}

The algorithm will often work with a graph $G$, a vertex set $X$ in $G$ and an induced subgraph $G'$ of $G$. The aim is to ensure that $X$ is a core of an $s$-boosted balanced separator (for an appropriately chosen $s$) of $G'$. 
If $X$ doesn't already satisfy this, it is because $G' - Y$, where $Y = N_G^{G'}[X]$, has some connected components that are too big. The next definition zooms in on the neighborhood of these connected components into $Y$ (the constants in the formal definition don't quite match the intuition above for book keeping reasons). 

\begin{definition}[relevant set]\label{def:relevant}
Let $t$ be a positive integer, $G$ an $\Sttt$-free graph, $X \subseteq V(G)$, $G'$ an induced subgraph of $G$, and $N$ a positive integer. We define \rel$_G(G', X, N)$ to be the subset of vertices of $N_G^{G'}[X]$ that have at least one neighbor in at least one component of $G'-N_G^{G'}[X]$ that contains over $\frac{N}{32c_t^2\log^2(N)}$ vertices.  
\end{definition}

We make the following important observation about \rel$_G(G', X, N)$ that follows directly from the fact that if $G''$ is an induced subgraph of $G'$, then every component of $G''-N_G^{G''}[X]$ of size at least $\frac{N}{32c_t^2\log^2(N)}$ vertices is contained in some component of $G'-N_G^{G'}[X]$ of size at least $\frac{N}{32c_t^2\log^2(N)}$.

\begin{observation}
Let $t$ be a positive integer, $G$ an $\Sttt$-free graph, $X \subseteq V(G)$, $G'$ an induced subgraph of $G$, $G''$ and induced subgraph of $G'$, and $N$ a positive integer. Then \rel$_G(G'', X, N)$ $\subseteq$ \rel$_G(G', X, N)$.
\end{observation}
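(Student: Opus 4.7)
The plan is to unfold both sides of the claimed inclusion directly from \cref{def:relevant} and verify the containment vertex-by-vertex, using the hint about components already stated just before the observation. Fix any $v \in \rel_G(G'', X, N)$. By definition, $v \in N_G^{G''}[X]$, and there is a connected component $C''$ of $G'' - N_G^{G''}[X]$ with $|V(C'')| > \frac{N}{32 c_t^2 \log^2(N)}$ such that $v$ has a neighbor in $V(C'')$. Our goal is to produce a component $C'$ of $G' - N_G^{G'}[X]$ of the same or larger size that contains $V(C'')$, so that $v \in \rel_G(G', X, N)$.

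The first step is the trivial membership upgrade: since $V(G'') \subseteq V(G')$, we have $N_G^{G''}[X] = N_G^{G'}[X] \cap V(G'')$, so $v \in N_G^{G''}[X] \subseteq N_G^{G'}[X]$. Second, I would verify that $V(C'')$ is disjoint from $N_G^{G'}[X]$: using $V(C'') \subseteq V(G'')$ and the identity just displayed,
\[
  V(C'') \cap N_G^{G'}[X] \;=\; V(C'') \cap N_G^{G'}[X] \cap V(G'') \;=\; V(C'') \cap N_G^{G''}[X] \;=\; \emptyset.
\]
Hence $V(C'') \subseteq V(G') \setminus N_G^{G'}[X]$.

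Third, since $G''$ is an induced subgraph of $G'$ and $V(C'') \subseteq V(G'')$, the induced subgraphs $G''[V(C'')]$ and $G'[V(C'')]$ coincide, so $V(C'')$ spans a connected subgraph of $G' - N_G^{G'}[X]$. Consequently $V(C'')$ is contained in a single connected component $C'$ of $G' - N_G^{G'}[X]$, and $|V(C')| \ge |V(C'')| > \frac{N}{32 c_t^2 \log^2(N)}$. The neighbor of $v$ witnessing membership of $v$ in $\rel_G(G'', X, N)$ then also witnesses that $v$ has a neighbor in a sufficiently large component of $G' - N_G^{G'}[X]$, so $v \in \rel_G(G', X, N)$ as required.

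There is no real obstacle here: the statement is a direct monotonicity observation, and the only subtlety is the bookkeeping identity $N_G^{G''}[X] = N_G^{G'}[X] \cap V(G'')$, which lets one transfer the disjointness of $V(C'')$ from $N_G^{G''}[X]$ to disjointness from $N_G^{G'}[X]$. Once that is spelled out, connectivity is preserved verbatim because we are passing to an induced supergraph on the same vertex set, and the size inequality is automatic since $V(C'') \subseteq V(C')$.
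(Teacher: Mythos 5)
Your proof is correct and follows exactly the reasoning the paper sketches in the sentence preceding the observation: every large component of $G''-N_G^{G''}[X]$ is contained in a component of $G'-N_G^{G'}[X]$ of at least the same size, so any vertex of $N_G^{G''}[X]$ with a neighbor in such a component witnesses the required membership in $\rel_G(G',X,N)$ as well. You have simply filled in the bookkeeping (in particular the identity $N_G^{G''}[X] = N_G^{G'}[X] \cap V(G'')$) that the paper leaves implicit.
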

 
Note that in an $\Sttt$-free graph, Theorem~\ref{thm:ICALP:weight} will always return a family ${\cal F}$ satisfying the second bullet point of the theorem statement. 
This motivates the following definition.

\begin{definition}[\esdthm~ and inferred extended strip decomposition]\label{def:esd bs generation}
Let $t$ be a positive integer and $G$ an $n$-vertex $\Sttt$-free graph. We define \esdthm$(G)$ to be a subroutine that uses Theorem~\ref{thm:ICALP:weight} to return a set $X \subseteq V(G)$, $|X| \leq (t+1)(11\log(n) + 6) \leq 34t\log(n)$ $\leq$ $c_t\log(n)$ such that $G-N_G[X]$ has a rigid extended strip decomposition, $(H, \eta)$, where no particle of $(H, \eta)$ has over $|G|/2$ vertices. Furthermore, this subroutine runs in time polynomial time. 

Additionally, for any induced subgraph $G'$ of $G$, we define the extended strip decomposition {\em inferred by $(X,G')$}, call it $(H', \eta')$. For each component, $C$, of $G'$ that does not contain a vertex of $N^{G'}_{G}[X]$, $H'$ contains an isolated copy $H_c$ of $H$, and for all vertices, edges, and triangles, ${\cal R}_c$, of $H_c$ let ${\cal R}$ be the corresponding vertex, edge, or triangle in $H$; we set $\eta'({\cal R}_c)$ = $\eta({\cal R}) \cap C$. For each component $C^*$ of $G'$ that contain at least one vertex of $N^{G'}_{G}[X]$, $H'$ contains an isolated vertex $v_{c^*}$ and $\eta'(v_{c^*})$ = $C^*$. 
\end{definition}

It follows from the definition of extended strip decomposition that $(H', \eta')$ is a valid extended strip decomposition of $G'$. Note that the extended strip decomposition inferred by $(X, G')$ can be computed in polynomial time since we have access to the extended strip decomposition, $(H, \eta)$, by Theorem~\ref{thm:ICALP:weight} and since $H$ has only $\Oh(n)$ vertices (because it is rigid, see the discussion after Theorem~\ref{thm:ICALP:weight}), $H'$ has $n^{\Oh(1)}$ vertices and therefore $n^{\Oh(1)}$ particles. Furthermore, note that for any particle $P$ of $(H', \eta')$ either $P$ is equal so some component $C^*$ that contains at least one vertex of $N^{G'}_{G}[X]$ (when $P = \eta'(v_{c^*})$) or $P$ is equal to to $P' \cap C$ where $P'$ is a particle of $(H, \eta)$ and $C$ is a component of $G'$ that does note contain any vertices of $N^{G'}_{G}[X]$. This leads to the next two observation. But we first give one additional definition related to extended strip decompositions which is meant to capture when each particle of our extended strip decomposition is ``small enough'' so that we make enough progress when we recursively call the algorithm on each particle.

Let $t$ be a positive integer, $G$ an $\Sttt$-free graph, $N$ a natural number, and $(H, \eta)$ an extended strip decomposition of $G$. We say that $(H, \eta)$ is an {\em $N$-good} extended strip decomposition of $G$ if no particle of $(H, \eta)$ has over $(1-\frac{1}{32c_t^2\log^2(N)})N$ vertices of $G$. Note that these are the same constants used in the definition of $\rel$, the reason for this will become apparent in Observation~\ref{obs:bbs or good esd}.

The first observation follows from the fact that the size of the largest particle of the extended strip decomposition inferred by $(X,G')$ is bounded by the size of the largest component of $G'$.

\begin{observation}\label{obs:no large componet give good esd}
Let $t$ be a positive integer, $N$ a natural number, $G$ an $\Sttt$-free graph, $G'$ an induced subgraph of $G$, and $X \subseteq V(G)$. If no component of $G'$ has over $(1-\frac{1}{32c_t^2\log^2(N)})N$ vertices then the extended strip decomposition inferred by $(X, G')$ is $N$-good.
\end{observation}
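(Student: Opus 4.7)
The plan is to unpack the definition of the inferred extended strip decomposition $(H', \eta')$ and observe that, by construction, every particle is contained in a single connected component of $G'$. Hence the hypothesis on component sizes directly bounds particle sizes.

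Concretely, I would first recall that according to \cref{def:esd bs generation}, the host graph $H'$ is formed from two kinds of pieces: (i) for each component $C^*$ of $G'$ that meets $N_G^{G'}[X]$, an isolated vertex $v_{C^*}$ with $\eta'(v_{C^*}) = C^*$; and (ii) for each component $C$ of $G'$ disjoint from $N_G^{G'}[X]$, an isolated copy $H_C$ of $H$ whose sets $\eta'(\cdot)$ are obtained by intersecting the corresponding sets of $\eta$ with $C$. I would then invoke the observation already noted after \cref{def:esd bs generation}: any particle $P$ of $(H', \eta')$ is either $P = \eta'(v_{C^*}) = C^*$ for some component $C^*$ as in~(i), or $P = P' \cap C$ for some particle $P'$ of $(H, \eta)$ and some component $C$ as in~(ii).

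In the first case $|P| = |C^*|$ and in the second case $|P| \le |C|$, so in both cases $|P|$ is bounded by the size of some connected component of $G'$. Applying the hypothesis that no component of $G'$ has more than $(1 - \tfrac{1}{32 c_t^2 \log^2(N)}) N$ vertices finishes the argument, since this is precisely the threshold required for $(H', \eta')$ to be $N$-good. There is no real obstacle here: the statement is essentially a bookkeeping consequence of how the inferred decomposition was defined, and no use of $\Sttt$-freeness or of the properties of $(H,\eta)$ beyond its existence is needed.
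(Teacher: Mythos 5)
Your proof is correct and follows the same route as the paper: both rely on the observation made immediately after \cref{def:esd bs generation} that every particle of the inferred decomposition is either a component $C^*$ of $G'$ meeting $N_G^{G'}[X]$ or an intersection $P'\cap C$ with a component $C$, so particle sizes are bounded by component sizes. Your write-up simply makes this bookkeeping more explicit than the paper's one-line remark.
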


\begin{observation}\label{obs:bbs or good esd}
Let $t$ be a positive integer, $G$ an $n$-vertex $\Sttt$-free graph, $N$ a natural number, $G'$ an induced subgraph of $G$, and $X \subseteq V(G)$ such that $G-N_G[X]$ has an extended strip decomposition, $(H, \eta)$, where no particle contains over $N/2$ vertices. If $\rel_G(G',X, N) = \emptyset$ then either $N_G^{G'}[X]$ is an $\frac{N}{32c_t^2\log^2(N)}$-balanced separator for $G'$ or the extended strip decomposition inferred by $(X, G')$ is $N$-good. 
\end{observation}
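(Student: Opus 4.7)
The plan is to set up a direct dichotomy based on whether $N_G^{G'}[X]$ already meets the size requirement for a balanced separator. Either $N_G^{G'}[X]$ is an $\frac{N}{32c_t^2\log^2(N)}$-balanced separator for $G'$, in which case the first conclusion of the observation holds and we are done, or there exists a component $C$ of $G' - N_G^{G'}[X]$ with $|C| > \frac{N}{32c_t^2\log^2(N)}$. The remainder of the argument handles this second (interesting) case.

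First, I would use the assumption $\rel_G(G', X, N) = \emptyset$ to pin down the structure of $C$. By Definition~\ref{def:relevant}, if any vertex $v \in N_G^{G'}[X]$ had a neighbor in $C$, then $v$ would lie in $\rel_G(G', X, N)$, contradicting emptiness. Hence no vertex of $N_G^{G'}[X]$ is adjacent to $C$, which forces $C$ to be a connected component of $G'$ itself, entirely disjoint from $N_G^{G'}[X]$.

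It then remains to verify that the extended strip decomposition $(H', \eta')$ inferred by $(X, G')$ is $N$-good. Particles of $(H', \eta')$ come in two flavours. For any component $C'$ of $G'$ disjoint from $N_G^{G'}[X]$, each particle of the corresponding copy $H_{c'}$ has the form $P^H \cap C'$ for some particle $P^H$ of $(H, \eta)$ (the five particle combinations from Section~\ref{sec:prelims} commute with intersection by the fixed set $C'$), so its size is at most $|P^H| \leq N/2 \leq (1-\frac{1}{32c_t^2\log^2(N)})N$. The remaining particles are the vertex particles $\eta'(v_{c^*}) = C^*$ for components $C^*$ of $G'$ that meet $N_G^{G'}[X]$; since $C$ contains no vertex of $N_G^{G'}[X]$, any such $C^*$ is distinct from $C$, so
\[ |C^*| \;\leq\; |V(G')| - |C| \;\leq\; N - \tfrac{N}{32c_t^2\log^2(N)} \;=\; \bigl(1-\tfrac{1}{32c_t^2\log^2(N)}\bigr)N, \]
where the last inequality uses $|V(G')| \leq N$, which is the setting in which the observation is invoked in the algorithm.

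The delicate step is the last bound: converting the existence of a single large component outside $N_G^{G'}[X]$ into a uniform cap on every other component of $G'$. This only works because $C$ turns out to be a component of $G'$ itself, not merely of $G' - N_G^{G'}[X]$, and that upgrade is precisely where the $\rel_G(G', X, N) = \emptyset$ hypothesis carries its weight. Everything else is a routine unpacking of Definition~\ref{def:esd bs generation} together with the assumed $N/2$ bound on the particles of $(H, \eta)$.
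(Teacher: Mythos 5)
Your proof is correct and takes essentially the same route as the paper's: dichotomize on whether some component of $G' - N_G^{G'}[X]$ exceeds the threshold, use $\rel_G(G',X,N)=\emptyset$ to promote such a component $C$ to a full component of $G'$, and bound the two kinds of particles of the inferred decomposition (those of the form $P^H\cap C'$ by the $N/2$ bound on $(H,\eta)$, and those equal to a component $C^*$ meeting $N_G^{G'}[X]$ by $|V(G')|-|C|$). You are right to flag that the final step silently uses $|V(G')|\le N$, which is not stated as a hypothesis; the paper's proof makes the same implicit assumption, and it holds wherever the observation is invoked by the algorithm.
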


\begin{proof}
Let $t$, $N$, $G$, $G'$, $X$, and $(H, \eta)$ be as in the statement of the lemma and $(H', \eta')$ be the extended strip decomposition inferred by $(X, G')$. If there are no components of $G'-N_{G}^{G'}[X]$ that contain at least $\frac{N}{32c_t^2\log^2(N)}$ of the vertices of $G$ then we are done (as $N_{G}^{G'}[X]$ would be an $\frac{N}{32c_t^2\log^2(N)}$ balanced separator). So we may assume that there is a component $C$ of $G'-N_{G}^{G'}[X]$ that contains at least $\frac{N}{32c_t^2\log^2(N)}$ vertices of $G$ and $N_{G'}[C] \cap N_{G}^{G'}[X] = \emptyset$ by the assumption that $\rel_G(G',X, N) = \emptyset$. It follows that $C$ is a component of $G'$ and therefore any component of $G'$ that contains at least one vertex of $N_{G}^{G'}[X]$ has at most $(1-\frac{1}{32c_t^2\log^2(N)})N$ vertices. This combined with the note about particles made just after Definition~\ref{def:esd bs generation} (that every particle of $(H', \eta')$ is either a component, $C$, of $G$ that contains at least one vertex of $N_{G}^{G'}[X]$ - which can have size at most $(1-\frac{1}{32c_t^2\log^2(N)})N$ - or a subset of a particle of $(H, \eta)$ - which by assumption has at most $N/2$ vertices) proves the observation.
\end{proof}


\subsection{Preliminary Lemmas}

We will now present a few lemmas that will be useful to have in hand before describing the algorithm. Given a graph $G$ and an extended strip decomposition $(H, \eta)$ for $G$ the following lemma shows that solving independent set on $G$ can be reduced to solving independent set on each particle of $G$. This reduction first appears in \cite{DBLP:conf/soda/ChudnovskyPPT20}, the version we cite here is derived from \cite{ACDR21} (Lemma 5.2).

\begin{lemma}[\cite{DBLP:conf/soda/ChudnovskyPPT20, ACDR21}]\label{lem:particle matching}
Let $G$ be an $n$-vertex graph and let $(H, \eta)$ be an extended strip decomposition of $G$ where $H$ has $N$ vertices. Furthermore, assume that for each particle, $P$, of $(H, \eta)$, we know the weight of a maximum weight independent set of $G[P]$. Then in time polynomial in $n+N$ we can compute the weight of a maximum weight independent set for $G$.
\end{lemma}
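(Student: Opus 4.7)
The plan is to reduce the computation of the maximum weight of an independent set in $G$ to a maximum weight (general) matching problem on an auxiliary weighted multigraph $H^\ast$ derived from $H$, whose edge weights encode the given maximum weight independent sets of the particles of $(H,\eta)$. Since maximum weight matching is solvable in polynomial time by Edmonds' algorithm, and $H^\ast$ will have size polynomial in $n+N$, this yields the claimed running time.

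The key structural observation I would use is that for every $x \in V(H)$, the interfaces $\eta(xy,x)$ for $y \in N_H(x)$ are pairwise complete, so any independent set $I \subseteq V(G)$ meets $\bigcup_{y \in N_H(x)} \eta(xy,x)$ inside at most one such interface. This induces a partial function $\mu : V(H) \to E(H)$ sending $x$ to the unique edge $xy$ with $I \cap \eta(xy,x) \neq \emptyset$, or leaving $x$ unassigned. Treating $\mu$ as a matching-like object, each edge $xy$ is \emph{used} at zero, one, or two of its endpoints, and $I \cap \eta(xy)$ then lies entirely in one of the known particles: $A_{xy}^{xy}$, $A_{xy}^{x}$, $A_{xy}^{y}$, or $A_{xy}^{\perp}$, depending on the usage pattern. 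The vertex particle $\eta(x)$ is consumed exactly when $x$ is unassigned, and triangle particles $\eta(xyz)$ naturally live inside full edge particles $A_{xy}^{xy}$.

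The construction of $H^\ast$ would then attach a pendant opt-out edge to each $x \in V(H)$ weighted by the maximum weight of an independent set in $\eta(x)$, replace each edge $xy \in E(H)$ with several parallel edges whose weights correspond to the four possible particle contributions (with appropriate care to avoid double counting the vertex sets $\eta(x),\eta(y)$), and attach small gadgets around each triangle $xyz$ that enable the contribution from $\eta(xyz)$ precisely when an edge of that triangle is used at both endpoints (noting that $\mu$ forces at most one of the three such configurations around the triangle to occur). A maximum weight perfect matching in $H^\ast$ then encodes a choice of $\mu$ together with optimum independent sets in each selected particle, and its total weight equals the maximum weight of an independent set of $G$. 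The main obstacle is designing the triangle gadgets so that the matching constraints neither over- nor under-count the triangle particle contributions; once that is settled, both directions of the correspondence follow by direct verification, and the entire construction is polynomial in $n+N$.
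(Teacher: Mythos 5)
The paper does not prove this lemma itself; it cites it as Lemma~5.2 of~\cite{ACDR21}, which in turn builds on the reduction from~\cite{DBLP:conf/soda/ChudnovskyPPT20}. Your high-level strategy -- exploit the fact that $\{\eta(xy,x)\}_{y\in N_H(x)}$ are pairwise complete, so an independent set selects at most one interface at each $x\in V(H)$, and encode this selection as a matching problem with gadgets whose edge weights are the known particle optima -- is exactly the approach taken in those references. So the route is the same.

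However, as written the proposal has a genuine gap, and you essentially flag it yourself. The two places where the work actually happens are (i) the triangle gadgets and (ii) avoiding over/under-counting of the vertex sets $\eta(x)$, and you leave both unresolved. The double-counting issue is not a bookkeeping nuisance that can be waved away: the half-edge particle $A_{xy}^x$ and the full edge particle $A_{xy}^{xy}$ both subsume $\eta(x)$, and the full edge particle also subsumes all adjacent triangle sets, so a naive sum of particle optima over the edges selected by $\mu$ will count $\eta(x)$ once for every incident matched edge and will count triangle sets inconsistently depending on which of the three bounding edges (if any) is fully matched. The actual constructions in the cited papers resolve this by a specific gadget architecture (e.g., replacing each host vertex and each triangle with a small constant-size matching gadget so that each $\eta(x)$ and each $\eta(xyz)$ is charged to exactly one gadget edge), and verifying that the resulting matching polytope bijects with the relevant $\mu$-configurations is precisely the content of the proof. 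Saying ``once that is settled, both directions follow by direct verification'' defers the entire technical core; a referee would not accept this as a proof of the lemma, only as a correct proof plan.
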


Let $G$ be a graph and $(H,\eta)$ an extended strip decomposition for $G$ such that for each particle, $P$, of $(H,\eta)$ the weight of a maximum weight independent set of $G[P]$ is known. We use $\matching(H, \eta)$ to denote the output (the weight of a maximum weight independent set of $G$) of running the algorithm of Lemma~\ref{lem:particle matching}.




\begin{lemma}\label{lem: balanced sep or esd}
Let $t$ be an positive integer, $G$ an $n$-vertex $\Sttt$-free graph, $A \subseteq V(G)$, and $i \leq \log(n)$ a natural number. Either $G$ contains a set $C$ such that $N_G[C]$ is an $(|A|/2^i)$-balanced separator for $(G, A)$ and $|C| \leq (c_t)(70)2^{i+1}\log(n)$ or $G$ has a rigid extended strip decomposition, $(H, \eta)$, such that no particle contains over $(1-1/2^{i+2})|A|$ vertices of $A$. Furthermore, either $C$ or $(H, \eta)$ can be found in polynomial time.
\end{lemma}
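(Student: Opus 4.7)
The plan is to induct on $i$, using Lemma~\ref{lem:esd} (and its companion Lemma~\ref{lem:substantial-particle}) as the workhorse. For the base case $i = 0$, taking $C = \emptyset$ suffices: every component of $G$ has at most $|A| = |A|/2^0$ vertices of $A$. For the inductive step, I first apply the induction hypothesis at level $i-1$. If it returns a rigid ESD with no particle having over $(1-1/2^{i+1})|A|$ vertices of $A$, I return the same ESD, since $1 - 1/2^{i+1} < 1 - 1/2^{i+2}$ means it also satisfies the ESD alternative at level $i$. Otherwise I have a core $C'$ with $|C'| \leq c_t \cdot 70 \cdot 2^i \log n$ such that every component of $G - N_G[C']$ contains at most $|A|/2^{i-1}$ vertices of $A$.

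Let $K_1, \ldots, K_m$ enumerate the components of $G - N_G[C']$ with more than $|A|/2^i$ vertices of $A$; since these $A$-subsets are pairwise disjoint and each exceeds $|A|/2^i$, we have $m \leq 2^i$. For each heavy $K_j$, I run a linear iteration: maintain a current heavy piece $K''$ (starting with $K_j$, whose $A$-weight is at most $|A|/2^{i-1} = 2|A|/2^i$) and apply Lemma~\ref{lem:esd} to $K''$ with weight function the indicator of $A \cap V(K'')$. Outcome~1 is impossible by $\Sttt$-freeness. Outcome~2 directly provides a core of size at most $c_t \log n$ whose $K''$-neighborhood is a $0.99\w(K'')$-balanced separator. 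Outcome~3 yields a rigid ESD of $K''$ with maximum particle weight at most $0.5\w(K'')$, from which Lemma~\ref{lem:substantial-particle} applied with $\delta = 0.01$ (and a fallback balanced-cut argument on the host graph via Lemma~\ref{lem:sepH2G} in the no-substantial-particle edge case) extracts a comparably dominated $0.99\w(K'')$-balanced separator. I add the produced core to the running $C$, update $K''$ to the (at most one) heavy resulting sub-piece, and repeat until $K''$ is non-heavy.

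The key observation pinning down the constants is that at every step the current piece $K''$ has at most $2|A|/2^i$ vertices of $A$, so two heavy sub-pieces (each with more than $|A|/2^i$) would together exceed this total, forcing at most one heavy sub-piece per step. Hence the iteration is truly linear, and since weight shrinks by a factor of at most $0.99$ per step while the required reduction is a factor of $2$, the depth is bounded by $\lceil \log_{100/99} 2 \rceil \leq 70$. Each $K_j$ therefore contributes at most $70\, c_t \log n$ new vertices to $C$; summing over at most $2^i$ heavy components and adding $|C'|$ yields $|C| \leq c_t \cdot 70 \cdot 2^i \log n + 70 \cdot 2^i \cdot c_t \log n = c_t \cdot 70 \cdot 2^{i+1} \log n$, as claimed. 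The total number of calls to Lemma~\ref{lem:esd} is polynomial in $n$, so the algorithm runs in polynomial time.

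Local-to-global correctness follows because each $K_j$ is an induced subgraph of $G$ disjoint from $N_G[C']$, so $N_G[X] \cap V(K_j) = N_{K_j}[X]$ for any $X \subseteq V(K_j)$ added inside $K_j$; local balanced separators in $K_j$ thus combine with $C'$ into the required global balanced separator for $(G, A)$. The main technical hurdle I anticipate is a clean treatment of outcome~3 in the degenerate no-substantial-particle case, where no single particle is even a constant fraction of $\w(K'')$: here one must argue that a balanced separation of constant order exists in the host graph $H$ of the ESD, and lift it via Lemma~\ref{lem:sepH2G} to a dominated balanced separator of $K''$ with the same quality bound.
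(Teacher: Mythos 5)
Your outer framework (induction on the level $i$, and a linear $\leq 70$-step refinement driven by the factor-$0.99$ balanced separators of Lemma~\ref{lem:esd}) closely parallels the paper's two-stage argument, and your accounting of the constants is correct. But there is a genuine gap in how you handle outcome~(3) of Lemma~\ref{lem:esd}, and it cannot be repaired along the lines you sketch.

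You apply Lemma~\ref{lem:esd} to the \emph{subgraph} $K''$ (with the weight function supported on $A\cap V(K'')$), and when outcome~(3) returns a rigid extended strip decomposition of $K''$, you try to \emph{always} convert it into a dominated $0.99\w(K'')$-balanced separator of $K''$ --- via Lemma~\ref{lem:substantial-particle} when some particle is a constant fraction of $\w(K'')$, and via a ``balanced cut in the host $H$ lifted through Lemma~\ref{lem:sepH2G}'' in the ``no-substantial-particle'' case. This last step is false: the host graph $H$ of the ESD need not have any balanced separation of small order. For instance, $K''$ could be the line graph of a large $3$-regular expander $H$; the natural rigid ESD has all particles of weight $O(1)$ (so Lemma~\ref{lem:substantial-particle} is inapplicable), and $H$ has no balanced separator of bounded order, so Lemma~\ref{lem:sepH2G} cannot produce a dominated balanced separator. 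This is not a technicality you anticipated and can patch --- if every $S_{t,t,t}$-free graph admitted dominated balanced separators in this way, one would never need the ESD outcome at all, contradicting the whole reason extended strip decompositions appear in the first place.

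The paper sidesteps this by never attempting the conversion: at every stage it applies Lemma~\ref{lem:esd} to the \emph{entire} graph $G$ with a weight function that places weight $1$ on $X_A = X\cap A$ for the heavy component $X$ and $0$ elsewhere, so that outcome~(3) directly yields a rigid ESD \emph{of $G$}. Since $|X_A|\geq |A|/2^{j+1}$ and no particle of this ESD contains more than half of $X_A$, no particle contains more than $(1-1/2^{j+3})|A|\leq (1-1/2^{i+2})|A|$ vertices of $A$, so the ESD can be returned verbatim as the second alternative of the lemma. Note also that even if you could extract a separator from the ESD of $K''$, that still wouldn't fix a secondary problem with your approach: an ESD of $K''$ is not an ESD of $G$, and there is no generic way to extend it to one once $N_G[C']$ has been removed. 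The correct fix for your proposal is therefore to invoke Lemma~\ref{lem:esd} on $G$ (with the restricted weight function) rather than on $K''$, and to return the ESD immediately when outcome~(3) occurs --- which recovers the paper's argument.
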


\begin{proof}
Let $t$, $G$, $n$, and $A$ be as in the statement of this lemma. We first claim that either $G$ contains a set $C$ such that $N_G[C]$ is an $(|A|/2)$-balanced separator for $(G, A)$ and $|C| \leq 70c_t\log(n)$ or $G$ has an extended strip decomposition, $(H, \eta)$, such that no particle contains over $(1-1/4)|A|$ vertices of $A$ and either $C$ or $(H, \eta)$ can be found in polynomial time. 

In order the prove this we will consider a process consisting of at most 70 step. At the $j^{th}$ step we will assume we have a set $C_j$ such that $N_G[C_j]$ is an $(|A|\cdot 0.99^j)$-balanced separator for $(G,A)$ and $|C| \leq c_tj\log(n)$ (we have $C_0 = \emptyset$ for the base case). Given such a  $C_j$, we show how to find $C_{j+1}$ or find a rigid extended strip decomposition $(H, \eta)$ of $G$ such that no particle has over $(1-1/4)|A|$ vertices of $A$. If $N_G[C_j]$ is already an $|A|/2$-balanced separator for $(G,A)$ then we are done, so assume this does not happen, let $X$ be the component of $G-N_G[C_j]$ that contains over half the vertices of $A$ and let $X_A = X \cap A$. 

We apply Lemma~\ref{lem:esd} to $G$ where all vertices of $X_A$ have weight 1 and all other vertices have weight 0. Outcome $(1)$ cannot occur as $G$ is $\Sttt$-free. If outcome $(2)$ occurs, then we get a set $X_C$ such that $N_G[X_C]$ is an $(|X_A|\cdot 0.99)$-balanced separator for $(G,X_A)$ and $|X_C| \leq c_t\log(n)$. We set $C_{j+1} = C_j \cup X_C$, it holds then that $N_G[C_{j+1}]$ is an $(|A|\cdot 0.99^{j+1})$-balanced separator for $(G,A)$ and $|C_{j+1}|$ $\leq$ $|C_t| + |X_c| \leq c_t(j+1)\log(n)$, as desired. If outcome $(3)$ occurs then we get a rigid extended strip decomposition $(H, \eta)$ for $G$ such that no particle of $(H, \eta)$ contains over half of $X_A$. Since $|X_A| \geq |A|/2$ it follows that no particle of $(H, \eta)$ contains over $(1-1/4)|A|$ vertices of $A$. Since $0.99^{70} < .5$ this process must end by the $70^{th}$ step. Since Lemma~\ref{lem:esd} runs in polynomial time, this process runs in polynomial time.

We now prove the full statement of this lemma in a similar manner. Fix some natural number $i \leq \log(n)$. 
In order the prove this we will consider a process consisting of at most $i$ step. At the $j^{th}$ step, $j < i$, we will assume we have a set $C_j$ such that $N_G[C_j]$ is an $|A|/2^j$-balanced separator for $(G,A)$ and $|C| \leq 70c_t\cdot 2^{j+1}\log(n)$ (we have $C_0 = \emptyset$ for the base case). So, assume $C_j$ satisfies these properties, we show how to find $C_{j+1}$ or find a rigid extended strip decomposition $(H, \eta)$ of $G$ such that no particle has over $(1-1/2^{j+2})|A|$ vertices of $A$ (we have $C_0 = \emptyset$ for the base case). 

Consider each component, $X$, of $G-N_G[C_j]$ that contain at least $|A|/2^{j+1}$ vertices of $|A|$, there are at most $2^{j+1}$ such components, set $X_A = X \cap A$. For each $X$ and corresponding $X_A$ we apply the claim from the first paragraph of this proof to $G$ where all vertices of $X_A$ have weight 1 and all other vertices have weight 0. The first possibility is for each $X$ and $X_A$ we get a set $X_C$ such that $N_G[X_C]$ is an $|X_A|/2$-balanced separator for $(G,X_A)$ and $|X_C| \leq 70c_t\log(n)$. Then we set $C_{j+1} = C_j \cup \bigcup\limits_{X} X_C$, it holds then that $N_G[C_{j+1}]$ is an $(|A|/2^{j+1})$-balanced separator for $(G,A)$ and $$|C_{j+1}| \leq |C_j| + \sum_{X} |X_C|\leq70c_t\cdot 2^{j+1}\log(n) + 70c_t\cdot 2^{j+1}\log(n)=70c_t\cdot 2^{j+2}\log(n),$$ as desired. The other possibility is that for at least one $X$ we get a rigid extended strip decomposition $(H, \eta)$ for $G$ such that no particle of $(H, \eta)$ contains over half of $X_A$. Since $|X_A| \geq |A|/2^{j+1}$ it follows that no particle of $(H, \eta)$ contains over $(1-1/2^{j+3})|A|$ $\leq$ $(1-1/2^{i+2})$ vertices of $A$ (since $j < i$).

Repeating this $i \leq \log(n)$ times (or until we get a desired extended strip decomposition) then yields the result. Since each step applies Lemma~\ref{lem:esd} less than $n$ time  and  Lemma~\ref{lem:esd} runs in polynomial time and there are at most $\log(n)$ steps, this process runs in polynomial time.
\end{proof}

\subsubsection{Cannot Pack Many Balanced Separators}

Recall the following notation from Section~\ref{sec:prelims}: Let $G$ be a graph, $G'$ an induced subgraph of $G$, and $X \subseteq$ $V(G)$. We define $N_G^{G'}[X]$ to mean $N_G[X] \cap V(G')$. Furthermore, let ${\cal F} = \{F_1, F_2, \ldots, F_k\}$ be a list of vertex sets of $G$. Then $N_G^{G'}[{\cal F}] = \{N_G^{G'}[F_1], N_G^{G'}[F_2], \ldots, N_G^{G'}[F_k]\}$. 

\begin{lemma}\label{lem:cant pack strong bs}
Let $t$ be a positive integer, $G$ an $n$-vertex $\Sttt$-free graph, $N \leq |G|$ a natural number, $G'$ an induced subgraph of $G$, $X \subseteq V(G)$, and ${\cal F}$ a list of $G$. Assume that $\rel_G(G',X, N) \neq \emptyset$ and that all sets of $N_{G}^{G'}[{\cal F}]$ are $\frac{|\rel_G(G',X,N)|}{100c_t^2\log^2(N)|X|}$-balanced separators for $(G',$ $\rel_G(G'$,$X,N))$ such that no vertex of $G'$ belongs to over $\bar{c}$ sets of $N_{G}^{G'}[{\cal F}]$ for some positive integer $\bar{c}$. Assume $|{\cal F|} \geq 10t\bar{c}$. Then $G$ contains an induced $\Sttt$.
\end{lemma}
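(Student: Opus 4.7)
The plan is to combine pigeonhole on $X$ with a probabilistic averaging argument akin to the proof of Lemma~\ref{lem:cant pack bs:intro}, followed by a direct construction of an induced $S_{t,t,t}$ centered at a vertex of $X$.

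First I would apply pigeonhole on $X$. Since $R := \rel_G(G', X, N) \subseteq N_G^{G'}[X]$, every vertex of $R$ is either in $X$ or has a neighbor in $X$. Hence there exists $x^* \in X$ such that the set $R^* := R \cap N_G[x^*]$ has size at least $|R|/|X|$; every vertex of $R^*$ is then $x^*$ itself or adjacent to $x^*$ in $G$. The factor $|X|$ appearing in the balanced-separator threshold is tailored precisely to absorb this pigeonhole loss.

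Next, I would pick $u_1,u_2,u_3$ uniformly and independently at random from $R^*$, and for each $F \in \mathcal{F}$ estimate the probability of the bad event that some $u_i$ lies in $N_G^{G'}[F]$ or that some two $u_i,u_j$ lie in the same component of $G' - N_G^{G'}[F]$. Summed over $F \in \mathcal{F}$, the first contribution is at most $3\bar{c}$ by the degree constraint on $\mathcal{F}$, and the second is at most $3|\mathcal{F}|/(100 c_t^2 \log^2(N))$ by the standard bound $\sum_C (|C \cap R^*|/|R^*|)^2 \le \max_C |C\cap R^*|/|R^*|$ applied with the threshold $|R|/(100 c_t^2 \log^2(N)|X|) \le |R^*|/(100 c_t^2 \log^2(N))$. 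Since $|\mathcal{F}| \ge 10t\bar{c}$ these two quantities together are strictly smaller than $|\mathcal{F}|$, so there exist distinct $u_1,u_2,u_3 \in R^*$ and a subfamily $\mathcal{F}' \subseteq \mathcal{F}$ of size at least $(10t - 4)\bar{c}$ (say) such that, for every $F \in \mathcal{F}'$, the three vertices $u_1,u_2,u_3$ avoid $N_G^{G'}[F]$ and lie in three pairwise distinct components of $G' - N_G^{G'}[F]$.

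Finally I would build the $S_{t,t,t}$ with center $x^*$. By definition of $R$, each $u_i$ has a neighbor $w_i$ in a component $C_i$ of $G' - N_G^{G'}[X]$ with $|C_i| > N/(32 c_t^2 \log^2(N))$. Distinct components of $G' - N_G^{G'}[X]$ are anti-complete in $G$, and no vertex of $C_i$ lies in $N_G[X]$, so the $C_i$'s are anti-complete to $\{x^*\}$; a small additional argument (using that a common $C$ would place all $u_i$ in the same component of $G'-N_G^{G'}[F]$ for every $F$ disjoint from $C$, contradicting goodness) ensures that $C_1,C_2,C_3$ are pairwise distinct. It remains to produce, for each $i$, an induced path of length $t-1$ in $G[C_i]$ starting at $w_i$; concatenated with the edges $x^*u_i$ and $u_iw_i$ these yield three pairwise anti-complete induced paths of length $t$ meeting at $x^*$, which is precisely an induced $S_{t,t,t}$. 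This last step is the technical heart of the argument and the main obstacle: I would force such a path by exploiting the many good separators in $\mathcal{F}'$, which guarantee that any induced path in $G'$ from $u_i$ to $u_j$ has length at least $|\mathcal{F}'|/\bar{c} \ge 10t - 4$, so either the required induced path inside $C_i$ already exists, or the interaction of these long induced paths with the boundary $N_G^{G'}[X]$ of $C_i$ produces the $S_{t,t,t}$ directly via an argument in the spirit of Lemma~\ref{lem:clawAtv}.
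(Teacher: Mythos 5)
Your opening two steps (pigeonhole to find $x^*\in X$ with many $\rel$-neighbours, then a probabilistic averaging argument to find a triple that many of the $N_G^{G'}[F]$'s simultaneously separate) are sound and close in spirit to what the paper does. But there is a gap at exactly the step you flag as ``the technical heart'': you sample $u_1,u_2,u_3$ from all of $R^*=\rel_G(G',X,N)\cap N_G[x^*]$, so nothing constrains the three vertices to be adjacent to the \emph{same} large component of $G'-N_G^{G'}[X]$. The paper avoids this by first choosing a single large component $C$ (one maximizing $|N_{G'}[C]\cap \rel|$; there are at most $32c_t^2\log^2(N)$ candidates, so this only costs another logarithmic factor already built into the constants) and only then applying the pigeonhole and the averaging inside $C^{*,x}=N_G^{G'}[x]\cap N_{G'}[C]\cap\rel$. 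Having all three chosen vertices adjacent to a common connected $C$ is what makes the leg construction work: one takes a \emph{shortest} path between each pair among $a,b,c$ with all interior vertices in $C$ — such a path exists because $C$ is connected and all of $a,b,c$ have a neighbour in $C$ — and since the separator family forces any $a$-to-$b$ path in $G'$ to have more than $2t$ vertices, the first $t$ vertices of each such path give the legs, which are automatically pairwise anti-complete (otherwise a short $a$-to-$b$ path would exist).

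Your attempt to argue that the components $C_1,C_2,C_3$ must be pairwise distinct is actually aimed in the wrong direction, and its justification is false: $N_G^{G'}[F]$ need not be disjoint from a common $C$ — in fact the whole mechanism relies on the separators cutting \emph{through} $C$ and splitting $a,b,c$ apart. And if the $C_i$ really were distinct, the ``long $u_i$-to-$u_j$ paths'' you invoke would not live inside any single $C_i$ (a $u_i$-to-$u_j$ path must exit $C_i$ to reach $C_j$), so their length says nothing about how far you can walk from $w_i$ inside $C_i$; no argument is offered for why a length-$t$ induced path rooted at $w_i$ exists in $C_i$. The invocation of Lemma~\ref{lem:clawAtv} does not help here — that lemma is about walls in a host graph of an extended strip decomposition, a completely different setting. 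In short: you need to first fix a single large component $C$ and run the averaging argument over $\rel$-vertices adjacent to that $C$; without that restriction the final construction does not go through.
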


\begin{proof}
Let $t, G, G', N, X$, ${\cal F}$, and $\bar{c}$ have the same meaning as in the statement of this lemma. Among all components of $G'-N_{G}^{G'}[X]$ that have at least $\frac{|G'|}{32c_t^2\log^2(N)}$ vertices, let $C$ denote the one such that the size of $C^* = N_{G'}[C] \cap \rel_G(G',X,N)$ is maximized. Since there are at most $32c_t^2\log^2(N)$ components of $G'-N_{G}^{G'}[X]$ that have at least $\frac{|G'|}{32c_t^2\log^2(N)}$ vertices and by definition all vertices of $\rel_G(G',X, N)$ have at least one neighbor in a component of $G'-N_{G}^{G'}[X]$ of size at least $\frac{|G'|}{32c_t^2\log^2(N)}$, it holds that $|C^*| \geq \frac{|\rel_G(G',X,N)|}{32c_t^2\log^2(N)}$. Next for all vertices in $X$ let $x$ be one such that the size of $C^{*,x} = N_{G}^{G'}[x] \cap C^*$ is maximized, since $N_G^{G'}[X]$ dominates $C^*$ it holds that $|C^{*,x}| \geq \frac{|C^*|}{|X|} \geq \frac{|\rel_G(G',X,N)|}{32c_t^2\log^2(N)|X|}$ $>$ $0$ (the last inequality following from the assumption $\rel_G(G',X, N) \neq \emptyset$).

\begin{claim}\label{claim:good sublist}
There exist a triple of vertices $a, b, c \in C^{*,x}$ and a subfamily of ${\cal F}$, call it ${\cal F}^*$,
of size greater than $2t\bar{c}$, such that no two vertices among $a, b,$ and $c$ are in the same component in $G-N^{G'}_G[F^*]$ for all $F^* \in {\cal F}^*$ (but possibly with some or all of $a,b,$ and $c$ belonging to $F^*$).
\end{claim}

In order to prove Claim \ref{claim:good sublist}, we let $F \in {\cal F}$ and let $a,b,c$ be three independently and uniformly at random (with replacement) chosen vertices of $C^{*,x}$ (so it is possible that, for instance, $a$ = $b$). We first calculate the probability that no two vertices among $a,b,$ and $c$ belong to the same component in $G' - N_{G}^{G'}[F]$. 
Since $|C^{*,x}| \geq \frac{|\rel_G(G',X,N)|}{32c_t^2\log^2(N)|X|}$, $N_{G}^{G'}[F]$ is a  $\frac{|\rel_G(G',X,N)|}{100c_t^2\log^2(N)|X|}$-balanced separator for $(G'$, $\rel_G(G',X,N))$, and $C^{*,x} \subseteq \rel_G(G',X,N)$, we have that $N_{G}^{G'}[F]$ is a $\frac{|C^{*,x}|}{3}$-balanced separator for $(G', C^{*,x})$. So, since no component of $G-N_{G}^{G'}[F]$ has over $\frac{|C^{*,x}|}{3}$ vertices of $C^{*,x}$ there is at least a $\frac{2}{3}$ probability that $a$ and $b$ do not belong to the same component, either because $a$ and $b$ are in different components of $G-N_{G}^{G'}[F]$ or at least one of $a$ and $b$ is in $F$. Furthermore, conditioned on $a$ and $b$ not being in the same component of $G-N_{G}^{G'}[F]$, there is at least a $\frac{1}{3}$ probability that $c$ is not in the same component as $a$ or $b$. It follows that there is at least a $\frac{2}{3} \cdot \frac{1}{3} = \frac{2}{9}$ probability that no two vertices among $a,b,$ and $c$ belong to the same component in $G' - N_{G}^{G'}[F]$ (again, possibly with some or even all of $a,b,$ and $c$ belong to $N_{G}^{G'}[F]$).

Hence if $X_F$ represents the random variable that is 1 if no two of the independently and uniformly at random chosen $a,b,c \in C^{*,x}$ (with replacement) are in the same component in $G' - N_{G}^{G'}[F]$ and 0 otherwise, the expected value $\mathbb{E}[X_F] \geq \frac{2}{9}$. Then by the linearity of expectation, we have that $\mathbb{E}[\sum_{F \in {\cal F}} X_F]$ $\geq$ $\frac{2}{9} \cdot 10t\bar{c} > 2t\bar{c}$. Thus, there must exists a triple, $a,b,c \in C^{*,x}$, such that for a subset of ${\cal F}$, call it ${\cal F}^*$, of size greater than $2t\bar{c}$, no two of $a,b$, and $c$ are in the same component in $G' - N_{G}^{G'}[F^*]$ for all $F^* \in {\cal F}^*$. This completes the proof of Claim \ref{claim:good sublist}.

Now, let $a, b, c \in C^{*,x}$ and ${\cal F}^*$ be as in the statement of Claim \ref{claim:good sublist}. We have that for any path $P$ in $G'$ with $a$ and $b$ as its endpoints must have over $2t$ vertices because for all $F^* \in {\cal F}^*$ $N_{G}^{G'}[F^*] \cap P \neq \emptyset$ (or else $a$ and $b$ would be in the same component of $G' - N_{G}^{G'}[f^*]$) and if $P$ had at most $2t$ vertices, since $|{\cal F}^*|>2t\bar{c}$, that would force some vertex of $P$ to belong to over $\bar{c}$ sets in $N_G^{G'}[{\cal F}^*]$, contrary to assumption. Similarly, all paths with $a$ and $c$ or $b$ and $c$ with its endpoint must have over $2t$ vertices as well.

Now we show there exists three anti-complete induced paths, $P_a$, $P_b$ and $P_c$, each with $t$ vertices such that $a,b,$ and $c$ are one of the endpoints of $P_a$, $P_b$ and $P_c$ respectively, and all other vertices of these paths belong to $C$ (recall from the first paragraph of this proof that $C$ is the component of $G'-N_{G}^{G'}[X]$ such that $C^* = N_{G'}[C] \cap \rel_G(G',X,N)$. Since $C^{*,x} \subseteq C^*$, all vertices of $C^{*,x}$, which includes $a,b,$ and $c$, have a neighbor in $C$). To locate $P_a$, take a shortest path from $a$ to $b$ with all internal vertices in $C$ (since $a$ and $b$ both have neighbors in $C$ and $C$ is connected, such a path exists). By the previous paragraph this path must have at least $2t$ vertices, so let $P_a$ be the first $t$ vertices of this path, so $P_a$ is a path with $t$ vertices such that $a$ is one endpoint of the path and all other vertices are in $C$. Identical arguments show there are induced paths $P_b$ and $P_c$ which have $b$ and $c$ as their endpoints, respectively, and all other other vertices are in $C$. Furthermore, if $P_a$, $P_b$ and $P_c$ were not anti-complete, then that would imply that there exists paths between two vertices of $\{a,b,c\}$ with at most $2t$ vertices, which contradicts the conclusion of the previous paragraph.

Lastly, note that since $C$ is a component of $G'-N_{G}^{G'}[X]$ and $x \in X$, $x$ has no neighbors in $C$. So, since $x$ is neighbors with $a$, $b$, and $c$ and $P_a$, $P_b$ and $P_c$ are anti-complete, $x$ along with $P_a$, $P_b$ and $P_c$ form an $\Sttt$.
\end{proof}

\subsubsection{Cannot Pack Many Boosted Balanced Separators}\label{sec:cannot pack bbs}


\begin{lemma}\label{lem:cant pack bbs}
Let $G$ be a graph and $G'$ be an induced subgraph of $G$, and $s, t, c$ be positive integers. If there exists a list ${\cal F}$ of $s$-boosted balanced separators of $G'$ originating in $G$ such that
$|{\cal F}| \geq 80 \cdot s\cdot t\cdot c$, and no vertex of $G'$ belongs to over $c$ sets of ${\cal F}$,
then $G$ contains an $\Sttt$.
\end{lemma}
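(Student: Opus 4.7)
The plan is to adapt the probabilistic argument used in the proof of Lemma~\ref{lem:cant pack bs:intro} to produce an induced $\Sttt$ rather than just a long induced path. Let $C$ be a largest connected component of $G'$. I first dispense with the degenerate case $|C|<16s^2$: the boosted definition then forces every component of $G'-F$ to have fewer than one vertex, so $F=V(G')$ for each $F\in\mathcal{F}$; combined with the multiplicity bound this gives $c\geq|\mathcal{F}|\geq 80stc$, a contradiction. I may therefore assume $|C|$ is large compared to $s^2$.

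Next, I discard the ``large'' separators with $|F\cap C|>|C|/10$: by $\sum_{F\in\mathcal{F}}|F\cap C|\leq c|C|$ (which follows from the multiplicity bound) there are at most $10c$ of them, leaving at least $70stc$ ``small'' separators. Sampling $a,b,c\in C$ independently and uniformly at random, for each small $F$ the boosted inequality $\sum_K (|K\cap C|/|C|)^2\leq 1/(16s^2)$ over components $K$ of $G'-F$, combined with the bound on $|F\cap C|$, gives that $\Pr[a,b,c\text{ lie in three pairwise distinct components of }G'-F]$ is at least a positive constant, say $1/2$. Linearity of expectation then yields a deterministic triple $(a,b,c)\in C^3$ and a subfamily $\mathcal{F}'\subseteq\mathcal{F}$ of size at least $35stc$ such that for every $F\in\mathcal{F}'$ the vertices $a,b,c$ lie in three distinct components of $G'-F$.

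The separation forces pairwise distances $d_{G'}(a,b),d_{G'}(a,c),d_{G'}(b,c)\geq|\mathcal{F}'|/c\geq 35st$, since any induced path between two of $\{a,b,c\}$ must intersect every $F\in\mathcal{F}'$ while each vertex lies in at most $c$ of those separators. I then take a shortest induced tree $T\subseteq G'[C]$ with $a,b,c$ as leaves; by the long distances, $T$ is a subdivided claw with Steiner point $v$ and three branches $P_a,P_b,P_c$ (the ``path'' case, with one of $a,b,c$ as an internal vertex, is reduced to the claw case by branching via an off-tree neighbor of that internal vertex). For each $F\in\mathcal{F}'$ with $v\notin F$, exactly one of $a,b,c$ lies in $v$'s component of $T-F$, inducing a partition of those separators into $\mathcal{F}'_a\cup\mathcal{F}'_b\cup\mathcal{F}'_c$. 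Counting the contributions of each $F$ to the branches it must cut gives $c|V(P_X^*)|\geq|\mathcal{F}'_Y|+|\mathcal{F}'_Z|$ whenever $\{X,Y,Z\}=\{a,b,c\}$; combined with $|\{F\in\mathcal{F}': v\in F\}|\leq c$, this forces all three branches to have length at least $t$ whenever the partition $(|\mathcal{F}'_a|,|\mathcal{F}'_b|,|\mathcal{F}'_c|)$ is not too skewed, producing the desired induced $\Sttt$ inside $G'[C]$.

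The hardest step, which I expect to be the main obstacle, is the unbalanced case: if one of $|\mathcal{F}'_a|,|\mathcal{F}'_b|,|\mathcal{F}'_c|$ dominates (say $|\mathcal{F}'_a|$), then the lower bound on $|P_a^*|$ becomes weak and $P_a$ may be short. In that regime $v$ is close to $a$ but far from $b$ and $c$ by the triangle inequality applied to the huge pairwise distances. To produce a third anti-complete leg at $v$, I will extend beyond $a$: the large distance from $a$ to every sampled vertex forces a long induced path leaving $a$ in a direction away from $v$, and this path is anti-complete to $P_b$ and $P_c$ because any cross-edge would shortcut $T$ (contradicting its minimality) or violate the separator-based distance bounds. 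This re-rooting step, together with the boosted property and the lower bound on $|C|$, is what makes the proof ``a bit more involved'' than its $P_k$-free analogue, as remarked in the introduction.
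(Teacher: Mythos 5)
There is a genuine gap, and it is exactly the one you flagged: the ``unbalanced'' case. Your argument needs all three branches $P_a,P_b,P_c$ of the Steiner claw to have length $\geq t$, but the inequality $c\cdot|V(P_X)| \gtrsim |\mathcal{F}'_Y|+|\mathcal{F}'_Z|$ gives no lower bound on the branch leading to the majority class; nothing in the setup rules out, say, $P_a$ having two vertices while $|\mathcal{F}'_a|\approx|\mathcal{F}'|$. Your proposed fix — re-root by extending past $a$ in the direction away from $v$ — does not follow from what you have established: $a$ is a leaf of a minimal tree and may have no off-tree neighbors at all, and even when it does, the claim that any cross-edge to $P_b$ or $P_c$ ``would shortcut $T$'' is false (a cross-edge near $b$ or $c$ shortcuts nothing and is consistent with $d(a,b),d(a,c)\geq 35st$). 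The boosted property is used only once, to get the initial triple, and after that you are left with ordinary distance bounds that are simply too weak to break the asymmetry. A secondary issue: a ``shortest induced tree with $a,b,c$ as leaves'' need not exist — shortest Steiner trees may have chords, and minimal connected \emph{induced} subgraphs need not be trees. And the claimed partition of $\mathcal{F}'$ is incomplete: when $F$ hits all three branches strictly between $v$ and each leaf, $v$'s component of $T-F$ contains none of $a,b,c$.

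The paper's proof avoids the unbalanced case entirely by two structural changes. First, it samples $r=4s$ vertices $R$ (not three), getting a family $\mathcal{F}_1$ of more than half the separators each of which isolates all $r$ sampled vertices from each other; this is cheap because the boosted bound $|C|/(16s^2)$ makes even $\binom{4s}{2}$ collision events individually rare. Second, it does not center the claw at the Steiner point. It takes an inclusion-minimal connected induced subgraph $G^*\supseteq R$ with spanning tree $T^*$, defines $M$ (leaves plus branch vertices), $M^*=N_{T^*}[M]$, and the $t$-ball $M'$ around $M^*$, and finds a $Y_i\in\mathcal{F}_1$ disjoint from $M'$ (possible because $|M'|=O(tr)$ while $|\mathcal{F}_1|$ is large). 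Since $Y_i$ separates all $r$ vertices of $R$ in $T^*$ but only touches degree-2 segments, $T^*-Y_i$ has at least $r$ components, so some core vertex $x\in X_i$ (with $|X_i|\leq s < r/2$) sees three distinct components of $T^*-M$. The three legs are shortest paths in those components from $N_G[x]$ to $N_{T^*}[M^*]$; they have length at least $t$ automatically because $N_G[x]\subseteq Y_i$ is disjoint from the $t$-ball $M'$, and they are induced and anti-adjacent because all non-tree edges of $G^*$ touch $M^*$, which the paths avoid. There is no balancedness to analyze: the legs are long by construction, not by a counting argument over $\mathcal{F}'$.
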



This subsection is devoted to the proof of Lemma~\ref{lem:cant pack bbs}. Thus, within this subsection we will assume that the premise of Lemma~\ref{lem:cant pack bbs} holds. 
Towards the proof of Lemma~\ref{lem:cant pack bbs}, we set ${\cal F} = \{Y_1, Y_2, \ldots Y_\ell\}$. For every $Y_i$ in ${\cal F}$ we let $X_i$ be a core of $Y_i$ originating in $G$. 
In other words, $Y_i = N_G^{G'}[X_i]$, and $|X_i| \leq s$. Additionally, $C$ is the largest component of $G'$, and $r = 4s$.

\begin{lemma}\label{lem:getSplitSetR}
There exists a set $R \subseteq C$ of size $r$ and a subfamily ${\cal F}_1 \subseteq {\cal F}$ such that
$|{\cal F}_1| > |{\cal F}|/2$, and for every $Y_i \in {\cal F}_1$, each connected component of $G' - Y_i$ contains at most one vertex of $R$.
\end{lemma}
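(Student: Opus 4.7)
The plan is a straightforward probabilistic argument. I may assume $|C| \geq r$, since otherwise $|C| < 4s < 16s^2$ and each component of $G' - Y_i$ would have strictly fewer than one vertex, forcing $Y_i = V(G')$ for every $i$, a degenerate case. Sample $R$ uniformly at random from the $r$-element subsets of $C$, and let ${\cal F}_1$ be the random subfamily of those $Y_i \in {\cal F}$ for which every component of $G' - Y_i$ contains at most one vertex of $R$. The goal is to show $\mathbb{E}[|{\cal F}_1|] > |{\cal F}|/2$, which will yield a deterministic choice with $|{\cal F}_1| > |{\cal F}|/2$ by averaging.

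For a fixed $Y_i$, let $B_i$ count the unordered pairs $\{u,v\} \subseteq R$ with $u, v$ in a common component of $G' - Y_i$, so $Y_i \in {\cal F}_1$ if and only if $B_i = 0$. Since $\Pr[\{u,v\} \subseteq R] = r(r-1)/(|C|(|C|-1))$ for any fixed pair in $C$, linearity of expectation gives
\[
\mathbb{E}[B_i] \;=\; \frac{r(r-1)}{|C|(|C|-1)} \sum_K \binom{|K|}{2},
\]
where $K$ ranges over components of $G' - Y_i$ contained in $C$. Combining the boosted balanced separator bound $|K| \leq |C|/(16s^2)$ with $\sum_K |K| \leq |C|$ gives $\sum_K \binom{|K|}{2} \leq |C|^2/(32s^2)$.

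Substituting $r = 4s$ yields $\mathbb{E}[B_i] \leq \tfrac{4s-1}{8s} \cdot \tfrac{|C|}{|C|-1}$, which a short algebra check shows is strictly less than $1/2$ whenever $|C| > r$. Markov's inequality then delivers $\Pr[Y_i \notin {\cal F}_1] \leq \mathbb{E}[B_i] < 1/2$, and summing over $i$ gives $\mathbb{E}[|{\cal F}_1|] > |{\cal F}|/2$, as required. The main obstacle is ensuring strict inequality: with $r = 4s$ the saving $r(r-1) = 16s^2 - 4s$ is precisely what provides the needed slack, and one should verify that the slight inflation from the $|C|/(|C|-1)$ factor does not eat up this slack, which it does not as long as $|C| > r$; the boundary case $|C| = r$ falls into the degenerate regime handled above.
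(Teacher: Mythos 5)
Your proof is correct and takes essentially the same probabilistic approach as the paper: choose $R$ at random from $C$, bound via a first-moment (union-bound / Markov) argument the probability that some component of $G'-Y_i$ receives two elements of $R$, and conclude by averaging over the random choice. The only difference is that the paper samples the $r$ vertices of $R$ independently with repetition, which keeps the pair-probability bound at a clean $1/(16s^2)$ and avoids both the $|C|/(|C|-1)$ correction factor and the separate treatment of the $|C|\le r$ regime; your without-replacement version, with that boundary regime handled separately, works just as well.
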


\begin{proof}
We pick a tuple $R = v_1, v_2, \ldots, v_r$ of $r = 4s$ vertices from $C$ uniformly at random (with repetition).
Consider an arbitrary $Y_i \in {\cal F}$.
For each choice of $1 \leq p < q \leq r$, the probability that $v_q$ is in the same component of $G' - Y_i$ as $v_p$ is at most $\frac{|C|}{16s^2}$.
The union bound over all choices of $p$, $q$ yields that the probability that no component of $G' - Y_i$ contains at least two vertices of $R$ is at least $1 - {r \choose 2} \cdot \frac{1}{16s^2} > 1/2$.
Define ${\cal F}_1$ to be the family of all $Y_i$'s in ${\cal F}$ such that no component of $G' - Y_i$ contains at least two vertices of $R$.
The expected size of ${\cal F}_1$ is strictly larger than $|{\cal F}|/2$, and there exists at least one choice of $R$ that achieves expectation, proving the statement of the lemma. 
\end{proof}

We will use the following well-known facts about trees, that we will state without proof.

\begin{observation}\label{obs:deg3inTree} A tree with $k$ leaves has at most $k-1$ vertices of degree at least $3$.
\end{observation}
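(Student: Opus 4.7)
The plan is to prove this by a one-shot degree-counting argument using the handshake lemma, which is the standard way to establish this kind of bound for trees. First I would let $T$ be a tree with $k$ leaves and $n$ vertices, and partition $V(T)$ by degree into three classes: the $L = k$ leaves (degree $1$), the $D_2$ vertices of degree exactly $2$, and the $D_{\geq 3}$ vertices of degree at least $3$. So $n = L + D_2 + D_{\geq 3}$, and the quantity we want to bound is $D_{\geq 3}$.

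Next I would apply the handshake lemma together with the fact that a tree on $n$ vertices has exactly $n-1$ edges, giving
\[
\sum_{v \in V(T)} \deg(v) \;=\; 2(n-1).
\]
On the other hand, bounding each degree from below by its class value yields
\[
\sum_{v \in V(T)} \deg(v) \;\geq\; L + 2D_2 + 3D_{\geq 3} \;=\; 2(L + D_2 + D_{\geq 3}) + (D_{\geq 3} - L) \;=\; 2n + D_{\geq 3} - L.
\]
Chaining the two displays gives $2n - 2 \geq 2n + D_{\geq 3} - L$, i.e., $D_{\geq 3} \leq L - 2 = k - 2$, which is even slightly stronger than the claimed bound of $k-1$.

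I do not expect any serious obstacle. The only care needed is to interpret ``leaves'' as vertices of degree exactly $1$ (so that on a tree with at least two vertices we automatically have $k \geq 2$ and the inequality $D_{\geq 3} \leq k - 2$ is nonnegative), and to handle the trivial degenerate case of the single-vertex tree separately if the convention adopted would allow $k \leq 1$. Everything else is a routine two-line handshake calculation.
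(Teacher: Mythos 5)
Your argument is correct. The paper explicitly states this observation (together with Observation~\ref{obs:delPath}) without proof, as a well-known fact about trees, so there is no in-paper proof to compare against. Your handshake-lemma computation is the standard way to prove it, and it actually yields the slightly sharper bound $D_{\geq 3} \leq k-2$ whenever $T$ has at least two vertices (hence at least two leaves). The only caveat is the one you already flag: for the one-vertex tree the inequality $D_{\geq 3}\leq k-2$ would fail under the convention $k\leq 1$, while the paper's looser bound $k-1$ survives the $k=1$ convention; this degenerate case is irrelevant to how the observation is used in the paper, since there $T^*$ spans a set $R$ of $r=4s\geq 4$ vertices.
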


\begin{observation}\label{obs:delPath}
Let $T$ be a tree and $P$ be a path in the tree such that all vertices on $P$ have degree $2$ in $T$. Then $T - V(P)$ has precisely two connected components.
\end{observation}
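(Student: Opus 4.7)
The plan is to examine which edges of $T$ cross the cut between $V(P)$ and $V(T)\setminus V(P)$. Because every internal vertex of $P$ has degree $2$ in $T$ with both of its neighbors lying on $P$ (namely, the two consecutive vertices on $P$), no internal vertex contributes a cross edge. Each endpoint of $P$ has degree $2$ in $T$, and when $P$ has at least one edge exactly one of its two $T$-neighbors lies on $P$; hence each endpoint contributes exactly one cross edge. Let $v_0,v_1,\ldots,v_k$ denote the vertices of $P$ and let $u_0, u_k \in V(T)\setminus V(P)$ be the unique outside neighbors of $v_0$ and $v_k$.

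With only two edges leaving $V(P)$, the forest $T-V(P)$ can have at most two components, one containing $u_0$ and one containing $u_k$. To see that these two sides really are separate, I would invoke acyclicity of $T$ twice. First, $u_0\neq u_k$: if they coincided, then $u_0v_0v_1\cdots v_ku_0$ would be a cycle in $T$. Second, $u_0$ and $u_k$ lie in distinct components of $T-V(P)$: any path between them inside $T-V(P)$ would close up with the path $u_0v_0\cdots v_ku_k$ into a cycle in $T$. Hence $T-V(P)$ has exactly two connected components.

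The one edge case requiring separate bookkeeping is $|V(P)|=1$, i.e., $P$ consists of a single degree-$2$ vertex $v$. Here both $T$-neighbors $u,u'$ of $v$ lie outside $P$ and are distinct, and the same acyclicity argument shows they land in different components of $T-\{v\}$, so again there are exactly two components.

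I do not anticipate a real obstacle: the whole statement is an immediate consequence of $T$ being a tree together with the degree-$2$ hypothesis on $V(P)$. The main thing to be careful about is to handle the degenerate single-vertex path separately, and to phrase the "exactly two" claim via the two independent acyclicity arguments above rather than trying to count components by a single global computation.
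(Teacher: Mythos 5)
The paper states \cref{obs:delPath} as a ``well-known fact about trees'' and deliberately omits its proof, so there is no paper proof to compare against. Your argument is a correct and standard way to establish it: you count the two cut edges across $(V(P), V(T)\setminus V(P))$ to bound the number of components of $T-V(P)$ from above by two, and then use acyclicity of $T$ twice to show $u_0\neq u_k$ and that they land in distinct components, giving the lower bound. One small step you leave implicit is that the non-$P$ neighbor of an endpoint $v_0$ cannot itself be some $v_i\in V(P)$ with $i\ge 2$; that would close the cycle $v_0v_1\cdots v_iv_0$, so acyclicity again rules it out, and it is worth saying explicitly before declaring $u_0\in V(T)\setminus V(P)$. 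With that remark added, the proof is complete, and you correctly flag and handle the degenerate $|V(P)|=1$ case.
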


For the rest of the proof of Lemma~\ref{lem:cant pack bbs}, let $R$ and ${\cal F}_1$ be as given in the statement of Lemma~\ref{lem:getSplitSetR}, $G^*$ be an inclusion minimal connected induced subgraph of $G'[C]$ containing $R$, and $T^*$ be a spanning tree of $G^*$.
Define $M$ as $R$ plus all the vertices of $T^*$ that have degree at least three in $T^*$. Finally, set $M^* = N_{T^*}[M]$. In the next lemma we collect a few simple observations about $G^*$, $T^*$, $M$, and~$M^*$.

\begin{lemma}\label{lem:sillyObsAboutGstar}
$G^*$, $T^*$, $M$ and $M^*$ have the following properties:
\begin{enumerate}\setlength\itemsep{-.7pt}
    \item\label{itm:leavesInR} All leaves of $T^*$ are in $R$,
    \item\label{itm:smallM} $|M| \leq 2|R|$,
    \item\label{itm:fewComponentsTminusM} there are at most $|M|-1$ connected components of $T^*-M$,
    \item\label{itm:smallMstar} $|M^*| \leq 6|R|$,
    \item\label{itm:noDumbEdges} each edge $uv$ of $G^*$ is an edge of $T^*$ or has at least one endpoint in $M^*$.
\end{enumerate}
\end{lemma}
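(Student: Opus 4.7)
The plan is to handle items (1) and (5) via the inclusion-minimality of $G^*$, and to obtain items (2), (3), and (4) as direct tree-theoretic consequences of (1). A crucial corollary of (1) that I would record up front is that every vertex of $V(T^*)\setminus M$ has degree exactly $2$ in $T^*$: it is not a leaf (leaves lie in $R\subseteq M$) and it is not a branching vertex (those lie in $M$ by definition).

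Item (1) is immediate: if $v$ were a leaf of $T^*$ not in $R$, then $T^*-v$ would remain a spanning tree of the induced subgraph $G^*-v$, so $G^*-v$ would still be a connected induced subgraph of $G'[C]$ containing $R$, contradicting the minimality of $G^*$. For item (2), I would combine (1) with Observation~\ref{obs:deg3inTree}: $T^*$ has at most $|R|$ leaves, hence at most $|R|-1$ vertices of degree at least three, so $|M|\leq |R|+(|R|-1)\leq 2|R|$. For item (3), since every component of $T^*-M$ is a path whose vertices have degree exactly $2$ in $T^*$, I would iterate Observation~\ref{obs:delPath}: peeling off these $k$ path-components one at a time increases the number of connected components by exactly one, leaving the forest $T^*[M]$ with $k+1$ components; a forest on $|M|$ vertices has at most $|M|$ components, whence $k\leq |M|-1$. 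For item (4), I would observe that each component of $T^*-M$ contributes at most two vertices to $M^*\setminus M$ (the two path-endpoints, or the sole vertex of a singleton component which itself has both $T^*$-neighbors in $M$), so combining with (2) and (3) yields $|M^*|\leq |M|+2(|M|-1)\leq 3|M|\leq 6|R|$.

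Item (5) is the step I expect to be the main obstacle, and the plan there is again to exploit the minimality of $G^*$. Assume for contradiction that $uv\in E(G^*)\setminus E(T^*)$ with $u,v\notin M^*$. The edge $uv$ closes a unique cycle with the $u$--$v$ path $P$ in $T^*$, and because $uv\notin E(T^*)$, $P$ has at least one internal vertex; let $w$ be the $T^*$-neighbor of $u$ on $P$. The hypothesis $u\notin M^*$ means $u$ has no $T^*$-neighbor in $M$, so $w\notin M$ and in particular $w\notin R$, and by the corollary to (1) above, $w$ has degree exactly $2$ in $T^*$. Consequently $T^*-w$ splits into exactly two subtrees, one containing $u$ and one containing $v$; the chord $uv$ reconnects them, so $G^*-w$ is still connected while containing all of $R$, contradicting the minimality of $G^*$. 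The delicate point---and the reason I expect this to be the main obstacle---is ensuring the existence of a $w$ with precisely the right properties (a degree-two, non-$R$ neighbor of $u$ on the cycle), and this is exactly where the assumption $u\notin M^*$, rather than merely $u\notin M$, is used.
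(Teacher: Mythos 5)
Your proof is correct and follows essentially the same route as the paper's: item (1) via minimality of $G^*$, items (2)--(4) as tree-theoretic consequences of the observation that every vertex of $V(T^*)\setminus M$ has degree exactly $2$, and item (5) via minimality by deleting the $T^*$-neighbor of $u$ on the fundamental cycle closed by $uv$. The argument and even the key intermediate claims match the paper's line-for-line, so nothing further needs to be said.
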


\begin{proof}
For (\ref{itm:leavesInR}), note that $G^* - v$ is connected for every leaf $v$ of $T^*$. Thus $v \notin R$ would contradict minimality of $G^*$.
For (\ref{itm:smallM}) we note that (\ref{itm:leavesInR}) implies that $T^*$ has at most $|R|$ leaves, and therefore (by Observation~\ref{obs:deg3inTree}) at most $|R|-1$ vertices of degree at least $3$.

For (\ref{itm:fewComponentsTminusM}) and (\ref{itm:smallMstar}) we note that every vertex in $V(T^*) - M$ has degree precisely $2$ in $T^*$. Thus every component $P$ of $T^* - M$ is a path (on one or more vertices), only the endpoints of the path are neighbors (in $T^*$) of $M$, and $|N_{T^*}(P)| = 2$.
Let $\kappa$ be the number of connected components of $T^* - M$. Then $\kappa$ applications of Observation~\ref{obs:delPath} implies that $T^*[M]$ has at least $1+\kappa$ components. Hence $1+\kappa \leq |M|$, proving (\ref{itm:fewComponentsTminusM}).
%
Since each component of $T^*-M$ contains at most two neighbors (in $T^*$) of $M$  it follows that 
$|N_{T^*}(M)| \leq 2|M|$, and hence $|M^*| \leq 3|M| \leq 6|R|$, proving (\ref{itm:smallMstar}).

For~(\ref{itm:noDumbEdges}) suppose for contradiction that there exists an edge $uv$ in $G^*$ that is neither an edge in $T^*$ nor has an endpoint in $M^*$.
Let $P$ be the path in $T^*$ from $u$ to $v$. Since $uv$ is not an edge of $T^*$ the path $P$ has at least one internal vertex. Let $u'$ be the vertex immediately after $u$ on $P$.
Since $u \notin M^*$ and $u \in N_{T^*}(u')$ it follows that $u' \notin M$. Thus $u' \notin R$ and $u'$ has degree precisely $2$ in $T^*$.
But then $(T^* - u') \cup \{uv\}$ is connected (since we can go between $u$ and the successor of $u'$ on $P$ via $uv$ and then $P$) and contains $R$, contradicting the minimality of $G^*$.
\end{proof}

We are now ready to conclude the proof of Lemma~\ref{lem:cant pack bbs}

\begin{proof}[Proof of Lemma~\ref{lem:cant pack bbs}]
We set $M'$ to be the set of all vertices in $T^*$ at distance (in $T^*$) at most $t$ from $M^*$.
By Lemma~\ref{lem:sillyObsAboutGstar} (point~\ref{itm:fewComponentsTminusM})  we have that~$|M'| \leq |M^*| + 2t|M| \leq 6r + 4tr$. 
Since $|{\cal F}_1| > |{\cal F}/2| \geq r(4t + 6)c$ there exists a $Y_i \in {\cal F}_1$ disjoint from $M'$.

Let $z$ be the number of connected components of $T^* - M$ that have nonempty intersection with $Y_i$, and $Z$ be the union of the vertex sets of all such components. 
Since each component of $T^* - M$ is a path of vertices of degree $2$ in $T^*$, $z$ applications of Observation~\ref{obs:delPath} yield that $T^* - Z$ has precisely $z + 1$ connected components.
Since $Y_i \in {\cal F}_1$ we have that no connected component of $G^* - Y_i$ contains two vertices of $R$.
Thus no connected component of $T^* - Y_i$ contains two vertices of $R$, and $Y_i$ is disjoint from $M' \supseteq M$, so $Y_i \cap V(T^*) \subseteq Z$ and no connected component of $T^* - Z$ contains two vertices of $R$ either.
But then $T^* - Z$ has at least $r$ connected components, implying $z + 1 \geq r$. 

Since $Y_i \subseteq N_G[X_i]$ and $|X_i| \leq s$, and $z \geq r-1 = 4s-1 > 2|X_i|$,
there exists an $x \in X_i$ such that $N_G[x]$ has nonempty intersection with three distinct components $C_1$, $C_2$ and $C_3$ of $T^* - M$.
For each $j \in \{1,2,3\}$ define $P_j$ to be a shortest path in $T^*[C_j]$ from $N_G[x]$ to $N_{T^*}[M^*]$.
Since $Y_i$, and therefore $N_G[x]$, is disjoint from $M'$ it follows that 
$|V(P_j)| \geq t$.
Since $P_j$ is {\em shortest} it follows that $x \notin V(P_j)$, that $x$ is a neighbor (in $G$) of precisely one endpoint of $P_j$ (and no other vertices of~$P_j$), and that $V(P_j) \cap M^* = \emptyset$.
Thus, Lemma~\ref{lem:sillyObsAboutGstar} (point~\ref{itm:noDumbEdges}) yields that each $P_j$ induces a path in $G^*$ (and therefore in $G$) and that there are no edges in $G$ between $P_{j}$ and $P_{j'}$ for $j \neq j'$.
But then $x \cup V(P_1) \cup V(P_2) \cup V(P_3)$ induces an $S_{t_1, t_2, t_3}$ in $G$ with $t_1, t_2, t_3 \geq t$.
\end{proof}

\subsection{Presentation of the Algorithm}


We give one last definition before presenting the algorithm.

\begin{definition}[level sets]
Let $G$ be graph, $G'$ and induced subgraph of $G$, ${\cal F}$ a list of vertex sets of $G$, and $N$ a positive integer. For all natural numbers $j$, the $j^{th}$ {\em level set with respect to $G$, $G'$, and ${\cal F}$}, denoted by ${\mathcal{L}_j}(G, G', {\cal F})$, is defined as the set of vertices of $G'$ that belong to at least $j$ sets (counting multiplicity) of $N_{G}^{G'}[{\cal F}]$.
\end{definition}

In the following recursive algorithm, the input will always consist of a natural number $N$, two lists ${\cal F}_1$ and ${\cal F}_2$ and a graph $G$. Additionally, there will be a global variable ${\cal G}$ which is set to the very first graph the algorithm is called on, so $G$ will always be an induced subgraph of ${\cal G}$. We will say that a vertex, $v$, in $G$ is {\em N-branchable with respect to ${\cal G}$, $G$, ${\cal F}_1$, and ${\cal F}_2$} (or more simply {\em branchable} when the values of $N$, ${\cal G}$, $G$, ${\cal F}_1$, and ${\cal F}_2$ are clear from the context) if there is an natural number $j$ such that either $|N_{G}[v] \cap {\mathcal{L}_j}({\cal G}, G, {\cal F}_1)|$ $\geq \frac{N}{2^j}$ or $|N_{G}[v] \cap {\mathcal{L}_j}({\cal G}, G, {\cal F}_2|$ $\geq \frac{N}{2^j}$.

We now present a quasi-polynomial time algorithm for independent set on $\Sttt$-free graphs which we will refer to as IND. We first give the high level ideas of how IND works, followed by a formal prose-style description of the algorithm, then we give the algorithm in pseudocode. 

\paragraph{Overview.}
At the highest level IND is a recursive algorithm that does three basic operations. When there is an $N$-branchable vertex, $v$, (for $N ~ |G|$) IND will be recursively called on $G-v$ and $G-N_G[v]$, when there is an extended strip decomposition such that no particle contains too much weight, IND will be recursively called on each particle, and when there is a balanced separator that is dominated by few vertices, it adds the balanced separator to a list (either ${\cal F}_1$ or ${\cal F}_2$). The lists (${\cal F}_1$ and ${\cal F}_2$) are what will guide the branching process, and the goal of branching is to (efficiently) reach an instance where the input graph has a desirable extended strip decomposition. Both the ``extended strip decomposition'' and ``add a balanced separator'' operations will come in two distinct flavors. The ``extended
strip decomposition” operation will be either a {\em type 1 extended strip decomposition operation}, with reference to the $N$-good extended strip decomposition of Observation \ref{obs:bbs or good esd},  or a {\em type 2 extended
strip decomposition operation}, with reference to the rigid extended strip decomposition of Lemma \ref{lem: balanced sep or esd}. The ``balanced separator” operation will be either a {\em boosted balanced separator operation}, with reference balanced separators of Observation \ref{obs:bbs or good esd} (which will in fact be $s$-boosted balanced separator), or (simply) a {\em balanced separator operation}, with reference to the balanced separator of Lemma \ref{lem: balanced sep or esd}. 


Let us now be a little more detailed about how IND works.
The algorithm is a recursive algorithm that takes as input an $\Sttt$-free graph $G$, a vertex set $X$ ($X$ may also be set to $\bot$, indicating that a new set for $X$ must be found), an integer $N$, and two lists ${\cal F}_1$ and ${\cal F}_2$. If we wish to know the weight of a maximum weight independent set of the graph $G$, then IND is intended to be initially called on the inputs $G, X = \bot, N = |G|, {\cal F}_1 = \emptyset, {\cal F}_2 = \emptyset$. The algorithm sets a global variable ${\cal G}$ which is set to the first graph that the algorithm is called on, so that in all recursive calls, ${\cal G}$ refers to the initial graph the algorithm is called on. In any given call of IND, the vertex set $X$ along with the vertex sets contained in ${\cal F}_1$ and ${\cal F}_2$ may not be subsets of $V(G)$, but they will always be subsets of $V({\cal G})$. The integer $N$ will be approximately equal to $|G|$ (and will always satisfy $|G| \leq N$), and is used for the sake of making the run time analysis easier. Among other things, this integer is used to determine when a vertex is branchable (i.e. when it is $N$-branchable). 

The set $X$ is obtained using \esdthm$(G)$ (see Definition~\ref{def:esd bs generation}) and will thus have the property that no particle of the corresponding extended strip decomposition of $G-N_G[X]$ will have more than $|G|/2 \leq N/2$ vertices and $|X| \leq c_t\log(N)$. One goal of the branching operation, ``type 2 extended strip decomposition'', and ``type 2 balanced separator'' operation are to efficiently reduce $\rel_{\cal G}(G, X, N)$ to the empty set. Observation~\ref{obs:bbs or good esd} tells us that when $\rel_{\cal G}(G, X, N) = \emptyset$ that either we will get an extended strip decomposition that is $N$-good (in which case we make a lot of progress as each particle now has much less than $N \approx |G|$ vertices, this is the ``type 1 extended strip decomposition'' operation) or we get that $N_{\cal G}^G[X]$ is a $c_t\log(N)$-boosted balanced separator (with $X$ as a core). In either case we find a new $X$ using Theorem~\ref{thm:ICALP:weight} and repeat the process. 

But how do we make progress in the case where $N_{\cal G}^G[X]$ is a $c_t\log(N)$-boosted balanced separator (and we do not have an $N$-good extended strip decomposition)? When $N_{\cal G}^G[X]$ is a $c_t\log(N)$-boosted balanced separator, we place $X$ into ${\cal F}_1$ (this is the ``type 1 balanced separator operation''). An analysis similar to that found in \cite{GartlandL20} (sketched in Section~\ref{subsec:branching}) shows that, because the size of $X$ is at most $c_t\log(N))$, we can collect these cores of $c_t\log(N)$-boosted balanced separators into the list ${\cal F}_1$ and efficiently branch in a manner such that no vertex of $N_{\cal G}^G[{\cal F}_1]$ belongs to over $\log(N)$ of these sets, and therefore by Lemma~\ref{lem:cant pack bbs}, in an $\Sttt$-free graph, ${\cal F}_1$ cannot contain more than $80tc_t\log^2(N)$ sets. It follows that when we repeat the process from the previous paragraph, we can only get back that $N_{\cal G}^G[X]$ is $c_t\log(N)$-boosted balanced separator only a few times (at most $80tc_t\log^2(N)$ times) before get an extended strip decomposition that is $N$-good (or no component of $G$ has many vertices, but by Observation~\ref{obs:no large componet give good esd} this implies the existence of an $N$-good extended strip decomposition), and we make good progress.

Next, let us briefly look at how the algorithm is able to efficiently  reduce $\rel_{\cal G}(G, X, N)$ to the empty set using the branching, ``type 2 extended strip decomposition'', and ``type 2 balanced separator'' operations. The basic idea is based on a combination of Lemmas~\ref{lem: balanced sep or esd} and \ref{lem:cant pack strong bs} and the techniques used in \cite{GartlandL20} (sketched in Section~\ref{subsec:branching}). IND applies Lemma~\ref{lem: balanced sep or esd} to $G$ and $\rel_{\cal G}(G, X, N)$ (with $i = \log(200c_t^3\log^3(N))$). If an extended strip decomposition, $(H, \eta)$, is returned then since each particle, $P$, has much less than $|\rel_{\cal G}(G, X, N)|$ vertices of $\rel_{\cal G}(G, X, N)$ (at most $(1-\frac{1}{800c_t^3\log^3(N)})|\rel_{\cal G}(G, X, N)|$) and because $\rel_{\cal G}(P, X, N) \subseteq \rel_{\cal G}(G, X, N)$, it follows that $|\rel_{\cal G}(P, X, N)| << |\rel_{\cal G}(G, X, N)|$ and good progress is made in reducing the size of $\rel_{\cal G}(P, X, N)$ (this is the type 2 extended strip decomposition operation). Otherwise the lemma returns a set $C$ such that $N_G[C]$ is an $\frac{|\rel_{\cal G}(G, X, N)|}{200c_t^3\log^3(N)}$-balanced separator of ($G$, $|\rel_{\cal G}(G, X, N)|$) and $|C| \leq 28000c_t^4\log^4(N)$.

So how do we make progress in efficiently reducing $|\rel_{\cal G}(G, X, N)|$ when what we get back is a set $C$ such that $N_G[C]$ is an $\frac{|\rel_{\cal G}(G, X, N)|}{200c_t^3\log^3(N)}$-balanced separator of $G$ for $|\rel_{\cal G}(G, X, N)|$? An analysis similar to that found in \cite{GartlandL20} shows that, because $|C| \leq 28000c_t^4\log^4(N)$, we can collect these sets $C$ that we find into the list ${\cal F}_2$  (this is the ``type 2 balanced separator operation'') and efficiently branch in a manner such that no vertex of $N_{\cal G}^G[{\cal F}_2]$ belongs to over $\log(N)$ of these sets. Since ${\cal G}$ is $\Sttt$-free (and $|X| \leq c_t\log(N)$), it follows from Lemma~\ref{lem:cant pack strong bs} that $|{\cal F}_2|$ cannot grow larger than $10t\log(N)$. Hence, after applying Lemma~\ref{lem: balanced sep or esd} a few times (at most $10t\log(N)$ times), it must return an extended strip decomposition and good progress is made in decreasing $\rel_{\cal G}(G, X, N)$.

\paragraph{Formal description.}
We now give a formal description of our independent set algorithm for $\Sttt$-free graphs, which we will refer to as IND. The algorithm is a recursive algorithm that takes as input a graph $G$, a vertex set $X$ ($X$ may also be set to $\bot$), an integer $N$, and two lists ${\cal F}_1$ and~${\cal F}_2$. IND is intended to be initially called on the inputs $G, X = \bot, N = |G|, {\cal F}_1 = \emptyset, {\cal F}_2 = \emptyset$. The algorithm sets a global variable ${\cal G}$ which is set to the graph in the first set of input parameters, $G, X, N, {\cal F}_1, {\cal F}_2$, that the algorithm is called on so that on all recursive calls ${\cal G}$ refers to the initial graph the algorithm is called on. The vertex set $X$ along with the vertex sets contained in ${\cal F}_1$ and ${\cal F}_2$ may not be subsets of $V(G)$, although they will always be subsets of $V({\cal G})$. $N$ will be approximately, but always greater than or equal to, the size of $G$. The point of $N$ is to help in the runtime analysis. Among other things, $N$ is used to determine when a vertex is branchable (i.e. when it is $N$-branchable).

When the algorithm makes a recursive call, some of the elements among the input parameters, $G, X, N, {\cal F}_1, {\cal F}_2$, will be the same in the recursive call as they are in the current instance, while the remaining parameters will be changed. In the following description of the algorithm, when describing the input to the recursive calls, we will only explicitly mention the parameters that are changed from the current call, unmentioned parameters are assumed to remain the same as in the current call. For instance if the graph that the recursive call is made on is different from the graph of the current call but all other elements, $X, N, {\cal F}_1, {\cal F}_2$ remain the same as in the current call of the algorithm, we will only indicate what the new graph is the call is made on and not mention the unchanged elements, $X, N, {\cal F}_1, {\cal F}_2$.

In order to help the runtime analysis of the algorithm, we will label each call of IND that is made based on the first case it satisfies (which in turn determines the recursive calls it will make).

\begin{enumerate}

    \item[Base Case:] For the Base Case (Label: base case call), if $|V(G)| \leq 1$ then IND returns $\w(V(G))$.

    \item[Case 1:] For Case 1 (Label: branch call), if there exists a branchable vertex $v \in G$, then IND is recursively called on two instances, the first instance on $G-v$ and the second on $G-N_G[v]$, and stores the numbers returned by these recursive calls as $I_f$ and $I_s$ respectively. The algorithm then returns the maximum of $I_f$ and $I_s$ + $\w(v)$.

    \item[$\bullet$] If the algorithm has not returned at this point and $X$ is equal to $\bot$, then the algorithm sets $X$~=~\esdthm$(G)$. No recursive call is made here, no label is given here, and the algorithm continues to see which case it satisfies. We say that the set $X$ is {\em discovered} in this call.

    \item[Case 2:] For Case 2 (Label: type 1 extended strip decomposition call) if the extended strip decomposition inferred by $(X,G)$, call it $(H, \eta)$, is an $N$-good extended strip decomposition, then for each particle $P$ of $(H, \eta)$ the algorithm recursively calls itself on $G = P$, $X = \bot$, $N = |P|$, ${\cal F}_1 = \emptyset, {\cal F}_2 = \emptyset$. Then IND returns \matching$(H, \eta)$.

    \item[Case 3:] For Case 3 (Label: boosted balanced separator call), if $N^G_{\cal G}[X]$ is an $\frac{N}{32c_t^2\log^2(N)}$-balanced separator for $G$ then IND is recursively called with $X$ added to ${\cal F}_1$, $X$ set to $\bot$, and ${\cal F}_2$ set to $\emptyset$. Then the algorithm returns the value obtained from this recursive call. Here we say that $X$ is the boosted balanced separator core added in this call.

    \item[$\bullet$] If IND has not returned at this point then note by Observation~\ref{obs:bbs or good esd} and the fact that Case 2 and 3 do not hold implies $\rel_{\cal G}(G, X, N) \neq \emptyset$. The algorithm then applies Lemma~\ref{lem: balanced sep or esd} (with $i = \log(200c_t^3\log^3(N))$) to either obtain, in polynomial time, a rigid extended strip decomposition of $G$, call it $(H, \eta)$, such no particle of $(H, \eta)$ has over $(1-\frac{1}{800c_t^3\log^3(N)})|\rel_{\cal G}(G, X, N)|$ vertices of $\rel_{\cal G}(G, X, N)$ or a $C \subseteq V(G)$ such that $N_G[C]$ is a $\frac{|\rel_{\cal G}(G, X, N)|}{200c_t^3\log^3(N)}$-balanced separator for $\rel(G, X, N)$, and $|C| \leq 28000c_t^4\log^4(|G|) \leq 28000c_t^4\log^4(N)$. No recursive call is made here, no label is given, and the algorithm continues to see which case it satisfies.

    \item[Case 4:] For Case 4 (Label: type 2 extended strip decomposition call), if Lemma~\ref{lem: balanced sep or esd} returned an extended strip decomposition, $(H, \eta)$, then for each particle, $P$, of $(H, \eta)$ IND is recursively called with the graph set to $P$ and ${\cal F}_2$ set to the empty set. IND then returns \matching$(H, \eta)$.

    \item[Case 5:] For Case 5 (Label: balanced separator call), if Lemma~\ref{lem: balanced sep or esd} returned a balanced separator, $N_G[C]$, for $\rel_{\cal G}(G, X, N)$ then IND is recursively called, adding $C$ to ${\cal F}_2$. IND then returns the value obtained from this recursive call. Here we say the set $C$ is the balanced separator core added in this call.

\end{enumerate}

For completeness and ease of reference we give pseudocode for the algorithm IND below. The correctness proofs and running time analysis do not refer to the pseudocode, and so a reader may choose to skip it. 
%
Recall that IND sets a global variable ${\cal G}$ which is set to the graph in the first set of input parameters, $G, X, N, {\cal F}_1, {\cal F}_2$. This step is not explicitly mentioned in the pseudocode. 

\medskip
\noindent \textbf{IND}
\begin{algorithmic}[1]
\STATE \textbf{Input:} $G$, $X$, $N$, ${\cal F}_1$, ${\cal F}_2$
\STATE \textbf{Output:} {\sf mwis}$(G)$.

\IF{$|V(G)| \leq 1$}\label{line:base}
\RETURN $\w(V(G))$
\ENDIF

\IF{exists branchable vertex, $v$,}\label{line:branch}
\RETURN $\max\left(\mbox{IND}(G-v, X, N, {\cal F}_1, {\cal F}_2), \mbox{IND}(G-N_G[v], X, N, {\cal F}_1, {\cal F}_2) + \w(v)\right)$
\ENDIF

\IF{$X$ = $\bot$}
\STATE Set $X$ = \esdthm$(G)$
\ENDIF

\IF{the extended strip decomposition, $(H, \eta)$, inferred by $(X, G)$ is $N$-good}\label{line:good esd}\label{line:type1 esd}
\FORALL{particles $P$ in $(H, \eta)$}
\STATE Get $\mbox{IND}(P, \bot, |P|, \emptyset, \emptyset)$
\ENDFOR
\RETURN \matching$(H, \eta)$
\ENDIF

\IF{$N^G_{\cal G}[X]$ is an $\frac{N}{32c_t^2\log^2(N)}$-balanced separator for $G$}\label{line:bbs}
\RETURN $\mbox{IND}(G, \bot, N, {\cal F}_1 \cup X, \emptyset)$ 
\ENDIF 

\STATE Use Lemma~\ref{lem: balanced sep or esd} with $i = \log(200c_t^3\log^3(N))$ to obtain either a rigid extended strip decomposition, $(H, \eta)$, 
such that no particle of $(H, \eta)$ contains over $(1-\frac{1}{800c_t^3\log^3(N)})|\rel_{\cal G}(G, X, N)|$ vertices of $\rel_{\cal G}(G, X, N)$
or a set $C \subseteq V(G)$ such that $N_G[C]$ is a $\frac{|\rel_{\cal G}(G, X,N)|}{200c_t^3\log^3(N)}$-balanced separator for $\rel_{\cal G}(G, X,N)$, and $|C| \leq 28000c_t^4\log^4(|G|) \leq 28000c_t^4\log^4(N)$. 

\IF{Lemma~\ref{lem: balanced sep or esd} returns $(H, \eta)$}
\FORALL{particles $P$ in $(H,\eta)$}\label{line:type2 esd}
\STATE Get $\mbox{IND}(P, X, N, {\cal F}_1, \emptyset)$
\ENDFOR
\RETURN \matching$(H, \eta)$
\ENDIF

\IF{Lemma~\ref{lem: balanced sep or esd} returns $C$}\label{line:bs}
\RETURN IND$(G, X, N, {\cal F}_1, {\cal F}_2 \cup C)$
\ENDIF

\end{algorithmic}
\medskip




\subsection{Correctness and Runtime Analysis}

In order to analyze the runtime of the algorithm, we will find it useful to define the recursion tree generated by a run of the algorithm and prove that it has only a quasi-polynomial number of vertices. Because we have yet to prove the algorithm will terminate, the tree in the next definition may be infinite, but we will 
shortly 
prove that IND will terminate that the recursion tree for IND is finite.


\begin{definition}[recursion tree]\label{def:recursion tree}
The {\em recursion tree}, $T$, {\em generated} by IND($G, \bot, |G|, \emptyset, \emptyset$) is the directed rooted tree with a node for each call of IND made in the course of running IND on the initial input ($G, \bot, |G|, \emptyset, \emptyset$), the root node corresponding to the initial call of IND on the input ($G, \bot, |G|, \emptyset, \emptyset$). There is a directed edge from a node $p \in T$ to a node $c \in T$ when the call that corresponds to $p$ invoked the call that corresponds to $c$. Furthermore, we label the vertices $p$ and $c$ as well as the edge $pc$ as follows. The vertices $p$ and $c$ get the same label as the calls they correspond to respectively (replacing ``call'' now with ``node''). If the call that corresponds to $p$ is labeled with anything other than branch call, then the $pc$ edge gets same label as the call that corresponds to $p$ (replacing ``call'' now with ``edge''). If $p$ corresponds to a branch call then let $v$ be the vertex that is branched on in that call and let $G_p$ be the graph given in the input of that call. If $c$ corresponds to the call where the graph $G_p - N_{G_p}[v]$ is used as the input then we label $pc$ as a ``success edge'', and if $c$ corresponds to the call where the graph $G_p - v$ is used as the input we label $pc$ as a ``failure edge''.

Furthermore, let $u$ be a node of $T$. Then we let $(G_u, X_u, N_u,$ ${\cal F}_{1,u}, {\cal F}_{2,u})$ denote the tuple that was used for the input of the call $u$ corresponds to. We call this tuple the {\em parameters of $u$}.
\end{definition}

Let $G$ be a graph. We collect a set of observations about IND and the recursion tree generated by IND$(G, \bot, |G|, \emptyset, \emptyset)$ that follow directly from how IND has been defined. We state these observations without a proof (as their proofs follow directly from how the algorithm was defined) and we will use them use in future proofs typically without reference to this observation.

\begin{observation}\label{obs:obvious observations}
Let $G$ be a graph and let $T$ be the recursion tree generated by IND$(G, \bot, |G|, \emptyset, \emptyset)$. Let $p$ and $c$ be nodes of $T$ such that $pc$ is an edge of $T$ and let $(G_p, X_p, N_p,$ ${\cal F}_{1,p}, {\cal F}_{2,p})$ and $(G_c, X_c, N_c,$ ${\cal F}_{1,c}, {\cal F}_{2,c})$ be the parameters of $p$ and $c$ respectively. Then the following hold:

\begin{enumerate}
    \item  $N_p < N_c$ if $pc$ is a type 1 extended strip decomposition edge and $N_p = N_c$ otherwise. Additionally, $N_p, N_c \geq |G|$.

    \item $G_c$ is an induced subgraph of $G_p$, $G_c$ is a proper induced subgraph of $G_p$ if $pc$ is a success, failure, or type 2 extended strip decomposition, and $G_p$ = $G_c$ if $pc$ is a balanced separator or boosted balanced separator edge.

    \item If $X_p = \bot$ and $p$ is not a base case node nor a branch node, then there is a set $X$ discovered in the call that corresponds to $p$.

    \item \label{itm:empty rel set} Assume $p$ is not a base case node nor a branch node. If $X_p \neq \bot$ then let $X = X_p$, else let $X$ be the set that is discovered in the call that corresponds to $p$. Then if $\rel_G(G_p,X, N_p) = \emptyset$ then (using Observation~\ref{obs:bbs or good esd}) $p$ is either a type 1 extended strip decomposition node or a boosted balanced separator node.

    
    \item If $c$ is a base case node then $c$ is a leaf of $T$.

\end{enumerate}
\end{observation}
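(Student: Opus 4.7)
The plan is to verify each of the five claims by direct inspection of the pseudocode for \textbf{IND} and the definitions of the edge/node labels in Definition~\ref{def:recursion tree}. Since every case of the algorithm is self-contained in terms of what it passes to the recursive call, each claim reduces to a bookkeeping check on which parameters are modified on each type of edge.

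For item $(1)$, I would look at every line that produces a recursive call and read off how $N$ is passed. A branch call (line~\ref{line:branch}), a boosted balanced separator call (line~\ref{line:bbs}), a type~2 extended strip decomposition call (line~\ref{line:type2 esd}), and a balanced separator call (line~\ref{line:bs}) all pass $N$ unchanged, giving $N_p = N_c$. Only a type~1 extended strip decomposition call (line~\ref{line:type1 esd}) changes $N$, setting $N_c = |P|$ for a particle $P$ of $(H,\eta)$; since $(H,\eta)$ is $N_p$-good, $|P| \leq (1 - \tfrac{1}{32 c_t^2 \log^2(N_p)}) N_p < N_p$, so $N$ strictly decreases along a type~1 edge (in particular, $N_c \neq N_p$). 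For the ``furthermore'' part, the invariant $N_u \geq |G_u|$ is preserved by every edge: it holds at the root by the initial call IND$(G,\bot,|G|,\emptyset,\emptyset)$, and every recursive call either leaves $(G,N)$ unchanged, removes vertices from $G$ while keeping $N$, or resets $N$ to $|P|$ while simultaneously setting the new graph to $P$.

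For item $(2)$, again I would enumerate the recursive calls: branch edges pass $G_p - v$ or $G_p - N_{G_p}[v]$, both proper induced subgraphs; type~2 extended strip decomposition edges pass a particle $P$, which is an induced subgraph and is proper because a rigid extended strip decomposition has at least two particles unless $G_p$ is trivial (and in the trivial case the algorithm would have halted at the base case); boosted balanced separator edges and balanced separator edges pass the same graph. Item $(3)$ is immediate: the pseudocode explicitly executes ``if $X = \bot$, set $X = \esdthm(G)$'' before any of Cases~$2$--$5$ is tested, so $X$ is discovered in every non-base-case, non-branch call that reaches this line. Item $(5)$ is trivial since the base case returns at line~\ref{line:base} without issuing any recursive invocation.

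The main content is item $(4)$, which is the only item that is not purely syntactic; here the plan is to invoke Observation~\ref{obs:bbs or good esd}. Suppose $p$ is neither a base case nor a branch node, let $X$ be as in the statement (either $X_p$ or the discovered set), and suppose $\rel_{\cal G}(G_p, X, N_p) = \emptyset$. Because $X$ comes from \esdthm, the hypotheses of Observation~\ref{obs:bbs or good esd} are met with the extended strip decomposition of $G_p - N_{G_p}[X]$ guaranteed by $X = \esdthm(G_p)$. The observation then yields that either the inferred extended strip decomposition is $N_p$-good or $N_{\cal G}^{G_p}[X]$ is a $\tfrac{N_p}{32 c_t^2 \log^2(N_p)}$-balanced separator for $G_p$; these are precisely the triggers for Case~$2$ and Case~$3$, respectively. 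Since the algorithm tests these cases in this order before proceeding to Cases~$4$ or $5$, $p$ must be labeled a type~$1$ extended strip decomposition node or a boosted balanced separator node, as claimed.

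The only point at which the proof is not purely mechanical is item $(4)$, so the ``obstacle'' is simply ensuring that the hypotheses of Observation~\ref{obs:bbs or good esd} are genuinely met at this point of the algorithm; all other items are a line-by-line check of the pseudocode.
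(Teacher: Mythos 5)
The paper does not prove this observation ("We state these observations without a proof, as their proofs follow directly from how the algorithm was defined"), so there is no official argument to compare with; your line-by-line audit of the pseudocode is exactly the intended approach, and the overall structure is right. However, two steps contain errors, and the statement itself contains a typo worth flagging.

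For item~(2), your justification that a type~2 extended strip decomposition particle $P$ is a proper subgraph of $G_p$ ``because a rigid extended strip decomposition has at least two particles unless $G_p$ is trivial'' is wrong: a rigid extended strip decomposition may consist of a single isolated vertex $x$ with $\eta(x)=V(G_p)$ (a single particle equal to $V(G_p)$), and even when there are several particles a full edge particle $A_{xy}^{xy}$ can still equal $V(G_p)$ (take $H=K_2$). The correct reason is that Case~4 is only reached when $\rel_{\cal G}(G_p,X,N_p)\neq\emptyset$, and Lemma~\ref{lem: balanced sep or esd} guarantees that no particle of the returned extended strip decomposition contains all of $\rel_{\cal G}(G_p,X,N_p)$; hence every particle is a proper subset of $V(G_p)$. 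For item~(4), your argument implicitly takes $X=\esdthm(G_p)$, which holds only when $X$ is discovered at $p$; if $X_p\neq\bot$, then $X=\esdthm(G_w)$ for a strict ancestor $w$ with possibly $G_p\subsetneq G_w$. The argument is still salvaged: no type~1 extended strip decomposition edge lies on the $w$-to-$p$ path (such an edge resets $X$ to $\bot$), so $N_w=N_p$, and the particles of the extended strip decomposition of $G_w-N_{G_w}[X]$ have at most $|G_w|/2\leq N_w/2=N_p/2$ vertices, which is the hypothesis of Observation~\ref{obs:bbs or good esd} with its $G$ taken to be $G_w$ rather than $G_p$; you should make this explicit rather than asserting the hypotheses hold via $\esdthm(G_p)$. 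Finally, the printed ``$N_p<N_c$'' in item~(1) is a typo for ``$N_c<N_p$'' (compare the usage in Lemmas~\ref{lem:bounded depth} and~\ref{lem:few t1 esd edges}); your derivation correctly gives $N_c<N_p$, and you should flag the sign error directly rather than retreating to ``$N_c\neq N_p$''.
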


The next three lemmas show that IND$(G, \bot, |G|, \emptyset, \emptyset)$ terminates and returns the weight of a maximum weight independent set of $G$.

\begin{lemma}\label{lem:none empty}
Let $t$ be a positive integer, $G$ an $\Sttt$-free graph, $T$ the recursion tree generated by IND$(G, \bot, |G|, \emptyset, \emptyset)$, $u \in T$ such that $u$ is a balanced separator or boosted balanced separator node, and let $(G_u, X_u, N_u, {\cal F}_{1,u}, {\cal F}_{2,u})$ be the parameters of $u$. If $C$ is the balanced separator or boosted balanced separator core added in the call that corresponds to $u$, then $N_{G}^{G_u}[C] \neq \emptyset$.
\end{lemma}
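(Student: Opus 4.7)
The plan is to split on whether $u$ is a boosted balanced separator node or a balanced separator node, and in each case derive a contradiction from the assumption $N_G^{G_u}[C]=\emptyset$ by showing that the algorithm would actually have exited through an earlier case.

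First, suppose $u$ is a boosted balanced separator node, so that $C=X$, where $X$ is either the inherited $X_u$ or is freshly computed as $\esdthm(G_u)\subseteq V(G_u)$. If $N_{\cal G}^{G_u}[X]=N_G^{G_u}[X]=\emptyset$, then the Case~3 guarantee that $N_{\cal G}^{G_u}[X]$ is a $\tfrac{N_u}{32c_t^2\log^2(N_u)}$-balanced separator for $G_u$ degenerates into: no component of $G_u$ has more than $\tfrac{N_u}{32c_t^2\log^2(N_u)}$ vertices. In particular no component has more than $(1-\tfrac{1}{32c_t^2\log^2(N_u)})N_u$ vertices, so by Observation~\ref{obs:no large componet give good esd} the extended strip decomposition inferred by $(X,G_u)$ is $N_u$-good and Case~2 would have fired at $u$ instead, contradicting that $u$ is a boosted balanced separator node.

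Second, suppose $u$ is a balanced separator node, so that $C\subseteq V(G_u)$ was produced by \Cref{lem: balanced sep or esd} applied with $A=R:=\rel_{\cal G}(G_u,X,N_u)$ and $i=\log(200c_t^3\log^3(N_u))$. Since $C\subseteq V(G_u)$ we have $N_G^{G_u}[C]\supseteq C$, so it suffices to show $C\neq\emptyset$. Because neither Case~2 nor Case~3 triggered at $u$, Observation~\ref{obs:bbs or good esd} implies $R\neq\emptyset$. Suppose for contradiction $C=\emptyset$. Then the conclusion of \Cref{lem: balanced sep or esd} says that every component of $G_u$ contains at most $m:=|R|/(200c_t^3\log^3(N_u))$ vertices of $R$; $m<1$ immediately forces $R=\emptyset$, a contradiction, so $m\geq 1$, and $R$ must occupy at least $200c_t^3\log^3(N_u)$ distinct components of $G_u$.

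The final step uses the definition of $\rel$ to force each such component to be large. Any $v\in R$ has, by definition, a $G_u$-neighbor in some component $L$ of $G_u-N_G^{G_u}[X]$ with $|L|>\tfrac{N_u}{32c_t^2\log^2(N_u)}$, so the $G_u$-component of $v$ contains $L$ and has size $>\tfrac{N_u}{32c_t^2\log^2(N_u)}$. Summing over the at least $200c_t^3\log^3(N_u)$ pairwise-disjoint such components gives at least
\[
200c_t^3\log^3(N_u)\cdot\frac{N_u}{32c_t^2\log^2(N_u)}=\frac{25\,c_t\log(N_u)}{4}\cdot N_u
\]
vertices in $V(G_u)$; the assumption $c_t\geq 34t$ together with $\log(N_u)\geq 2$ makes this at least $425\,N_u$, contradicting $|G_u|\leq N_u$. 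The main obstacle is this last counting step: it must combine the pigeonhole bound from the hypothetical balanced-separator property (which spreads $R$ thinly across components) with the size lower bound intrinsic to the definition of $\rel$ (which concentrates each $R$-component into at least $\tfrac{N_u}{32c_t^2\log^2(N_u)}$ vertices), and the algorithm's constants $200c_t^3$ and $32c_t^2$ are calibrated precisely so that this interplay blows past the $N_u$ vertices available in $G_u$.
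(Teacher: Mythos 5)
Your proof is correct. The boosted-balanced-separator case is identical to the paper's: the degenerate balanced-separator guarantee yields that every component of $G_u$ is small, so Observation~\ref{obs:no large componet give good esd} would have made $u$ a type~1 extended strip decomposition node. In the balanced-separator case you reach the paper's conclusion by a different (equally valid) counting route. The paper bounds the number of large components of $G_u-N_G^{G_u}[X]$ by $32c_t^2\log^2(N_u)$ (using $|G_u|\leq N_u$), pigeonholes to find one large component $B$ whose closed neighborhood captures at least $\frac{|\rel|}{32c_t^2\log^2(N_u)}$ vertices of $\rel_G(G_u,X,N_u)$, and observes that this connected set alone violates the $\frac{|\rel|}{200c_t^3\log^3(N_u)}$ bound on any single component. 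You instead count the $G_u$-components that meet $\rel$: the thin-spreading hypothesis forces at least $200c_t^3\log^3(N_u)$ of them, each must swallow a large component of $G_u-N_G^{G_u}[X]$ and hence has more than $\frac{N_u}{32c_t^2\log^2(N_u)}$ vertices, and summing these disjoint components overshoots $N_u\geq|G_u|$. Both arguments exploit exactly the interplay between the two families of constants; yours contradicts the global vertex budget while the paper's contradicts the balanced-separator bound directly. Either is fine; the paper's version is a touch shorter since it never needs the explicit cushion $c_t\geq 34t$ in this lemma.
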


\begin{proof}
Let $t$, $G$, $T$, $u$, $(G_u, X_u, N_u, {\cal F}_{1,u}, {\cal F}_{2,u})$, and $C$ be as in the statement of this lemma, and assume for a contradiction that $N_{G}^{G_u}[C] = \emptyset$.

First assume that $u$ is a boosted balanced separator node, so either $C = X_u$ or $X_u = \bot$ and $C$ was discovered in the call that corresponds to $u$. In either case, since $u$ is a boosted balanced separator node, $N_{G}^{G_u}[C]$ is an $\frac{N_u}{32c_t^2\log^2(N_u)}$-balanced separator for $G_u$ and by assumption $N_{G}^{G_u}[C] = \emptyset$, it follows that no component of $G_u$ contains over $\frac{N_u}{32c_t^2\log^2(N_u)}$ vertices. It follows by Observation~\ref{obs:no large componet give good esd} that the extended strip decomposition inferred by $(C, G_u)$ is $N_u$-good and hence $u$ should have been a type 1 extended strip decomposition node and not a boosted balanced separator node.

Next, assume that $u$ is a balanced separator node, so the empty set is a $\frac{|\rel_G(G_u, X,N_u)|}{200c_t^3\log^3(N_u)}$-balanced separator of $(G_u, \rel_G(G_u, X, N_u)$). If $X_u \neq \bot$ then let $X = X_u$, and if $X_u = \bot$ then let $X$ be the set discovered in the call that corresponds to $u$. Since $u$ is not a type 1 extended strip decomposition node nor a boosted balanced separator node it follows from the \ref{itm:empty rel set} point of Observation~\ref{obs:obvious observations} (or Observation~\ref{obs:bbs or good esd}) that $\rel_G(G_u, X, N_u) \neq \emptyset$. Since $N_{G}^{G_u}[C] = \emptyset$ it follows that no component of $G_u$ contains over $\frac{|\rel_G(G_u, X,N_u)|}{200c_t^3\log^3(N_u)}$ vertices of $\rel_G(G_u, X,N_u)$. Among all components of $G_u-N_G^{G_u}[X]$ that has over $\frac{N_u}{32c_t^2\log^2(N_u)}$ vertices (of which there are at most $32c_t^2\log^2(N_u)$), let $B$ be one such that $|N_{G_u}[B] \cap \rel_G(G_u, X, N_u)|$ is maximized, so since every vertex of $\rel_G(G_u, X, N_u)$ has a neighbor in at least on component of $G_u-N_G^{G_u}[X]$ that has over $\frac{N_u}{32c_t^2\log^2(N_u)}$, it follows that  $|N_{G_u}[B] \cap \rel_G(G_u, X, N_u)| > \frac{|\rel_G(G_u, X, N_u)|}{32c_t^2\log^2(N_u)}$. But then $N_{G_u}[B]$ is a connected set that contains at least $\frac{|\rel_G(G_u, X, N_u)|}{32c_t^2\log^2(N_u)}$ vertices of $\rel_G(G_u, X, N_u)$ and therefore the empty set cannot be a $\frac{|\rel_G(G_u, X,N_u)|}{200c_t^3\log^3(N_u)}$-balanced separator of $(G_u, \rel_G(G_u, X, N_u)$).
\end{proof}

\begin{lemma}\label{lem:bounded depth}
Let $t$ be a positive integer and, $G$ an $\Sttt$-free $n$-vertex graph. Then IND$(G, \bot, |G|, \emptyset, \emptyset)$ terminated and the recursion tree generated by IND$(G, \bot, |G|, \emptyset, \emptyset)$ is finite. 
\end{lemma}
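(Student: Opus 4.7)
The plan is to exhibit a well-founded ranking $\Psi$ on the nodes of the recursion tree $T$ that strictly decreases along every directed edge, and then invoke K\"onig's lemma to deduce finiteness of $T$ (and hence termination of IND). Concretely, for each node $u$ with parameters $(G_u, X_u, N_u, {\cal F}_{1,u}, {\cal F}_{2,u})$ I would set
\[
\Psi(u) \coloneqq (N_u,\,|V(G_u)|,\,B_1 - |{\cal F}_{1,u}|,\,B_2 - |{\cal F}_{2,u}|) \in \mathbb{N}^4,
\]
where $B_1$ and $B_2$ are finite bounds (depending only on $|V({\cal G})|$) valid throughout the execution. Equipping $\mathbb{N}^4$ with the lexicographic order makes $\Psi$ a well-founded ranking, so a strict lex decrease along each edge forces every root-to-leaf path of $T$ to be finite.

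The case-by-case check that $\Psi(c) <_{\mathrm{lex}} \Psi(u)$ for every edge $uc$ follows from Observation~\ref{obs:obvious observations}: type 1 extended strip decomposition edges strictly decrease $N_u$; branch and type 2 extended strip decomposition edges keep $N_u$ fixed while strictly decreasing $|V(G_u)|$ (for type 2, I would use the contrapositive of Observation~\ref{obs:bbs or good esd} together with the fact that Cases 2 and 3 did not fire to conclude $\rel_{\cal G}(G_u, X, N_u) \neq \emptyset$, so by the particle-size guarantee of Lemma~\ref{lem: balanced sep or esd} each particle omits at least one vertex of $\rel_{\cal G}(G_u, X, N_u) \subseteq V(G_u)$); boosted balanced separator edges fix $(N_u, |V(G_u)|)$ and strictly decrease $B_1 - |{\cal F}_{1,u}|$, while possibly resetting the final coordinate (harmless, as it is later in the lex order); and balanced separator edges fix $(N_u, |V(G_u)|, |{\cal F}_{1,u}|)$ and strictly decrease $B_2 - |{\cal F}_{2,u}|$.

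To produce $B_1$ I would consider the last boosted balanced separator ancestor $u^\star$ of $u$ on its path from the root. Because $u^\star$ performed Case~3, its Case~1 check failed, so no vertex of $G_{u^\star}$ is $N_{u^\star}$-branchable; taking $j = \lceil \log N_{u^\star} \rceil$ in the branchability definition gives that every vertex of $G_{u^\star}$ lies in fewer than $\log N_{u^\star}$ sets of $N_{\cal G}^{G_{u^\star}}[{\cal F}_{1,u^\star}]$. Each such set was certified as an $s$-boosted balanced separator with $s = c_t \log N_{u^\star}$ at the time of addition; the balanced-separator property is preserved on passage to induced subgraphs, and the boosting factor carries over at $u^\star$ because the largest component of $G_{u^\star}$ has at least $N_{u^\star}/2$ vertices---otherwise by Observation~\ref{obs:no large componet give good esd} Case~2 would have fired at $u^\star$ instead of Case~3. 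Lemma~\ref{lem:cant pack bbs} then yields $|{\cal F}_{1,u^\star}| < 80 c_t t \log^2 N_{u^\star}$, so $B_1 \coloneqq 80 c_t t \log^2 |V({\cal G})| + 1$ works uniformly.

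The main obstacle is producing $B_2$: unlike ${\cal F}_1$, the list ${\cal F}_2$ is not reset on branch edges, so $G_u$ and $\rel_{\cal G}(G_u, X, N_u)$ can shrink within a single ${\cal F}_2$ regime, and balanced separators certified with respect to an earlier, larger $\rel$ may fail to meet the stricter balance threshold demanded by Lemma~\ref{lem:cant pack strong bs} at the current node. I would work around this by partitioning each ${\cal F}_2$ regime (a maximal path segment between ${\cal F}_2$ resets) into $G$-stable sub-regimes delimited by branch edges: within a single sub-regime, $G$, $X$, $N$, and hence $\rel_{\cal G}(G, X, N)$ are constant, so every core added inside the sub-regime is certified against the common $\rel$. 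Applying Lemma~\ref{lem:cant pack strong bs} separately to each sub-regime's sub-list (with $\bar{c} \leq \log N$ inherited from non-branchability against the entire ${\cal F}_2$) bounds the number of balanced separator additions per sub-regime by $10 t \log N$, and since each regime has at most $|V({\cal G})|$ branch edges, $|{\cal F}_{2,u}| \leq 10 t |V({\cal G})| \log |V({\cal G})| + 1 =: B_2$. Combining both bounds with the lex decrease of $\Psi$, every root-to-leaf path of $T$ is finite; each node has finite out-degree (at most two at a branch node and polynomially many at an extended strip decomposition node), so K\"onig's lemma yields that $T$ is finite and IND$(G,\bot,|G|,\emptyset,\emptyset)$ terminates.
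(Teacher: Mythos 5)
Your proof is correct, but it takes a genuinely different—and substantially heavier—route than the paper's. The paper's argument for this lemma is a pure pigeonhole: on a segment of the path with only (boosted) balanced separator edges, the graph $G_w$ is fixed with at most $N$ vertices, and each edge adds a nonempty set (Lemma~\ref{lem:none empty}), so after $N\log(N)$ such additions the total membership count forces some vertex into $\log(N)$ of the sets, making it branchable and terminating the segment; combined with the $\leq n$ bound on subgraph-shrinking edges, finiteness follows without appealing to $\Sttt$-freeness at all (beyond the fact that the subroutines succeed). Your argument instead invokes the two packing lemmas, Lemma~\ref{lem:cant pack bbs} and Lemma~\ref{lem:cant pack strong bs}, which the paper deliberately reserves for the quantitative runtime analysis (Lemmas~\ref{lem:few bbs edges} and~\ref{lem:few bs edges}). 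This gives you much sharper bounds on $|{\cal F}_1|$ and $|{\cal F}_2|$ than termination requires, and along the way you have to reprove fragments of Lemma~\ref{lem:f1 contains bbs} and Lemma~\ref{lem:fractional packing} inline, plus introduce the sub-regime device to handle the fact that $\rel$ can shrink across branch edges inside a single ${\cal F}_2$ regime—a complication that the paper's pigeonhole simply never encounters. Your ranking-function framing is clean and the case analysis for the lexicographic decrease is right (including noticing that the fourth coordinate may jump up on boosted-bs and type~2 esd edges, which is harmless because an earlier coordinate strictly drops). The one thing worth flagging is that to make the ranking map into a well-founded set you need $|{\cal F}_{1,u}| \leq B_1$ and $|{\cal F}_{2,u}| \leq B_2$ to hold \emph{a priori} at every node of a possibly-infinite tree; your arguments do establish this (they are contradiction arguments at individual nodes, not inductions that presuppose finiteness), but it is worth stating explicitly that Lemma~\ref{lem:fractional packing} and the packing lemmas are themselves proved without assuming termination. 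In short: correct, more elaborate than the paper, and effectively front-loads part of the runtime analysis into the termination proof.
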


\begin{proof}
Let $t$, $G$, and $n$ be as in the statement of this lemma, $T$ the recursion tree of IND$(G, \bot, |G|, \emptyset, \emptyset)$, and $P$ some path in $T$. In order to show that IND$(G, \bot, |G|, \emptyset, \emptyset)$ terminates and that $T$ is finite, it is sufficient to show that there is a bounded number of each of the 6 types of edges that can appear in~$P$, type 1 and type 2 extended strip decomposition edges, balanced separator and boosted balanced separator edges, and success/failure edges. Let $pc$ be an edge of $T$, and let $(G_p, X_p, N_p, {\cal F}_{1,p}, {\cal F}_{2,p})$ and $(G_c, X_c, N_c, {\cal F}_{1,c}, {\cal F}_{2,c})$ be the parameters of $p$ and $c$ respectively. If $pc$ is a success edge, failure, edge, or type 2 extended strip decomposition edge, then $G_c$ is a proper subgraph of $G_p$, hence there can be at most $n$ of each type of these edges. If $pc$ is a type 1 extended strip decomposition edge, then $N_c < N_p$, so again there can be at most $n$ type 1 extended strip decomposition edges.

Now let $u$ and $w$ be two nodes of $P$ with parameters $(G_u, X_u, N_u, {\cal F}_{1,u}, {\cal F}_{2,u})$ and $(G_w, X_w,$ $N_w, {\cal F}_{1,w}, {\cal F}_{2,w})$ respectively and let $P'$ be the subpath of $P$ that starts at $u$ and ends at $w$. Assume that $P'$ does not contain any type 1 or type 2 extended strip decomposition edges nor success/failure edges (so all edges are balanced separator and boosted balanced separator edges). It follows that $G_u = G_w$, $N_u = N_w = N$, and if $P'$ has $N\log(N)$ boosted balanced separator edges, then ${\cal F}_{1,w}$ must have at least $N\log(N)$ sets $S \in {\cal F}_{1,w}$ (counting multiplicity) such that $N_{G}^{G_w}[S] \neq \emptyset$ (using Lemma~\ref{lem:none empty} to ensure they are nonempty). Hence ${\cal L}_{\log(N)}(G, G_w, {\cal F}_{1,w}, N)$ is none empty and since any vertex $v \in {\cal L}_{\log(N)}(G, G_w, {\cal F}_{1,w}, N)$ is by definition a branchable vertex, $w$ must be a branch node (or a base case node). It follows that $P'$ can have at most $N\log(N)$ boosted balanced separator edges. Since $P$ has a bounded number of type 1 or type 2 extended strip decomposition edges and success/failure edges, there must be a bounded number of boosted balanced separator edges as well.

Now additionally assume that $P'$ contains no boosted balanced separator edges as well, so all edges of $P'$ are balanced separator edges. So, if $P'$ has $N\log(N)$ balanced separator edges, then ${\cal F}_{2,w}$ must have at least $N\log(N)$ sets $S \in {\cal F}_{2,w}$ (counting multiplicity) such that $N_{G}^{G_w}[S] \neq \emptyset$ (using Lemma~\ref{lem:none empty} to ensure they are nonempty). Hence ${\cal L}_{\log(N)}(G, G_w, {\cal F}_{2,w}, N)$ is nonempty and since any vertex $v \in {\cal L}_{\log(N)}(G, G_w, {\cal F}_{2,w}, N)$ is by definition a branchable vertex, $w$ must be a branch node (or a base case node). It follows that $P'$ can have at most $N\log(N)$ balanced separator edges. Since $P$ has a bounded number of all other edge types, there must be a bounded number of balanced separator edges as well, completing the proof.
\end{proof}

Now that we have established the recursion tree is finite, we can prove that IND returns the correct answer.

\begin{lemma}\label{lem: correct return}
Let $G$ be a  graph. Then IND$(G, \bot, |G|, \emptyset, \emptyset)$ returns the weight of a maximum weight independent set of $G$.
\end{lemma}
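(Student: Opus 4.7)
The plan is to argue by induction on the size of the recursion tree $T$ generated by $\textrm{IND}(G, \bot, |G|, \emptyset, \emptyset)$. By Lemma~\ref{lem:bounded depth}, $T$ is finite, so strong induction on $|V(T)|$ (or equivalently, on the height of $T$) is well-founded. More generally, we will prove the following statement by induction: for every node $u \in T$ with parameters $(G_u, X_u, N_u, {\cal F}_{1,u}, {\cal F}_{2,u})$, the value returned by the call corresponding to $u$ equals the weight of a maximum weight independent set of $G_u$. Taking $u$ to be the root then gives the lemma.

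For the base step, a leaf $u$ of $T$ corresponds to a base case call, in which $|V(G_u)| \leq 1$. If $V(G_u) = \emptyset$ then $\w(V(G_u)) = 0$, which is the weight of the unique (empty) independent set. If $V(G_u) = \{v\}$ then $\{v\}$ is the unique maximum weight independent set, of weight $\w(v) = \w(V(G_u))$. Either way, the returned value is correct.

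For the inductive step, fix a non-leaf node $u \in T$ and assume the claim holds for all its children. We split into cases based on the label of $u$. If $u$ is a \emph{branch node}, then a branchable vertex $v$ is fixed, the two children correspond to calls on $G_u - v$ and $G_u - N_{G_u}[v]$, and the returned value is $\max(I_f, I_s + \w(v))$ where $I_f$ and $I_s$ are the values returned by the children. By the inductive hypothesis these are the weights of maximum weight independent sets of $G_u - v$ and $G_u - N_{G_u}[v]$ respectively. The standard branching argument (any maximum weight independent set $S$ of $G_u$ either does not contain $v$, in which case $S$ is an independent set in $G_u - v$, or contains $v$, in which case $S \setminus \{v\}$ is an independent set in $G_u - N_{G_u}[v]$) shows that this maximum equals the weight of a maximum weight independent set of $G_u$. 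If $u$ is a \emph{boosted balanced separator node} or a \emph{balanced separator node}, then its unique child corresponds to a call whose graph is still $G_u$, and only the auxiliary parameters $X$, ${\cal F}_1$, ${\cal F}_2$ are updated. By the inductive hypothesis the child returns the weight of a maximum weight independent set of $G_u$, which is then returned unchanged; correctness is immediate. Finally, if $u$ is a \emph{type 1 extended strip decomposition node} or a \emph{type 2 extended strip decomposition node}, the algorithm constructs a rigid extended strip decomposition $(H, \eta)$ of $G_u$ (inferred from $(X,G_u)$ in the type 1 case, given directly by Lemma~\ref{lem: balanced sep or esd} in the type 2 case), recursively calls $\textrm{IND}$ on each particle $P$ of $(H,\eta)$, and returns $\matching(H,\eta)$. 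By the inductive hypothesis the recursive call on each particle $P$ returns the weight of a maximum weight independent set of $G_u[P]$, and Lemma~\ref{lem:particle matching} then guarantees that $\matching(H,\eta)$ equals the weight of a maximum weight independent set of $G_u$.

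There is no genuine obstacle in this proof: every step is a direct appeal to either (i) the standard branching identity for independent set, (ii) the fact that changing only the auxiliary parameters $X$, $N$, ${\cal F}_1$, ${\cal F}_2$ cannot alter the value to be returned because it is determined by $G_u$ alone, or (iii) Lemma~\ref{lem:particle matching}. The only point to verify carefully is that in the extended strip decomposition cases the object $(H,\eta)$ truly is an extended strip decomposition of $G_u$, so that Lemma~\ref{lem:particle matching} applies; this is guaranteed by Definition~\ref{def:esd bs generation} (for type 1 nodes) and by Lemma~\ref{lem: balanced sep or esd} (for type 2 nodes), both of which are invoked in the algorithm exactly when the case applies.
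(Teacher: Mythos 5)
Your proof is correct and follows essentially the same route as the paper's: structural induction on the recursion tree with a case analysis on the node types (base case, branch, balanced/boosted balanced separator, type~1/type~2 extended strip decomposition), invoking Lemma~\ref{lem:particle matching} for the two decomposition cases. One minor nit: the extended strip decomposition inferred by $(X, G_u)$ in the type~1 case need not be \emph{rigid} (it generally has many empty sets), but Lemma~\ref{lem:particle matching} does not require rigidity, so this does not affect the argument.
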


\begin{proof}
Let $G$ be a graph, $T$ the recursion tree generated by IND$(G, \bot, |G|, \emptyset, \emptyset)$, and $u$ a node of $T$ and assume that for all children, $v$, of $u$, that the call corresponding to $v$ correctly returns the weight of a maximum weight independent set of $G_v$, where $G_v$ is the graph used as input for the call corresponding to $v$. We show that the call corresponding to $u$ then correctly returns the maximum weight independent set of $G_u$ where $G_u$ is the graph used as input for the call corresponding to $u$. 

If $u$ is balanced separator or boosted balanced separator node then $G_u = G_v$ and so $u$ returns the weight of a maximum weight independent set of $G_v = G_u$. If $u$ is a branch node which branches on the vertex $v \in G_u$, then if we set $I_f$ and $I_s$ to be the weight of a maximum weight independent set of $G-v$ and $G-N_{G_u}[v]$ respectively, then $u$ returns the maximum of $I_f$ and $I_s$ + $\w(v)$ which is the weight of a maximum weight independent set of $G_u$. Lastly, if $u$ is a type 1 or type 2 extended strip decomposition node, then $u$ returns \matching$(H, \eta)$, which by Lemma~\ref{lem:particle matching} is the weight of a maximum weight independent set of $G_u$.

It now follows from induction that IND$(G, \bot, |G|, \emptyset, \emptyset)$ returns the weight of a maximum weight independent set of $G$.
\end{proof}

Next, we observe that the degree of recursion trees is at most polynomial.

\begin{observation}\label{obs:bounded degree}
There exists a constant $c$ such that for any positive integer $t$ and $\Sttt$-free $n$-vertex graph $G$, the recursion tree, $T$, of IND$(G, \bot, |G|, \emptyset, \emptyset)$ has a maximum degree of most $n^{c}$.
\end{observation}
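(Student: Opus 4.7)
The plan is to bound the out-degree of each node of $T$ by inspecting, case by case, how many recursive calls a single invocation of IND can make. Looking at the pseudocode, the children of a node $u$ of $T$ are exactly the calls made inside the invocation of IND corresponding to $u$. Base case nodes produce $0$ children; branch nodes, boosted-balanced-separator nodes and balanced-separator nodes produce respectively $2$, $1$ and $1$ children. The only two cases that can produce many children are the type~$1$ and type~$2$ extended strip decomposition cases, each of which makes one recursive call per particle of an extended strip decomposition $(H,\eta)$. So everything reduces to bounding the number of particles, which in turn reduces to bounding $|V(H)|$, because the total count of vertex, edge, half-edge, interior, full-edge, and triangle particles is at most $|V(H)| + 4|E(H)| + |T(H)| = O(|V(H)|^3)$.

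For type~$2$ extended strip decomposition nodes, the decomposition $(H,\eta)$ is produced by Lemma~\ref{lem: balanced sep or esd} and is \emph{rigid}; by the remark after Theorem~\ref{thm:ICALP:weight} (used also in Definition~\ref{def:esd bs generation}), a rigid extended strip decomposition of an $n$-vertex graph has $|V(H)| = O(n)$, hence at most $O(n^3)$ particles.

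For type~$1$ extended strip decomposition nodes, $(H',\eta')$ is the decomposition inferred by $(X,G)$ per Definition~\ref{def:esd bs generation}. Its host $H'$ is built by taking, for each connected component $C$ of $G_u$ avoiding $N^{G_u}_{\cal G}[X]$, an isolated copy of the rigid host $H$ produced by \esdthm on the original input graph, plus one isolated vertex per remaining component of $G_u$. Since \esdthm invokes Theorem~\ref{thm:ICALP:weight}, which outputs a rigid extended strip decomposition, $|V(H)| = n^{O(1)}$, and the number of components of $G_u$ is at most $n$, so $|V(H')| = n^{O(1)}$ and $(H',\eta')$ has $n^{O(1)}$ particles.

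Taking the maximum of these bounds over the six cases yields an absolute constant $c$ such that the out-degree of every node of $T$ is at most $n^c$. I do not foresee any real obstacle here: the only subtle point is that type~$1$ ESDs in our algorithm refer to the full input graph $\mathcal{G}$ through $X$ rather than to $G_u$, but the explicit construction in Definition~\ref{def:esd bs generation} immediately gives the polynomial bound on $|V(H')|$ in terms of $|V(G_u)| \le n$.
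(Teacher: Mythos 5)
Your proof is correct and follows essentially the same route as the paper's: non-ESD nodes have out-degree at most $2$, and for the two ESD cases you bound the number of particles via the size of the host graph, using rigidity for type~2 and the explicit construction of the inferred decomposition (Definition~\ref{def:esd bs generation}) for type~1. One tiny slip: in the type~1 case the rigid host $H$ is the one returned by \esdthm{} on the graph $G_a$ in which $X$ was \emph{discovered}, not necessarily on the original input graph $\mathcal{G}$; since $G_a$ is an induced subgraph of $\mathcal{G}$ with at most $n$ vertices, your bound $|V(H)| = O(n)$ and the rest of the argument go through unchanged.
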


\begin{proof}
Let $t$, $G$, and $T$ be as in the statement of the lemma, and let $u \in T$ with parameters $(G_u, X_u, N_u, {\cal F}_{1,u}, {\cal F}_{2,u})$. If $u$ is any type of node other than a type 1 or type 2 extended strip decomposition node, then it is clear the degree of $u$ is at most 2. If $u$ is a type 1 extended strip decomposition node then (by the discussion immediately after Definition~\ref{def:esd bs generation}) the extended strip decomposition inferred by $(X_u, G_u)$ has $n^{\Oh(1)}$ particles, hence $u$ has degree $n^{\Oh(1)}$. If $u$ is a type 2 extended strip decomposition, then each child of $u$ corresponds to a particle of a rigid extended strip decomposition of $G_u$, which has $\Oh(n)$ vertices and therefore $n^{\Oh(1)}$ particles, and therefore $u$ has degree $n^{\Oh(1)}$. 
\end{proof}

Let $G$ be an $n$-vertex graph and $T$ the recursion tree generated by IND$(G, \bot, |G|, \emptyset, \emptyset)$. Recall that the edges of $T$ are labeled (see Definition~\ref{def:recursion tree}). The next set of lemmas show that on any root to leaf path $P$ of $T$, there are at most polylog($n$) edges with any given label other than a failure label, of which there can be at most $n$. We will then be able to use this fact to bound the runtime of IND.

\begin{lemma}\label{lem:few t1 esd edges}
Let $t$ be a positive integer, $G$ an $\Sttt$-free $n$-vertex graph, $T$ the recursion tree generated by IND$(G, \bot,$ $|G|, \emptyset, \emptyset)$, and $P$ a path of $T$. Then $P$ contains at most $64c_t^2\log^3(n)$ type 1 extended strip decomposition edges.
\end{lemma}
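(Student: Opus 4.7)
The plan is to track how the parameter $N$ evolves along the path $P$ and show that it must shrink by a substantial multiplicative factor every time a type 1 extended strip decomposition edge is traversed.

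First, I would observe that along any root-to-descendant path in $T$, the parameter $N$ is non-increasing, and in fact it is altered \emph{only} across type 1 extended strip decomposition edges. This follows from Observation~\ref{obs:obvious observations} (item 1) together with an inspection of the recursive calls: branch edges, type 2 extended strip decomposition edges, balanced separator edges and boosted balanced separator edges all pass the same value $N$ to the child, whereas a type 1 edge $pc$ passes $N_c = |G_c|$, where $G_c$ is a particle of the extended strip decomposition inferred by $(X_p, G_p)$. Since this extended strip decomposition is $N_p$-good, no particle has more than $\bigl(1-\tfrac{1}{32c_t^2\log^2(N_p)}\bigr)N_p$ vertices, and hence
\[
 N_c \leq \left(1 - \frac{1}{32c_t^2\log^2(N_p)}\right) N_p.
\]

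Next, because $N$ never grows along $P$ and the initial value is $n$, every node $u$ on $P$ satisfies $N_u \leq n$, so $\log(N_u) \leq \log(n)$ and the above inequality strengthens to
\[
 N_c \leq \left(1 - \frac{1}{32c_t^2\log^2(n)}\right) N_p.
\]
Iterating this across the $k$ type 1 extended strip decomposition edges of $P$ (and noting that all intermediate edges leave $N$ unchanged), the value of $N$ at the child of the $k$-th such edge is at most $n\bigl(1-\tfrac{1}{32c_t^2\log^2(n)}\bigr)^{k} \leq n\cdot e^{-k/(32c_t^2\log^2(n))}$.

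Finally, I would use that the source $p_k$ of the $k$-th type 1 edge is not a base case node, so $|G_{p_k}|\geq 2$ and therefore $N_{p_k}\geq 2$. Applying the estimate above to the child of the $(k-1)$-th type 1 edge (whose $N$ equals $N_{p_k}$) gives
\[
 2 \;\leq\; N_{p_k} \;\leq\; n\cdot e^{-(k-1)/(32c_t^2\log^2(n))},
\]
from which $k-1 \leq 32c_t^2\log^2(n)\cdot \ln(n/2) \leq 32c_t^2\log^3(n)$, using $\ln m \leq \log_2 m \leq \log(m)$ together with the convention $\log(n)=\max(2,\lceil\log_2 n\rceil)$. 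Thus $k \leq 32c_t^2\log^3(n)+1 \leq 64c_t^2\log^3(n)$, as claimed. There is no real obstacle here beyond correctly identifying that type 1 edges are the unique source of decrease of $N$ and invoking the definition of ``$N$-good'' for the multiplicative shrinkage; the calculation itself is then a standard exponential-decay argument.
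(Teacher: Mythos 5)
Your proof is correct and follows essentially the same approach as the paper: identify that only type 1 extended strip decomposition edges decrease $N$, use the $N$-good property to get the multiplicative shrinkage factor $(1-\tfrac{1}{32c_t^2\log^2(n)})$, and iterate to force $N$ below $2$. The only cosmetic difference is that you use the bound $1-x\leq e^{-x}$ for the iterated decay, while the paper groups factors using $(1-1/c)^{2c}\leq 1/2$; both yield the same constant $64c_t^2\log^3(n)$.
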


\begin{proof}
Let $t$, $G$, $n$, $T$, and $P$ be as in the statement of this lemma. Let $pc$ be a type 1 extended strip decomposition edge of $P$ and let $(G_p, X_p, N_p, {\cal F}_{1,p}, {\cal F}_{2,p})$ and $(G_c, X_c, N_c, {\cal F}_{1,c}, {\cal F}_{2,c})$ be the parameters of $p$ and $c$ respectively. Then $N_c \leq (1-\frac{1}{32c_t^2\log^2(N_p)})N_p \leq (1-\frac{1}{32c_t^2\log^2(n)})N_p$. 

Now, let $u$ and $w$ be two vertices of $P$ with parameters $(G_u, X_u, N_u, {\cal F}_{1,u}, {\cal F}_{2,u})$ and $(G_w, X_w,$ $N_w, {\cal F}_{1,w}, {\cal F}_{2,w})$ respectively. Since for all $c \geq 2$, $(1-1/c)^{2c} \leq 1/2$ it follows that if there are $64c_t^2\log^3(n)$ type 1 extended strip decomposition edges on the subpath of $P$ that starts at $u$ and ends at $w$ then $N_w \leq (1-\frac{1}{32c_t^2\log^2(n)})^{64c_t^2\log^3(n)}N_u$ $\leq (1-1/2)^{\log(n)}N_u$ $\leq 1$. It follows that $|G_w| \leq 1$ and therefore $w$ must be a base node and hence the last vertex of $P$. So, $P$ cannot have over $64c_t^2\log^3(N)$ type 1 extended strip decomposition edges.
\end{proof}


\begin{lemma}\label{lem:fractional packing}
Let $t$ be a positive integer, $G$ an $\Sttt$-free graph, $T$ the recursion generated by IND$(G, \bot, |G|, \emptyset, \emptyset)$, and $u$ a node of $T$ with parameters $(G_u, X_u, N_u, {\cal F}_{1,u}, {\cal F}_{2,u})$. Then no vertex of $G_u$ belongs to over $\log(N_u)$ sets of $N_G^{G_u}[{\cal F}_{1,u}]$ and no vertex of $G_u$ belongs to over $\log(N_u)$ sets of $N_G^{G_u}[{\cal F}_{2,u}]$, hence ${\cal L}_i(G, G_p, {\cal F}_{1,u}, N_u)$ = ${\cal L}_i(G, G_p, {\cal F}_{2,u}, N_u)$ = $\emptyset$ for $i > \log(N_u)$.
\end{lemma}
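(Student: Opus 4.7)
The plan is to prove the lemma by induction on the depth of $u$ in $T$, with the root case being trivial (${\cal F}_{1,u} = {\cal F}_{2,u} = \emptyset$). The second conclusion, that the level sets ${\cal L}_i$ are empty for $i > \log(N_u)$, is an immediate consequence of the first statement by the definition of level sets, so the whole argument reduces to establishing the membership bound.

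The key observation I would pin down first is the contrapositive of branchability: if at node $u$ some vertex $v \in V(G_u)$ belongs to at least $\log(N_u)$ sets of $N_G^{G_u}[{\cal F}_{1,u}]$, then $v$ is $N_u$-branchable. Indeed, taking $j = \log(N_u)$, we have $v \in {\cal L}_j({\cal G}, G_u, {\cal F}_{1,u})$, so $v \in N_{G_u}[v] \cap {\cal L}_j$, and since $2^{\log(N_u)} \geq N_u$ by the convention $\log(N) = \max(2, \lceil \log_2 N \rceil)$, we get $1 \geq N_u / 2^j$, witnessing branchability. The analogous statement holds for ${\cal F}_{2,u}$.

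For the inductive step, let $p$ be the parent of $u$ and split on the label of the edge $pu$. If $pu$ is a success or failure edge, then $p$ is a branch node, the lists are unchanged, and $V(G_u) \subseteq V(G_p)$, so $N_G^{G_u}[S] \subseteq N_G^{G_p}[S]$ for every $S$ in either list; the count at $u$ is therefore dominated by the count at $p$, which is at most $\log(N_p) = \log(N_u)$ by induction. If $pu$ is a type~1 extended strip decomposition edge, both lists are reset to $\emptyset$ at $u$ and there is nothing to prove. If $pu$ is a type~2 extended strip decomposition edge, ${\cal F}_{2,u} = \emptyset$ is trivial and ${\cal F}_{1,u} = {\cal F}_{1,p}$ with $V(G_u) \subseteq V(G_p)$, handled as in the success/failure case. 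If $pu$ is a boosted balanced separator edge, then $G_u = G_p$, $N_u = N_p$, ${\cal F}_{2,u} = \emptyset$, and ${\cal F}_{1,u} = {\cal F}_{1,p} \cup \{X\}$ for some $X$; since $p$ is not a branch node (it is a boosted balanced separator node), the branchability observation above forces every vertex of $G_p$ to lie in at most $\log(N_p) - 1$ sets of $N_G^{G_p}[{\cal F}_{1,p}]$, so after appending $X$ the count at $u$ increases by at most one, yielding the bound $\log(N_u)$. The balanced separator edge case is entirely symmetric, with the roles of ${\cal F}_1$ and ${\cal F}_2$ swapped.

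The only subtle step, and the one I expect to be the main obstacle to get exactly right, is the boosted/balanced separator cases: one must use the fact that $p$ is not a branch node to gain an extra slack of $1$ in the count at $p$, which is then spent by inserting the new set $X$ (resp.\ $C$) into the list. This is where the convention $\log(N) = \max(2, \lceil \log_2 N \rceil)$ is invoked, via the inequality $2^{\log(N_u)} \geq N_u$ that drives the contrapositive of branchability. Every other case is a pure monotonicity or triviality argument.
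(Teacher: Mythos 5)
Your proof is correct and follows essentially the same approach as the paper: the paper phrases it as a minimal-counterexample argument rather than explicit induction, but both hinge on the same two facts — only boosted/balanced separator edges grow ${\cal F}_1$/${\cal F}_2$, and a vertex in $\geq\log(N)$ sets is branchable (so the parent, being a non-branch node when it appends a set, has a slack of one). The only minor imprecision is calling the balanced separator case ``entirely symmetric'' — there ${\cal F}_1$ is left unchanged rather than reset, so its bound follows from monotonicity rather than the slack argument — but this does not affect correctness.
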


\begin{proof}
Let $t$, $G$, $T$, $u$, and $(G_u, X_u, N_u, {\cal F}_{1,u}, {\cal F}_{2,u})$ be as in the statement of this lemma. Assume for a contradiction that the statement of this lemma does not hold for $u$ and that $u$ is the first node on a path from the root to $u$ such that the statement of this lemma does not hold. It follows there is some vertex, call it $v$, of $G_u$ belongs to over $\log(N_u)$ sets of $N_G^{G_u}[{\cal F}_{1,u}]$ or $\log(N_u)$ sets of $N_G^{G_u}[{\cal F}_{2,u}]$. Let $w$ be the parent of $u$ in $T$ and let $(G_w, X_w, N_w, {\cal F}_{1,w}, {\cal F}_{2,w})$ be the parameters of $w$. If the edge $wu$ was a type 1 extended strip decomposition edge, then ${\cal F}_{1,u}$ =  ${\cal F}_{2,u}$ = $\emptyset$ but then $v$ could not belong to over $\log(N_u)$ sets of $N_G[{\cal F}_{1,u}]$, so $wu$ is not a type 1 extended strip decomposition edge, hence $N_w = N_u$. 

As the arguments are nearly identical, with out loss of generality assume $v \in G_u$ belongs to over $\log(N_u)$ sets of $N_G[{\cal F}_{1,u}]$. Since (by how $u$ was chosen) $v$ does not belong to over $\log(N_w) = \log(N_u)$ sets of $N_G[{\cal F}_{1,w}]$ it follows that the edge $wu$ must be a boosted balanced separator edge (hence $G_u = G_w$) and $v$ must belong to $\log(N_u) = \log(N_w)$ sets of $N_G^{G_w}[{\cal F}_{1,w}]$ in order for $v$ to belong to over $\log(N_u)$ sets of $N_G^{G_u}[{\cal F}_{1,u}]$. But then by definition of being a branchable vertex, since $v$ belongs to $\log(N_w)$ sets of $N_G^{G_w}[{\cal F}_{1,w}]$ the call that corresponds to $w$ should have branched on $v$ (or some other branchable vertex) and the edge $wu$ would therefore be a either a success or failure edge, which is a contradiction. Therefore, there can never be a ``first node'' in $T$ such that the statement of this lemma does not hold.
\end{proof}

\begin{lemma}\label{lem:f1 contains bbs}
Let $t$ be a positive integer, $G$ an $\Sttt$-free $n$-vertex graph, $T$ the recursion tree generated by IND$(G, \bot,$ $|G|, \emptyset, \emptyset)$, $u$ a node of $T$ with parameters $(G_u, X_u, N_u, {\cal F}_{1,u}, {\cal F}_{2,u})$, and $C$ the largest component of $G_u$. Then either $|C| < N_u/2$ or all sets of ${\cal F}_{1,u}$ are cores of $c_t\log(N_u)$-boosted balanced separators of $G_u$ originating in $G$.
\end{lemma}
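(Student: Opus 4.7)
My plan is to fix an arbitrary $X \in \mathcal{F}_{1,u}$ and trace it back to the unique ancestor $v$ of $u$ in $T$ at which $X$ was added to the first list. Such a $v$ exists because the only operation that places a set into $\mathcal{F}_1$ is the boosted balanced separator operation (Case~3). The $v$ in question is therefore a boosted balanced separator node, and $X$ is either equal to $X_v$ or was freshly computed by $\esdthm(G_v)$ immediately before Case~3 was triggered at $v$.

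The first key step is to show that $N_v = N_u$ and that $|X| \leq c_t \log(N_u)$. For the equality of $N$'s, I will use that (i) $N$ only strictly changes along type~1 extended strip decomposition edges, and (ii) such edges reset $\mathcal{F}_1 = \emptyset$ in the child. Since $X$ is still present in $\mathcal{F}_{1,u}$, the path from $v$ down to $u$ cannot contain any type~1 extended strip decomposition edge, so $N$ stays constant on that path and, by the same argument applied to the ancestors of $v$ if $X = X_v$ was inherited (rather than freshly computed at $v$), $|X| \leq c_t \log(N_v) = c_t \log(N_u)$. This takes care of the core-size requirement from Definition~\ref{def:bbs}.

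The second step is to promote the balanced separator guarantee from $G_v$ to $G_u$. At $v$, the condition for Case~3 ensures that $N_G^{G_v}[X]$ is a $\tfrac{N_v}{32 c_t^2 \log^2(N_v)}$-balanced separator for $G_v$. Since $G_u$ is an induced subgraph of $G_v$ (edges of $T$ only remove vertices, never add them), we have $N_G^{G_u}[X] = N_G^{G_v}[X] \cap V(G_u)$ and every connected component of $G_u - N_G^{G_u}[X]$ is contained in some component of $G_v - N_G^{G_v}[X]$. Hence the largest component of $G_u - N_G^{G_u}[X]$ has at most $\tfrac{N_v}{32 c_t^2 \log^2(N_v)} = \tfrac{N_u}{32 c_t^2 \log^2(N_u)}$ vertices.

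Finally, I will invoke the hypothesis that $|C| \geq N_u/2$, where $C$ is the largest component of $G_u$. Under this hypothesis, the bound from the previous paragraph satisfies
\[
\frac{N_u}{32 c_t^2 \log^2(N_u)} \;\leq\; \frac{2|C|}{32 c_t^2 \log^2(N_u)} \;=\; \frac{|C|}{16\,(c_t \log(N_u))^2},
\]
which is exactly the boostedness condition from Definition~\ref{def:bbs} with parameter $s = c_t \log(N_u)$. Together with $|X| \leq c_t \log(N_u)$, this shows that $X$ is a core of a $c_t \log(N_u)$-boosted balanced separator of $G_u$ originating in $G$, as required. I do not expect a real obstacle here; the argument is bookkeeping across edge types, and the only thing to get right is the observation that $\mathcal{F}_1$ persisting in a descendant forces $N$ to persist as well.
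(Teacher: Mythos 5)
Your proof is correct and follows essentially the same route as the paper's: trace the set back to the boosted-balanced-separator node where it was added (and, if inherited, further back to where \esdthm{} discovered it), observe that $N$ cannot have changed along the way because type-1 extended strip decomposition edges are the only edges that change $N$ and they would reset both $\mathcal{F}_1$ and $X$, then transfer the balanced-separator guarantee to $G_u$ by induced-subgraph monotonicity and combine with $|C|\geq N_u/2$. The paper names the two ancestors explicitly ($w$ where $S$ is added, $a$ where $S$ is discovered); your ``same argument applied to the ancestors of $v$'' is slightly compressed but amounts to the same observation.
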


\begin{proof}
Let $t$, $G$, $T$, $u$, $(G_u, X_u, N_u, {\cal F}_{1,u}, {\cal F}_{1,u})$, and $C$ be as in the statement of this lemma. Let $S$ be in ${\cal F}_{1,u}$ and assume that $|C| \geq N_u/2$, we show that $S$ is a core of a $c_t\log(N_u)$-boosted balanced separator for $G_u$ originating in $G$. 

Let $w$ be the closest ancestor of $u$ in $T$ that corresponds to a call where $S$ is the core of a boosted balanced separator added in that call. Let $(G_w, X_w, N_w, {\cal F}_{1,w}, {\cal F}_{2,w})$ be the parameters of $w$. Note that this implies $X_w = S$ or $X_w = \bot$ and $S$ was discovered in the call that corresponds to $w$. Additionally, let $a$ be the closest ancestor of $w$ that corresponds to a call where $S$ was discovered. Let $(G_a, X_a, N_a, {\cal F}_{1,a}, {\cal F}_{2,a})$ be the parameters of $a$, as $S$ was discovered in the call the corresponds to $a$, it follows that $X_a = \bot$.

No edge on the path of $T$ starting from $w$ and ending at $u$ can be a type 1 extended strip decomposition edge or else $S$ would not be in ${\cal F}_{1,u}$ (since type 1 extended strip decompositions ``resets ${\cal F}_{1}$'' to $\emptyset$), and no edge on the path of $T$ starting from $a$ and ending at $w$ can be a type 1 extended strip decomposition edge (since type 1 extended strip decompositions ``resets $X$'' to $\bot$). Hence $N_a = N_{w} = N_u$, so set $N$ = $N_a = N_{w} = N_u$. So, $|S| \leq c_t\log(N)$ and since $N_{G}^{G_{w}}[S]$ is an $\frac{N}{32c_t^2\log^2(N)}$-balanced separator for $G_{w}$ and $G_u$ is an induced subgraph of $G_{w}$, it follows that $N_{G}^{G_{u}}[S]$ is an $\frac{N}{32c_t^2\log^2(N)}$-balanced separator for $G_{u}$. Since by assumption $|C| \geq N/2$, we have that $N_{G}^{G_{u}}[S]$ is an $\frac{|C|}{16c_t^2\log^2(N)}$-balanced separator for $G_{u}$ and therefore $S$ is a core of a $c_t\log(N)$-boosted balanced separator for $G_u$ originating in $G$.
\end{proof}

\begin{lemma}\label{lem:few bbs edges}
Let $t$ be a positive integer, $G$ an $n$-vertex $\Sttt$-free graph, $T$ the recursion tree generated by IND$(G, \bot,$ $|G|, \emptyset, \emptyset)$, and $P$ a  path of $T$. Then $P$ contains at most $5200c_t^4\log^5(n)$ boosted balanced separator edges.

\end{lemma}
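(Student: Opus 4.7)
The plan is to partition $P$ into segments delimited by the type~1 extended strip decomposition edges on $P$. By Lemma~\ref{lem:few t1 esd edges} there are at most $64 c_t^2 \log^3(n)$ such edges, so $P$ is cut into at most $64 c_t^2 \log^3(n)+1$ segments. Crucially, type~1 ESD is the only edge type that resets ${\cal F}_1$ to $\emptyset$, while each boosted balanced separator edge adds exactly one set to ${\cal F}_1$ and no other edge type modifies ${\cal F}_1$. Thus if $u$ is a node inside a segment and there have been exactly $k-1$ boosted balanced separator edges in that segment preceding $u$, we have $|{\cal F}_{1,u}|=k-1$.

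The core of the argument is to bound the number of boosted balanced separator edges in a single segment by $O_t(\log^2 n)$ using Lemma~\ref{lem:cant pack bbs}. Fix a segment and let $p\to c$ be a boosted balanced separator edge in this segment, so that $p$ corresponds to a Case~3 call. Since Case~2 did not trigger at $p$, the extended strip decomposition inferred by $(X,G_p)$ (for $X$ the current core at $p$) is not $N_p$-good, so by the contrapositive of Observation~\ref{obs:no large componet give good esd} some component of $G_p$ has at least $\left(1-\tfrac{1}{32c_t^2\log^2(N_p)}\right)N_p \geq N_p/2$ vertices. Now Lemma~\ref{lem:f1 contains bbs} applies at $p$ and guarantees that every set in ${\cal F}_{1,p}$ is a core of a $c_t\log(N_p)$-boosted balanced separator of $G_p$ originating in $G$. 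Lemma~\ref{lem:fractional packing} further tells us no vertex of $G_p$ lies in more than $\log(N_p)$ sets of $N_G^{G_p}[{\cal F}_{1,p}]$. Since $G$ is $\Sttt$-free, Lemma~\ref{lem:cant pack bbs} (applied with $s=c_t\log(N_p)$ and $c=\log(N_p)$) forces
\[
|{\cal F}_{1,p}| < 80 \cdot c_t \log(N_p) \cdot t \cdot \log(N_p) \leq 80 t c_t \log^2(n).
\]
As $|{\cal F}_{1,p}|$ equals the number of boosted balanced separator edges in the segment strictly before $p$, each segment contains at most $80 t c_t \log^2(n)$ such edges.

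Combining across segments, the total number of boosted balanced separator edges on $P$ is at most
\[
\bigl(64 c_t^2 \log^3(n)+1\bigr)\cdot 80 t c_t \log^2(n) \leq 5200 \, t\, c_t^3 \log^5(n) \leq 5200\, c_t^4 \log^5(n),
\]
where in the last inequality we use the standing assumption $c_t\geq 34t$ (so in particular $t\leq c_t$). The only potential subtlety—and the step I expect to require the most care in the write-up—is verifying the inequality $(1-\tfrac{1}{32c_t^2\log^2(N_p)})N_p \geq N_p/2$ under the conventions of this section (in particular $\log(N_p)\geq 2$ and $c_t\geq 1$), which is needed to invoke Lemma~\ref{lem:f1 contains bbs}; once that is in place, the three packing lemmas slot together mechanically and the segment-by-segment counting above finishes the proof.
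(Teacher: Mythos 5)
Your proof is correct and follows essentially the same route as the paper's: segment $P$ by type 1 extended strip decomposition edges (of which Lemma~\ref{lem:few t1 esd edges} bounds the number), observe that within a segment ${\cal F}_1$ only grows via boosted balanced separator edges and is never reset, verify the $|C|\geq N_p/2$ hypothesis of Lemma~\ref{lem:f1 contains bbs} via the contrapositive of Observation~\ref{obs:no large componet give good esd}, and then combine Lemmas~\ref{lem:f1 contains bbs}, \ref{lem:fractional packing}, and \ref{lem:cant pack bbs} to cap each segment at $80 t c_t \log^2(n)$ boosted balanced separator edges. The paper phrases the per-segment bound as a proof by contradiction rather than your direct count, but the lemmas invoked and the arithmetic are identical.
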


\begin{proof}
Let $t$, $G$, $T$, and $P$ be as in the statement if this lemma. Let $P'$ be a subpath of $P$ that does not contain any edges that are type 1 extended strip decomposition edges. We first show that $P'$ has less than $80tc_t\log^2(n)$ boosted balanced separator edges. 

Assume for a contradiction that $P'$ does have $80tc_t\log^2(n)$ boosted balanced separator edges, let $uw$ be the $80tc_t\log^2(n))^{th}$ boosted balanced separator edge, and let $(G_u, X_u, N_u, {\cal F}_{1,u}, {\cal F}_{2,u})$ and $(G_w, X_w, N_w,$ ${\cal F}_{1,w}, {\cal F}_{2,w})$ be the parameters of $u$ and $w$ respectively. 
Since $P'$ has no type 1 extended strip decomposition edges and $80tc_t\log^2(n)$ boosted balanced separator edges it follows that $|{\cal F}_{1,w}|$ $\geq$ $80tc_t\log^2(n)$ $\geq$ $80tc_t\log^2(N_w)$ and $N_u = N_w$.

Since $uw$ is a boosted balanced separator edge and not a type 1 extended strip decomposition edge we can conclude that if $C$ is the largest component of $G_u = G_w$, then $|C| \geq N_u/2 = N_w/2$ (or else by Observation~\ref{obs:no large componet give good esd} the extended strip decomposition inferred by ($X_u, G_u)$ would be $N_u$-good and $uw$ would be a type 1 extended strip decomposition edge). It follows then from Lemma~\ref{lem:f1 contains bbs} that all $80tc_t\log^2(n) \geq 80tc_t\log^2(N_w)$ sets of ${\cal F}_{1,w}$ are cores of $c_t\log(N_w)$-boosted balanced separators for $G_w$ originating in $G$. By Lemma~\ref{lem:fractional packing} no vertex of $G_w$ belongs to over $\log(N_w)$ vertex sets of $N_{G}^{G_w}[{\cal F}_{1,w}]$, it then follows from Lemma~\ref{lem:cant pack bbs} that $G$ contains an $\Sttt$, a contradiction.

Hence, $P'$ has less than $80tc_t\log^2(n)$ boosted balanced separator edges. Since by Lemma~\ref{lem:few t1 esd edges} $P$ has at most $64c_t^2\log^3(n)$ type 1 extended strip decomposition edges, it follows that $P$ contains at most $(80tc_t\log^2(n))\cdot (64c_t^2\log^3(n)+1)$ $\leq$ $5200tc_t^3\log^5(n)$ $\leq$ $5200c_t^4\log^5(n)$ (recall by definition that $c_t \geq t$) boosted balanced separator edges.
\end{proof}


\begin{lemma}\label{lem:few t2 esd edges}
Let $t$ be a positive integer, $G$ an $n$-vertex $\Sttt$-free graph, $T$ the recursion tree generated by IND$(G, \bot, |G|,$ $\emptyset, \emptyset)$, and $P$ a path of $T$. Then $P$ contains at most $10^7c_t^7\log^9(n)$ type 2 extended strip decomposition edges.
\end{lemma}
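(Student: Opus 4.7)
The plan is to mimic the outer structure of the proof of Lemma~\ref{lem:few bbs edges}: split $P$ at the edges that reset ${\cal F}_2$ and $X$, and bound the occurrences of type~2 extended strip decomposition edges between two consecutive resets. The new ingredient, replacing the call to Lemma~\ref{lem:cant pack bbs} used for boosted balanced separators, is a direct geometric-shrinking argument on the quantity $|\rel_G(G_\cdot, X, N)|$: every type~2 extended strip decomposition edge multiplies this quantity by at most $1 - \frac{1}{800c_t^3\log^3(N)}$, which caps how many can occur before $|\rel_G|$ hits~$0$.

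Concretely, decompose $P$ into maximal subpaths $P'_1, \ldots, P'_m$ containing no type~1 extended strip decomposition edges and no boosted balanced separator edges. By Lemmas~\ref{lem:few t1 esd edges} and~\ref{lem:few bbs edges}, the number of such reset edges on $P$ is at most $64c_t^2\log^3(n) + 5200c_t^4\log^5(n)$, so $m \leq 5300c_t^4\log^5(n)$. Fix such a subpath $P' = P'_i$. Along $P'$ the parameter $N$ is constant, and the parameter $X$ is either already non-$\bot$ at the start of $P'$, or becomes non-$\bot$ at the first non-base, non-branch node of $P'$ (by the $X = \esdthm(G)$ step) and remains unchanged thereafter. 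In either case $X$ stabilizes to a single value, so every type~2 extended strip decomposition edge of $P'$ has well-defined parameters with this common $X$ and $N$.

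The core technical step is a monotonicity claim for $|\rel_G(G_u, X, N)|$ along $P'$, starting from the first node where $X$ is defined. Balanced separator edges keep $G_u$ unchanged, and success/failure edges only delete vertices, so $|\rel_G|$ is non-increasing across them by the subgraph monotonicity observation stated right after Definition~\ref{def:relevant}. For a type~2 extended strip decomposition edge $pc$, the algorithm has invoked Lemma~\ref{lem: balanced sep or esd} with $i = \log(200c_t^3\log^3(N))$, producing an extended strip decomposition of $G_p$ whose particles each contain at most $(1 - \frac{1}{800c_t^3\log^3(N)})|\rel_G(G_p, X, N)|$ vertices of $\rel_G(G_p, X, N)$. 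Since $G_c$ is such a particle and $\rel_G(G_c, X, N) \subseteq V(G_c) \cap \rel_G(G_p, X, N)$, we obtain
\[
|\rel_G(G_c, X, N)| \;\leq\; \left(1 - \frac{1}{800c_t^3\log^3(N)}\right)|\rel_G(G_p, X, N)|.
\]

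After $k$ type~2 extended strip decomposition edges along $P'$, therefore, $|\rel_G| \leq n \cdot (1 - \frac{1}{800c_t^3\log^3(n)})^k$. Taking $k = 800c_t^3\log^4(n)$ and using $(1-1/c)^c \leq 1/e$ together with $\ln n \leq \log(n)$ forces this quantity strictly below $1$, hence equal to $0$. But once $|\rel_G(G_u, X, N)| = 0$, item~(\ref{itm:empty rel set}) of Observation~\ref{obs:obvious observations} (applied at the next non-base, non-branch descendant in $P'$, if any) forces that descendant to be either a type~1 extended strip decomposition node or a boosted balanced separator node, neither of which can belong to $P'$. Thus $P'$ contains at most $800c_t^3\log^4(n)$ type~2 extended strip decomposition edges, and summing over all $P'_i$ yields a total of at most $5300c_t^4\log^5(n)\cdot 800c_t^3\log^4(n) \leq 10^7 c_t^7 \log^9(n)$. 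The one place that requires care is the bookkeeping needed to guarantee that $X$ is well-defined and constant across the type~2 edges of $P'$, and to match the constants in the geometric estimate; the rest follows the same template as Lemma~\ref{lem:few bbs edges}.
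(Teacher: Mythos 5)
Your proposal is correct and follows essentially the same route as the paper: decompose $P$ into maximal subpaths containing no type~1 extended strip decomposition or boosted balanced separator edges, observe that $N$ and $X$ stabilize within each such subpath, use the multiplicative decay of $|\rel_G(G_\cdot,X,N)|$ across type~2 edges from Lemma~\ref{lem: balanced sep or esd}, and invoke item~(\ref{itm:empty rel set}) of Observation~\ref{obs:obvious observations} once $\rel_G$ is empty. The only differences are cosmetic (you use $(1-1/c)^c\le 1/e$ where the paper uses $(1-1/c)^{2c}\le 1/2$, and you make the monotonicity of $|\rel_G|$ across non-type-2 edges explicit where the paper leaves it implicit), and the resulting constants still fit under the stated $10^7c_t^7\log^9(n)$ bound.
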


\begin{proof}
Let $t$, $G$, $T$, and $P$ be as in the statement of this lemma. Let $P'$ be a subpath of $P$ that contains no type 1 extended strip decomposition nor boosted balanced separator edges. It then follows that there exists an integer $N$ and a set $X \subseteq V(G)$ such that for any node $a \in P'$ with parameters $(G_a, X_a, N_a, {\cal F}_{1,a}, {\cal F}_{2,a})$ it holds that $N_a = N$ and $X_a$ = $\bot$ or $X$.

We first show that $P'$ has at most $1600c_t^3\log^4(n)+2$ type 2 extended strip decomposition edges. Let $pc$ be a type 2 extended strip decomposition edge of $P'$ and let $(G_p, X_p, N, {\cal F}_{1,p}, {\cal F}_{2,p})$ and $(G_c, X_c, N, {\cal F}_{1,c}, {\cal F}_{2,c})$ be the parameters of $p$ and $c$ respectively. Either $X_p = X_c = X$ or $X_p = \bot$, $X_c = X$ and $X$ was discovered in the call $p$ corresponds to. In either case we have $|\rel_G(G_c, X, N)| \leq (1-\frac{1}{800c_t^3\log^3(N)})|\rel_G(G_p, X, N)| \leq (1-\frac{1}{800c_t^3\log^3(n)})|\rel_G(G_p, X, N)|$. 

Now, let $u$ and $w$ be two vertices of $P'$ with parameters $(G_u, X_u, N, {\cal F}_{1,u}, {\cal F}_{2,u})$ and $(G_w, X_w,$ $N, {\cal F}_{1,w}, {\cal F}_{2,w})$ respectively. Since $(1-1/c)^{2c} \leq 1/2$ for $c \geq 2$, it follows that if there are $1600c_t^3\log^4(n)+2$ type 2 extended strip decomposition edges on the subpath of $P'$ that starts at $u$ and ends at $w$ then \begin{align*}|\rel_G(G_w, X,N)| & \leq (1-\frac{1}{800c_t^3\log^3(n)})^{1600c_t^3\log^4(n)+2}|\rel_G(G_u, X,N)|\\ & \leq (1-1/2)^{\log(n)+1}|\rel_G(G_u, X,N)|< 1.\end{align*} It follows that $|\rel_G(G_w, X,N)| = 0$ and therefore, for any node $z$ in $P'$ (with parameters say $(G_z, X,$ $N, {\cal F}_{1,z}, {\cal F}_{2,z})$) that comes after $w$ it holds that $\rel_G(G_z, X,N) = \emptyset$.
It then follows from the \ref{itm:empty rel set}$^{th}$ point of Observation~\ref{obs:obvious observations} that there cannot be any other type 2 extended strip decomposition edges after $w$ in $P'$.

Since $P'$ has at most $1600c_t^3\log^4(n)+2$ type 2 extended strip decomposition edges and by Lemmas~\ref{lem:few t1 esd edges} and \ref{lem:few bbs edges}, $P$ has at most $64c_t^2\log^3(n)$ type 1 extended strip decomposition edges and at most $5200c_t^4\log(n)^5$ boosted balanced separator edges, it follows that $P$ has at most $(1600c_t^3\log^4(n)+2)$ $(64c_t^2\log^3(n) + 5200c_t^4\log^5(n) + 1)$ $\leq$ $10^7c_t^7\log^9(n)$ type 2 extended strip decomposition edges. 
\end{proof}

\begin{lemma}\label{lem:few bs edges}
Let $t$ be a positive integer, $G$ an $n$-vertex $\Sttt$-free graph, $T$ the recursion tree generated by IND$(G, \bot,$ $|G|, \emptyset, \emptyset)$, and $P$ a  path of $T$ such that no edges of $P$ are type 1 extended strip decomposition, boosted balanced separator, nor type 2 extended strip decomposition edges. Then $P$ contains at most $10^9c_t^8\log^{11}(n)$ balanced separator edges.
\end{lemma}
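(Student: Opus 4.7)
The plan is to partition the balanced separator edges on $P$ into phases based on the value of $|\rel_G(G_u, X, N)|$ at the BS source node, and then bound the number of edges in each phase via Lemma~\ref{lem:cant pack strong bs}.

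First I would observe that since $P$ has no type~1 ESD, BBS, nor type~2 ESD edges, only branch and BS edges appear on $P$, and neither of these two edge types resets $X$ or $N$. Hence there exist fixed values $N$ and $X$ (with $|X| \leq c_t \log N$ by the definition of $\esdthm$) such that every BS source $u$ on $P$ satisfies $X_u = X$ and $N_u = N$; moreover, $\rel_G(G_u, X, N) \neq \emptyset$ at every such $u$ by item~\ref{itm:empty rel set} of Observation~\ref{obs:obvious observations}. Define $\rho(u) \coloneqq |\rel_G(G_u, X, N)|$; this quantity is non-increasing as we descend $P$ since $G_u$ only shrinks and $\rel_G(\cdot, X, N)$ is monotone under taking induced subgraphs. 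Partition the BS edges of $P$ into phases where phase $j$ consists of those BS edges whose source $u$ satisfies $\rho(u) \in (\rho_0/2^{j+1},\, \rho_0/2^j]$, where $\rho_0 \leq n$ is the value of $\rho$ at the first BS source on $P$. Since $\rho(u) \geq 1$ at every BS source, there are at most $\lceil \log n \rceil + 1$ non-empty phases.

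Next, I will show that every phase contributes fewer than $10 t \log n$ BS edges. Let $u^*$ be the last BS source in a fixed phase, and let ${\cal F}'$ be the sub-multiset of ${\cal F}_{2, u^*}$ consisting exactly of the BS cores added during this phase. For such a core $C_i$, added at source $u_i$, Lemma~\ref{lem: balanced sep or esd} guarantees that $N_G^{G_{u_i}}[C_i]$ is a $\rho(u_i) / (200 c_t^3 \log^3 N)$-balanced separator for $(G_{u_i}, \rel_G(G_{u_i}, X, N))$. Since $G_{u^*}$ is an induced subgraph of $G_{u_i}$ and $\rel_G(G_{u^*}, X, N) \subseteq \rel_G(G_{u_i}, X, N)$, the set $N_G^{G_{u^*}}[C_i]$ remains a $\rho(u_i) / (200 c_t^3 \log^3 N)$-balanced separator for $(G_{u^*}, \rel_G(G_{u^*}, X, N))$. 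Applying the phase condition $\rho(u_i) \leq 2 \rho(u^*)$ and the inequality $|X| \leq c_t \log N$ upgrades this to a $\rho(u^*) / (100 c_t^2 \log^2(N) |X|)$-balanced separator, i.e.\ exactly the form required by the premise of Lemma~\ref{lem:cant pack strong bs}. By Lemma~\ref{lem:fractional packing} no vertex of $G_{u^*}$ belongs to more than $\log N$ sets of $N_G^{G_{u^*}}[{\cal F}_{2, u^*}]$, and hence also of $N_G^{G_{u^*}}[{\cal F}']$. Were $|{\cal F}'| \geq 10 t \log N$, Lemma~\ref{lem:cant pack strong bs} (applied at $u^*$ with $\bar{c} = \log N$) would produce an induced $\Sttt$ in $G$, contradicting $\Sttt$-freeness.

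Multiplying the per-phase bound by the number of phases yields at most $(\lceil \log n \rceil + 1) \cdot 10 t \log n \leq 20 t \log^2 n$ BS edges on $P$, which is comfortably below the claimed bound $10^9 c_t^8 \log^{11} n$ (recall $t \leq c_t$ since $c_t \geq 34t$, and $\log n \geq 2$). The hard part will be the transfer of the balanced-separator property across the branch edges within a phase: the core $C_i$ certifies a balanced separation of $(G_{u_i}, \rel_G(G_{u_i}, X, N))$ only at the moment of its addition, and I need a certification sharp enough for Lemma~\ref{lem:cant pack strong bs} at the much later node $u^*$, after arbitrarily many intervening branches have shrunk both the graph and the relevant set. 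The phase decomposition, which bounds the multiplicative slackening of $\rho$ inside a single phase by a factor of at most $2$, is exactly the tool that keeps the degradation of the balance parameter within the slack provided by the factor $c_t \log N / |X| \geq 1$, and so licenses the application of Lemma~\ref{lem:cant pack strong bs} inside each phase.
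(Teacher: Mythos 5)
Your proof is correct and takes essentially the same route as the paper. Both arguments share the core structure: since the hypothesis excludes the edge types that reset $X$ and $N$, these are constant along $P$; then $\rho(u) := |\rel_G(G_u, X, N)|$ is nonincreasing and positive at every BS source, and a pigeonhole over dyadic windows of $\rho$ shows that within a factor-$2$ window the collected cores, after restriction to the last node $u^*$ of the window, are all $\frac{|\rel_G(G_{u^*},X,N)|}{100 c_t^2 \log^2(N)|X|}$-balanced separators for $(G_{u^*}, \rel_G(G_{u^*},X,N))$, so that Lemma~\ref{lem:fractional packing} (with $\bar c = \log N$) and Lemma~\ref{lem:cant pack strong bs} cap the window at fewer than $10t\log N$ BS edges. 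The paper phrases the pigeonhole as a contradiction (among $\log n$ consecutive blocks of $10t\log n$ BS edges, some block has $\rho$ halve by less than a factor $2$) while you phrase it as a direct partition into dyadic phases, but the content is identical, and your transfer calculation (combining $\rho(u_i)\le 2\rho(u^*)$ with $|X|\le c_t\log N$) matches the paper's.

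One observation worth flagging, though it is not a defect in your argument for the lemma as literally stated. The paper's own proof of this lemma ignores the hypothesis that $P$ contains no type~1, boosted, or type~2 edges: it establishes a $10t\log^2 n$ bound on BS edges for each maximal hypothesis-satisfying subpath, then multiplies by the number of such subpaths (controlled by Lemmas~\ref{lem:few t1 esd edges}, \ref{lem:few bbs edges}, \ref{lem:few t2 esd edges}) to get $10^9 c_t^8 \log^{11} n$ for an arbitrary $P$. That unrestricted form is precisely what Lemma~\ref{lem:few success edges} invokes, on a path $P$ that is not assumed to avoid those edge types. Your tighter $20t\log^2 n$ bound is correct for the stated hypothesis, but if that hypothesis is taken at face value the claimed $10^9 c_t^8 \log^{11} n$ is far from tight, and the downstream use would not typecheck — so the stated hypothesis appears to be spurious, and a proof intended to slot into the paper's framework should discharge it the way the paper's proof does, by aggregating over the hypothesis-free subpaths.
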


\begin{proof}
Let $t$, $G$, $T$, and $P$ be as in the statement of this lemma. Let $P'$ be a subpath of $P$ that contains no type 1 extended strip decomposition, boosted balanced separator, nor type 2 extended strip decomposition edges. It then follows that there exists an integer $N$ and a set $X \subseteq V(G)$ such that for any node $a \in P'$ with parameters $(G_a, X_a, N_a, {\cal F}_{1,a}, {\cal F}_{2,a})$ it holds that $N_a = N$ and $X_a$ = $\bot$ or $X$.

Assume for a contradiction that $P'$ has $10t\log^2(n)$ balanced separator edge. Let $u_iw_i$ denote the $(i10t\log(n))^{th}$ balanced separator edge, $u_0$ be the first vertex of $P'$, $(G_{u_i}, X_{u_i}, N, {\cal F}_{1,u_i}, {\cal F}_{2,u_i})$ be the parameters of $u_i$, $i \geq 0$, of $P'$,  and $P^j$ denote the subpath of $P'$ that starts at $u_{j}$ and ends at $u_{j+1}$, $j \geq 0$.

For some $P^j$ it must hold that $|\rel_G(G_{j}, X,N)|/2 \leq |\rel_G(G_{j+1}, X,N)|$ or else \[\frac{|\rel_G(G_0, X,N)|}{2^{\log(n)}} > |\rel_G(G_{\log(n)}, X,N)|\] which implies that $|\rel_G(G_{\log(n)}, X,N)| = 0$. Therefore, by 
the \ref{itm:empty rel set}$^{th}$ point of Observation~\ref{obs:obvious observations} it follows that $u_{\log(n)}w_{\log(n)}$ cannot be a balanced separator edge.
So, we conclude that for some $P^j$ it must hold that $|\rel_G(G_{j}, X,N)|/2 \leq |\rel_G(G_{j+1}, X,N)|$, so fix this index as $j$.

For each balanced separator node, $v$, on the path $P^j$, let $B_v$ denote the balanced separator core added in call $v$, ${\cal F}$ the list of these $B_v$'s, and $(G_v, X_v, N,$ ${\cal F}_{1,v}, {\cal F}_{2,v})$ the parameters of $v$. Note that $X = X_v$ (accept for possibly the first such $v$ if $X_v = \bot$ and then $X$ would be $X$ discovered in the call the corresponds to $v$) and $|\rel_G(G_v, X,N)|/2 \leq |\rel_G(G_{u_j}, X,N)|/2 \leq |\rel_G(G_{u_{j+1}}, X,N)|$ and $\rel_G(G_{u_{j+1}}, X,N) \subseteq \rel_G(G_v, X,N) \subseteq \rel_G(G_{u_j}, X,N)$. So, since $N_G^{G_v}[B_v]$ is an $\frac{|\rel_G(G_v, X,N)|}{200c_t^3\log^3(N)}$-balanced separator for $\rel_{G}(G_v, X,N)$, it follows that $N_G^{G_{u_{j+1}}}[B_v]$ is an $\frac{|\rel_G(G_{u_{j+1}}, X,N)|}{100c_t^3\log^3(N)}$-balanced separator for $\rel_{G}(G_{u_{j+1}}, X, N)$. By definition of ${\cal F}$, it follows that $|{\cal F}| \geq 10t\log(n)$ $\geq$ $10t\log(N)$, by Lemma~\ref{lem:fractional packing} no vertex of $G_{u_{j+1}}$ belongs to over $\log(N)$ set of $N_{G}^{G_{u_{j+1}}}[{\cal F}_{2,u_{j+1}}]$ and therefore of $N_{G}^{G_{u_{j+1}}}[{\cal F}]$ (as ${\cal F}$ $\subseteq$ ${\cal F}_{2,u_{j+1}}$), so by Lemma~\ref{lem:cant pack strong bs} $G$ contains and $\Sttt$, a contradiction.

We conclude that $P'$ has less that $10t\log^2(n)$ $\leq$ $10c_t\log^2(n)$ (recall by definition $c_t \geq t$) balanced separators. Since by Lemmas \ref{lem:few t1 esd edges}, \ref{lem:few bbs edges}, and \ref{lem:few t2 esd edges} $P$ has at most $64c_t^2\log^3(n)$, $5200c_t^4\log^5(n)$, and $10^7c_t^7\log^9(n)$ type 1 extend strip decomposition edges, boosted balanced separator edges, and type 2 extended strip decomposition edges respectively, it follows that $P$ has at most $10c_t\log^2(n)$ ($64c_t^2\log^3(n)$ + $5200c_t^4\log^5(n)$ + $10^7c_t^7\log^9(n)$ + 1) $\leq$ $10^9c_t^8\log^{11}(n)$ balanced separator edges. 
\end{proof}

Let $G$ be a graph, $T$ the recursion tree generated by IND$(G, \bot, |G|, \emptyset, \emptyset)$, and $p, c \in T$ such that $pc$ is an edge of $T$ with parameters $(G_p, X_p, N_p, {\cal F}_{1,p}, {\cal F}_{2,p})$ and $(G_c, X_c, N_c, {\cal F}_{1,c}, {\cal F}_{2,c})$ respectively. We say that a vertex $v \in G$ is {\em added to level set $i$ during the call corresponding to $p$} if $v \notin$ $({\mathcal{L}_i}(G, G_p, {\cal F}_{1,p}) \cup {\mathcal{L}_i}(G, G_p, {\cal F}_{2,p}))$ and $v \in$ $({\mathcal{L}_i}(G, G_c, {\cal F}_{1,c}) \cup {\mathcal{L}_i}(G, G_c, {\cal F}_{2,c}))$. Note that for a vertex $v$ to be added to level set $i$ during call $p$, the edge $pc$ must be either a balanced separator edge or a boosted balanced separator edge and $v \in ({\mathcal{L}_{i-1}}(G, G_p, {\cal F}_{1,p}) \cup {\mathcal{L}_{i-1}}(G, G_p, {\cal F}_{2,p}))$. Given a path $P$ in $T$ we say that a vertex $v$ is added to level set $i$ in path $P$ if there is at least one node $u$ such that $v$ is added to level set $i$ during the call corresponding to $u$. Note that it is possible for a vertex $v$ to be added to level set $i$ in path $P$ multiple times since the level sets ${\cal F}_1$ and ${\cal F}_2$ can get set to the empty set multiple times.

Additionally, we say that a vertex $v \in G$ is {\em removed from level set $i$ during the call corresponding to $p$} if $v \in $ $({\mathcal{L}_i}(G, G_p, {\cal F}_{1,p}) \cup {\mathcal{L}_i}(G, G_p, {\cal F}_{2,p}))$ and $v \notin$ $({\mathcal{L}_i}(G, G_c, {\cal F}_{1,c}) \cup {\mathcal{L}_i}(G, G_c, {\cal F}_{2,c}))$. We say that a vertex $v$ is added to level set $i$ in path $P$ if there is at least one node $u$ such that $v$ is removed from level set $i$ during the call corresponding to $u$. Note that it is possible for a vertex $v$ to be removed from level set $i$ in path $P$ multiple times since the level sets ${\cal F}_1$ and ${\cal F}_2$ can get set to the empty set multiple times. Furthermore, note that if the call that corresponds to the first vertex of $P$ has the property that both lists ${\cal F}_1$ and ${\cal F}_2$ are the empty set, then the number of vertices added to level set $i$ in path $P$, counting multiplicity, is at least as much as the number of vertices removed from level set $i$ in path $P$. This holds because if for a node $u \in P$ a vertex $v$ is removed from level set $i$ during the call corresponding to $u$, the vertex $v$ must have been first added to level set $i$ during the call corresponding to $w$ for some $w \in P$ that comes before $u$ in $P$.


\begin{lemma}\label{lem:few vertices added}
Let $t$ be a positive integer, $G$ be an $n$-vertex graph $\Sttt$-free graph, $T$ the recursion tree generated by IND$(G, \bot, |G|, \emptyset, \emptyset)$, and $P$ a path of $T$ such that no edges of $P$ are type 1 extended strip decomposition edge and $P$ contains $c$ balanced separator and boosted balanced separator edges. Then there exists a natural number $N$ such that for any vertex $v \in P$ with parameters $(G_v, X_v, N_v, {\cal F}_{1,v}, {\cal F}_{2,v})$ it holds that $N_v = N$ and at most $c\cdot 56000c_t^4\log^4(N)N/2^{i}$ vertices are added to level set $i$ in $P$.
\end{lemma}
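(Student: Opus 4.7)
The plan is to first argue that along $P$ the value of $N_v$ is constant—since $N_v$ changes only on type 1 extended strip decomposition edges, and there are none on $P$—so set $N$ to this common value. I would then observe that only the $c$ balanced or boosted balanced separator edges of $P$ can add vertices to level set $i$ for $i \ge 1$: branch edges simply delete vertices (never adding anything to a level set), and type 2 extended strip decomposition edges delete vertices and reset ${\cal F}_2$ to $\emptyset$ while leaving ${\cal F}_1$ alone, so they too can only shrink level sets. Hence it suffices to show that each separator edge contributes at most $56000\,c_t^4\log^4(N)\cdot N/2^{i}$ additions and then sum.

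Next I would fix a separator edge $pc$ on $P$ at which a core $Y$ is added to list ${\cal F}_{j,p}$ (with $j=1$ for a boosted and $j=2$ for a balanced separator edge). Unfolding the definitions, a vertex $v$ is added to level $i$ at $p$ only if $v \in N_{G}^{G_p}[Y]$ and $v \in {\cal L}_{i-1}(G,G_p,{\cal F}_{j,p})$, bounding the contribution by $\sum_{y \in Y} |N_G^{G_p}[y] \cap {\cal L}_{i-1}(G,G_p,{\cal F}_{j,p})|$. Since $|Y|$ is at most $28000\,c_t^4\log^4(N)$ in the worst case (a balanced separator core from Lemma~\ref{lem: balanced sep or esd}; boosted balanced separator cores have size only $c_t\log(N)$), the task reduces to bounding each summand by $N/2^{i-1}$.

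For a balanced separator edge, the core $Y=C$ is freshly produced by Lemma~\ref{lem: balanced sep or esd} at $p$, so $Y \subseteq V(G_p)$, and $p$ is a non-branch call by the algorithm's case ordering; non-branchability of $y \in Y$ at $p$ then yields $|N_{G_p}[y] \cap {\cal L}_{i-1}(G,G_p,{\cal F}_{2,p})|<N/2^{i-1}$ directly. The hard part is a boosted balanced separator edge with an inherited core, where $Y=X_p$ may fail to be contained in $V(G_p)$. My plan here is to locate the ancestor $a$ of $p$ (possibly $a=p$) at which $X_p$ was originally set to $\esdthm(G_a)$; then $a$ is a non-branch call and $Y \subseteq V(G_a)$. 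The pivotal observation is that between $a$ and $p$ on the tree path, neither a type 1 extended strip decomposition edge nor a boosted balanced separator edge can occur---each resets $X$ to $\bot$ and would destroy the persistence $X_a=X_p$---and these are exactly the two edge types that modify ${\cal F}_1$. Hence ${\cal F}_{1,a}={\cal F}_{1,p}$, which together with $V(G_p)\subseteq V(G_a)$ implies ${\cal L}_{i-1}(G,G_p,{\cal F}_{1,p}) = V(G_p)\cap {\cal L}_{i-1}(G,G_a,{\cal F}_{1,a})$, and so $|N_G^{G_p}[y] \cap {\cal L}_{i-1}(G,G_p,{\cal F}_{1,p})| \le |N_{G_a}[y] \cap {\cal L}_{i-1}(G,G_a,{\cal F}_{1,a})|<N/2^{i-1}$ by non-branchability at $a$.

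The main obstacle is precisely this transportation step for inherited boosted balanced separator cores: non-branchability at $p$ cannot be invoked on $y \in Y$ when $y \notin V(G_p)$, and the resolution relies on the happy coincidence that the events which could delete $y$ from the current graph (branches and type 2 extended strip decomposition edges) are exactly those that leave ${\cal F}_1$ unchanged. Summing the per-edge bound $|Y|\cdot N/2^{i-1}$ over the $c$ separator edges of $P$ then gives the claimed total $c\cdot 56000\,c_t^4\log^4(N)\cdot N/2^{i}$.
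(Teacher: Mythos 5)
Your proposal is correct and follows essentially the same route as the paper's proof. The structure is identical: you isolate separator edges as the only ones that add to level sets, bound each contribution by $|Y|\cdot N/2^{i-1}$ using non-branchability, and handle the subtlety of an inherited boosted core by tracing back to the discovery node $a$ (the paper's $p'$), using that the intervening edges cannot modify ${\cal F}_1$ so the level set at $p$ is just the restriction of the one at $a$.
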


\begin{proof}
Let $t$, $G$, $T$, $c$, and $P$ be as in the statement of this lemma. The fact that there exists a natural number $N$ such that for any vertex $v \in P$ with parameters $(G_v, X_v, N_v, {\cal F}_{1,v}, {\cal F}_{2,v})$ it holds that $N_v = N$ follows from the fact that $P$ has no type 1 extended strip decomposition edges. 

Now, let $p$ be a node of $P$ and let $(G_p, X_p, N, {\cal F}_{1,p}, {\cal F}_{2,p})$ be the parameters of $p$. Note that if there is a vertex $v$ that is added to level set $i$ during call $p$, it must be that the vertex $v$ is in ${\cal L}_{i-1}(G,G_{p}, {\cal F}_{1,p}, N)$ or ${\cal L}_{i-1}(G,G_{p}, {\cal F}_{2,p}, N)$.

First, assume that $p$ is a balanced separator node of $P$ and the set $B_p$ is the balanced separator core added in the call that corresponds to $p$, so $|B_p| \leq 28000c_t^4\log^4(N)$. Since $p$ is a balanced separator node, there is no branchable vertex in the call corresponding to $p$. Therefore each vertex of $B_p$ has less than $N/2^{i-1}$ neighbors in level set ${\cal L}_{i-1}(G,G_{p}, {\cal F}_{2,p}, N)$ and therefore at most $|B_p|N/2^{i-1}$ $\leq$ $56000c_t^4\log^4(N)N/2^{i}$ vertices are added to level set $i$ during call $p$.

Next, assume that $p$ is a boosted balanced separator node of $P$, and let $X$ be the boosted balanced separator core added in the call that corresponds to $p$ (if $X_p \neq \bot$ then $X = X_p$, and if $X_p = \bot$ then the set $X$ was discovered in the call that corresponds to $p$). Let $p'$ be the first ancestor of $p$ in $T$ where $X$ was discovered (possibly with $p' = p$) and let $(G_{p'}, X_{p'}, N_{p'}, {\cal F}_{1,p'}, {\cal F}_{2,p'})$ be the parameters of $p'$, so $X \subseteq V(G_{p'})$ and $|X| \leq c_t\log(N_{p'})$. It follows that all nodes between $p'$ and $p$ are branch nodes, type 2 extended strip decomposition nodes, and balanced separator nodes (since all recursive calls of boosted balanced separator nodes and type 1 extended strip decomposition nodes would reset $X$ to $\bot$). It follows that ${\cal F}_{1,p'} = {\cal F}_{1,p}$ and $N_{p'} = N$ and so $|X| \leq c_t\log(N)$. Additionally, since $X$ was discovered in the call that corresponds to $p'$, $p'$ cannot be a branch node so there is no branchable vertex in the call corresponding to $p'$, in particular, there is no vertex in $G_{p'}$ that has at least $N/2^{i-1}$ neighbors in ${\cal L}_{i-1}(G,G_{p'},{\cal L}_{1,p'}, N)$. Therefore each vertex of $X$ has at most $N/2^{i-1}$ neighbors in level set ${\cal L}_{i-1}(G,G_{p'}, {\cal F}_{1,p'}, N)$ and therefore (since $G_p$ is an induced subgraph of $G_{p'}$ and ${\cal F}_{1,p'} = {\cal F}_{1,p}$) in level set ${\cal L}_{i-1}(G,G_{p}, {\cal F}_{1,p}, N)$. Hence at most $|X|N/2^{i-1}$ $\leq$ $2c_t\log(N)N/2^{i}$ vertices are added to level set $i$ during call $p$.

In either case, we conclude that at most $56000c_t^4\log^4(N)N/2^{i}$ vertices are added to level set $i$ in the call $p$. Since by assumption there are $c$ boosted balanced separator and balanced separator edges in $P$, it follows that at most $c\cdot 56000c_t^4\log^4(N)N/2^{i}$ vertices are added to level set $i$ in $P$. 
\end{proof}

\begin{lemma}\label{lem:few success edges}
Let $t$ be a positive integer, $G$ an $\Sttt$-free $n$-vertex graph, $T$ the recursion tree generated by IND$(G, \bot, |G|, \emptyset, \emptyset)$, and $P$ a path of $T$. Then $P$ contains at most $10^{14}c_t^{12}\log^{16}(n)$ success edges.
\end{lemma}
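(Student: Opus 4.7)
The plan is to charge each success edge against a sizable chunk of vertices it removes from a level set, and to cap the total ``supply'' of such vertices using Lemma~\ref{lem:few vertices added}. Recall that a success edge arises when we branch on a vertex $v$ satisfying $|N_{G_p}[v] \cap {\cal L}_j({\cal G}, G_p, {\cal F}_{k,p})| \geq N_p/2^j$ for some level $j$ and some $k \in \{1,2\}$, and then recurse on $G_p - N_{G_p}[v]$. Traversing a success edge therefore removes at least $N_p/2^j$ vertices from the union ${\cal L}_j^{(1)} \cup {\cal L}_j^{(2)}$, since those vertices were in the graph and in ${\cal L}_j^{(k)}$ at the parent but leave the graph (and hence the union) at the child.

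I would first split $P$ into maximal subpaths $P^{(1)}$ that contain no type~1 extended strip decomposition edges. By Lemma~\ref{lem:few t1 esd edges} there are at most $64c_t^2\log^3(n)+1$ such subpaths, and within each $P^{(1)}$ the parameter $N$ is fixed (call it $N'$), while both ${\cal F}_1$ and ${\cal F}_2$ start empty at the head of $P^{(1)}$ (since the preceding edge is a type~1 ESD edge or $P^{(1)}$ is the root subpath). Applying Lemma~\ref{lem:few vertices added} inside $P^{(1)}$, the total number of vertices added to level set $i$ within $P^{(1)}$ is at most $c_{P^{(1)}} \cdot 56000\,c_t^4 \log^4(N') \cdot N'/2^i$, where $c_{P^{(1)}}$ is the number of balanced plus boosted balanced separator edges in $P^{(1)}$.

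Since the union ${\cal L}_i^{(1)} \cup {\cal L}_i^{(2)}$ starts empty at the head of $P^{(1)}$, the total number of removals from this union inside $P^{(1)}$ is at most the total number of additions. Each success edge at level $i$ inside $P^{(1)}$ contributes at least $N'/2^i$ to the removals, so the number of success edges at level $i$ within $P^{(1)}$ is at most $c_{P^{(1)}} \cdot 56000\,c_t^4 \log^4(n)$. Summing this bound over all $P^{(1)}$ subpaths, the total number of success edges at level $i$ on $P$ is at most $56000\,c_t^4 \log^4(n) \cdot c_P$, where $c_P$ denotes the total number of balanced plus boosted balanced separator edges on $P$. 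By (the proof of) Lemma~\ref{lem:few bs edges} combined with Lemma~\ref{lem:few bbs edges}, we have $c_P \leq 10^9 c_t^8 \log^{11}(n) + 5200\,c_t^4 \log^5(n) \leq 2\cdot 10^9 c_t^8 \log^{11}(n)$. Finally, by Lemma~\ref{lem:fractional packing} level sets above $\log(n)$ are empty, so summing over $i \in \{1,\ldots,\log(n)\}$ yields at most $\log(n) \cdot 56000\,c_t^4 \log^4(n) \cdot 2 \cdot 10^9 c_t^8 \log^{11}(n) \leq 10^{14} c_t^{12} \log^{16}(n)$ success edges on $P$.

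The main subtlety in carrying this out is the bookkeeping around resets of ${\cal F}_2$ by type~2 extended strip decomposition and boosted balanced separator edges: such resets remove vertices from the union without removing them from the graph, so one must argue carefully that the ``removals $\leq$ additions'' inequality for the union still holds over an entire $P^{(1)}$ despite these resets. This is precisely where the fact that each $P^{(1)}$ begins with both level sets empty is used, together with the monotonicity of the cumulative addition/removal ledger inside $P^{(1)}$. The aggregation step that converts the per-$P^{(1)}$ bound into a bound on $P$ by passing through $c_P$ (rather than multiplying by the number of $P^{(1)}$'s) is what produces the advertised bound.
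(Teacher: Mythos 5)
Your argument follows the paper's proof essentially step for step: partition $P$ into maximal type-1-ESD-free subpaths, use Lemma~\ref{lem:few vertices added} to bound the number of additions to level set $i$ within such a subpath by $c_{P^{(1)}}\cdot 56000\,c_t^4\log^4(N')N'/2^i$, exploit that both lists start empty at the head of each subpath so removals cannot exceed additions, charge each success edge at level $i$ for at least $N'/2^i$ removals, and then aggregate over levels $i \le \log(n)$ and over subpaths by passing the count through the total number $c_P$ of separator edges on $P$ rather than through the number of subpaths --- exactly the paper's bookkeeping.

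The one flaw is the final arithmetic. You upper-bound $c_P \le 10^9 c_t^8\log^{11}(n) + 5200\,c_t^4\log^5(n)$ by $2\cdot 10^9 c_t^8\log^{11}(n)$ and then compute $\log(n)\cdot 56000\,c_t^4\log^4(n)\cdot 2\cdot 10^9 c_t^8\log^{11}(n) = 1.12\cdot 10^{14}\,c_t^{12}\log^{16}(n)$, which is strictly larger than the claimed $10^{14}\,c_t^{12}\log^{16}(n)$. This is not a structural error but it does mean the inequality you asserted in the last line is false as stated. The fix is to not absorb the $5200\,c_t^4\log^5(n)$ term into a doubling of the leading term: keeping $c_P$ as the sum and multiplying out gives $56000\cdot 10^9\,c_t^{12}\log^{16}(n) + 56000\cdot 5200\,c_t^{8}\log^{10}(n)\approx 5.6\cdot 10^{13}\,c_t^{12}\log^{16}(n)$, which is comfortably below $10^{14}\,c_t^{12}\log^{16}(n)$, matching the paper's computation.
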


\begin{proof}
Let $t$, $G$, $T$, and $P$ be as in the statement of this lemma. Let $P'$ be a maximal subpath of $P$ that contains no type 1 extended strip decomposition edges, $c$ the number of balanced separator and boosted balanced separator edges in $P'$, $u$ the first vertex of $P'$ (note by the maximality of $P'$ the in-edge of $u$ is either a type 1 extended strip decomposition or $u$ is the root of $T$), and $(G_u, X_u, N_u, {\cal F}_{1,u}, {\cal F}_{2,u})$ the parameters of $u$. 

By Lemma~\ref{lem:few vertices added} there is a natural number $N$ 
such that for any node $w \in P$ with parameters $(G_w, X_w, N_w, {\cal F}_{1,w}, {\cal F}_{2,w})$ it holds that $N_w = N$, hence $N_u = N$, and there are at most $c\cdot 56000c_t^4\log^4(N)N/2^{i}$ vertices added to level set $i$ in $P'$ (counting multiplicity). Since the in-edge of $u$ is a type 1 extended strip decomposition edge (or $u$ is the root vertex of $T$), it follows that ${\cal F}_{1,u} = {\cal F}_{2 ,u} = \emptyset$. Hence if a vertex $v$ is removed in $P'$, the vertex $v$ must also have been added in $P'$, so at most $c56000c_t^4\log^4(N)N/2^{i}$ vertices can be removed from level set $i$ in $P'$ (counting multiplicity).

Let $pc$ be a success edge of $P'$ and let $(G_p, X_p, N, {\cal F}_{1,p}, {\cal F}_{2,p})$ and $(G_c, X_c, N, {\cal F}_{1,c}, {\cal F}_{2,c})$ be the parameters of $p$ and $c$ respectively. Then there is some vertex $v \in G_p$ such that $G_c = G_p-N_{G_p}[v]$. Furthermore there is some natural number $i$ such that $N_{G_p}[v]$ contains at least $N/2^i$ vertices in either level set ${\cal L}_{i}(G,G_{p}, {\cal F}_{1,p}, N)$ or ${\cal L}_{i}(G,G_{p}, {\cal F}_{2,p}, N)$, call this set of vertices $S$. It follows that all of the vertices of $S$ are removed from level set $i$ in call $p$. Since at most $c\cdot 56000c_t^4\log^4(N)N/2^{i}$ vertices are added to level set $i$ in $P'$ (therefore as noted in the first paragraph of this proof at most $c\cdot 56000c_t^4\log^4(N)N/2^{i}$ vertices can be removed to level set $i$ in $P'$) this can happen to level set $i$ in $P'$ at most $c\cdot 56000c_t^4\log^4(N)$ times before it is empty. Since by Lemma~\ref{lem:fractional packing} for all $j > \log(N)$ ${\cal L}_{j}(G,G_{p}, {\cal F}_{2,p}, N)$ and ${\cal L}_{j}(G,G_{p}, {\cal F}_{2,p}, N)$ are already empty, it follows that $i \leq \log(N)$. Therefore there can be at most $c\cdot 56000c_t^4\log^5(N)$ success edges before all level sets are empty. Hence there are at most $c\cdot 56000c_t^4\log^5(N)$ $\leq$ $c\cdot 56000c_t^4\log^5(n)$ success edges in $P'$.

Lastly, by Lemmas~\ref{lem:few bbs edges} and \ref{lem:few bs edges} there are at most $5200c_t^4\log^5(n)$ and $10^9c_t^8\log^{11}(n)$ boosted balanced separator and balanced separator edges respectively in $P$. It then follows that there are at most $(5200c_t^4\log(n)^5 + 10^9c_t^8\log^{11}(n) + 1) (56000c_t^4\log^5(n))$ $\leq$ $10^{14}c_t^{12}\log^{16}(n)$ success edges in $P$.
\end{proof}

\begin{lemma}\label{lem:few failure edges}
Let $t$ be a positive integer, $G$ an $\Sttt$-free $n$-vertex graph and $T$ the recursion tree generated by IND$(G, \bot, |G|, \emptyset, \emptyset)$. Then there are at most $n$ failure edges on any root to leaf path in~$T$.
\end{lemma}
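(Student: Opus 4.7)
The plan is to exploit the fact that failure edges are the only edges that decrease $|G_v|$ by exactly one in a very controlled way, and that $|G_v|$ is monotone nonincreasing as one walks down $T$. First I would recall from Observation~\ref{obs:obvious observations} that for every edge $pc$ of $T$, $G_c$ is an induced subgraph of $G_p$, so the sequence $(|G_u|)_{u \in P}$ along any root-to-leaf path $P$ is monotone nonincreasing. The root has at most $n$ vertices.

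Next I would observe that by the definition of a failure edge (Definition~\ref{def:recursion tree}), if $pc$ is a failure edge and $p$ branches on $v$, then $G_c = G_p - v$, so $|G_c| = |G_p| - 1$. In particular, crossing a failure edge strictly decreases $|G_v|$ by exactly $1$.

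Combining these two facts, along any root-to-leaf path $P$ the total decrease in $|G_v|$ is at most $n$, and each failure edge contributes a decrease of $1$ to this total (other edges only contribute a nonnegative decrease). Hence the number of failure edges on $P$ is at most $n$, as claimed. This essentially routine argument is the only ingredient needed; the main obstacle is simply verifying from Definition~\ref{def:recursion tree} and the pseudocode that failure edges correspond to the $G_p - v$ side of the branch and not the $G_p - N_{G_p}[v]$ side.
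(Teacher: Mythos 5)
Your proposal is correct and is essentially the paper's own proof: a failure edge $pc$ gives $|G_c|=|G_p|-1$, the graph size is monotone nonincreasing along the path and starts at $n$, so there can be at most $n$ failure edges. The paper states the same argument, just more tersely.
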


\begin{proof}
Let $t$, $G$, $P$, and $T$ be as in the statement of this lemma, and let $pc$ be a failure edge of $T$. Then $|G_c| = |G_p|-1$. Hence it is impossible for $P$ to have $n+1$ failure edges.
\end{proof}

Let $G$ be an $n$-vertex $\Sttt$-free graph. In order to bound the number of nodes in the recursion tree $T$ generated by IND$(G, \bot, |G|, \emptyset, \emptyset)$ we will provide a sequence which will uniquely determine every node of $T$, then prove that there is at most $n^{\Oh(c_t^{12}\log^{16}(n))}$ such sequences. Let $u$ be a node in $T$ and let $P$ be the path from the root node to $u$. The sequence we give is just the edges (given by their labels) taken on the path from the root to the vertex $u$. Let us assume that we have been given such a sequence of edge labels, $S$, that corresponds to the sequence of edge labelings of $P$. We show how to use $S$ to reconstruct the path $P$ from the root node to $u$ (proving this sequence uniquely determines the vertex $u$). Assume we are currently at a vertex $w$, if $w$ is a branch node, then the next label in $S$ must either be a success or failure edge, and whichever one it is uniquely determines the next node in our path to $u$. Similarly, if $w$ is a balanced separator or boosted balanced separator node, then the next label in $S$ must be a balanced separator edge or boosted balanced separator edge and $w$ has exactly one child, so again the next node in the path is uniquely determined. If $w$ is a type 1 or type 2 extended strip decomposition node though, there exists some constant $c$ (by Observation~\ref{obs:bounded degree}, independent of the choice of $G$) so that $w$ can have up to $n^c$ children, and all edges going to these children have the same label, hence the next node in the path is not uniquely determined. To fix this issue we define an {\em enriched recursion tree} to be a recursion tree $T$ such that for every type 1 or type 2 extended strip decomposition node, each of its out edge labels are additionally given a unique number $1-n^c$. It follows that with this enriching, the next label of $S$ will uniquely determine a child of $w$ and then this sequence uniquely determines the vertex $u$.

\begin{lemma}\label{lem:bounded tree}
Let $t$ be a positive integer, $G$ an $\Sttt$-free $n$-vertex graph, and $T$ the recursion tree generated by IND$(G, \bot, |G|, \emptyset, \emptyset)$. Then $T$ has at most $n^{\Oh(c_t^{12}\log^{16}(n))}$ nodes.
\end{lemma}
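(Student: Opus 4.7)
The plan is to follow the injective-encoding strategy sketched in the paragraph immediately preceding the lemma: every node $u \in T$ is uniquely determined by the sequence of edge labels along the root-to-$u$ path in the enriched recursion tree, where each type~1 or type~2 extended strip decomposition edge carries an additional index in $[n^{O(1)}]$ identifying which sibling it is (using \Cref{obs:bounded degree}). Counting nodes then reduces to counting such label sequences.

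A naive count gives $(n^{O(1)})^L$, where $L$ is the maximum root-to-leaf path length; this is far too weak, because \Cref{lem:few failure edges} alone permits $n$ failure edges per path, and so $L$ can be as large as $n + O(c_t^{12}\log^{16} n)$, producing an unacceptable $2^{\Omega(n)}$ factor. The key idea is to use a more parsimonious encoding: record only the non-failure edges on the path together with the number of failure edges in each of the $m+1$ gaps delimited by them. Since a failure edge is fully determined once one is at a branch node and chooses ``not success,'' a run of consecutive failures is reconstructed from just its length, so a node $u$ is encoded as $(m,\; l_1, \ldots, l_m,\; g_0, g_1, \ldots, g_m)$, where $m$ is the number of non-failure edges on the root-to-$u$ path, each $l_i$ is a non-failure edge label (with esd index, having $n^{O(1)}$ options), and each $g_i$ is the number of failure edges between consecutive non-failure edges.

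Combining \Cref{lem:few t1 esd edges,lem:few bbs edges,lem:few t2 esd edges,lem:few bs edges,lem:few success edges} yields that on any root-to-leaf path the number of non-failure edges satisfies $m \leq M \coloneqq O(c_t^{12}\log^{16} n)$, and \Cref{lem:few failure edges} gives each $g_i \leq n$. Consequently the total number of valid encodings is at most
\[
\sum_{m=0}^{M} n^{O(m)} \cdot (n+1)^{m+1} \;\leq\; (M+1)\cdot n^{O(M)} \;=\; n^{O(c_t^{12}\log^{16} n)},
\]
which is the desired bound. The only items to verify carefully in the write-up are (i) injectivity of this encoding, done by walking from the root and, at each stage, following $g_i$ forced failure steps and then the unique outgoing edge with label $l_{i+1}$, and (ii) that every node of $T$ indeed admits such an encoding; both are immediate from the definition of the enriched recursion tree and the fact that every non-branch, non-esd node has a single child.

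The heart of the argument---and the reason the bound comes out quasi-polynomial rather than exponential---is precisely this trick of collapsing the $2^{\Theta(n)}$ apparent branching that failure runs would induce into a single polynomial factor $(n+1)^{O(M)}$ by counting each maximal failure run only by its length, rather than recording a ``success vs.\ failure'' bit at every branch node. Everything else is a direct synthesis of the per-type path-length bounds already proved in this section.
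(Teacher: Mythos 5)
Your proposal is correct and, at heart, reproduces the paper's argument. The paper also encodes each node by the label sequence along the root-to-node path in the enriched tree and then counts sequences; the crucial step there is the multinomial bound
\[
\binom{n_1+\cdots+n_6}{n_1,\dots,n_5} \;\leq\; (n_1+\cdots+n_6)^{\,n_1+\cdots+n_5},
\]
with $n_6$ (the failure count) appearing only in the base and never in the exponent. Your run-length encoding of failure runs is exactly this multinomial trick in a different guise: choosing the gap lengths $(g_0,\dots,g_m)$ summing to $n_6$ is the same as choosing the positions of the non-failure edges within a sequence of total length $\sum n_i$, and in both cases the failure edges cost only a polynomial \emph{base} factor, not an exponent. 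Your exposition is slightly more explicit about \emph{why} the $n$ failure edges do not blow up the count, and you fold the esd-index enrichment into the per-label count $n^{O(1)}$ whereas the paper accounts for it in a separate paragraph, but these are cosmetic differences; the decomposition, the appeal to the per-type path bounds of \Cref{lem:few t1 esd edges,lem:few bbs edges,lem:few t2 esd edges,lem:few bs edges,lem:few success edges,lem:few failure edges}, the use of \Cref{obs:bounded degree}, and the final quasi-polynomial tally all match the paper's proof.
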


\begin{proof}
Let $t$, $G$ and $T$ be as in the statement of this lemma, and let $c$ be the constant from Observation~\ref{obs:bounded degree}. Furthermore, assume that $T$ is an enriched recursion tree, that is for every $u \in T$ that is a type 1 or type 2 extended strip decomposition node, the labels of the out edges of $u$ are given an additional unique integer between $1$ and $n^c$, where $c$ is the constant given in Observation~\ref{obs:bounded degree}.

We consider the set of all edge label sequences which contain at most $64c_t^2\log^3(n)$ type 1 extended strip decomposition edges (by Lemma~\ref{lem:few t1 esd edges} that is as many as $T$ can have), $5200c_t^4\log^5(n)$ boosted balanced separator edges (by Lemma~\ref{lem:few bbs edges} that is as many as $T$ can have), $10^7c_t^7\log^9(n)$ type 2 extended strip decomposition edges (by Lemma~\ref{lem:few t2 esd edges} that is as many as $T$ can have), $10^9c_t^8\log^{11}(n)$ balanced separator edges (by Lemma~\ref{lem:few bs edges} that is as many as $T$ can have), $10^{14}c_t^{12}\log^{16}(n)$ (by Lemma~\ref{lem:few success edges} that is as many as $T$ can have) success edges, and $n$ failure edges (by Lemma~\ref{lem:few failure edges} that is as many as $T$ can have). To bound the number of such sequenced, first note that since there are six types of edges and none can appear over $n^{\Oh(1)}$ times there are at most $n^{\Oh(1)}$ choices for the number of each type of edge, call these choices  $n_1, n_2, n_3, n_4, n_5,$ and $n_6$ and let $n_6$ be the number which denotes the number of failure edges. The number of possible sequences with these number choices is then ${n_1 + n_2 + n_3 + n_4 + n_5 + n_6 \choose n_1, n_2, n_3, n_4, n_5}$ $\leq$ $(n_1 + n_2 + n_3 + n_4 + n_5 + n_6)^{n_1 + n_2 + n_3 + n_4 + n_5}$. Since $n_6 \leq n$ and for $i < 6$, $n_i \leq 10^{14}c_t^{12}\log^{16}(n)$ this number is at most $n^{\Oh(c_t^{12}\log^{16}(n))}$. Since there are at most $n^{\Oh(1)}$ difference choices for values of the $n_i$'s, it follows there are at most $n^{\Oh(1)}n^{\Oh(c_t^{12}\log^{16}(n))}$ = $n^{\Oh(c_t^{12}\log^{16}(n))}$ sequences of this type.

Next we consider the ``enriched'' version of these sequences, that is, for each type 1 and type 2 extended strip decomposition edge in a sequence of $S$ we give it some number between $1$ and $n^c$. as there are at most $64c_t^2\log^3(n)$ and $10^7c_t^7\log^9(n)$ type 1 and type 2 extended strip decomposition edges respectively there are at most  $n^{c(64c_t^2\log^3(n) + 10^7c_t^7\log^9(n))}n^{\Oh(c_t^{12}\log^{16}(n))}$ = $n^{\Oh(c_t^{12}\log^{16}(n))}$ of these enriched sequences. It follows now that for any $u \in T$, the sequences of edges in the path from the root node to $u$ is contained in $S$, and by the discussion just before the statement of this lemma, for each $u \in T$ is edge sequence is unique. So, since $|S|$ = $n^{\Oh(c_t^{12}\log^{16}(n))}$ it follows that $T$ has at most $n^{\Oh(c_t^{12}\log^{16}(n))}$ nodes.
\end{proof}

We are now ready to prove Theorem~\ref{thm:long claw free qp time}.

\begin{proof}[Proof of Theorem~\ref{thm:long claw free qp time}]
Let $t$ be a positive integer, $G$ an $n$-vertex $\Sttt$-free graph, and $T$ the recursion tree generated by IND($G$, $\bot$, $|G|, \emptyset, \emptyset)$. By Lemma~\ref{lem: correct return} IND($G$, $\bot$, $|G|, \emptyset, \emptyset)$ returns the weight of a maximum weight independent set of $G$ and by Lemma~\ref{lem:bounded tree} $|T|$ = $n^{\Oh(c_t^{12}\log^{16}(n))}$. All that must be verified then is for each $u \in T$ the amount of time spend in the call that correspond to $u$ runs in polynomial time. We justify this runtime by discussing only the runtime of the steps of IND that do not clearly run polynomial time.

That the step of finding $X = \esdthm(G)$ runs in polynomial time was justified in Definition~\ref{def:esd bs generation}. That $\matching(H, \eta)$ runs in polynomial time follows from Lemma~\ref{lem:particle matching} and the facts that in type 2 extended strip decompositions calls, $(H, \eta)$ is rigid, and therefore $|H| = \Oh(n)$, and in type 1 extended strip decomposition calls $(H, \eta)$ is the extended strip decomposition inferred by some $(X', G')$ and so $|H| = n^{\Oh(1)}$ (see the discussion after Definition~\ref{def:esd bs generation}), and by Lemma~\ref{lem:particle matching} $\matching(H, \eta)$ runs in time polynomial in $|G|$ and $|H|$. Lastly, that the second bullet point of IND, where Lemma~\ref{lem: balanced sep or esd} is applied runs in polynomial time is stated in Lemma~\ref{lem: balanced sep or esd}.
\end{proof}

\section{Conclusion}
Let us point out some possible directions for future research.
First, on the structural side, we believe that \cref{thm:esd:intro} could be improved so that in the second outcome the balanced separator is dominated by a constant (depending on $t$) number of vertices.
The only reason why the current statement has the logarithmic bound is that in \cref{thm:ICALP:weight} the number of deleted neighborhoods is logarithmic.
\cite{ICALP-qptas} conjectured that \cref{thm:ICALP:weight} can actually be improved so that the number of deleted neighborhoods is constant.
Proving this conjecture would immediately yield an improved version of our \cref{thm:esd:intro}.
However, such a stronger version, while being more elegant, would not give any essentially new algorithmic result: the running time of our algorithms would still be quasi-polynomial (though a bit faster).

On the algorithmic side, an obvious natural problem is to provide a polynomial-time algorithm for MWIS in $S_{t,t,t}$-free graphs, for all $t$.
While we believe that extended strip decompositions are the right tool to use towards this goal, it seems that decompositions like the ones obtained by \cref{thm:esd:intro} would not lead to such a statement. This is because  recursing into a polynomial number of multiplicatively smaller particles inherently leads to a quasi-polynomial running time. We believe the ultimate goal would be to build an extended strip decomposition where each particle induces a graph from some ``simple'' class. In particular, so that we can solve MWIS for each particle in polynomial time without using recursion. Such decompositions for the simplest case, i.e., claw-free graphs, are provided by a deep structural result of Chudnovsky and Seymour~\cite{DBLP:journals/jct/ChudnovskyS08e}.

An important milestone on the way towards obtaining a polynomial-time algorithm for MWIS in $S_{t,t,t}$-free graphs is to solve the case of $P_t$-free graphs, which is already a very ambitious goal.

\blind{
\paragraph{Acknowledgments.}
We would like to thank Amálka Masaříková and Jana Masaříková for useful discussions while working on this paper. Additionally, we acknowledge the welcoming and productive atmosphere at Dagstuhl Seminar 22481 “Vertex Partitioning in Graphs: From Structure to Algorithms,” where a crucial part of the work leading to the results in this paper was done.
}

\bibliographystyle{alphaurl}
\bibliography{refs}

\end{document}